\tikzstyle{Measure}=[fill=black, draw=black, shape=circle, inner sep=0pt, minimum size=2mm]
\tikzstyle{Output}=[fill=white, draw=black, shape=circle, inner sep=0pt, minimum size=2mm]
\tikzstyle{InputMarker}=[fill=none, draw=black, shape=rectangle, minimum size=3mm, tikzit fill=white, inner sep=0pt]
\tikzstyle{box}=[rectangle, fill=white, draw=black, thin]
\tikzstyle{H box}=[box, execute at end node={$H$}]
\tikzstyle{Z box}=[box, execute at end node={$Z$}]
\tikzstyle{Zp box}=[box, execute at end node={$Z_\pi$}]
\tikzstyle{Zpp box}=[box, execute at end node={$Z_\frac{\pi}{2}$}]
\tikzstyle{Za box}=[box, execute at end node={$Z_\alpha$}]
\tikzstyle{Zb box}=[box, execute at end node={$Z_\beta$}]
\tikzstyle{X box}=[box, execute at end node={$X$}]
\tikzstyle{Xp box}=[box, execute at end node={$X_\pi$}]
\tikzstyle{Xpp box}=[box, execute at end node={$X_\frac{\pi}{2}$}]
\tikzstyle{Xa box}=[box, execute at end node={$X_\alpha$}]
\tikzstyle{Xb box}=[box, execute at end node={$X_\beta$}]
\tikzstyle{keto box}=[execute at end node={$\ket{0}$}]
\tikzstyle{keti box}=[execute at end node={$\ket{1}$}]
\tikzstyle{ketp box}=[execute at end node={$\ket{\!+\!}$}]
\tikzstyle{ketm box}=[execute at end node={$\ket{\!-\!}$}]
\tikzstyle{Mx box}=[box, execute at end node={$M_{X}$}]
\tikzstyle{Mo box}=[box, execute at end node={$M_{\pm}$}]
\tikzstyle{Ma box}=[box, execute at end node={$M_{\pm\alpha}$}]
\tikzstyle{Mb box}=[box, execute at end node={$M_{\pm\beta}$}]
\tikzstyle{ctrl vertex}=[circle, fill=black, minimum size=2.0mm, inner sep=0.3mm]
\tikzstyle{target vertex}=[oplus, draw=black, thick, minimum size=4.0mm, inner sep=0]
\tikzstyle{swap vertex}=[times, draw=black, thick, minimum size=4.0mm, inner sep=0]
\tikzstyle{ellipses}=[execute at end node={\small{...}}]
\def\namedlabel#1#2{\begingroup
#2%
\def\@currentlabel{#2}%
\phantomsection\label{#1}\endgroup
}
\newcommand{\cmark}{\ding{51}}
\newcommand{\xmark}{\ding{55}}
\DeclareMathOperator*{\bigdelta}{\scalerel*{\Delta}{\sum}}
\theoremstyle{definition}
\newtheorem{theorem}{Theorem}[section]
\newtheorem{lemma}[theorem]{Lemma}
\newtheorem{definition}[theorem]{Definition}
\newtheorem{remark}[theorem]{Remark}
\newtheorem{example}[theorem]{Example}
\newcommand{\CZ}{$CZ$}
\newcommand{\CX}{$CX$}
\newcommand{\RZ}{$RZ$}
\newcommand{\RX}{$RX$}
\newcommand{\T}{$T$}
\title{Relating Measurement Patterns to Circuits via Pauli Flow}
\author{Will Simmons
\institute{Cambridge Quantum Computing Ltd \\ 9a Bridge Street, Cambridge, UK}
\institute{Department of Computer Science, University of Oxford \\
Wolfson Building, Parks Road, Oxford, UK}
\email{will.simmons@cambridgequantum.com}}
\begin{document}

\maketitle

\begin{abstract}
The one-way model of Measurement-Based Quantum Computing and the gate-based circuit model give two different presentations of how quantum computation can be performed. There are known methods for converting any gate-based quantum circuit into a one-way computation, whereas the reverse is only efficient given some constraints on the structure of the measurement pattern. Causal flow and generalised flow have already been shown as sufficient, with efficient algorithms for identifying these properties and performing the circuit extraction. Pauli flow is a weaker set of conditions that extends generalised flow to use the knowledge that some vertices are measured in a Pauli basis. In this paper, we show that Pauli flow can similarly be identified efficiently and that any measurement pattern whose underlying graph admits a Pauli flow can be efficiently transformed into a gate-based circuit without using ancilla qubits. We then use this relationship to derive simulation results for the effects of graph-theoretic rewrites in the ZX-calculus using a more circuit-like data structure we call the Pauli Dependency DAG.
\end{abstract}

\section{Introduction}\label{sec:Introduction}


There are numerous approaches to quantum computation that differ in how problems are encoded and how the system evolves over time. The one-way model \cite{Raussendorf2001} is a variant of Measurement-Based Quantum Computing (MBQC) where the inputs are embedded into a general graph state, and the chosen computation is specified by a sequence of single-qubit measurements to apply, consuming the state and inducing a desired output state on the remaining qubits. In contrast, the gate-based model\cite{Nielsen2010} applies unitary gates from some universal gateset to the input qubits directly. Gate-based circuits may use ancilla qubits and measurements, though they are only necessary for expanding to a larger state space for outputs and reading out final data.


There is a wide literature on rewriting both measurement patterns \cite{Duncan2010, DaSilva2013, Eslamy2018} and circuits \cite{Maslov2017, Fagan, Kissinger2019, Zhang2019a, Amy2019, Cowtan2020} to reduce the resource cost. Of particular interest here is the use of graph-theoretic rewrites of ZX-diagrams (in close correspondence with measurement patterns \cite{Duncan2010}) by Kissinger and van de Wetering \cite{Kissinger2019}, and of representations focussed on sequences of rotations about Pauli tensors such as by Zhang and Chen \cite{Zhang2019a} herein referred to as Pauli Dependency DAGs (PDDAGs) as they capture the temporal dependencies between these rotations. Both works presented these as techniques for \T~gate reduction with identical performance. The extraction method in this paper maps a measurement pattern/ZX-diagram directly into the form of a PDDAG, allowing us to formally demonstrate that each of the graph-theoretic rewrites can be simulated by moving Clifford-angled rotations through the PDDAG. This complements existing work on deriving equivalences between the PathSum formalism and the ZH calculus \cite{Backens2018, Amy2019a, Lemonnier2020} and between different graphical calculi \cite{FengNgQuanlongWang2017, Kuijpers2019} which have shown to provide useful insight on the semantics of operations for the more abstract calculi.


Previous work in this area has given algorithms for extracting circuits from measurement patterns that satisfy flow properties which describe how the errors from unwanted measurement outcomes can be corrected using the stabilizers of the graph state \cite{Beaudrap2010, Backens2020}. The weakest such property that allows for interesting (i.e. non-Pauli) measurements is generalised flow (gflow) for which the single-plane form exactly describes the set of patterns that are uniformly, strongly, and stepwise deterministic \cite{Browne}. Pauli flow \cite{Browne} weakens the uniformity condition by labelling some of the measurements as being in a Pauli basis rather than at some arbitrary angle in a plane of the Bloch sphere. This allows for more interesting corrections that make use of Pauli projections to absorb some terms of the graph state stabilizers. This paper directly extends the works of Mhalla and Perdrix \cite{Mhalla2008} and Backens et al. \cite{Backens2020} for gflow to handle this wider class of measurement patterns.


This paper starts by laying out the key definitions and concepts behind measurement patterns and existing results on causal flow and gflow in Section \ref{sec:MBQC}. Section \ref{sec:PDDAG} will give a short overview of the Pauli Dependency DAG structure and how it can be used to rewrite circuits. We will then introduce the notion of Pauli flow and decompose the circuit extraction problem to generate a procedure that works for any measurement pattern with Pauli flow in Section \ref{sec:Extraction}. Then Section \ref{sec:Rewrites} will cover a number of methods of rewriting measurement patterns in turn and investigate their effects on the PDDAG extracted to show how they can be simulated.

\section{Measurement-Based Quantum Computing}\label{sec:MBQC}


The one-way model follows the typical structure of MBQC of building some resource state which is then consumed by single-qubit measurements. The particular resource considered is a graph state, constructed by matching qubits (inputs and ancillas prepared in the $\ket{+} = \tfrac{1}{\sqrt{2}} \left( \ket{0} + \ket{1} \right)$ state) with vertices in a graph, and entangling them (with a \CZ~gate) according to the edges. Measurements are generally non-deterministic so, in order to have a deterministic effect overall, local gates can be applied to the remaining qubits to counteract the difference between the projectors for each outcome, giving the net effect of post-selecting the desired outcome. The introduction of such correction gates can be viewed as adapting the choice of basis for the future measurements.


Single qubit measurements are characterised by the bases they project into, which themselves can be described by a vector in the Bloch sphere. For MBQC, we conventionally restrict measurements into a plane of the Bloch sphere spanned by two of the Pauli bases. Such planar measurement bases are described by the following for $\alpha \in \left[ 0, 2\pi \right)$:
\begin{align}
\ket{+_{XY, \alpha}} &= \tfrac{1}{\sqrt{2}} \left(\ket{0} + e^{i\alpha}\ket{1}\right) &
\ket{-_{XY, \alpha}} &= \tfrac{1}{\sqrt{2}} \left(\ket{0} - e^{i\alpha}\ket{1}\right) \nonumber \\
\ket{+_{XZ, \alpha}} &= \cos \left( \tfrac{\alpha}{2} \right) \ket{0} + \sin \left( \tfrac{\alpha}{2} \right) \ket{1} &
\ket{-_{XZ, \alpha}} &= \sin \left( \tfrac{\alpha}{2} \right) \ket{0} - \cos \left( \tfrac{\alpha}{2} \right) \ket{1} \\
\ket{+_{YZ, \alpha}} &= \cos \left( \tfrac{\alpha}{2} \right) \ket{0} + i \sin \left( \tfrac{\alpha}{2} \right) \ket{1} &
\ket{-_{YZ, \alpha}} &= \sin \left( \tfrac{\alpha}{2} \right) \ket{0} - i \cos \left( \tfrac{\alpha}{2} \right) \ket{1} \nonumber
\end{align}
with the following special cases for Pauli measurements for $\alpha = a \pi$ ($a \in \{0, 1\}$):
\begin{align}
\ket{+_{X, a\pi}} &= \tfrac{1}{\sqrt{2}} \left(\ket{0} + (-1)^a \ket{1}\right) &
\ket{-_{X, a\pi}} &= \tfrac{1}{\sqrt{2}} \left(\ket{0} - (-1)^a \ket{1}\right) \nonumber \\
\ket{+_{Y, a\pi}} &= \tfrac{1}{\sqrt{2}} \left(\ket{0} + i (-1)^a \ket{1}\right) &
\ket{-_{Y, a\pi}} &= \tfrac{1}{\sqrt{2}} \left(\ket{0} - i (-1)^a \ket{1}\right) \\
\ket{+_{Z, a\pi}} &= \left(1 - a\right) \ket{0} + a \ket{1} &
\ket{-_{Z, a\pi}} &= a \ket{0} + \left(1 - a\right) \ket{1} \nonumber
\end{align}

For any measurement basis, the negative outcome at angle $\alpha$ is equivalent to the positive outcome at angle $\alpha + \pi$. We usually treat the positive measurement outcome as the desired branch, i.e. the projector we want to apply to achieve our desired end state.


Formally, measurement patterns (i.e. MBQC programs) are defined as follows:

\begin{definition}[Measurement pattern]
A \textit{measurement pattern} consists of a collection $V$ of qubits with distinguished subsets $I, O \subseteq V$ of inputs and outputs, and a sequence of commands from:
\begin{itemize}
\item Preparations $N_u$, initialising qubit $u \notin I$ to $\ket{+}$;
\item Entangling operators $E_{uv}$, applying a \CZ~gate between distinct qubits $u, v \in V$;
\item Destructive measurements $M_u^{\lambda, \alpha}$, projecting qubit $u \notin O$ onto either $\ket{+_{\lambda, \alpha}}$ with outcome $0$ or $\ket{-_{\lambda, \alpha}}$ with outcome $1$;
\item Corrections $[X_u]^v$ or $[Z_u]^v$, conditionally applying an $X$ gate or a $Z$ gate to qubit $u \in V$ if the outcome of the measurement for qubit $v$ was $1$.
\end{itemize}
A measurement pattern is \textit{runnable} if additionally:
\begin{itemize}
\item All non-input qubits are prepared exactly once;
\item A non-input qubit is not acted on by any other command before its preparation;
\item All non-output qubits are measured exactly once;
\item A non-output qubit is not acted on by any other command after its measurement;
\item No correction depends on an outcome not yet measured.
\end{itemize}
\end{definition}

The intended branch of a measurement pattern (where all measurement outcomes are $0$ and hence no corrections are required) can be summarised by a tuple $(\Gamma, \alpha)$ of a labelled open graph $\Gamma$ describing the entanglement and measurement planes, and an assignment of measurement angles $\alpha : \overline{O} \to \left[ 0, 2\pi \right)$.

\begin{definition}[Labelled open graph]
A \textit{labelled open graph} is a tuple $\Gamma = \left(G, I, O, \lambda\right)$ where $G = (V, E)$ is an undirected graph, $I, O \subseteq V$ are (possibly overlapping) subsets of vertices for inputs and outputs respectively, and $\lambda : \overline{O} \to \{XY, XZ, YZ, X, Y, Z\}$ is a labelling function assigning a measurement plane or Pauli to each non-output vertex.
\end{definition}

\begin{remark}
To fix notation, we will use $\overline{I} = V \setminus I$ for non-input (prepared) vertices and $\overline{O} = V \setminus O$ for non-output (measured) vertices. $u \sim v$ denotes vertices $u, v \in V$ being adjacent in $G$. Neighbour sets are $N_G(u) := \{w \in V | w \sim u\}$ and odd neighbourhoods are $\mathrm{Odd}_G(A) := \left\{w \in V \mid \left|N_G(w) \cap A\right| \text{ is odd} \right\}$, writing $\mathrm{Odd}(A)$ when the choice of graph is obvious. The symmetric difference of sets is written as $A \Delta B := (A \setminus B) \cup (B \setminus A)$. When drawing diagrams for measurement patterns, we will distinguish between measured and output vertices as filled and unfilled points respectively, and inputs are specified by boxes around the vertices. For linear maps, subscripts may be used to specify that a map acts on some given qubit(s) and is the identity on all others, such as $\bra{+_{\lambda(v), \alpha(v)}}_v$ or $P_v$ for some Pauli $P$.
\end{remark}


Each branch (combination of measurement outcomes) gives rise to a linear map from the inputs to the outputs based on the commands run and the projections observed. The overall channel from considering all branches is a completely-positive trace-preserving map whose Kraus maps are exactly the branch maps. When we have a strongly deterministic pattern (all branches are equal up to global phase), we just associate it with the linear map of the intended branch. Since runnable patterns can be standardised to perform all initialisations first, then all entangling gates, and finally alternate measurements and corrections, this linear map also has a standard representation.

\begin{definition}[Linear map of a pattern]
The linear map associated with a measurement pattern $(\Gamma, \alpha)$ is given by
\begin{equation}\label{eq:PatternSemantics}
M_{\Gamma, \alpha} := \left( \prod_{u \in \overline{O}} \bra{+_{\lambda(u), \alpha(u)}}_u \right) E_G N_{\overline{I}}
\end{equation}
where $E_G := \prod_{u \sim v} E_{uv}$ entangles pairs of adjacent qubits in the graph with \CZ~gates and $N_{\overline{I}} := \prod_{u \in \overline{I}} N_u$ initialises every non-input in the $\ket{+}$ state.
\end{definition}


Interpreting the graph state as $E_G N_{\overline{I}}$ gives rise to a stabilizer per initialised vertex $u \in \overline{I}$:
\begin{equation}\label{eq:GraphStateStabilizer}
E_G N_{\overline{I}} = E_G X_u N_{\overline{I}} = \left( \prod_{v \in N_G(u)} Z_v \right) X_u E_G N_{\overline{I}}
\end{equation}

Since the linear map of a pattern is not just a simple graph state but a projected graph state, it is possible for some Pauli terms introduced by graph state stabilizers to be absorbed by the Pauli projections:
\begin{equation}\label{eq:PostEigen}
\begin{split}
\bra{+_{X, a\pi}}_u &= (-1)^a \bra{+_{X, a\pi}}_u X_u \\
\bra{+_{Y, a\pi}}_u &= (-1)^a \bra{+_{Y, a\pi}}_u Y_u \\
\bra{+_{Z, a\pi}}_u &= (-1)^a \bra{+_{Z, a\pi}}_u Z_u \\
\end{split}
\end{equation}

These stabilizers and Pauli absorptions are practical for deriving possible means of correcting for unwanted measurement outcomes.


The intention behind different types of flow conditions is to capture the ability to propagate errors from unwanted measurement outcomes forward to the rest of the circuit in order to correct them, aiming for stepwise determinism (each measurement can be corrected independently). Causal flow is the simplest case where we suppose all vertices are measured in the $XY$ basis and each error can be corrected by considering a single stabilizer of the graph state.

\begin{definition}[Causal flow \cite{Danos2006}]
Given a labelled open graph $\Gamma = (G, I, O, \lambda)$ such that $\forall u \in \overline{O} . \lambda(u) = XY$, a \textit{causal flow} for $\Gamma$ is a tuple $(f, \prec)$ of a map $f : \overline{O} \to \overline{I}$ and a strict partial order $\prec$ over $V$ such that for all $v \in \overline{O}$:
\begin{itemize}
\item $v \sim f(v)$
\item $v \prec f(v)$
\item $\forall w \in N_G(f(v)) . w = v \vee v \prec w$
\end{itemize}
\end{definition}

The idea here is that $Z_v$ from the graph stabilizer $\left(\prod_{w \in N_G(f(v))} Z_w\right) X_{f(v)}$ will eliminate the effect of the measurement error on qubit $v$, meaning we can correct the error by applying $\left(\prod_{w \in N_G(f(v)) \setminus \{v\}} Z_w \right) X_{f(v)}$ and implicitly invoking the stabilizer. The partial order $\prec$ indicates a required order of the measurements, ensuring that none of the vertices required for correcting $v$ have been measured yet.

Generalised flow takes a similar approach, but allows us to take combinations of the basic stabilizers. If the stabilizer of a candidate $f(v)$ would require a $Z$ correction on some $u \in N_G(f(v))$ that has already been measured, there may exist some other stabilizer we can apply that cancels out the $Z$ for us. Now, instead of the stabilizer being determined by a single vertex $f(v) \in \overline{I}$, we have a set $g(v) \subseteq \overline{I}$ giving $\left( \prod_{w \in \mathrm{Odd}(g(v))} Z_w \right) \left( \prod_{w \in g(v)} X_w \right)$ up to phase. By relaxing these restrictions on the stabilizers used, we can also generate $Y$ or $X$ on the measured vertex, allowing the correction of measurements in the $XZ$ and $YZ$ planes respectively.

\begin{definition}[Generalised flow \cite{Browne}]
Given a labelled open graph $\Gamma = (G, I, O, \lambda)$ such that $\forall u \in \overline{O} . \lambda(u) \in \{XY, XZ, YZ\}$, a \textit{generalised flow} (or \textit{gflow}) for $\Gamma$ is a tuple $(g, \prec)$ of a map $g : \overline{O} \to \mathcal{P}[\overline{I}]$ and a strict partial order $\prec$ over $V$ such that for all $v \in \overline{O}$:
\begin{itemize}
\item $\forall w \in g(v) . v \neq w \Rightarrow v \prec w$
\item $\forall w \in \mathrm{Odd} (g(v)) . v \neq w \Rightarrow v \prec w$
\item $\lambda(v) = XY \Rightarrow v \notin g(v) \wedge v \in \mathrm{Odd}(g(v))$
\item $\lambda(v) = XZ \Rightarrow v \in g(v) \wedge v \in \mathrm{Odd}(g(v))$
\item $\lambda(v) = YZ \Rightarrow v \in g(v) \wedge v \notin \mathrm{Odd}(g(v))$
\end{itemize}
\end{definition}

There exist polynomial-time algorithms for identifying whether or not a labelled open graph admits causal flow or gflow \cite{DeBeaudrap2008, Mhalla2008, Backens2020}, and for extracting an equivalent unitary circuit for the measurement pattern given either type of flow \cite{Beaudrap2010, Backens2020}.

\section{Pauli Dependency DAGs}\label{sec:PDDAG}


The Pauli Dependency DAG is a data structure for representing the action of a pure quantum circuit that abstracts away gate set, Clifford gates, and gate commutations. This was notably covered in the work of Zhang and Chen \cite{Zhang2019a} where it was used to identify possible pairs of \T~gates which could be merged through some sequence of gate commutations and Clifford gate relations in order to reduce the number of \T~gates in the circuit. Similar structures are also covered by Litinski \cite{Litinski2019} for compiling circuits for lattice surgery and by Gosset et al. \cite{Gosset2013} as an intermediate for synthesis of Clifford+\T~circuits.

The Clifford group is the group of linear maps that can be formed from \CX, Hadamard, and $RZ(\tfrac{\pi}{2})$ with qubit initialisation in the $\ket{0}$ state. These hold a special place in quantum theory as the maximal group of actions under which the Pauli group is closed. Their behaviour is conveniently captured by the stabilizer framework, allowing for canonical representations like the stabilizer tableau \cite{VandenNest2004, Aaronson2008}. They are often viewed as the ``easy'' gates in a circuit because of their simple algebra, efficient simulation \cite{Gottesman1998, Aaronson2008}, and low resource cost in many error correction schemes \cite{Calderbank1997}.


Most common gate types can be expressed as (combinations of) Pauli exponentials ($e^{i \theta P}$ for some $P \in \{I, X, Y, Z\}^{\otimes n}$), which also benefit from elegant relations with stabilizers.

\begin{lemma}[\namedlabel{lemma:GadgetStabCorrespondence}{Product Rotation Lemma}]
Let $A$ and $B$ be commuting operators such that $B \mathcal{C} = \mathcal{C}$ for some linear map $\mathcal{C}$. Then $e^{i \theta A} \mathcal{C} = e^{i \theta AB} \mathcal{C}$.
\end{lemma}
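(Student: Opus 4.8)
The plan is to prove this by expanding both exponentials as power series and matching them term by term after applying $\mathcal{C}$ on the right. Since in the circuit setting all operators are finite-dimensional, the series $e^{i\theta A} = \sum_{k \geq 0} \tfrac{(i\theta)^k}{k!} A^k$ and $e^{i\theta AB} = \sum_{k \geq 0} \tfrac{(i\theta)^k}{k!} (AB)^k$ converge absolutely, so we are free to split and recombine them term by term.

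There are two ingredients. First, because $A$ and $B$ commute we have $(AB)^k = A^k B^k$ for every $k \geq 0$. Second, from the hypothesis $B\mathcal{C} = \mathcal{C}$ a trivial induction gives $B^k \mathcal{C} = B^{k-1}(B\mathcal{C}) = B^{k-1}\mathcal{C} = \cdots = \mathcal{C}$ for all $k \geq 0$. Combining these, every term of the series satisfies
\[
\tfrac{(i\theta)^k}{k!}(AB)^k \mathcal{C} = \tfrac{(i\theta)^k}{k!} A^k (B^k \mathcal{C}) = \tfrac{(i\theta)^k}{k!} A^k \mathcal{C},
\]
and summing over $k$ yields $e^{i\theta AB}\mathcal{C} = \bigl(\sum_{k \geq 0} \tfrac{(i\theta)^k}{k!} A^k\bigr)\mathcal{C} = e^{i\theta A}\mathcal{C}$, as required.

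I expect essentially no obstacle here beyond the bookkeeping of justifying the term-by-term manipulation, which is immediate from absolute convergence in finite dimensions. If one prefers an argument that avoids handling the series explicitly, an alternative is to set $F(\theta) := e^{-i\theta A} e^{i\theta AB}\mathcal{C}$; since $B$ commutes with $A$ and hence with both $e^{-i\theta A}$ and $e^{i\theta AB}$, one gets $B F(\theta) = e^{-i\theta A} e^{i\theta AB} B\mathcal{C} = F(\theta)$, and differentiating gives $F'(\theta) = -iA\,F(\theta) + iAB\,F(\theta) = -iA\,F(\theta) + iA\,(B F(\theta)) = 0$, so $F(\theta) = F(0) = \mathcal{C}$, which rearranges to the claim. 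Either route is short; the power-series version is the cleanest to present.
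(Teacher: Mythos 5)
Your power-series argument is correct and is essentially the paper's own proof spelled out in detail: the paper likewise expands the Taylor series of the exponential, uses $B\mathcal{C}=\mathcal{C}$ to introduce $B^k$ in each term, and invokes commutativity to reassemble $(AB)^k$. The ODE argument you sketch as an alternative is also valid and is a genuinely different (if slightly heavier) route, but the series computation you give is the one the paper intends.
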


\begin{proof}
For any analytic function $F(A)$, we can expand its Taylor series in $F(A) \mathcal{C}$, then introduce and commute $B$ in each term to form the Taylor series for $F(AB)\mathcal{C}$.
\end{proof}


\begin{table}
\centering
\begin{tabular}[t]{c|l}
Gate & Equivalent Pauli exponentials \\
\hline
$CX_{ct}$ & $e^{-i\tfrac{\pi}{4} Z_c X_t} e^{i\tfrac{\pi}{4} Z_c} e^{i\tfrac{\pi}{4} X_t}$ \\
$CZ_{ct}$ & $e^{-i\tfrac{\pi}{4} Z_c Z_t} e^{i\tfrac{\pi}{4} Z_c} e^{i\tfrac{\pi}{4} Z_t}$ \\
$RZ(\alpha)$ & $e^{-i\tfrac{\alpha}{2} Z}$ \\
$RX(\alpha)$ & $e^{-i\tfrac{\alpha}{2} X}$ \\
$H$ & $e^{-i\tfrac{\pi}{4} Z} e^{-i\tfrac{\pi}{4} X} e^{-i\tfrac{\pi}{4} Z}$ \\
$CCX_{abt}$ & $e^{-i\tfrac{\pi}{8} Z_a Z_b X_t} e^{i\tfrac{\pi}{8} Z_a Z_b} e^{i\tfrac{\pi}{8} Z_a X_t} e^{i\tfrac{\pi}{8} Z_b X_t} e^{-i\tfrac{\pi}{8} Z_a} e^{-i\tfrac{\pi}{8} Z_b} e^{-i\tfrac{\pi}{8} X_t}$ \\
\end{tabular}
\caption{Summary of conversions from common gates into Pauli exponentials (up to global phase). Subscripts are used to indicate specific qubits for multi-qubit gates. Other gates can similarly be represented by first decomposing into a universal gate set such as $\{CX, RZ, RX\}$ and optionally tidying up using the \ref{lemma:CommutationRules}.}\label{tab:GatesAsExponentials}
\end{table}

Pauli exponentials where the coefficient is an integral multiple of $\tfrac{\pi}{4}$ correspond to Clifford gates. The action of Clifford gates on both Paulis and arbitrary Pauli exponentials can be summarised in a few equations.

\begin{lemma}[\namedlabel{lemma:CommutationRules}{Reorder Rules}]
For any Pauli strings $A, B \in \{I, X, Y, Z\}^{\otimes n}$ and angles $\theta, \phi$, if $A$ and $B$ commute, then
\begin{align}
e^{i \theta A} B &= B e^{i \theta A} \\
e^{i \theta A} e^{i \phi B} &= e^{i \phi B} e^{i \theta A}
\end{align}
and otherwise (i.e. they anticommute)
\begin{align}
e^{i \tfrac{\pi}{4} A} B &= (iAB) e^{i\tfrac{\pi}{4} A} \\
e^{i \tfrac{\pi}{4} A} e^{i \phi B} &= e^{i \phi (iAB)} e^{i\tfrac{\pi}{4} A} \label{eq:AntiCommutingExpRule}
\end{align}
\end{lemma}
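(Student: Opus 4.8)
The plan is to treat the four identities as two genuinely distinct cases — the commuting case and the anticommuting case — and in each case to reduce the statement about Pauli exponentials to a statement about the Pauli strings themselves, using the power series expansion of the exponential together with the already-proven Product Rotation Lemma where convenient. The overarching observation is that $e^{i\theta A}$ is an analytic function of $A$, so everything will follow from how $B$ slides past finite powers $A^k$.

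First I would dispatch the commuting case. If $AB = BA$, then trivially $A^k B = B A^k$ for all $k$, so expanding $e^{i\theta A} = \sum_k \tfrac{(i\theta)^k}{k!} A^k$ and moving $B$ through term by term gives $e^{i\theta A} B = B e^{i\theta A}$, which is the first identity. For the second identity I would simply note that $e^{i\phi B}$ is a power series in $B$, apply the first identity repeatedly (or expand both exponentials and use $A^k B^\ell = B^\ell A^k$), and conclude $e^{i\theta A} e^{i\phi B} = e^{i\phi B} e^{i\theta A}$. No subtlety here beyond bookkeeping of absolutely convergent series.

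Next the anticommuting case, which is where the real content lies. Here $AB = -BA$, and crucially, since $A$ and $B$ are Pauli strings, $A^2 = B^2 = I$ (up to the irrelevant fact that Paulis square to the identity). The key algebraic fact I would establish is that $B$ anticommutes with $A$ but $B$ \emph{commutes} with $A^2 = I$, and more usefully that the operator $iAB$ is itself a Hermitian Pauli string (since $A,B$ anticommute, $(AB)^\dagger = B^\dagger A^\dagger = BA = -AB$, so $iAB$ is Hermitian) and that $iAB$ commutes with $A$. Then I would compute $e^{i\frac{\pi}{4}A} B e^{-i\frac{\pi}{4}A}$ directly: using $A^2 = I$ we get $e^{i\frac{\pi}{4}A} = \cos\tfrac{\pi}{4}\, I + i\sin\tfrac{\pi}{4}\, A = \tfrac{1}{\sqrt 2}(I + iA)$, and conjugating $B$ by this, the anticommutation $AB = -BA$ collapses the four terms to exactly $iAB$. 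This yields $e^{i\frac{\pi}{4}A} B = (iAB) e^{i\frac{\pi}{4}A}$, the third identity. For the fourth identity, once we know $e^{i\frac{\pi}{4}A} B e^{-i\frac{\pi}{4}A} = iAB$, I would conjugate the whole exponential: $e^{i\frac{\pi}{4}A} e^{i\phi B} e^{-i\frac{\pi}{4}A} = \exp\!\big(i\phi\, e^{i\frac{\pi}{4}A} B e^{-i\frac{\pi}{4}A}\big) = e^{i\phi(iAB)}$, using that conjugation commutes with the power series, and then multiply on the right by $e^{i\frac{\pi}{4}A}$ to recover Eq.~\eqref{eq:AntiCommutingExpRule}.

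The main obstacle — really the only place where care is needed — is the anticommuting-case conjugation computation: one must use the Pauli-specific fact $A^2 = I$ to get the closed form $\tfrac{1}{\sqrt 2}(I+iA)$ for $e^{i\frac{\pi}{4}A}$ (this is exactly why the angle is pinned to $\tfrac{\pi}{4}$ and the identities fail for general $\theta$), and then track the four cross-terms in $(I+iA)B(I-iA)$, repeatedly applying $AB=-BA$ and $A^2=I$, to see that the $B$ and $A B A = -B$ terms cancel while the $iAB$ and $-iABA^2 \cdot(-1) = iAB$ terms add, giving $iAB$ up to the normalisation. Everything else is routine manipulation of absolutely convergent operator series.
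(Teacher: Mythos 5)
Your proposal is correct and follows essentially the same route as the paper's own (terse) proof: the commuting case is immediate from term-by-term commutation of the Taylor series, and the anticommuting case hinges on the Euler decomposition $e^{i\theta A} = \cos\theta\, I + i\sin\theta\, A$, which is valid precisely because Pauli strings satisfy $A^2 = I$, specialised here to $\theta = \tfrac{\pi}{4}$ to give $\tfrac{1}{\sqrt{2}}(I + iA)$. Your only small departure is in the fourth identity, where you conjugate the full power series $e^{i\phi B}$ by $e^{i\tfrac{\pi}{4}A}$ rather than decomposing $e^{i\phi B} = \cos\phi\, I + i\sin\phi\, B$ and applying the third identity to the $B$ term; both are elementary and equivalent, and both implicitly use that $iAB$ is again a Hermitian Pauli string with $(iAB)^2 = I$ so that $e^{i\phi(iAB)}$ makes sense as a genuine Pauli rotation.
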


\begin{proof}
Any operator satisfying $A^2 = I$ and real $\theta$ permit the decomposition $e^{i \theta A} = \cos \theta I + i \sin \theta A$ by grouping terms in the Taylor expansion. Each equation follows from decomposing one of the exponentials in this way. Whilst the right-hand side of Equation \ref{eq:AntiCommutingExpRule} appears to contain a real exponent, remember that $iAB$ is a real Pauli string when $A$ and $B$ anticommute.
\end{proof}


We can represent any circuit as a set of qubit initialisations followed by a sequence of Pauli exponentials by decomposing each gate in turn. The above rules can then be used to move any Clifford-angled exponentials to the start of the circuit, resulting in the product form $\left( \prod_{k} e^{i \phi_k A_k} \right) \mathcal{C}$ where $\mathcal{C}$ is a stabilizer process and each $\phi_k$ is not an integral multiple of $\tfrac{\pi}{4}$. Previous presentations \cite{Litinski2019,Zhang2019a} collected the Clifford gates at the end of the circuit rather than at the start. When considering unitary circuits, the choice of direction is arbitrary, but in the general case we may not always be able to transport qubit initialisations through the Pauli exponentials to the end.


$\mathcal{C}$ can be expressed canonically by its stabilizer tableau. Specifically, since we could have an arbitrary number of inputs we need a variant mixing the extremes of the unitary case \cite{Aaronson2008} and the usual state case \cite{VandenNest2004} to capture both how Paulis over the inputs are transformed to Pauli strings over the outputs and additional generators for free stabilizers over the outputs. In examples, we will describe this by the stabilizers of the Choi operator of $\mathcal{C}$ (which we shall refer to as an \textit{isometry tableau} for clarity), from which it is simple to convert it to any binary matrix format of choice. Such tableaux can be identified and synthesised by embedding them into unitary tableaux by breaking down the isometry into an initialisation of $|O|-|I|$ qubits in the $\ket{0}$ state followed by a unitary circuit which maps $Z$ on the initialised qubits to each of the free output stabilizers and $X$ to any choice of operators that extend our generators to span the full Pauli group. Since the rows are just generators for the stabilizer group, we still have the notion of ``free actions'' given by reordering rows or multiplying two rows together.

For the rotation list, there is still some obvious redundancy in the product form resulting from commutations. Because anticommuting Pauli strings prevent commutation of their exponentials, any valid ordering of the rotations will preserve the relative order of rotations with anticommuting strings, inducing a temporal dependency between them. Taking the transitive closure of these dependencies gives a partial order between the exponentials which represents $\prod_{k} e^{i \phi_k A_k}$ up to any number of commutations.

\begin{definition}[Pauli Dependency DAG]
A \textit{Pauli Dependency DAG} (PDDAG) for a circuit is a pair of an isometry tableau for a Clifford map $\mathcal{C}$ and a directed acyclic graph for a partial order $\prec$ over terms $\{(A_k, \alpha_k)\}_k$ ($A_k \in \{I, X, Y, Z\}^n$, $\alpha_k \in \left[ 0, 2\pi \right)$) such that the linear map of the circuit is given by $\left( \prod_k^\succ e^{i \tfrac{\alpha_k}{2} A_k} \right) \mathcal{C}$\footnote{The convention of using $e^{i \tfrac{\alpha_k}{2} A_k}$ rather than $e^{i \alpha_k A_k}$ is to give a closer semblance to the definitions of usual rotation gates like \RZ~and their multi-qubit counterparts from literature on Pauli gadgets \cite{Cowtan2020, Cowtan2020a} and to make the notation easier when we come to discuss extraction but is just a notational choice.} up to global phase.
\end{definition}

Since the full $\prec$ relation is rather large, we use the minimal representation/Hasse diagram which includes edges in the DAG for immediate successors (those pairs of exponentials whose strings anticommute and could appear adjacent to one another in some valid topological ordering).

\begin{figure}
\centering
\tikzfig{tikz_figs/circ_gadget_long}

\vspace{0.5cm}

\addtolength{\tabcolsep}{-3pt}
\begin{tabular}{cc|cc|c}
\multicolumn{2}{c|}{Ins} & \multicolumn{2}{c|}{Outs} & Sign \\
\hline
$X$ & & $Y$ & & $+$ \\
& $X$ & & $X$ & $+$ \\
\hline
$Z$ & & $Z$ & & $+$ \\
& $Z$ & & $Z$ & $+$ \\
\end{tabular}
\addtolength{\tabcolsep}{3pt}
\begin{tikzcd}[ampersand replacement=\&,row sep=0.cm,column sep=0.5cm]
(Z_1 I_2, -\alpha_0) \arrow{r} \& (X_1 I_2, \alpha_2) \arrow{r} \arrow{rd} \& (Y_1 X_2, -\alpha_3) \arrow{r} \arrow{rdd} \& (Z_1 I_2, -\alpha_6) \\
\& \& (Y_1 X_2, -\alpha_5) \arrow{ru} \arrow{rd} \& \\
\& (I_1 Z_2, -\alpha_1) \arrow{r} \arrow{ru} \arrow{ruu} \& (I_1 X_2, \alpha_4) \arrow{r} \& (I_1 Y_2, -\alpha_7)
\end{tikzcd}
\caption{An example quantum circuit and the corresponding isometry tableau and rotation DAG which form the PDDAG. Since the nodes of angles $\alpha_3$ and $\alpha_5$ have no relative dependencies and share the same Pauli string, it is possible to commute them to be adjacent and merge them to a single rotation. The $\alpha_4$ rotation can commute through the \CX gates so is only blocked by $\alpha_1$ in the past and $\alpha_7$ in the future.}
\end{figure}
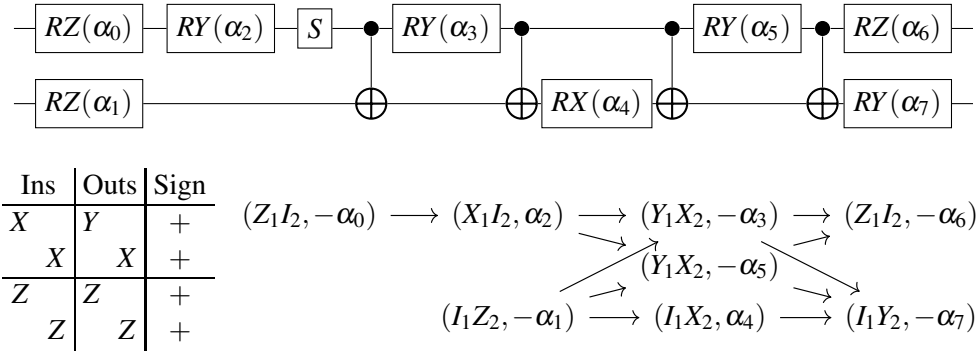


Rewrite rules for PDDAGs change the structure but preserve the linear map. The most useful rewrite for circuit reduction is rotation merging: two nodes in the graph $j, k$ can be merged to give a single node with $(A_j, \alpha_j + \alpha_k)$ if $A_j = A_k$ and both $j \not\prec k$ and $k \not\prec j$ (i.e. there is some valid topological ordering of the graph which puts these rotations together).

The PDDAG definition permits Clifford-angled rotations, allowing the movement of Clifford rotations through the circuit via the \ref{lemma:CommutationRules} to be an explicit rewrite. Combining this with merging/splitting of phases and eliminating rotations with angle $0$ permits a more rigid canonical form where the range of angles for each rotation is reduced from $\left[0, 2\pi\right)$ to $\left(0, \tfrac{\pi}{2}\right)$.

When the initial Clifford process includes some qubit initialisations, it will have some free stabilizers over its outputs that can be used to change the Pauli strings of rotations in the DAG using the \ref{lemma:GadgetStabCorrespondence}. We can use the \ref{lemma:CommutationRules} to propagate the stabilizers beyond the initial rotations to find more places to perform this rewrite. Changing the Pauli strings of rotations in this way can affect how they commute/anticommute with their neighbours, modifying the dependency relation $\prec$.


Extracting a circuit back out of an arbitrary PDDAG can be naively done by synthesising the stabilizer tableau \cite{Aaronson2008, Maslov} and each of the Pauli exponentials in turn according to some topological ordering using some standard decompositions \cite{Barkoutsos2018, Cowtan2020} although this will typically add an extremely high amount of redundant Clifford gates. More efficient synthesis can be performed using techniques for synthesising pairs of rotations simultaneously \cite{Cowtan2020} or by diagonalising sets of mutually commuting rotations \cite{VandenBerg2020, Cowtan2020a}. It is also possible that future architectures may find efficient ways to perform each Pauli rotation natively or employ lattice surgery where it is practical to just perform them directly \cite{Litinski2019}.

Alternatively, similar to the action of phase teleportation in ZX-diagram rewriting \cite{Kissinger2019}, one can also retain the structure of a circuit and just use the PDDAG structure to spot where non-adjacent gates can be merged via the rotation merging rewrites. This is typically good when the original circuit already has a relatively low density of Clifford gates when it is unlikely that resynthesis will give as efficient a circuit structure.

\section{Circuit Extraction}\label{sec:Extraction}


The goal of circuit extraction is to identify a sequence of gates that implements the same linear map as a given measurement pattern. Simply calculating the linear map and using standard matrix decomposition techniques is not sufficient since it will scale exponentially with the size of the pattern, making it impractical in the long-run. The method presented here will make use of a Pauli flow to determine the effect that each measurement angle has on the outputs. Doing so will yield the PDDAG representation directly, giving a rotation per planar measurement and a stabilizer process.


To motivate our method, recall the possible measurement bases from Section \ref{sec:MBQC}. Each planar basis projection can be constructed as a basic rotation of some Pauli basis:
\begin{equation}\label{eq:RotateBasis}
\begin{split}
\bra{\pm_{XY, \alpha}} &\approx \bra{\pm_{X, 0}} e^{i\tfrac{\alpha}{2} Z} \\
\bra{\pm_{XZ, \alpha}} &\approx \bra{\pm_{Z, 0}} e^{i\tfrac{\alpha}{2} Y} \\
\bra{\pm_{YZ, \alpha}} &\approx \bra{\pm_{Z, 0}} e^{-i\tfrac{\alpha}{2} X}
\end{split}
\end{equation}

The key idea driving this method of circuit extraction is to apply the \ref{lemma:GadgetStabCorrespondence} to alter these rotations and move them to onto the output qubits, but this requires identifying an appropriate stabilizer of the graph state to use. In a similar way, flow conditions already describe stabilizers for propagating errors onto other qubits. Pauli flow is one such example which can handle both planar measurements $\lambda(v) \in \{XY, XZ, YZ\}$ and special cases for Pauli measurements $\lambda(v) \in \{X, Y, Z\}$. Similar to gflow, it gives a stabilizer that applies a Pauli $X$ to vertices in $p(v)$ and a Pauli $Z$ to vertices in $\mathrm{Odd}(p(v))$. However, because Equation \ref{eq:PostEigen} means Pauli corrections have no observable effect if the qubit is measured in the same basis, it doesn't matter if, for example, a vertex $u \in p(v)$ with $\lambda(u) = X$ has already been measured before $v$.

\begin{equation}
\bra{+_{X, \alpha(u)}}_u E_G N_{\overline{I}} \approx \bra{+_{X, \alpha(u)}}_u \left( \prod_{\substack{w \in p(v) \\ w \neq u}} X_w \right) \left( \prod_{\substack{w \in \mathrm{Odd}(p(v))}} Z_w \right) E_G N_{\overline{I}}
\end{equation}

\begin{definition}[Pauli flow \cite{Browne, Perdrix2017}]
Given a labelled open graph $\Gamma = (G, I, O, \lambda)$, a \textit{Pauli flow} for $\Gamma$ is a tuple $(p, \prec)$ of a map $p : \overline{O} \to \mathcal{P}[\overline{I}]$ and a strict partial order $\prec$ over $V$ such that for all $u \in \overline{O}$:
\begin{enumerate}
\item[\namedlabel{PF1}{\textcolor{blue}{$[\prec .X]$}}] $\forall v \in p(u) . u \neq v \wedge \lambda(v) \notin \{X, Y\} \Rightarrow u \prec v$
\item[\namedlabel{PF2}{\textcolor{blue}{$[\prec .Z]$}}] $\forall v \in \mathrm{Odd}(p(u)) . u \neq v \wedge \lambda(v) \notin \{Y, Z\} \Rightarrow u \prec v$
\item[\namedlabel{PF3}{\textcolor{blue}{$[\prec .Y]$}}] $\forall v \preceq u . u \neq v \wedge \lambda(v) = Y \Rightarrow \left( v \in p(u) \Leftrightarrow v \in \mathrm{Odd}(p(u)) \right)$
\item[\namedlabel{PF4}{\textcolor{blue}{$[\lambda .XY]$}}] $\lambda(u) = (X, Y) \Rightarrow u \notin p(u) \wedge u \in \mathrm{Odd}(p(u))$
\item[\namedlabel{PF5}{\textcolor{blue}{$[\lambda .XZ]$}}] $\lambda(u) = (X, Z) \Rightarrow u \in p(u) \wedge u \in \mathrm{Odd}(p(u))$
\item[\namedlabel{PF6}{\textcolor{blue}{$[\lambda .YZ]$}}] $\lambda(u) = (Y, Z) \Rightarrow u \in p(u) \wedge u \notin \mathrm{Odd}(p(u))$
\item[\namedlabel{PF7}{\textcolor{blue}{$[\lambda .X]$}}] $\lambda(u) = X \Rightarrow u \in \mathrm{Odd}(p(u))$
\item[\namedlabel{PF8}{\textcolor{blue}{$[\lambda .Z]$}}] $\lambda(u) = Z \Rightarrow u \in p(u)$
\item[\namedlabel{PF9}{\textcolor{blue}{$[\lambda .Y]$}}] $\lambda(u) = Y \Rightarrow (u \notin p(u) \wedge u \in \mathrm{Odd}(p(u))) \vee (u \in p(u) \wedge u \notin \mathrm{Odd}(p(u)))$
\end{enumerate}
where $u \preceq v := \neg(v \prec u)$.
\end{definition}

\begin{definition}[Extraction string]\label{def:ExtractionStringDef}
Let $(\Gamma, \alpha)$ describe a measurement pattern with Pauli flow $(p, \prec)$ and some chosen measured vertex $v \in \overline{O}$. A Pauli string $\mathbf{P}$ over the outputs is a \textit{$P$-extraction string} ($P \in \{X, Y, Z\}$) for $v$ if $P_v \mathbf{P}$ is a stabilizer of the linear map $\mathcal{C}$:
\begin{equation}\label{eq:ExtractionStringDef}
\mathcal{C} := \left( \prod_{\substack{w \in \overline{O} \\ w \succ v \\ \lambda(w) \in \{XY, XZ, YZ\}}} \bra{+_{\lambda(w), 0}}_w \right) \left( \prod_{\substack{w \in \overline{O} \setminus \{v\} \\ \lambda(w) \in \{X, Y, Z\}}} \bra{+_{\lambda(w), \alpha(w)}}_w \right) E_G N_{\overline{I}} \\
\end{equation}

A \textit{primary extraction string} $\mathbf{P}^{\bot v}$ for $v$ is a $P^{\bot v}$-extraction string where
\begin{equation}
P^{\bot v} := \begin{cases}
X & \text{if } v \in p(v) \wedge v \notin \mathrm{Odd}(p(v)) \\
Y & \text{if } v \in p(v) \wedge v \in \mathrm{Odd}(p(v)) \\
Z & \text{if } v \notin p(v) \wedge v \in \mathrm{Odd}(p(v)) \\
\end{cases}
\end{equation}
\end{definition}

Relating this definition to the goal of using the \ref{lemma:GadgetStabCorrespondence}, the cases for $P^{\bot v}$ match the rotations in Equation \ref{eq:RotateBasis}. The fact that the rest of the stabilizer is over the outputs means we are successfully removing it from the scope of the pattern. For the purposes of extraction, we can assume all measurement angles of future vertices have already been extracted, making them Pauli projections. The notion of a focussed flow \cite{Mhalla2008, Backens2020} then guarantees that all Pauli terms are absorbed, leaving something in the form of an extraction string.

\begin{definition}[Focussed]\label{def:FocussedFlow}
Given a labelled open graph $\Gamma$, a set $\hat{p} \subset \overline{I}$ is \textit{focussed over $S \subseteq \overline{O}$} if:
\begin{enumerate}
\item[\namedlabel{FOC1}{\textcolor{blue}{$[FX]$}}] $\forall w \in S \cap \hat{p} . \lambda(w) \in \{XY, X, Y\}$ 
\item[\namedlabel{FOC2}{\textcolor{blue}{$[FZ]$}}] $\forall w \in S \cap \mathrm{Odd}(\hat{p}) . \lambda(w) \in \{XZ, YZ, Y, Z\}$ 
\item[\namedlabel{FOC3}{\textcolor{blue}{$[FY]$}}] $\forall w \in S . \lambda(w) = Y \Rightarrow \left( w \in \hat{p} \Leftrightarrow w \in \mathrm{Odd}(\hat{p}) \right)$ 
\end{enumerate}

A \textit{focussed set} $\hat{p}$ for $\Gamma$ is focussed over $\overline{O}$. A Pauli flow $(p, \prec)$ is \textit{focussed} if $p(v)$ is focussed over $\overline{O} \setminus \{v\}$ for all $v \in \overline{O}$.
\end{definition}


\begin{lemma}\label{lemma:ExtractionFromFocussed}
Let $\Gamma = (G, I, O, \lambda)$ be a labelled open graph with some measurement angles $\alpha : \overline{O} \to \left[0, 2\pi \right)$ and a focussed Pauli flow $(p, \prec)$. Then for any vertex $v \in \overline{O}$, $p(v)$ determines a $P^{\bot v}$-extraction string.
\end{lemma}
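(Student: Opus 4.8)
The plan is to build the extraction string explicitly out of the graph-state stabilizer indexed by $p(v)$, and then use the focussedness conditions together with conditions \ref{PF1}--\ref{PF2} to show that this stabilizer collapses onto the outputs and $v$ when pushed through the projections defining $\mathcal{C}$.

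First I would set $\tilde{S} := \prod_{w \in p(v)} \left( X_w \prod_{x \in N_G(w)} Z_x \right)$, the product of the graph-state stabilizers of Equation \ref{eq:GraphStateStabilizer} associated with the vertices of $p(v)$. Being a product of pairwise-commuting Hermitian Paulis it is, up to an overall sign $\epsilon_0 = \pm 1$, the Pauli string that is $X$ on $p(v) \setminus \mathrm{Odd}(p(v))$, $Z$ on $\mathrm{Odd}(p(v)) \setminus p(v)$, $Y$ on $p(v) \cap \mathrm{Odd}(p(v))$ and $I$ elsewhere, and it satisfies $\tilde{S}\, E_G N_{\overline{I}} = E_G N_{\overline{I}}$ exactly. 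On the vertex $v$ its single-qubit factor is precisely $P^{\bot v}_v$ by the definition of $P^{\bot v}$ in Definition \ref{def:ExtractionStringDef} (conditions \ref{PF4}--\ref{PF9} guarantee $v$ lands in one of the three listed cases, so $P^{\bot v}$ is well defined and differs from $\lambda(v)$). I would then let $\mathbf{P}$ be the restriction of $\tilde{S}$ to $O$ and claim that $P^{\bot v}_v \mathbf{P}$ stabilizes $\mathcal{C}$.

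Next I would commute $\tilde{S}$ to the left through the projections in $\mathcal{C}$; as these act on pairwise disjoint qubits, only the single-qubit factor of $\tilde{S}$ on each projected qubit matters. The projected vertices are the Pauli-labelled $w \neq v$ (projected with $\bra{+_{\lambda(w),\alpha(w)}}_w$) and the planar-labelled $w \succ v$ (projected with $\bra{+_{\lambda(w),0}}_w$). Here I use that $\bra{+_{XY,0}}_w = \bra{+}_w$ is fixed by $X_w$, that $\bra{+_{XZ,0}}_w = \bra{+_{YZ,0}}_w = \bra{0}_w$ is fixed by $Z_w$, and that Equation \ref{eq:PostEigen} makes a Pauli-labelled projector absorb the matching Pauli. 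Focussedness of $p(v)$ over $\overline{O}\setminus\{v\}$ is exactly what forces the factor of $\tilde{S}$ at each such $w$ to be the absorbable one: \ref{FOC1} gives $w \in p(v) \Rightarrow \lambda(w) \in \{XY, X, Y\}$ (an $X$-factor only meets a projector that fixes $X$), \ref{FOC2} gives $w \in \mathrm{Odd}(p(v)) \Rightarrow \lambda(w) \in \{XZ, YZ, Y, Z\}$ (a $Z$-factor only meets a projector that fixes $Z$), and \ref{FOC3} handles the $\lambda(w) = Y$ case. Thus each projection absorbs its factor of $\tilde{S}$, contributing only a $\pm 1$.

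Finally I would show the leftover is supported only on $v$ and $O$. The only non-output vertices other than $v$ that are not projected in $\mathcal{C}$ are the planar-labelled $w$ with $w \preceq v$; for such $w$, since $\lambda(w) \notin \{X, Y\}$, condition \ref{PF1} forces $v \prec w$ whenever $w \in p(v)$ and condition \ref{PF2} forces $v \prec w$ whenever $w \in \mathrm{Odd}(p(v))$, both contradicting $w \preceq v$ with $w \neq v$; hence $\tilde{S}$ is the identity on every such $w$ (and inputs never lie in $p(v)$, while an input in $\mathrm{Odd}(p(v))$ is either an output or a projected vertex already covered). Commuting $\tilde{S}$ through and absorbing therefore turns $\tilde{S}\, E_G N_{\overline{I}} = E_G N_{\overline{I}}$ into $P^{\bot v}_v \mathbf{P}\, \mathcal{C} = \pm \mathcal{C}$, and folding the accumulated $\pm 1$ into the sign of the Pauli string $\mathbf{P}$ (still a string over $O$, and manifestly determined by $p(v)$, $\alpha$ and $\Gamma$) gives $P^{\bot v}_v \mathbf{P}\, \mathcal{C} = \mathcal{C}$, which is the required $P^{\bot v}$-extraction string. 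I expect the main obstacle to be the interlocking case analysis of the middle step: checking that for every label $\lambda(w)$ the angle-$0$ or Pauli projector fixes exactly the single-qubit Pauli that focussedness allows to appear there — with no mismatched case (such as a $Z$-factor meeting an $XY$-measurement) slipping through — and dually that conditions \ref{PF1}--\ref{PF2} leave nothing on the yet-unprocessed planar vertices.
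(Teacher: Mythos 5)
Your proposal is correct and follows essentially the same route as the paper's proof: form the product of graph-state stabilizers over $p(v)$, observe it is $X$/$Y$/$Z$ on $p(v)\setminus\mathrm{Odd}(p(v))$ / $p(v)\cap\mathrm{Odd}(p(v))$ / $\mathrm{Odd}(p(v))\setminus p(v)$, absorb the factors on projected qubits via the focussed conditions and Equation~\ref{eq:PostEigen}, and use \ref{PF1}--\ref{PF2} to rule out any leftover support on yet-unprojected planar vertices $w\preceq v$. One small difference: where the paper proves Lemma~\ref{lemma:EvenImaginary} (parity of $|p(v)\cap\mathrm{Odd}(p(v))|$) to keep the phase real, you instead note that a product of pairwise-commuting Hermitian stabilizers is itself Hermitian, so the phase is automatically $\pm 1$ — a cleaner shortcut to the same conclusion.
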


\begin{proof}
Proof in Appendix (Lemma \ref{lemma:ExtractionFromFocussedProof}).
\end{proof}


To make this extraction technique practical, we need to be able to find a focussed Pauli flow when one exists. This follows similarly to the corresponding results for causal flow and gflow.

\begin{theorem}\label{thrm:Identification}(Generalisation of \cite[Theorem 2]{Mhalla2008}, \cite[Theorem 3.11]{Backens2020})
There exists an algorithm that decides whether a given labelled open graph has a Pauli flow, and that outputs such a Pauli flow if it exists. Moreover, this output is maximally delayed, and the algorithm completes deterministically in time that grows polynomially with the number of vertices in the graph.
\end{theorem}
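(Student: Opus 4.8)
The plan is to generalise the layered backward-search algorithm that Mhalla and Perdrix gave for gflow and that Backens et al.\ refined, adapting it to account for the Pauli-labelled vertices. We build the partial order $\prec$ from the top down: start with the outputs as the maximal layer, and repeatedly determine the set of not-yet-placed non-output vertices that can be \emph{corrected} using only correctors drawn from $\overline{I}$ that either already lie in a strictly later layer or are measured in the $X$ or $Y$ basis (which by \ref{PF1}--\ref{PF3} may sit anywhere relative to the vertex being corrected). All such vertices are added simultaneously as the next layer down, and we iterate until either every non-output vertex has been placed (success, return $(p, \prec)$) or a fixed point is reached with vertices remaining (failure, report no Pauli flow exists).

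The core subroutine is: given a candidate vertex $u$ and the current set $K \subseteq \overline{I}$ of available correctors, decide whether there is $p(u) \subseteq K$ satisfying the label conditions \ref{PF4}--\ref{PF9} at $u$ and the adjacency conditions \ref{PF1}--\ref{PF3} everywhere else. Since $\mathrm{Odd}(\cdot)$ is $\mathbb{F}_2$-linear in the indicator vector of $p(u)$, this is a linear system over $\mathbb{F}_2$: we force membership of $u$ in $p(u)$ and in $\mathrm{Odd}(p(u))$ according to $\lambda(u)$ (equivalently, fixing $P^{\bot u}$); we forbid $\mathrm{Odd}(p(u))$ from meeting any unplaced vertex whose label is outside $\{Y,Z\}$; we forbid $p(u)$ from meeting any unplaced vertex whose label is outside $\{X,Y\}$; and we impose the $Y$-biconditionals of \ref{PF3} on the relevant $Y$-vertices. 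Gaussian elimination decides solvability and returns a witness, and — exactly as in the gflow case — whether \emph{some} valid $p(u)$ exists depends only on this data and not on which solution is picked, which is what licenses the greedy commitment.

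Correctness rests on an exchange/monotonicity lemma in the spirit of \cite{Mhalla2008}: if $\Gamma$ admits any Pauli flow at all, then at every stage the greedy set of correctable vertices is non-empty, and correcting all of them at once never blocks later progress; an inductive comparison against an arbitrary Pauli flow then shows the produced flow is maximally delayed (each vertex lands in the latest layer that any flow permits). The genuine new work relative to gflow is the bookkeeping forced by Pauli vertices: $X$/$Y$ vertices are always available as correctors, $Y$/$Z$ vertices never obstruct the $\mathrm{Odd}$ side and $X$/$Y$ vertices never obstruct the membership side, and the biconditional \ref{PF3} couples the two sides for $Y$ vertices with no analogue in the gflow argument — so both the linear system and the ``availability'' predicate must encode precisely these exemptions, and the exchange lemma must be re-established under them. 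For the running time, there are at most $|V|$ layers, each needing an $\mathbb{F}_2$ solve of size $O(|V|) \times O(|V|)$ for each of $O(|V|)$ unplaced vertices, giving a crude $O(|V|^5)$ bound that improves to $O(|V|^4)$ or better by reusing row reductions across vertices as in the optimised gflow implementations. I expect the main obstacle to be the correctness argument rather than the algorithm: re-proving that a maximal greedy step is always safe in the presence of Pauli-measured vertices, since these break the clean ``correctors lie strictly in the future'' invariant that drives the gflow proof, and the $Y$-biconditional introduces a two-sided constraint that must be threaded through the induction.
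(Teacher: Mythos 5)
Your approach is essentially the one the paper takes: a layered backward search that at each level solves an $\mathbb{F}_2$ linear system for a correction set, with the label conditions \ref{PF1}--\ref{PF9} encoded into the coefficient matrix and right-hand side. However, there is a genuine gap in your maximal-delay claim. You start ``with the outputs as the maximal layer'' and place every non-output vertex strictly below it, so the first round of your search already allows outputs as correctors and assigns anything it finds to a layer below the outputs. In Pauli flow this is wrong: because $X$-, $Y$- and $Z$-labelled vertices are exempt from the ordering constraints in \ref{PF1}--\ref{PF3}, a non-output vertex $u$ whose correction set consists only of $X$/$Y$-labelled vertices (and possibly $u$ itself) need not precede \emph{any} other vertex, and so a maximally delayed flow places it in the top layer alongside the outputs. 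The paper proves this explicitly (Lemma \ref{lemma:V0}: $V_0^\prec = O \cup V_0^{\prec, XY} \cup V_0^{\prec, XZ} \cup V_0^{\prec, YZ}$) and exhibits such a vertex in Figure \ref{fig:MeasuredV0}. Your layering would push every such vertex at least one layer deeper than necessary, so the flow you return is valid but not maximally delayed; the inductive statement ``each vertex lands in the latest layer that any flow permits'' fails for exactly these vertices. This is not a bookkeeping slip---it is precisely the new phenomenon that Pauli flow introduces over gflow and causal flow, where $V_0$ is always exactly $O$, and the paper's depth-$0$ round therefore deliberately restricts correctors to $\Lambda_u^X \cup \Lambda_u^Y$ rather than to the outputs.

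A secondary overclaim is the running time. You assert the $O(|V|^5)$ bound ``improves to $O(|V|^4)$ or better by reusing row reductions across vertices as in the optimised gflow implementations.'' The paper argues the opposite: because both the coefficient matrix $M_{A,u}$ and the admissible corrector set depend on the particular vertex $u$ under test (through $\Lambda_u^X$, $\Lambda_u^Y$, $\Lambda_u^Z$), a single Gaussian elimination cannot be shared across a layer the way it can for gflow, and the stated bound remains $O(|V|^5)$. Claiming the improvement without re-establishing it is an unjustified step, independent of whether a cleverer amortisation might eventually exist.
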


\begin{proof}
Proof in Appendix (Theorem \ref{thrm:IdentificationProof}).
\end{proof}


\begin{lemma}\label{lemma:Focussing}(Generalisation of \cite[Proposition 3.14]{Backens2020})
If a labelled open graph has a Pauli flow, then it has a maximally delayed, focussed Pauli flow.
\end{lemma}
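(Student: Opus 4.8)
The plan is to take a maximally delayed Pauli flow $(p,\prec)$, which exists by Theorem~\ref{thrm:Identification}, keep its order $\prec$ untouched, and repair the correction sets so that each $p(v)$ becomes focussed over $\overline{O}\setminus\{v\}$. Since $\prec$ is never changed and the repaired $(p',\prec)$ will still be a Pauli flow, maximal delayedness — a property of the order alone — is inherited for free, so the entire content is the focusing step. The guiding observation is that conditions~\ref{PF1}, \ref{PF2} and~\ref{PF3} already confine every focusing defect of $p(v)$ to vertices strictly above $v$: writing $\mathrm{bad}(q)$ for the set of vertices witnessing a failure of $q$ to be focussed over $\overline{O}$, \ref{PF1} forces $v\prec w$ for any $w\in p(v)$ with $\lambda(w)\in\{XZ,YZ,Z\}$, \ref{PF2} forces $v\prec w$ for any $w\in\mathrm{Odd}(p(v))$ with $\lambda(w)\in\{XY,X\}$, and \ref{PF3} leaves only $Y$-vertices above $v$ able to break the \ref{FOC3}-type equivalence; moreover each label $\lambda(w)$ makes $w$ eligible for exactly one of these three kinds of defect.

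I would process the vertices in a reverse topological order of $\prec$, maintaining the invariant that each already-processed $w$ has $\mathrm{bad}(p'(w))\subseteq\{w\}$ and that $(p',\prec)$ remains a Pauli flow. For the current vertex $v$: while there is some $w\in\mathrm{bad}(p(v))\setminus\{v\}$ — which, by the observation above together with the preservation of \ref{PF1}–\ref{PF3} below, satisfies $v\prec w$ and has therefore already been processed — replace $p(v)$ by $p(v)\,\Delta\,p'(w)$ (so $\mathrm{Odd}(p(v))$ becomes $\mathrm{Odd}(p(v))\,\Delta\,\mathrm{Odd}(p'(w))$, using $\mathbb{F}_2$-linearity of $\mathrm{Odd}$). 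The $\lambda$-conditions at $w$ furnish exactly the cancellation required: an $X$-bad $w$ has $w\in p'(w)$ by~\ref{PF5}, \ref{PF6} or~\ref{PF8}; a $Z$-bad $w$ has $w\in\mathrm{Odd}(p'(w))$ by~\ref{PF4} or~\ref{PF7}; and a $Y$-bad $w$ has, by~\ref{PF9}, its memberships in $p'(w)$ and $\mathrm{Odd}(p'(w))$ disagreeing, precisely the flip needed. In every case the symmetric difference removes $w$ from the offending set.

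The heart of the argument is that this step is strictly progressive: $\mathrm{bad}(p(v)\,\Delta\,p'(w)) = \mathrm{bad}(p(v))\setminus\{w\}$, so the loop terminates in at most $|\overline{O}|$ steps with $\mathrm{bad}(p(v))\subseteq\{v\}$. This rests on the principle that \emph{focussed} constrains the labels of a set's elements and not any distinguished vertex: since $\mathrm{bad}(p'(w))\subseteq\{w\}$, no vertex $x\neq w$ lies in the $X$-part or $Z$-part of $p'(w)$ that could flip its defect status, and by~\ref{FOC3} for $p'(w)$ any $Y$-vertex $x\neq w$ has equal membership in $p'(w)$ and $\mathrm{Odd}(p'(w))$; hence taking $\Delta$ with $p'(w)$ changes the defect status of $w$ alone. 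In parallel I would check that all of~\ref{PF1}–\ref{PF9} survive the step, using the order conditions of $p'(w)$ and the inequality $v\prec w$: any new element $x\in p'(w)\setminus\{w\}$ that is order-constrained satisfies $w\prec x$, hence $v\prec x$; the vertex $v$ is shown not to lie in $p'(w)$ when $\lambda(v)\in\{XY,XZ,YZ,Z\}$ and to have matching $p'(w)/\mathrm{Odd}(p'(w))$ membership when $\lambda(v)=Y$, so $v$'s own position in $p(v)$ and $\mathrm{Odd}(p(v))$ is preserved exactly as demanded by~\ref{PF4}–\ref{PF9}; and~\ref{PF3} for the updated $p(v)$ follows from~\ref{FOC3} for $p'(w)$ applied to the $Y$-vertices below $v$.

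The main obstacle is just the bookkeeping: one must run through the six possible labels of $w$ (and of $v$) against the three kinds of focusing failure and verify for each $\Delta$-step both that one defect is removed and that no Pauli-flow condition is broken. Each individual check is a one-line $\mathbb{F}_2$ computation once the principle that \emph{adding a focussed set only affects its distinguished vertex} is isolated, but there are many of them. (A terser alternative is to argue via linear algebra — the focussed sets form an $\mathbb{F}_2$-subspace and one shows directly that some member of an appropriate coset is bad only at $v$ — but the iterative version parallels the gflow proof of Backens et al. and keeps the order conditions manifestly under control.)
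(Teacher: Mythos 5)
Your proof is correct and follows essentially the same strategy as the paper's: take a maximally delayed Pauli flow (maximality is inherited since $\prec$ is never changed), and iteratively repair correction sets by taking symmetric differences with other correction sets, using the key observations that every focusing defect of $p(v)$ lies strictly $\prec$-above $v$ (conditions \ref{PF1}--\ref{PF3}), that $p(w)$ can never be focussed over $\{w\}$ itself (conditions \ref{PF4}--\ref{PF9}), and that $\mathrm{Odd}$ is $\mathbb{F}_2$-linear so defects combine additively. The one presentational variation is that you sweep vertices from $\prec$-future to $\prec$-past and always add the \emph{already-focussed} $p'(w)$, giving the tidy invariant $\mathrm{bad}(p(v)\,\Delta\,p'(w)) = \mathrm{bad}(p(v))\setminus\{w\}$, whereas the paper's Lemma~\ref{lemma:FocusStep} fixes a single $v$ by sweeping $J(k)$ from $\prec$-past to $\prec$-future with the unmodified $p(J(k))$ and then appeals to Lemmas~\ref{lemma:AddFocussed} and~\ref{lemma:AddNotFocussed} to show previously-fixed coordinates persist; both orderings succeed for the same underlying reason.
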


\begin{proof}
Proof in Appendix (Lemma \ref{lemma:FocussingProof}).
\end{proof}


We are now in the position where we can take any Pauli flow and extract all of the planar measurement angles from a measurement pattern. This just leaves a Clifford process which can be characterised completely by stabilizer theory. The rows of the isometry tableau describing how inputs are mapped are given by extraction strings of the inputs (taking input extensions of the measurement pattern as appropriate), and the remaining generators over the outputs are obtained from extraction strings.


\begin{theorem}\label{thrm:Extraction}(Generalisation of \cite[Theorem 5.5]{Backens2020})
Let $(\Gamma, \alpha)$ describe a measurement pattern where $\Gamma$ has a Pauli flow. Then there is an algorithm that identifies an equivalent circuit requiring no ancillae which completes in time polynomial in the number of vertices in $\Gamma$.
\end{theorem}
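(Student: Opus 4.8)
The plan is to feed the pattern through the results just assembled and read off a PDDAG directly. First I would invoke Theorem~\ref{thrm:Identification} to compute a Pauli flow in time polynomial in $|V|$, and then Lemma~\ref{lemma:Focussing} to replace it by a focussed Pauli flow $(p, \prec)$. Fix a linear extension of $\prec$ and process the non-output vertices from greatest to least, ignoring the Pauli-labelled ones (their projections $\bra{+_{\lambda(w),\alpha(w)}}_w$ with $\alpha(w)=a\pi$ are already Clifford and simply stay where they are). For each planar vertex $v$ (those with $\lambda(v)\in\{XY,XZ,YZ\}$), rewrite its projection via Equation~\ref{eq:RotateBasis} as $\bra{+_{\lambda(v),0}}_v$ composed on the right with a single rotation $e^{\pm i\frac{\alpha(v)}{2}P^{\bot v}_v}$, where $P^{\bot v}\in\{X,Y,Z\}$ is exactly the Pauli dictated by \ref{PF4}--\ref{PF6} and the sign is read off Equation~\ref{eq:RotateBasis}. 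Because every vertex above $v$ in $\prec$ has already been handled and now carries angle $0$, commuting this rotation rightwards past the projections on other qubits leaves it acting directly on the map $\mathcal{C}$ of Definition~\ref{def:ExtractionStringDef} for $v$.

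The core step is the extraction itself. By Lemma~\ref{lemma:ExtractionFromFocussed}, $p(v)$ determines a $P^{\bot v}$-extraction string $\mathbf{P}^{\bot v}$ over the outputs, so $P^{\bot v}_v\mathbf{P}^{\bot v}$ stabilises $\mathcal{C}$. Applying the \ref{lemma:GadgetStabCorrespondence} with $A := P^{\bot v}_v$ and $B := P^{\bot v}_v\mathbf{P}^{\bot v}$ (which commute, overlapping only on qubit $v$, where they agree) replaces $e^{\pm i\frac{\alpha(v)}{2}P^{\bot v}_v}\mathcal{C}$ by $e^{\pm i\frac{\alpha(v)}{2}\mathbf{P}^{\bot v}}\mathcal{C}$, a rotation supported entirely on the output wires, which commutes with every remaining projection and can be collected at the front. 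Iterating over all planar vertices leaves $\left(\prod_v e^{\pm i\frac{\alpha(v)}{2}\mathbf{P}^{\bot v}}\right)\mathcal{C}_0$, where $\mathcal{C}_0$ is the residual map with all planar measurements set to angle $0$ and all Pauli measurements at their $a\pi$ angles; since all surviving projections, the \CZ~gates and the $\ket{+}$ preparations are Clifford, $\mathcal{C}_0$ is a stabilizer process. The order on the extracted rotations inherited from the procedure is automatically a valid PDDAG ordering, since no pair of anticommuting exponentials was ever transposed.

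It remains to present $\mathcal{C}_0$ as an isometry tableau, which is where I would lean hardest on the stabilizer machinery already in place and on the gflow analogue \cite[Theorem~5.5]{Backens2020}: the rows recording how an input Pauli is transported to a Pauli string over the outputs are again obtained as extraction strings, taking an input-extended version of the pattern that records the images of $X_i$ and $Z_i$ for each input $i$, while the free output stabilizers come from those graph-state stabilizers (Equation~\ref{eq:GraphStateStabilizer}) whose content on measured qubits is absorbed by the Pauli projections (Equation~\ref{eq:PostEigen}), of which there are the required number of independent generators; all of this is $\mathbb{F}_2$ linear algebra against the adjacency matrix and the flow data. Synthesising this PDDAG then uses the isometry-synthesis route of Section~\ref{sec:PDDAG}, which introduces no ancilla qubits, together with a standard Pauli-exponential decomposition \cite{Cowtan2020} for each $e^{\pm i\frac{\alpha(v)}{2}\mathbf{P}^{\bot v}}$. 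Every ingredient — flow identification, focussing, one extraction string per vertex, the tableau, and the synthesis — runs in time polynomial in $|V|$, which gives the claim. I expect the main obstacle to be the bookkeeping in the iterative extraction: checking that when $v$ is processed the map to the right of its rotation really is the $\mathcal{C}$ of Definition~\ref{def:ExtractionStringDef} (this is precisely why vertices must be handled in decreasing $\prec$-order, and why the untouched projections of planar vertices incomparable to $v$ may be commuted through without harm), together with tracking the global phases introduced by Equations~\ref{eq:RotateBasis} and~\ref{eq:PostEigen}.
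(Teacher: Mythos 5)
Your plan follows the paper's own route in all essentials: identify a Pauli flow (Theorem~\ref{thrm:Identification}), focus it (Lemma~\ref{lemma:Focussing}), process the measured vertices along a linear extension of $\prec$ from outputs backwards, peel the planar measurement angle off each projection via Equation~\ref{eq:RotateBasis}, and push the resulting single-qubit rotation onto the outputs with the \ref{lemma:GadgetStabCorrespondence} using the primary extraction string of Lemma~\ref{lemma:ExtractionFromFocussed}. Your instantiation of the lemma with $A = P^{\bot v}_v$, $B = P^{\bot v}_v\mathbf{P}^{\bot v}$ is exactly right (they commute, and $AB = \mathbf{P}^{\bot v}$), and your remark that the extra projections on vertices not above $v$ in $\prec$ may be safely commuted past $B$ because $B$ acts as identity on those qubits is precisely the point that justifies using the $\mathcal{C}$ of Definition~\ref{def:ExtractionStringDef} inside the actual iterate. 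The input rows of the isometry tableau via input extension also match the paper's Lemma~\ref{lemma:InputExtension}.

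The one place where you assert rather than argue is the free output stabilizers. You say they are the graph-state stabilizers fully absorbed by the residual projections (Equations~\ref{eq:GraphStateStabilizer} and~\ref{eq:PostEigen}), "of which there are the required number of independent generators" — but that count and that independence are exactly the content that needs proving, not a bookkeeping observation. The paper makes this precise by introducing focussed sets (Definition~\ref{def:FocussedFlow}): Lemma~\ref{lemma:NumberOfFocussedSets} shows there are exactly $2^{|O|-|I|}$ of them, Lemma~\ref{lemma:EmptyFocussedSets} shows that only the empty set maps to the identity Pauli, and Theorem~\ref{thrm:StabBijection} upgrades this to a group isomorphism onto the free stabilizers of the leftover Clifford map — which is what guarantees that the $\mathbb{F}_2$ linear-algebra step you propose (carried out in Lemma~\ref{lemma:StabAlgorithmProof}) actually yields an independent generating set of the required size. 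Your description of the combinatorial object is the right picture (and you are implicitly right that the angle-$0$ planar projections behave as $X$- or $Z$-eigenstate bras and absorb Paulis just like genuine Pauli measurements, which is what the focussed conditions encode), but you would need to fill in that isomorphism to close the proof.
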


\begin{proof}
Proof in Appendix (Theorem \ref{thrm:ExtractionProof}).
\end{proof}

\section{Relating Rewrites}\label{sec:Rewrites}


The most common rewrites in PDDAGs are merging terms and moving Clifford rotations, most notably moving a Clifford phase between a node in the DAG and the initial Clifford process. The structure or redundancy being exploiting for optimisation is very clear and easy to understand. On the other hand, graph-theoretic rewrites used for measurement patterns or ZX-diagrams have less obvious interpretations in how they relate to optimisations in the gate-based model. To compare the two, we consider extracting a PDDAG from a measurement pattern before and after a rewrite to observe the changes in the order, Pauli strings, and phases of the rotations from each measurement or tableau row, and then find a sequence of simple PDDAG rewrites that produces the same effect. Detailed proofs and examples can be found in Appendix \ref{sec:RewritesProofs}.


Given the special treatment of Pauli measurements in our representation of measurement patterns, the simplest rewrite we can do is to relabel a vertex measured in a Pauli basis between a planar label and a Pauli label, without changing the graph or other vertices.

\begin{theorem}\label{thrm:FixPauliRelate}
Let $(\Gamma, \alpha)$ describe a measurement pattern with some vertex $u \in \overline{O}$ such that $\lambda(u) \in \{XY, XZ, YZ\}$ and $\alpha(u) \in \{0, \tfrac{\pi}{2}, \pi, \tfrac{3\pi}{2}\}$. Relabelling $u$ to the equivalent Pauli label corresponds to pushing the rotation from $u$ to the start of the PDDAG and absorbing it into the initial stabilizer process.
\end{theorem}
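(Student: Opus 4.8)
The plan is to analyze how the extraction procedure of Theorem~\ref{thrm:Extraction} acts on the vertex $u$ before and after the relabelling, and to show the only difference is that the rotation contributed by $u$ disappears from the DAG and is instead fully absorbed into the Clifford process $\mathcal{C}$. First I would recall from Equation~\ref{eq:RotateBasis} that for a planar label $\lambda(u) \in \{XY, XZ, YZ\}$ the measurement projector $\bra{+_{\lambda(u), \alpha(u)}}_u$ factors as $\bra{+_{P, 0}}_u e^{i\tfrac{\alpha(u)}{2} Q}$ for the appropriate Pauli pair $(P, Q)$ (e.g. $(X, Z)$ for $XY$), and the extraction step uses a primary extraction string $\mathbf{P}^{\bot u}$ together with the \ref{lemma:GadgetStabCorrespondence} to migrate $e^{i\tfrac{\alpha(u)}{2} Q_u}$ off $u$ onto the outputs, producing one node $(\mathbf{P}^{\bot u}, \alpha(u))$ in the PDDAG with $\mathbf{P}^{\bot u}$ as given in Definition~\ref{def:ExtractionStringDef}. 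When $\alpha(u) \in \{0, \tfrac{\pi}{2}, \pi, \tfrac{3\pi}{2}\}$, this rotation is Clifford-angled, so by the PDDAG rewrite rules (moving Clifford rotations via the \ref{lemma:CommutationRules}) it can be commuted past every other node in the DAG back to the front and merged into $\mathcal{C}$, yielding a new Clifford process $\mathcal{C}' = e^{i\tfrac{\alpha(u)}{2}\mathbf{P}^{\bot u}}\mathcal{C}$ (up to global phase) and a DAG with that node deleted.

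Second, I would compute the extraction directly for the relabelled pattern. Relabelling $u$ to the Pauli label $\lambda'(u) \in \{X, Y, Z\}$ determined by $\alpha(u)$ (using the special cases in the second display of Section~\ref{sec:MBQC}, e.g. $\alpha(u)=0$ under $XY$ becomes $X$, $\alpha(u)=\tfrac{\pi}{2}$ under $XY$ becomes $Y$, $\alpha(u)=0$ under $XZ$ or $YZ$ becomes $Z$, etc.) means $u$ is now a Pauli-measured vertex, so the extraction procedure contributes no rotation node for $u$ at all; instead $\bra{+_{\lambda'(u), \alpha(u)}}_u$ sits inside the Clifford map $\mathcal{C}$ of Equation~\ref{eq:ExtractionStringDef} for every other vertex. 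The key verification is that this Clifford map equals $\mathcal{C}'$ from the previous paragraph: the two measurement projectors agree on the nose, $\bra{+_{\lambda'(u), \alpha(u)}}_u = \bra{+_{\lambda(u), \alpha(u)}}_u = \bra{+_{P,0}}_u e^{i\tfrac{\alpha(u)}{2}Q_u}$, and pushing the $e^{i\tfrac{\alpha(u)}{2}Q_u}$ factor through $E_G N_{\overline{I}}$ using the same graph-state stabilizer ($Q_u$ combined with the stabilizer witnessed by $p(u)$, via the \ref{lemma:GadgetStabCorrespondence}) turns it into exactly $e^{i\tfrac{\alpha(u)}{2}\mathbf{P}^{\bot u}}$ acting on the outputs, which is precisely the extra Clifford layer that was absorbed. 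One must also check that the Pauli flow is unaffected: all the \ref{PF4}--\ref{PF9} conditions are set up so that the primary-extraction-string case ($P^{\bot v}$) for a planar label and the corresponding Pauli-label cases (\ref{PF7},\ref{PF8},\ref{PF9}) are consistent, so $(p, \prec)$ remains a (focussed) Pauli flow for $\Gamma'$, and the remaining rotation nodes and tableau rows — being extraction strings of the \emph{other} vertices, whose defining Clifford map in Equation~\ref{eq:ExtractionStringDef} has changed only by the absorbed Clifford factor — are transformed by conjugation by that same Clifford, which is exactly the effect of the PDDAG's ``move a Clifford rotation into the tableau'' rewrite.

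The main obstacle I expect is bookkeeping the signs and the precise correspondence between the four Clifford angles and the Pauli labels across the three planes: $\alpha(u) \in \{0,\pi\}$ gives the ``first'' Pauli of the plane with a possible sign flip (absorbed by Equation~\ref{eq:PostEigen} or a $\pi$ shift of the desired branch), while $\alpha(u) \in \{\tfrac{\pi}{2}, \tfrac{3\pi}{2}\}$ can rotate between the two Paulis spanning the plane or land on $Y$, and one must confirm in each of the twelve cases that the resulting Pauli label is the intended one and that $\mathbf{P}^{\bot u}$ is consistent (equivalently, that $e^{i\tfrac{\alpha(u)}{2}Q_u}\bra{+_{P,0}}_u \approx \bra{+_{\lambda'(u),\alpha(u)}}_u$ up to global phase and Pauli-absorption). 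This is routine but must be done carefully; once it is in hand, the equality $\mathcal{C}' = \mathcal{C}$ and hence the statement follow. I would present the twelve-case check compactly as a table and relegate the detailed verification to Appendix~\ref{sec:RewritesProofs}, as the theorem statement indicates.
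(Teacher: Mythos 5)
Your overall plan is sound and is essentially the route the paper takes, but there is a substantive gap in the second paragraph. You assert that ``$(p, \prec)$ remains a (focussed) Pauli flow for $\Gamma'$''. It does remain a Pauli flow (the conditions \ref{PF1}--\ref{PF3} only weaken, and \ref{PF7}--\ref{PF9} for a Pauli label follow from the corresponding \ref{PF4}--\ref{PF6}), but it does \emph{not} in general remain focussed. Concretely, if $\alpha(u) \in \{\tfrac{\pi}{2}, \tfrac{3\pi}{2}\}$ so that $\lambda(u)=XY$ becomes $\lambda'(u)=Y$, a vertex $v$ with $u \in p(v)$ (allowed by \ref{FOC1} both before and after) may fail \ref{FOC3}, since under $Y$ one must have $u \in p(v) \Leftrightarrow u \in \mathrm{Odd}(p(v))$, whereas previously only \ref{FOC2} constrained $\mathrm{Odd}(p(v))$. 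The paper handles this in Lemma~\ref{lemma:FixPauliProof}: for any $v \prec u$ with $u \in p(v) \cup \mathrm{Odd}(p(v))$ and $\alpha(u) \in \{\tfrac{\pi}{2}, \tfrac{3\pi}{2}\}$ the correction set is re-focussed to $p'(v) := p(v) \Delta p(u)$. Since the extraction strings, and hence the whole PDDAG, are defined via the focussed flow (Lemma~\ref{lemma:ExtractionFromFocussed}), you cannot get away with leaving $(p,\prec)$ unchanged.

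The second, related gap is the sentence asserting that the remaining rotation nodes and tableau rows ``are transformed by conjugation by that same Clifford''. This is the crux of the theorem and is not immediate from Equation~\ref{eq:ExtractionStringDef} alone, because the $\mathcal{C}$ in that definition is not simply multiplied by the absorbed Clifford factor: the rotation $e^{(-1)^{D_u} i\tfrac{\alpha(u)}{2}P^{\bot u}_u}$ sits on qubit $u$ and has to be moved across the other projections and the graph state before it appears as a string over the outputs, and the resulting string must then be compared to the one obtained from the \emph{re-focussed} correction set $p'(v)$. The paper makes this precise in Lemma~\ref{lemma:FixPauliStrings} (explicit formula for $\mathbf{P'}^{\bot v}$ in terms of $\mathbf{P}^{\bot v}$, $\mathbf{P}^{\bot u}$, $F_{v\to u}^p$, and the trigonometric factor $\cos\alpha(u) I - i\sin(-1)^{D_u+A^p_{u\to v}}\alpha(u)\, P^{u\to v}_v \mathbf{P}^{\bot u}$), and only then does the proof of Theorem~\ref{thrm:FixPauliRelate} compare this formula case-by-case against the \ref{lemma:CommutationRules}. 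Without the refocussing lemma and the explicit string computation, the ``conjugation'' claim is unsupported. (A smaller point: the rotation node from $u$ is $((-1)^{D_u}\mathbf{P}^{\bot u}, \alpha(u))$, not $(\mathbf{P}^{\bot u}, \alpha(u))$; the $D_u$ sign matters when $\lambda(u)=YZ$.)
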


\begin{proof}
Proof in Appendix (Theorem \ref{thrm:FixPauliRelateProof}).
\end{proof}


Removing vertices from the measurement pattern typically involves reducing a vertex's measurement to the $Z$ basis, at which point it no longer needs to be entangled with the other qubits. We can consider doing this both when the vertex is labelled as a Pauli $Z$ measurement and as a planar measurement in the $XZ$ or $YZ$ planes.

\begin{theorem}\label{thrm:ZElimRelate}
Let $(\Gamma, \alpha)$ describe a measurement pattern with some vertex $u \in \overline{O}$ such that $\lambda(u) \in \{XZ, YZ, Z\}$ and $\alpha(u) \in \{0, \pi\}$. Eliminating $u$ from the graph corresponds to the following sequence of actions on the PDDAG:
\begin{enumerate}
\item If $u$ has a planar ($XZ$ or $YZ$) label, then its rotation is pulled from the rotation DAG into the stabilizer block;
\item For each neighbour $n$ of $u$ that is an output, a $Z_n$ rotation of $\alpha(u)$ is pulled from the stabilizer block through the entire rotation DAG to the end of the circuit;
\item For each neighbour $n$ of $u$ with $\lambda(n) = XY$, a $\mathbf{P}^{\bot n}$ rotation of $\alpha(u)$ is pulled from the stabilizer block and merged with the existing rotation for $n$.
\end{enumerate}
\end{theorem}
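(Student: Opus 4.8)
The plan is to track what happens to the extracted PDDAG when we perform the standard "$Z$-elimination" rewrite on $u$, which deletes $u$ from $G$ (since a $Z$-measured vertex with angle in $\{0,\pi\}$ is a genuine eigenstate projection that, up to a scalar and a Pauli on the neighbours, factors out of the graph state). I would first fix a focussed Pauli flow $(p,\prec)$ for $\Gamma$ (existing by Lemma~\ref{lemma:Focussing}) and show how to build a focussed Pauli flow $(p',\prec')$ for the reduced labelled open graph $\Gamma'$; the natural guess is to keep $p'$ essentially equal to $p$ but remove $u$ from the domain and, where $u$ appeared in some $p(w)$, replace it using the flow relation so that the extraction strings remain well-defined over the same output set. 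With the two flows in hand, the extraction procedure of Theorem~\ref{thrm:Extraction} gives two PDDAGs, and the task is to exhibit a sequence of PDDAG rewrites (moving a Clifford-angled rotation into/out of the stabilizer block, commuting rotations via the \ref{lemma:CommutationRules}, and rotation merging) carrying one to the other.

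The three bullet points correspond to the three "kinds" of output-visible effect the projection $\bra{+_{\lambda(u),\alpha(u)}}_u$ has, read through the correction structure. Step~1: if $\lambda(u)\in\{XZ,YZ\}$, the extracted PDDAG originally contains a rotation node for $u$ (by Equation~\ref{eq:RotateBasis}, an $e^{i\frac{\alpha(u)}{2}Y}$ or $e^{-i\frac{\alpha(u)}{2}X}$ conjugated onto the outputs via the primary extraction string $\mathbf{P}^{\bot u}$); since $\alpha(u)\in\{0,\pi\}$ this is a Clifford rotation, so it may legitimately be pushed to the front and absorbed into the stabilizer block — after which $u$ behaves exactly as a Pauli-$Z$ measured vertex, by Theorem~\ref{thrm:FixPauliRelate}. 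So it suffices to prove the $\lambda(u)=Z$ case and invoke Theorem~\ref{thrm:FixPauliRelate} for the planar cases. Step~2: the graph-state stabilizer $X_u\prod_{n\sim u}Z_n$ (Equation~\ref{eq:GraphStateStabilizer}) combined with the Pauli absorption $\bra{+_{Z,a\pi}}_u = (-1)^a\bra{+_{Z,a\pi}}_u Z_u$ (Equation~\ref{eq:PostEigen}) is exactly what lets us delete $u$: removing $u$ changes the remaining linear map by the operator $\bigl(\prod_{n\sim u} Z_n^{\,?}\bigr)$ — more precisely, by $\prod_{n\sim u}(Z_n)^{a}$ with $\alpha(u)=a\pi$ — which, restricted to output neighbours, is a Pauli on the outputs, equivalently a $Z_n$ rotation of angle $\alpha(u)$ that, since it sits in the stabilizer block of $\Gamma$ but must sit at the circuit's end for $\Gamma'$ (the outputs of $\Gamma'$ being the same but the Clifford prefix having changed), gets commuted through the whole rotation DAG to the end. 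Step~3: for a neighbour $n$ with $\lambda(n)=XY$, the same $Z_n$ factor is not on an output but on a still-measured qubit; commuting it past that $XY$-projection turns it (via $\bra{+_{XY,\alpha(n)}}_n Z_n = \bra{+_{XY,\alpha(n)+\pi}}_n$, i.e.\ it shifts the measurement angle, but more usefully in PDDAG terms it contributes a $\pi$ to the $\mathbf{P}^{\bot n}$ rotation) into an extra $\mathbf{P}^{\bot n}$-rotation of angle $\alpha(u)$ which merges with $n$'s existing rotation node — the merge being legal because the two share a Pauli string and, by construction of the flow order, are comparable-compatible in $\prec'$. Neighbours with $\lambda(n)\in\{XZ,YZ,X,Y,Z\}$ need checking too: the $Z_n$ factor is absorbed (Equation~\ref{eq:PostEigen}) or contributes only a relabel/scalar, so no PDDAG node is created, consistent with the statement only mentioning output and $XY$ neighbours.

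The main obstacle I expect is bookkeeping the change of Clifford prefix and the change of partial order simultaneously: deleting $u$ alters which Pauli strings anticommute (edges in the DAG can appear or disappear), and one must verify that after pushing the $u$-rotation to the front, commuting the $Z_n$ factors out, and performing the merges, the resulting DAG is isomorphic (as a partial order of Pauli-labelled nodes over the new tableau) to the one the extraction algorithm produces directly from $(p',\prec')$ on $\Gamma'$ — this is where I would need the focussedness of the flow to guarantee that no spurious Pauli terms survive and that the extraction strings before and after differ exactly by the $\prod_{n\sim u, n\in O} Z_n$ factor. A secondary subtlety is the scalar/global-phase factor $(-1)^a$ and the normalisation from the projection, which must be tracked carefully but only affects global phase and so is ultimately discarded; and one must handle the degenerate cases where $u$ itself is adjacent to an input, or $u\in p(w)$ for several $w$, where the flow-surgery $p\mapsto p'$ needs the relation $v\in\mathrm{Odd}(p(w))$ rewriting to stay well-typed. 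I would discharge these by reducing, via Theorem~\ref{thrm:FixPauliRelate}, to the single clean case $\lambda(u)=Z$, $\alpha(u)\in\{0,\pi\}$, and then doing the $Z_n$-propagation argument once.
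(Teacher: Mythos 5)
Your overall architecture is right: work at the level of focussed Pauli flows, extract before and after, and track how the projection $\bra{+_{\lambda(u),\alpha(u)}}_u$ together with the graph-state stabilizer $X_u\prod_{n\sim u}Z_n$ lets you delete $u$ while propagating $Z$'s onto neighbours. Your reduction of the planar case to the $\lambda(u)=Z$ case via Theorem~\ref{thrm:FixPauliRelate} is legitimate and matches a remark the paper itself makes just after the theorem statement. However, there are two concrete gaps that would trip you up if you tried to execute this plan.

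First, the flow surgery. You guess that where $u$ appears in some $p(w)$ one must replace it ``using the flow relation.'' But the decisive observation you've missed is that this never happens for a \emph{focussed} Pauli flow: since $\lambda(u)\in\{XZ,YZ,Z\}$, condition~\ref{FOC1} forbids $u\in p(w)$ for any $w\neq u$, and condition~\ref{PF5}/\ref{PF6}/\ref{PF8} forces $u\in p(u)$. So the focussed flow for $\Gamma'$ is literally $p'(w)=p(w)$ on $\overline{O}\setminus\{u\}$, with the partial order restricted. This isn't just a simplification of bookkeeping --- it's what makes the next step work at all. Without it you'd be chasing a general symmetric-difference update $p(w)\Delta p(u)$ and genuinely worrying about Pauli strings changing.

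Second, and as a consequence, you've misdiagnosed where the difficulty lies. You flag as ``the main obstacle'' that deleting $u$ might change which Pauli strings anticommute, altering the DAG edges. In fact, because $p'(w)=p(w)$ on the surviving vertices and $\mathrm{Odd}_{G\setminus u}(A)=\mathrm{Odd}_G(A)\setminus\{u\}$, the output-restricted string $\mathbf{P'}^{\bot w}$ agrees with $\mathbf{P}^{\bot w}$ up to a sign only (the paper computes this sign explicitly); so the commutation structure among the surviving rotations, and hence the DAG shape modulo removing $u$'s node and appending the tail $Z_n$'s, is rigidly preserved. What actually requires care is the opposite of what you feared: tracking the $(-1)$ phases on strings and angles so that they match on both sides. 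Relatedly, your dismissal of neighbours with $\lambda(n)\in\{XZ,YZ\}$ as ``contributes only a relabel/scalar, so no PDDAG node is created'' is wrong: Lemma~\ref{lemma:ZElimSemantics} sends $\alpha(n)\mapsto(-1)^a\alpha(n)$ for such neighbours, i.e.\ a sign flip of an existing PDDAG node's angle, and this must be matched against a compensating sign flip in the extraction string. Your proposal currently has no mechanism that would produce or cancel this, so the bookkeeping as sketched would not close.
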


\begin{proof}
Proof in Appendix (Theorem \ref{thrm:ZElimRelateProof}).
\end{proof}

Here, we have a slightly different effect for planar labels compared to the Pauli $Z$ label, though this can be thought of as a combination of the previous rewriting rule and the basic case for Pauli $Z$ elimination. The additional rotations generated are exactly the $Z$ gates introduced on neighbouring qubits to preserve the semantics (Lemma \ref{lemma:ZElimSemantics}).


One of the more interesting rewrites on measurement patterns is performing local complementation about some vertex $u$. This inverts the connectivity between each pair of vertices neighbouring $u$. To preserve the semantics, we also update the labels and measurement angles of $u$ and its neighbours. Combining this with $Z$ vertex elimination allows the elimination of vertices measured in the $Y$ basis.

\begin{theorem}\label{thrm:LocalCompRelate}
Performing local complementation about a vertex $u$ corresponds to the following sequence of actions on the PDDAG:

\begin{enumerate}
\item For each output $w$ neighbouring $u$, a $(Z_w, \tfrac{\pi}{2})$ rotation is pulled from the initial stabilizer block all the way to the end of the rotation DAG;
\item If $u$ is an output, a $(X_u, -\tfrac{\pi}{2})$ rotation is pulled from the initial stabilizer block all the way to the end of the rotation DAG;
\item For each vertex $w$ neighbouring $u$ with $\lambda(w) = XY$, and for $u$ itself if $\lambda(u)$ is planar, we pull a $\left(\mathbf{P'}^{\bot w}, (-1)^{D_w}\tfrac{\pi}{2}\right)$ rotation from the initial stabilizer block to merge it into the existing rotation for $w$ or $u$. We do this in $\succ$-order over such $w$ vertices.
\end{enumerate}
\end{theorem}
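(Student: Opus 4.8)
The plan is to follow the same structural pattern as the proofs of Theorems \ref{thrm:FixPauliRelate} and \ref{thrm:ZElimRelate}: compare the PDDAG extracted from $(\Gamma,\alpha)$ with the one extracted from the locally-complemented pattern $(\Gamma',\alpha')$, and exhibit the stated sequence of elementary PDDAG rewrites (pulling Clifford rotations out of the stabilizer block using the \ref{lemma:CommutationRules} and \ref{lemma:GadgetStabCorrespondence}, then merging them with existing rotations) as the bridge between the two. The key semantic fact to pin down first is the standard one: local complementation about $u$ acts on the graph state $E_G N_{\overline I}$ as the local Clifford $U_u = e^{\mp i\tfrac{\pi}{4} X_u}\prod_{w\in N_G(u)} e^{\mp i\tfrac{\pi}{4} Z_w}$ (up to global phase), i.e.\ $E_{G\star u} N_{\overline I} \approx U_u\, E_G N_{\overline I}$, and that conjugating the measurement projectors $\bra{+_{\lambda(w),\alpha(w)}}_w$ by this Clifford produces exactly the relabelled measurements $\bra{+_{\lambda'(w),\alpha'(w)}}_w$ with the updated labels/angles — this is precisely why the rewrite preserves the linear map. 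I would state this as a preliminary lemma (or cite it from the appendix, cf.\ a companion to Lemma \ref{lemma:ZElimSemantics}).

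Next I would extract the PDDAG of $(\Gamma',\alpha')$ by choosing a focussed Pauli flow $(p',\prec')$ for $\Gamma'$ obtained from the flow $(p,\prec)$ of $\Gamma$ via the known transformation of Pauli flows under local complementation (the same update used in the gflow literature), so that the extraction strings — hence the Pauli strings $A_k$ and the order $\prec$ in the two DAGs — are related in a controlled way. The half-angle bookkeeping in Equation \ref{eq:RotateBasis} means that a $\mp\tfrac{\pi}{4}$ exponential in $U_u$ shows up as a $\pm\tfrac{\pi}{2}$ rotation in the PDDAG, which is where the $(Z_w,\tfrac{\pi}{2})$ and $(X_u,-\tfrac{\pi}{2})$ rotations of items (1)–(2) come from: these are literally the factors of $U_u$ that act on output qubits (for $w$ an output) or on $u$ itself (when $u$ is an output), inserted at the stabilizer block and then commuted all the way to the end of the rotation DAG via the \ref{lemma:CommutationRules}. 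For an output qubit there is no corresponding planar rotation to merge into, so the rotation must survive to the end; this is the analogue of item (2) of Theorem \ref{thrm:ZElimRelate}.

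The heart of the argument is item (3): for each $XY$-measured neighbour $w$ of $u$ (and for $u$ itself when $\lambda(u)$ is planar), the factor $e^{\mp i\tfrac{\pi}{4}Z_w}$ (resp.\ $e^{\mp i\tfrac{\pi}{4}X_u}$) of $U_u$ has to be absorbed into the existing planar rotation for $w$. One pushes this Clifford factor from the stabilizer block forward until it meets the rotation node of $w$; using the \ref{lemma:GadgetStabCorrespondence} with an appropriate graph-state stabilizer (the one given by the Pauli flow, i.e.\ the extraction string $\mathbf{P}'^{\bot w}$) one rewrites it as a $\bigl(\mathbf{P}'^{\bot w},(-1)^{D_w}\tfrac{\pi}{2}\bigr)$ rotation supported on the outputs that commutes to exactly where $w$'s rotation sits, where it can be merged. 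The sign $(-1)^{D_w}$ records the number of anticommutations incurred while commuting past the other rotations and the other Clifford factors of $U_u$ — this is the piece that forces the $\succ$-ordering in item (3), since handling the $w$'s in $\succ$-order ensures each merge is legal (no topological-order obstruction) and makes the accumulated sign well-defined and consistent between the two extractions. I expect this sign/ordering analysis — verifying that processing the affected vertices in $\succ$-order makes every intermediate rotation merge valid and that the $D_w$ parities match the labels/angles produced by the local-complementation update on $\lambda$ and $\alpha$ — to be the main obstacle, as it is the place where the combinatorics of the flow, the graph update, and the Clifford commutations all have to be reconciled simultaneously. The remaining check, that after performing (1)–(3) the stabilizer block and rotation DAG coincide with those extracted directly from $(\Gamma',\alpha')$, then follows by the same uniqueness-up-to-PDDAG-rewrites reasoning used in the earlier theorems.
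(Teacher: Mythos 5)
Your high-level plan is sound and does match the paper's strategy: realise local complementation as the local Clifford $U_u = e^{\mp i\frac{\pi}{4}X_u}\prod_{w\in N_G(u)}e^{\mp i\frac{\pi}{4}Z_w}$ via Van den Nest's theorem, split its factors into those landing on output qubits (items (1)--(2)) and those that must be absorbed into existing planar rotations via the \ref{lemma:GadgetStabCorrespondence} (item (3)), and compare against the PDDAG extracted from the transformed focussed Pauli flow. This is exactly the scaffolding that the paper builds in Lemmas \ref{lemma:LocalCompSemantics}, \ref{lemma:LocalCompPreserveProof}, \ref{lemma:LocalCompFocussed}, and \ref{lemma:LocalCompStrings}.

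However, your explanation of the sign $(-1)^{D_w}$ is wrong, and this is not a cosmetic slip: it is precisely the place you yourself flag as ``the main obstacle,'' and the proposal then hand-waves it. You claim $(-1)^{D_w}$ ``records the number of anticommutations incurred while commuting past the other rotations and the other Clifford factors of $U_u$,'' and that this is what forces the $\succ$-ordering. In fact $D_w$ is a fixed function of the measurement label alone: $D_w = 1$ if $\lambda(w)=YZ$ and $D_w = 0$ otherwise. It has nothing to do with anticommutation counts; it comes from the sign convention in Equation \ref{eq:RotateBasis}, where the $YZ$-plane projector decomposes as $\bra{\pm_{YZ,\alpha}}\approx\bra{\pm_{Z,0}}e^{-i\frac{\alpha}{2}X}$ (note the minus), so every rotation extracted from a $YZ$ vertex carries a flipped exponent relative to the $XY$ and $XZ$ cases. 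The $\succ$-ordering in item (3) is needed for a different reason: the extraction string $\mathbf{P'}^{\bot w}$ appearing in the pulled rotation is the one computed in the \emph{new} pattern and it depends on which later measurements have already been reduced to Pauli projections; processing out of order would reference strings that are not yet in their final form and would leave uncancelled Pauli factors on the vertex being extracted (the $P^{w\to v}_v$ terms in the operator $\mathcal{R}$ of Lemma \ref{lemma:LocalCompStrings}). The actual phase bookkeeping that the paper's proof performs is a short cross-check of the conjugation table for $P^{\bot v}\in\{X,Y,Z\}$ under $e^{i\frac{\pi}{4}X_v}$ and $e^{-i\frac{\pi}{4}Z_v}$ against the explicit updates $\alpha\mapsto\alpha'$ in Lemma \ref{lemma:LocalCompSemantics}; in particular the extra $(-1)$ that appears when $\lambda(v)=XZ$ is exactly cancelled by the $\alpha(u)\mapsto\frac{\pi}{2}-\alpha(u)$ (resp.\ $\alpha(v)\mapsto -\alpha(v)$ for $YZ$ neighbours) update, not by any commutation count. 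As written, your proposal mislocates the origin of the sign and does not carry out the verification, so it would not establish the theorem without substantial repair.
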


\begin{proof}
Proof in Appendix (Theorem \ref{thrm:LocalCompRelateProof}).
\end{proof}


Another similar operation to local complementation is the act of pivoting a diagram about an edge, which can be used to prepare a Pauli $X$ measurement for elimination. This action can be decomposed into a sequence of local complementations about each end of the edge, so this can also be simulated by some sequence of Clifford transformations in the PDDAG.

In each of the above, we assume that a particular transformation is made to the focussed Pauli flow and focussed sets for the measurement pattern based on the rewrite chosen, as detailed in Appendix \ref{sec:RewritesProofs}. In truth, focussed Pauli flows need not be unique for a measurement pattern. We can view the map between focussed Pauli flows as a rewrite on the measurement pattern that changes the corrections applied after each measurement, keeping the labelled open graph and measurement angles the same. We can compare the differences to the focussed sets for the pattern to obtain our final correspondence theorem:

\begin{theorem}\label{thrm:AddFlowsAndFocussedRelate}
Let $(\Gamma, \alpha)$ describe a measurement pattern with some focussed Pauli flow $(p, \prec)$, a focussed set $\hat{p}$, and some vertex $u \in \overline{O}$ such that $\forall w \in \hat{p} \cup \mathrm{Odd}(\hat{p}) . \lambda(w) \in \{XY, XZ, YZ\} \Rightarrow w \neq u \wedge u \prec w$. Updating $p(u)$ to $p(u) \Delta \hat{p}$ corresponds to a free action on the isometry tableau if $u$ is an input, and applying the \ref{lemma:GadgetStabCorrespondence} to the rotation from $u$ with the stabilizer of $\hat{p}$ if $u$ has a planar label.

Therefore, any two focussed Pauli flows for the same labelled open graph yield PDDAGs that are related by a sequence of applications of the \ref{lemma:GadgetStabCorrespondence} and free actions on the isometry tableau.
\end{theorem}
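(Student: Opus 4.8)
The plan is to prove the first claim of Theorem~\ref{thrm:AddFlowsAndFocussedRelate} by tracking, via Definition~\ref{def:ExtractionStringDef} and Lemma~\ref{lemma:ExtractionFromFocussed}, how the extracted rotation/tableau data for vertex $u$ depends on $p(u)$, and then showing that replacing $p(u)$ by $p(u) \Delta \hat{p}$ changes exactly one extraction string by multiplication with the output stabilizer $\mathbf{S}$ determined by $\hat{p}$. First I would observe that $\hat{p}$, being focussed over $\overline{O}$, corresponds (via Equation~\ref{eq:GraphStateStabilizer} and the Pauli absorptions of Equation~\ref{eq:PostEigen}) to an honest stabilizer of the initial Clifford process $\mathcal{C}$, and that the hypothesis $\forall w \in \hat{p} \cup \mathrm{Odd}(\hat{p}) . \lambda(w) \in \{XY, XZ, YZ\} \Rightarrow w \neq u \wedge u \prec w$ is precisely what is needed so that this stabilizer, when restricted to the ``partially extracted'' process $\mathcal{C}$ of Equation~\ref{eq:ExtractionStringDef} associated to $u$, still has all of its non-output legs absorbed — i.e. it becomes a genuine stabilizer $\mathbf{S}$ supported entirely on the outputs (plus possibly a $P$-leg on $u$, but the condition forbids $u$ from carrying a planar leg of $\hat{p}$, and Pauli legs on $u$ are absorbed by its own projector). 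This is the step I expect to be the main obstacle: verifying carefully that the focussing conditions \ref{FOC1}--\ref{FOC3} for $\hat{p}$ together with the $u$-avoidance hypothesis are exactly strong enough to guarantee absorption against the specific product of bras appearing in $\mathcal{C}$, including the subtle $Y$-vertex bookkeeping where membership in $\hat{p}$ and in $\mathrm{Odd}(\hat{p})$ must agree.

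Next I would check that $p(u) \Delta \hat{p}$ is again a valid focussed $p$-value for $u$: the parities $v \in p(u) \Delta \hat{p}$ and $v \in \mathrm{Odd}(p(u) \Delta \hat{p}) = \mathrm{Odd}(p(u)) \Delta \mathrm{Odd}(\hat{p})$ change only by the corresponding membership in $\hat{p}$ and $\mathrm{Odd}(\hat{p})$, which are controlled by focussing of $\hat{p}$; in particular the ``type'' $P^{\bot u}$ of $u$ is unchanged because $u$ carries no leg of $\hat{p}$ in either set, so \ref{PF4}--\ref{PF9} are preserved, and the ordering conditions \ref{PF1}--\ref{PF3} are preserved since every newly-added element sits $\succ u$ (using the hypothesis for planar vertices, and noting Pauli vertices are exempt from \ref{PF1}--\ref{PF2} and handled by \ref{PF3} via the $Y$-agreement). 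Then, by Lemma~\ref{lemma:ExtractionFromFocussed}, the new $p(u)$ yields a $P^{\bot u}$-extraction string $\mathbf{P}' $ with $P^{\bot u}_u \mathbf{P}'$ a stabilizer of $\mathcal{C}$; since $P^{\bot u}_u \mathbf{P}$ is also such a stabilizer, their product $\mathbf{P}\mathbf{P}' $ is a stabilizer of $\mathcal{C}$ over the outputs, and I would identify this product with $\mathbf{S}$ (up to sign, which is tracked through Equation~\ref{eq:PostEigen}). If $u$ is an input, the row of the isometry tableau read off from $u$ is $\mathbf{P}$ on the input-$P$ side; multiplying it by the free output stabilizer $\mathbf{S}$ is by definition a free action on the tableau (reordering/multiplying generator rows), giving the first case. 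If $u$ has a planar label, the extracted rotation is $e^{i\frac{\alpha(u)}{2} \mathbf{P}}\,\mathcal C$-style (via Equation~\ref{eq:RotateBasis} pushed onto the outputs with Lemma~\ref{lemma:GadgetStabCorrespondence}); replacing $\mathbf{P}$ by $\mathbf{P}'=\mathbf{P}\mathbf{S}$ is exactly one application of the Product Rotation Lemma with the commuting stabilizer $\mathbf{S}$ of $\mathcal{C}$, since $\mathbf{P}$ and $\mathbf{S}$ commute (both commute with $P^{\bot u}_u$ and agree off $u$ up to the shared legs), which is the second case.

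Finally, for the ``Therefore'' clause I would argue that any two focussed Pauli flows $(p,\prec)$ and $(p',\prec')$ for the same labelled open graph differ vertex-by-vertex by focussed sets: for each $u$, the symmetric difference $p(u) \Delta p'(u)$ is focussed over $\overline{O} \setminus \{u\}$ (this is a standard consequence of both being focussed, as in \cite{Backens2020}), and one can extend it to a focussed set $\hat{p}_u$ over $\overline{O}$ by adjusting only on $u$ (which does not affect absorption since $u$'s own projector handles its leg). Processing vertices in a linear extension of $\prec$, the $u$-avoidance hypothesis can be met for each step — the only planar vertices appearing in $\hat p_u \cup \mathrm{Odd}(\hat p_u)$ with which we must worry are those we have not yet adjusted, and by focussing they can be taken $\succ u$ — so each step is either a free tableau action or one application of Lemma~\ref{lemma:GadgetStabCorrespondence}, and composing them transforms the PDDAG of $(p,\prec)$ into that of $(p',\prec')$. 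I would relegate the purely combinatorial verifications (that $\mathbf{P}\mathbf{P}'$ equals $\mathbf{S}$ on the nose including signs, and that the ordering on newly adjusted vertices can always be arranged) to the appendix, citing Lemma~\ref{lemma:Focussing} and the focussed-flow machinery; the conceptual content is entirely the absorption argument of the first paragraph.
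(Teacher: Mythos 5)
Your proposal follows essentially the same decomposition as the paper's proof: show that a focussed set $\hat p$ yields a stabilizer over the outputs (Lemma~\ref{lemma:StabsFromFocussedSets}), show that $p(u)\,\Delta\,\hat p$ again gives a focussed Pauli flow (Lemma~\ref{lemma:AddFlowsAndFocussed}), identify the new extraction string as $\mathbf P^{\bot u}\hat{\mathbf P}$ (Lemma~\ref{lemma:AddFlowsAndFocussedStrings} via Lemmas~\ref{lemma:MultiplyStrings} and~\ref{lemma:EqualStrings}), read this off as a free tableau action or an application of the \ref{lemma:GadgetStabCorrespondence}, and for the ``Therefore'' clause decompose $p(v)\,\Delta\,p'(v)$ into focussed sets processed in $\prec$-order. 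Two reasoning slips, however, are worth flagging. First, you justify the commutation of $\mathbf P^{\bot u}$ and $\hat{\mathbf P}$ by saying both ``commute with $P^{\bot u}_u$''; this is not a valid inference (two Paulis sharing a commuting partner can still anticommute). The paper's argument is Lemma~\ref{lemma:AntiCommutingStringsProof}: the strings anticommute exactly when $u \in \hat p \cup \mathrm{Odd}(\hat p)$, and the theorem's hypothesis rules this out for planar $u$; the same lemma is also what lets one carry $\hat{\mathbf P}$ past the rotations of all $w \preceq u$ before applying the Product Rotation Lemma, which your write-up skips. Second, your claim that $p(u)\,\Delta\,p'(u)$ is only focussed over $\overline O \setminus \{u\}$ and must be ``extended by adjusting only on $u$'' is a misconception: one cannot independently tune the focussedness at $u$ by changing membership of $u$ in the set, and one does not need to — the conditions \ref{PF4}--\ref{PF9}, satisfied by both flows with the same $\lambda(u)$, force $p(u)\,\Delta\,p'(u)$ to already be focussed over $\{u\}$ as well (this is exactly the paper's Lemma~\ref{lemma:AddFlowsGivesFocussed}). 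Neither slip changes the overall route, but both would have to be repaired in a complete write-up.
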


\begin{proof}
Proof in Appendix (Theorem \ref{thrm:AddFlowsAndFocussedRelateProof}).
\end{proof}

An interesting consequence of this result is the uniqueness of focussed Pauli flow for unitary patterns (up to weakening of the partial order), since there are no free stabilizers with which to apply the \ref{lemma:GadgetStabCorrespondence}.

\section{Conclusion}\label{sec:Conclusion}


We have investigated the significance of Pauli flow in the extraction of gate-based reversible circuits from measurement patterns, demonstrating that it is sufficient. This weakens the requirements for circuit extraction from universal measurement patterns compared to previous results. The principal contributions are a polynomial-time algorithm for identifying a Pauli flow for a measurement pattern (if one exists) and using this to construct an equivalent circuit without using ancillas or measurements.



The subsequent investigation using this map formally demonstrated that the effects of common graph-theoretic rewrites on measurement patterns/ZX-diagrams can be simulated via movement of Clifford rotations in a Pauli Dependency DAG. We know that the reverse simulation must be possible from the completeness of the ZX-calculus.

There are a number of possible avenues for future development regarding the topics of this paper:

\begin{itemize}
\item The simulation of rewrites demonstrates that ZX-calculus and PDDAGs can perform equivalent effects, but analysis of the computational complexities is needed to determine preferences for practical usage such as the actual data structure of choice for quantum circuit compilation. This will be heavily dependent on the choice of the graph structure used in each case and how this can capture the dependency relation in the PDDAG.
\item Given a PDDAG (or generic gate-based circuit), the work here provides some practical constraints on Pauli flows that any equivalent measurement pattern should admit. This could give rise to a reverse construction to find a minimal measurement pattern for a given PDDAG, allowing for explicit simulations for any PDDAG rewrite in measurement patterns.
\item The flexibility of the PDDAG structure could be extended by incorporating measurements, discards or decoherence, conditional gates, and more elaborate rewrites aiming for completeness for deciding circuit equivalence.
\end{itemize}

\newpage
\bibliography{library}
\bibliographystyle{eptcs}

\newpage
\appendix

\section{An Algorithm for Identifying Pauli Flow}\label{sec:PauliFlowProofs}

Here we will build up to an algorithm for identifying a Pauli flow from a graph, which follows the key principle of previous algorithms for causal flow and gflow \cite{Mhalla2008, Backens2020} of delaying measurements as long as possible in order to have as many vertices available for corrections as possible at each step. This is achieved by working backwards from the outputs and calculating the sets of vertices of a given ``correction depth''.

\begin{definition}
For a given labelled open graph $(G, I, O, \lambda)$ and a given Pauli flow $(p, \prec)$, for each $k \geq -1$ we define the vertices at depth $k$ under $\prec$ by
\begin{equation}
V_k^\prec := \begin{cases}
\emptyset & \text{if } k = -1 \\
\max_\prec(V \setminus \bigcup_{i < k} V_i^\prec) & \text{if } k \geq 0
\end{cases}
\end{equation}
the cumulative vertices up to depth $k$ by
\begin{equation}
V_{\cup k}^\prec := \bigcup_{i \leq k} V_i^\prec
\end{equation}
and the vertices in each plane correctable at depth $k+1$ by
\begin{align}
V_{k+1}^{\prec, XY} &:= \left\{u \in \overline{O} \middle\vert \begin{array}{l}
\lambda(u) \in \{XY, X, Y\} \wedge \exists K \subseteq (V_{\cup k}^\prec \cup \Lambda_u^X \cup \Lambda_u^Y) \cap \overline{I} . \\
K \cap \Lambda_u^Y \setminus V_{\cup k}^\prec = \mathrm{Odd}(K) \cap \Lambda_u^Y \setminus V_{\cup k}^\prec \\
\wedge \mathrm{Odd}(K) \setminus (V_{\cup k}^\prec \cup \Lambda_u^Y \cup \Lambda_u^Z) = \{u\}
\end{array} \right\} \\
V_{k+1}^{\prec, XZ} &:= \left\{u \in \overline{O} \middle\vert \begin{array}{l}
\lambda(u) \in \{XZ, X, Z\} \wedge \exists K \subseteq (V_{\cup k}^\prec \cup \Lambda_u^X \cup \Lambda_u^Y) \cap \overline{I} . \\
K \cap \Lambda_u^Y \setminus V_{\cup k}^\prec = \mathrm{Odd}(K \cup \{u\}) \cap \Lambda_u^Y \setminus V_{\cup k}^\prec \\
\wedge \mathrm{Odd}(K \cup \{u\}) \setminus (V_{\cup k}^\prec \cup \Lambda_u^Y \cup \Lambda_u^Z) = \{u\}
\end{array} \right\} \\
V_{k+1}^{\prec, YZ} &:= \left\{u \in \overline{O} \middle\vert \begin{array}{l}
\lambda(u) \in \{YZ, Y, Z\} \wedge \exists K \subseteq (V_{\cup k}^\prec \cup \Lambda_u^X \cup \Lambda_u^Y) \cap \overline{I} . \\
K \cap \Lambda_u^Y \setminus V_{\cup k}^\prec = \mathrm{Odd}(K \cup \{u\}) \cap \Lambda_u^Y \setminus V_{\cup k}^\prec \\
\wedge \mathrm{Odd}(K \cup \{u\}) \setminus (V_{\cup k}^\prec \cup \Lambda_u^Y \cup \Lambda_u^Z) = \emptyset
\end{array} \right\}
\end{align}
where $\Lambda_u^P := \{v \in \overline{O} | v \neq u \wedge \lambda(v) = P\}$.
\end{definition}

The sets $V_k^{\prec, XY}$, $V_k^{\prec, XZ}$, $V_k^{\prec, YZ}$ capture the vertices $v$ at correction depth $k$ for which the $v$ component of the correcting stabilizer is $Z$, $Y$, or $X$ respectively. These definitions are intended to mirror the Pauli flow conditions, with the restrictions on $\lambda(v)$ matching \ref{PF4}-\ref{PF9} and the other conditions capturing \ref{PF1}-\ref{PF3} using $K$ for $p(v) \setminus \{v\}$ and $V_{\cup k}^\prec$ to represent the future vertices under $\prec$.

\begin{definition}[Delayed]
Given two Pauli flows $(p, \prec)$ and $(p', \prec')$ for the same labelled open graph, $(p, \prec)$ is \textit{more delayed} than $(p', \prec')$ if for all $k$,
\begin{equation}
\left| V_{\cup k}^\prec \right| \geq \left| V_{\cup k}^{\prec'} \right|
\end{equation}
and there exists a $k$ for which this inequality is strict. A Pauli flow $(p, \prec)$ is \textit{maximally delayed} if there exists no Pauli flow over the same labelled open graph that is more delayed.
\end{definition}

\begin{remark}\label{rem:MaximallyDelayed}
If a labelled open graph admits a Pauli flow, then there must be a maximally delayed Pauli flow since the delayed relation is a strict partial order and there are only finitely many possible Pauli flows (the graph is finite, so there are only finite possible choices of maps $p$ and orders $\prec$). Any sequence of increasingly delayed flows must be finite, so we can always reach a maximal point. We will freely assume this without statement in subsequent proofs.
\end{remark}

For causal flow and gflow, the set $V_0^\prec$ for a maximally delayed flow is exactly the set of outputs \cite{Mhalla2008, Backens2020} since every measured vertex needs corrections. However, since Pauli flow allows some corrections to be effectively applied in the past of a vertex, we may now have some measured vertices in $V_0^\prec$. An example is given in Figure \ref{fig:MeasuredV0}.

\begin{figure}
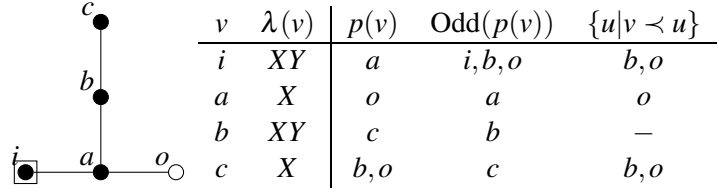

\centering
\tikzfig{tikz_figs/mbqc_measured_v0}
\begin{tabular}{cc|ccc}
$v$ & $\lambda(v)$ & $p(v)$ & $\mathrm{Odd}(p(v))$ & $\{u | v \prec u\}$ \\
\hline
$i$ & $XY$ & $a$ & $i, b, o$ & $b, o$ \\
$a$ & $X$ & $o$ & $a$ & $o$ \\
$b$ & $XY$ & $c$ & $b$ & $-$ \\
$c$ & $X$ & $b, o$ & $c$ & $b, o$ \\
\end{tabular}
\caption{A labelled open graph and a maximally delayed Pauli flow with a measured vertex ($b$) in $V_0^\prec$. If $(p, \prec)$ is focussed, then the primary extraction string for such a vertex is the identity, meaning the measurement angle/outcome has no effect on the overall process.}\label{fig:MeasuredV0}
\end{figure}

\begin{lemma}\label{lemma:V0}(Generalisation of \cite[Lemma 1]{Mhalla2008}, \cite[Lemma C.2]{Backens2020})
If $(p, \prec)$ is a maximally delayed Pauli flow of $(G, I, O, \lambda)$, then $V_0^\prec = O \cup V_0^{\prec, XY} \cup V_0^{\prec, XZ} \cup V_0^{\prec, YZ}$.
\end{lemma}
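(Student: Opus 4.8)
The plan is to prove the two inclusions $V_0^\prec \subseteq O \cup V_0^{\prec, XY} \cup V_0^{\prec, XZ} \cup V_0^{\prec, YZ}$ and $O \cup V_0^{\prec, XY} \cup V_0^{\prec, XZ} \cup V_0^{\prec, YZ} \subseteq V_0^\prec$ separately. The second (``$\supseteq$'') inclusion should be relatively routine: every output lies in $V_0^\prec$ trivially since $V_0^\prec = \max_\prec(V)$ and outputs are never below a measured vertex in any Pauli flow (condition \ref{PF1}/\ref{PF2} only force $u \prec v$ for $v \in \overline{O}$-ish positions, but more importantly outputs have no outgoing $\prec$-constraints). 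For the correctable sets, I would argue that if $u \in V_0^{\prec, XY} \cup V_0^{\prec, XZ} \cup V_0^{\prec, YZ}$ but $u \notin V_0^\prec$, then $u$ is $\prec$-below something; I would then show one can modify $\prec$ (and possibly $p$) to pull $u$ up into the top layer while retaining the Pauli flow conditions, contradicting maximal delayedness. The witnessing set $K$ in the definition of $V_{k+1}^{\prec, \cdot}$ (with $k = -1$, so $V_{\cup k}^\prec = \emptyset$) is exactly the data needed to define the corrected $p'(u)$.

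The first (``$\subseteq$'') inclusion is the substantive direction and where I expect the main obstacle. Take $u \in V_0^\prec$; if $u \in O$ we are done, so assume $u \in \overline{O}$. I want to show $u$ lies in one of the three correctable sets according to its label and the structure of $p(u)$. The natural move is to use the existing focussing/flow machinery: since $u \in V_0^\prec = \max_\prec(V)$, nothing is strictly above $u$, so conditions \ref{PF1}, \ref{PF2} force every $v \in p(u) \setminus \{u\}$ to have $\lambda(v) \in \{X, Y\}$ and every $v \in \mathrm{Odd}(p(u)) \setminus \{u\}$ to have $\lambda(v) \in \{Y, Z\}$, and \ref{PF3} controls the $Y$-labelled vertices $v \preceq u$. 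Setting $K := p(u) \setminus \{u\}$, I then need to verify $K \subseteq (\Lambda_u^X \cup \Lambda_u^Y) \cap \overline{I}$ (using $p(u) \subseteq \overline{I}$ and the label restrictions just derived), and that the $Y$-condition and $\mathrm{Odd}$-condition in the appropriate $V_1^{\prec, \cdot}$ definition hold — which is precisely a rephrasing of \ref{PF3} together with \ref{PF2} at top depth. The case split is on $\lambda(u)$: for planar $u$, conditions \ref{PF4}--\ref{PF6} tell us which of $V_0^{\prec, XY/XZ/YZ}$ to aim for; for Pauli $u$, conditions \ref{PF7}--\ref{PF9} determine membership of $u$ in $p(u)$ and $\mathrm{Odd}(p(u))$, and one checks these land $u$ in the correctable set whose label constraint ($\lambda(u) \in \{XY, X, Y\}$ etc.) is satisfied.

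The delicate point, and the one I would spend the most care on, is handling the interaction between $u$ itself and the sets $\Lambda_u^Y$, $\Lambda_u^Z$ in the $\mathrm{Odd}$-conditions — specifically whether $u \in \mathrm{Odd}(K)$ versus $u \in \mathrm{Odd}(K \cup \{u\})$ matches what \ref{PF4}--\ref{PF9} give for $u \in \mathrm{Odd}(p(u))$, since $p(u)$ may or may not contain $u$ and the definitions toggle between $K$ and $K \cup \{u\}$ accordingly. I would organize this as a small table matching each label of $u$ against the flow conditions to confirm the $\mathrm{Odd}(\cdots) \setminus (V_{\cup k}^\prec \cup \Lambda_u^Y \cup \Lambda_u^Z) = \{u\}$ or $= \emptyset$ clause comes out right — the removal of $\Lambda_u^Y \cup \Lambda_u^Z$ is exactly what lets \ref{PF2} (which excludes $\lambda(v) \in \{Y,Z\}$) and \ref{PF3} (the $Y$ biconditional) close the argument. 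For the Pauli $Y$ case in particular, \ref{PF9} gives a disjunction and I would check each disjunct sends $u$ into the corresponding set. Finally, I would note the ``$\subseteq$'' direction does not actually use maximal delayedness — it holds for any Pauli flow — so the content of maximal delayedness is entirely in the reverse inclusion, mirroring the structure of \cite[Lemma 1]{Mhalla2008} and \cite[Lemma C.2]{Backens2020}.
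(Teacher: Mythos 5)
Your overall approach matches the paper's: the $\subseteq$ direction sets $K := p(u)\setminus\{u\}$ and reduces \ref{PF1}--\ref{PF3} at depth $0$ to the witness conditions, then routes $u$ into the appropriate $V_0^{\prec,\cdot}$ via \ref{PF4}--\ref{PF9}; the $\supseteq$ direction for non-outputs uses the witness $K$ to build a more delayed flow $(p',\prec')$ with $\prec' := \prec \setminus (\{w\}\times V)$, contradicting maximality. You are also right that the $\subseteq$ inclusion holds for any Pauli flow without maximal delayedness.

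The one genuine gap is your treatment of outputs in the $\supseteq$ direction. You assert that every output lies in $V_0^\prec$ ``trivially'' because ``outputs are never below a measured vertex in any Pauli flow.'' That is not forced by the definition: $\prec$ is an arbitrary strict partial order subject only to conditions \ref{PF1}--\ref{PF3} (which are implications whose \emph{conclusions} are $\prec$-facts), and adding extra relations of the form $o \prec w$ with $o \in O$ never violates any flow condition — it only makes $\prec$ larger, which weakens the $\preceq$ quantifier in \ref{PF3} and preserves the $\prec$-conclusions in \ref{PF1}, \ref{PF2}. So a perfectly valid Pauli flow can place an output at positive depth. What rules this out is precisely maximal delayedness: if $w \in O \setminus V_0^\prec$, then $(p, \prec\setminus(\{w\}\times V))$ is still a Pauli flow (the conditions only constrain measured vertices) and is strictly more delayed, a contradiction. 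So the output case uses maximality just like the non-output case does; it is not vacuous, and you should not present it as such.
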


\begin{proof}
$V_0^\prec \subseteq O \cup V_0^{\prec, XY} \cup V_0^{\prec, XZ} \cup V_0^{\prec, YZ}$: Since the output case is trivial, suppose $u \in V_0^\prec$ is a non-output. We start by showing that $p(u) \subseteq (V_{-1}^\prec \cup \Lambda_u^X \cup \Lambda_u^Y \cup \{u\}) \cap \overline{I}$. $p(u) \subseteq \overline{I}$ from the definition of Pauli flow, since $p : \overline{O} \to \mathcal{P}[\overline{I}]$. By definition $V_{-1}^\prec = \emptyset$, so we just require $p(u) \subseteq \Lambda_u^X \cup \Lambda_u^Y \cup \{u\}$ which is given by condition \ref{PF1}. It remains to show that $p(u) \setminus \{u\}$ is a suitable choice for $K$ for at least one of the three $V_0^{\prec, PQ}$ sets.

Every vertex $v$ satisfies $v \preceq u$ ($u \in V_0^\prec$), so condition \ref{PF2} becomes $\mathrm{Odd}(p(u)) \setminus (\Lambda_u^Y \cup \Lambda_u^Z) \subseteq \{u\}$. Condition \ref{PF3} gives $p(u) \cap \Lambda_u^Y = \mathrm{Odd}(p(u)) \cap \Lambda_u^Y$. We can rewrite these to the appropriate form from the definition of the $V_0^{\prec, PQ}$ sets based on whether or not $u$ is in $p(u)$ and $\mathrm{Odd}(p(u))$. As the cases are mutually exclusive, we only place $u$ into a single set with conditions \ref{PF4}-\ref{PF9} restricting $\lambda(u)$ as needed.

$V_0^\prec \supseteq O \cup V_0^{\prec, XY} \cup V_0^{\prec, XZ} \cup V_0^{\prec, YZ}$: We will assume that there is some $w \in O \cup V_0^{\prec, XY} \cup V_0^{\prec, XZ} \cup V_0^{\prec, YZ}$ but $w \notin V_0^\prec$ and aim for a contradiction by generating a new Pauli flow that is more delayed than $(p, \prec)$. If $w \in O$, then we can simply construct the new Pauli flow $(p, \prec \setminus (\{w\} \times V))$. This is still a Pauli flow since the conditions only concern the measured vertices $\overline{O}$. This is also more delayed since $w \in V_0^{\prec \setminus (\{w\} \times V)}$ and no vertex is made deeper.

For the remaining case, we shall suppose that $w \in V_0^{\prec, XY} \cup V_0^{\prec, XZ} \cup V_0^{\prec, YZ}$ instead. Let $K \subseteq (\Lambda_w^X \cup \Lambda_w^y) \cap \overline{I}$ be a witness for $w \in V_0^{\prec, XY} \cup V_0^{\prec, XZ} \cup V_0^{\prec, YZ}$. We define $p' : \overline{O} \to \mathcal{P}[\overline{I}]$ and $\prec'$ be the map and partial order:

\begin{equation}
p'(u) := \begin{cases}
K & \text{if } u = w \text{ and } w \in V_0^{\prec, XY} \\
K \cup \{w\} & \text{if } u = w \text{ and } w \notin V_0^{\prec, XY} \\
p(u) & \text{if } u \neq w
\end{cases}
\end{equation}

\begin{equation}
\prec' := \prec \setminus (\{w\} \times V)
\end{equation}

For $u \neq w$, each of the Pauli flow conditions is trivially preserved, but we need to show they hold for $p'(w)$.

This satisfies condition \ref{PF1} from $K \subseteq (\Lambda_w^X \cup \Lambda_w^Y) \cap \overline{I}$. For each of the $V_0^{\prec, PQ}$ options, we get $\mathrm{Odd}(p'(w)) \setminus \left(\Lambda_w^Y \cup \Lambda_w^Z\right) \subseteq \{w\}$, so we can show \ref{PF2}. Similarly, we get $p'(w) \cap \Lambda_w^Y = \mathrm{Odd}(p'(w)) \cap \Lambda_w^Y$ in each of the $V_0^{\prec, PQ}$ options, so condition \ref{PF3} also follows.

The membership of $w$ in $p'(w)$ and $\mathrm{Odd}(p'(w))$ is decided by which of the $V_0^{\prec, PQ}$ it belongs to, which also restricts $\lambda(w)$ to the compatible planes according to conditions \ref{PF4}-\ref{PF9}.

By assumption, $w \in V_k^\prec$ for some $k > 0$. By construction, the depth of every vertex under $\prec'$ is no larger than its depth under $\prec$ and $w \in V_0^{\prec'}$. This means $\forall k \geq 0 . \left|V_{\cup k}^\prec \right| \leq \left|V_{\cup k}^{\prec'}\right|$ and $\left|V_0^\prec\right| < \left|V_0^{\prec'}\right|$, so $(p', \prec')$ is a more delayed Pauli flow than $(p, \prec)$.
\end{proof}

\begin{lemma}\label{lemma:Vk}(Generalisation of \cite[Lemma 3]{Mhalla2008}, \cite[Lemma C.4]{Backens2020})
If $(p, \prec)$ is a maximally delayed Pauli flow of $(G, I, O, \lambda)$, then $\forall k > 0 . V_k^\prec = V_k^{\prec, XY} \cup V_k^{\prec, XZ} \cup V_k^{\prec, YZ}$.
\end{lemma}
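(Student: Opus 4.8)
The plan is to proceed by strong induction on $k$, with Lemma~\ref{lemma:V0} providing the base: it gives $V_0^\prec = O \cup V_0^{\prec, XY} \cup V_0^{\prec, XZ} \cup V_0^{\prec, YZ}$, and combined with the inductive hypothesis $V_j^\prec = V_j^{\prec, XY} \cup V_j^{\prec, XZ} \cup V_j^{\prec, YZ}$ for $0 < j < k$ this yields the identity $V_{\cup(k-1)}^\prec = O \cup \bigcup_{0 \le j \le k-1}\bigl(V_j^{\prec, XY} \cup V_j^{\prec, XZ} \cup V_j^{\prec, YZ}\bigr)$ --- which is what is needed in order to relate the ``correctable at depth $k$'' sets (built from $V_{\cup(k-1)}^\prec$) to the actual depth structure of $\prec$. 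Throughout I would keep to the usual convention of \cite{Mhalla2008, Backens2020} that $V_k^{\prec, XY}, V_k^{\prec, XZ}, V_k^{\prec, YZ}$ range only over $\overline{O} \setminus V_{\cup(k-1)}^\prec$, so that no vertex is correctable at two different depths.

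For $V_k^\prec \subseteq V_k^{\prec, XY} \cup V_k^{\prec, XZ} \cup V_k^{\prec, YZ}$, I would replay the first half of the proof of Lemma~\ref{lemma:V0} with $V_{-1}^\prec = \emptyset$ replaced by $V_{\cup(k-1)}^\prec$: for $u \in V_k^\prec$, maximality of $u$ in $V \setminus V_{\cup(k-1)}^\prec$ pushes every $v$ with $u \prec v$ into $V_{\cup(k-1)}^\prec$, so \ref{PF1} gives $p(u)\setminus\{u\} \subseteq (V_{\cup(k-1)}^\prec \cup \Lambda_u^X \cup \Lambda_u^Y)\cap\overline{I}$, \ref{PF2} gives $\mathrm{Odd}(p(u)) \setminus (V_{\cup(k-1)}^\prec \cup \Lambda_u^Y \cup \Lambda_u^Z) \subseteq \{u\}$, and \ref{PF3} (every element of $\Lambda_u^Y \setminus V_{\cup(k-1)}^\prec$ is $\preceq u$) gives the $\Lambda_u^Y$ biconditional. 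Taking $K := p(u)\setminus\{u\}$ as the witness, conditions \ref{PF4}--\ref{PF9} then restrict $\lambda(u)$ exactly so that $u$ falls into the set named by the value of $P^{\bot u}$ --- i.e. by whether $u \in p(u)$ and whether $u \in \mathrm{Odd}(p(u))$, the one remaining combination being ruled out by \ref{PF4}--\ref{PF9} for every possible label.

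The reverse inclusion is the real content, and I would prove it by contradiction with maximal delayedness, mirroring the second half of the proof of Lemma~\ref{lemma:V0}. Given $w$ in one of $V_k^{\prec, XY}, V_k^{\prec, XZ}, V_k^{\prec, YZ}$ with a witness $K$ but $w \notin V_k^\prec$ --- so, by the convention above, $w$ sits at some depth strictly above $k$ --- set $p'$ to agree with $p$ except $p'(w) := K$ or $K\cup\{w\}$ as dictated by the relevant set, and build $\prec'$ from $\prec$ by keeping all relations not involving $w$ and all relations $x \prec w$, deleting the relations $w \prec x$ for $x \notin V_{\cup(k-1)}^\prec$, and taking the transitive closure. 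Unlike the $V_0$ case, $w$ cannot simply be made maximal: \ref{PF1} and \ref{PF2} for $w$ force $w$ to sit above the planar-labelled members of $K$ and of $\mathrm{Odd}(K\cup\{w\})\setminus\{w\}$, but the witness conditions say precisely those members lie in $V_{\cup(k-1)}^\prec$, so placing $w$ just above $V_{\cup(k-1)}^\prec$ is both forced and enough. One then checks that no $x \in V_{\cup(k-1)}^\prec$ satisfies $x \prec w$ (it would have to lie deeper than $w$), so $\prec'$ is acyclic and moves $w$ to depth $k$ without deepening any other vertex, and that $(p', \prec')$ is again a Pauli flow; together these make $(p', \prec')$ strictly more delayed, contradicting maximality.

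I expect the verification that $(p', \prec')$ is a Pauli flow to be the main obstacle, and specifically the clauses about $Y$-measured vertices, which have no counterpart in the gflow arguments of \cite{Mhalla2008, Backens2020}. The conditions for vertices other than $w$ carry over because $p'$ agrees with $p$ off $w$ and the relations $x \prec w$ were retained; but condition \ref{PF3} for pairs $(v, w)$ needs attention, since passing to $\prec'$ can enlarge $\{v : v \preceq' w\}$, and for each newly included $v$ with $\lambda(v) = Y$ one must have $v \in p'(w) \Leftrightarrow v \in \mathrm{Odd}(p'(w))$. This is exactly what the $Y$-compatibility clause built into the definition of the chosen $V_k^{\prec, \cdot}$ set (mirroring \ref{PF3}) guarantees, since those newly included $v$ lie in $\Lambda_w^Y \setminus V_{\cup(k-1)}^\prec$; supplying this matching, together with the analogous checks for \ref{PF1}, \ref{PF2} and \ref{PF4}--\ref{PF9} at $w$, is where the argument needs the most care.
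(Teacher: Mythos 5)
Your overall strategy matches the paper's one-paragraph sketch: replay Lemma~\ref{lemma:V0}'s forward direction with $V_{\cup(k-1)}^\prec$ in place of $V_{-1}^\prec$, and prove the reverse direction by contradicting maximal delayedness. Your explicit care about acyclicity of $\prec'$ and the PF3 check for newly $\preceq'$ $Y$-vertices, which the paper glosses over, is a useful addition. (The strong-induction scaffolding is unnecessary --- $V_{\cup(k-1)}^\prec$ is defined directly from $\prec$, not via the earlier $V_j^{\prec,\cdot}$ --- but it does no harm.)

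There is, however, a genuine gap in your formal construction of $\prec'$, which does not match your own informal description nor the paper's. You write that $\prec'$ is obtained by \emph{deleting} the relations $w \prec x$ for $x \notin V_{\cup(k-1)}^\prec$ and taking the transitive closure --- this only \emph{retains} those pairs $(w, x)$ with $x \in V_{\cup(k-1)}^\prec$ that were already present in $\prec$. But the witness $K$ may contain vertices $v \in V_{\cup(k-1)}^\prec$ with $\lambda(v) \notin \{X, Y\}$ for which $w \not\prec v$ in $\prec$, and then $w \not\prec' v$ under your construction, breaking \ref{PF1} for $w$. Symmetrically, a $Y$-labelled $v \in V_{\cup(k-1)}^\prec$ with $w \not\prec v$ would satisfy $v \preceq' w$, and the witness conditions only constrain $\Lambda_w^Y \setminus V_{\cup(k-1)}^\prec$, so \ref{PF3} could fail. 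The paper fixes this by taking $\prec' := (\prec \setminus (\{w\} \times V)) \cup (\{w\} \times V_{\cup(k-1)}^\prec)$, i.e.\ \emph{adding} $w \prec' x$ for every $x \in V_{\cup(k-1)}^\prec$ whether or not it held before, which is what your phrase ``placing $w$ just above $V_{\cup(k-1)}^\prec$'' actually requires. With that construction (plus the transitive closure and acyclicity check you correctly include), the rest of your verification goes through.
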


\begin{proof}
For any given $k$, the proof of this follows the same strategy as Lemma \ref{lemma:V0}. When constructing the more delayed flow given a counterexample $w$, we need to adapt the definition of $\prec'$ to $\prec' := \left( \prec \setminus (\{w\} \times V) \right) \cup (\{w\} \times V_{\cup k-1}^\prec)$. Vertices in $V_{\cup k-1}^\prec$ are not covered by the constraints in the definitions of $V_{k}^{\prec, PQ}$, so we add them all to the future of $w$ to make sure we satisfy conditions \ref{PF1}-\ref{PF3}.
\end{proof}

These characterisations of the sets $V_k^\prec$ give us an iterative method of identifying them since we can simply search for possible witness sets $K$ for each vertex.

\begin{algorithm}\label{algo:FlowIdentify}
$\texttt{PauliFlow}(V, \Gamma, I, O, \lambda)$ = 
\Begin{
    $L_X := \emptyset$; $L_Y := \emptyset$; $L_Z := \emptyset$\;
    \ForAll{$v \in V$}{
        \lIf{$v \in O$}{$d(v) := 0$}
        \lIf{$\lambda(v) = X$}{$L_X := L_X \cup \{v\}$}
        \lIf{$\lambda(v) = Y$}{$L_Y := L_Y \cup \{v\}$}
        \lIf{$\lambda(v) = Z$}{$L_Y := L_Z \cup \{v\}$}
    }
    \Return{$\texttt{PauliFlowAux}(V, \Gamma, I, \lambda, \emptyset, O, 0)$}\;
}
$\texttt{PauliFlowAux}(V, \Gamma, I, \lambda, A, B, k)$ =
\Begin{
    $C := \emptyset$\;
    \ForAll{$u \in B^C$}{
        \If{$\lambda(u) \in \{XY, X, Y\}$}{
            $K_{XY} := $ Solution for $K \subseteq (A \cup L_X \cup L_Y) \cap I^C \setminus \{u\}$ where $K \cap L_Y \setminus (\{u\} \cup A) = \mathrm{Odd}(K) \cap L_Y \setminus (\{u\} \cup A)$ and $\mathrm{Odd}(K) \cap ((A \cup L_Y \cup L_Z)^C \cup \{u\}) = \{u\}$;
        }
        \If{$\lambda(u) \in \{XZ, X, Z\}$}{
            $K_{XZ} := \{u\} \cup $ (Solution for $K \subseteq (A \cup L_X \cup L_Y) \cap I^C \setminus \{u\}$ where $K \cap L_Y \setminus (\{u\} \cup A) = \mathrm{Odd}(K \cup \{u\}) \cap L_Y \setminus (\{u\} \cup A)$ and $\mathrm{Odd}(K \cup \{u\}) \cap ((A \cup L_Y \cup L_Z)^C \cup \{u\}) = \{u\}$);
        }
        \If{$\lambda(u) \in \{YZ, Y, Z\}$}{
            $K_{YZ} := \{u\} \cup $ (Solution for $K \subseteq (A \cup L_X \cup L_Y) \cap I^C \setminus \{u\}$ where $K \cap L_Y \setminus (\{u\} \cup A) = \mathrm{Odd}(K \cup \{u\}) \cap L_Y \setminus (\{u\} \cup A)$ and $\mathrm{Odd}(K \cup \{u\}) \cap ((A \cup L_Y \cup L_Z)^C \cup \{u\}) = \emptyset$);
        }
        \If{a solution $K_0$ is found for any of $K_{XY}, K_{XZ}, K_{YZ}$}{
            $C := C \cup \{u\}$\;
            $p(u) := K_0$\;
            $d(u) := k$;
        }
    }
    \If{$C = \emptyset$ \text{ and } $k > 0$}{
        \lIf{$B = V$}{\Return{$(\mathrm{true}, p, d)$}}
        \lElse{\Return{$(\mathrm{false}, \emptyset, \emptyset)$}}
    }
    \Else{
        $B' := B \cup C$\;
        \Return{$\texttt{PauliFlowAux}(V, \Gamma, I, \lambda, B', B', k+1)$};
    }
}
\caption{An algorithm for identifying whether a labelled open graph has a Pauli flow.}
\end{algorithm}

\begin{theorem}\label{thrm:IdentificationProof}(Restatement of Theorem \ref{thrm:Identification})
There exists an algorithm that decides whether a given labelled open graph has a Pauli flow, and that outputs such a Pauli flow if it exists. Moreover, this output is maximally delayed, and the algorithm completes deterministically in time that grows polynomially with the number of vertices in the graph.
\end{theorem}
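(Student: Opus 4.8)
## Proof Proposal for Theorem \ref{thrm:IdentificationProof}

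The plan is to show that Algorithm \ref{algo:FlowIdentify} correctly implements the inductive characterisation of the depth sets $V_k^\prec$ established in Lemmas \ref{lemma:V0} and \ref{lemma:Vk}, and then to bound its running time. First I would argue \emph{soundness}: whenever the algorithm returns $(\mathrm{true}, p, d)$, the pair $(p, \prec)$ — where $\prec$ is the strict partial order generated by $d(v) < d(u) \Rightarrow u \prec v$ (i.e.\ larger depth means earlier) — is a genuine Pauli flow. This follows by checking each of conditions \ref{PF1}--\ref{PF9} directly from the constraints the algorithm imposes when it assigns $p(u) := K_0$: the set-membership constraints on $\mathrm{Odd}(K_0)$ and $K_0 \cap L_Y$ encode \ref{PF1}--\ref{PF3} with $A = V_{\cup k-1}^\prec$ playing the role of the strict future, while the branch chosen ($K_{XY}$, $K_{XZ}$, or $K_{YZ}$) together with the guard $\lambda(u) \in \{\cdot\}$ enforces \ref{PF4}--\ref{PF9}. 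The key observation is that a vertex added at level $k$ has $d(u) = k$ and its correction set refers only to vertices already in $B$ (depth $< k$, hence strictly later under $\prec$) plus Pauli-$X$/$Y$ vertices, which are exactly the vertices permitted to be non-strictly-later by \ref{PF1}--\ref{PF3}.

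Next I would argue \emph{completeness together with maximal delay}. By induction on $k$, I claim the set $B$ at the start of the call with parameter $k$ equals $V_{\cup k-1}^{\prec}$ for every maximally delayed Pauli flow $\prec$ — here is where Lemmas \ref{lemma:V0} and \ref{lemma:Vk} do the work, since they identify $V_k^\prec$ with the union of the plane-correctable sets $V_k^{\prec, XY} \cup V_k^{\prec, XZ} \cup V_k^{\prec, YZ}$, and the witness conditions defining those sets are precisely the constraints the inner loop tries to solve (with $A = V_{\cup k-1}^\prec$). The base case is $B = O = V_0^\prec$ for a maximally delayed flow, again by Lemma \ref{lemma:V0}. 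For the inductive step, the set $C$ computed by the loop is exactly $V_k^{\prec, XY} \cup V_k^{\prec, XZ} \cup V_k^{\prec, YZ} = V_k^\prec$, so $B' = V_{\cup k}^\prec$. Consequently, if the graph admits any Pauli flow it admits a maximally delayed one (Remark \ref{rem:MaximallyDelayed}), for which the $V_k^\prec$ partition all of $V$; the algorithm's sets $B$ will grow to cover $V$ and it returns $\mathrm{true}$ with a maximally delayed witness. Conversely, if no Pauli flow exists, then no iteration can cover $V$ (a run of the algorithm returning true would, by soundness, exhibit one), so eventually $C = \emptyset$ with $B \neq V$ and the algorithm returns $\mathrm{false}$.

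For the \emph{complexity} bound, each call to $\texttt{PauliFlowAux}$ iterates over at most $|V|$ vertices, and for each it must decide solvability of a linear system over $\mathbb{F}_2$: the unknown is the indicator vector of $K$, the constraints are linear in the neighbourhood (adjacency) matrix of $G$, and the side conditions ($K \subseteq (A \cup L_X \cup L_Y) \cap I^C$, the $\mathrm{Odd}$ and $L_Y$ equalities) are all affine over $\mathbb{F}_2$. Gaussian elimination solves each such system in $O(|V|^3)$ time, so each inner iteration costs $O(|V|^3)$ and each $\texttt{PauliFlowAux}$ call costs $O(|V|^4)$. Since $|B|$ strictly increases between successive recursive calls (or the recursion halts), there are at most $|V|+1$ calls, giving an overall bound of $O(|V|^5)$, which is polynomial; determinism is immediate since Gaussian elimination and all set operations are deterministic.

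The main obstacle I anticipate is the completeness/maximal-delay argument: one must be careful that the \emph{local} greedy choice made at each level — admit every vertex for which \emph{some} witness $K$ exists, relative to the future $A$ accumulated so far — genuinely reproduces the depth sets of a maximally delayed flow, rather than merely some flow. This is exactly the content transported from Lemmas \ref{lemma:V0} and \ref{lemma:Vk}, whose ``exchange argument'' (replacing $p(w)$ by a shallow witness $K$ and trimming $\prec$) guarantees that any correctable-at-depth-$k$ vertex can in fact be placed at depth $k$ without destroying the flow property for the others; the proof must invoke this at each level and confirm that the accumulated order $\prec$ built from $d$ is consistent (acyclic) and that the witnesses assigned at level $k$ remain valid once deeper vertices are later added, which holds because those witnesses only constrain vertices of strictly smaller depth plus $X$/$Y$-measured vertices.
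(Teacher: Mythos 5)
Your proposal follows essentially the same approach as the paper's proof: soundness from direct verification of conditions \ref{PF1}--\ref{PF9}, completeness and maximal delay from Lemmas~\ref{lemma:V0} and~\ref{lemma:Vk}, termination from the strict growth of $B$, and polynomial complexity via Gaussian elimination over $\mathbb{F}_2$ for the witness sets $K$, yielding the same $O(|V|^5)$ bound.

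One small slip: you write ``The base case is $B = O = V_0^\prec$,'' but this identity fails for Pauli flow. Unlike causal flow and gflow, Lemma~\ref{lemma:V0} only gives $V_0^\prec = O \cup V_0^{\prec,XY} \cup V_0^{\prec,XZ} \cup V_0^{\prec,YZ}$, and the latter union can contain measured vertices not in $O$ --- this is precisely the phenomenon illustrated by Figure~\ref{fig:MeasuredV0}. Correspondingly, your inductive step's claim $V_k^{\prec,XY} \cup V_k^{\prec,XZ} \cup V_k^{\prec,YZ} = V_k^\prec$ only holds for $k > 0$ (that is Lemma~\ref{lemma:Vk}); at $k=0$ the $O$ term is extra. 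The algorithm handles this correctly: the initial call is made with $A = \emptyset$, $B = O$, and $k=0$, and the set $C$ produced there captures the non-output part of $V_0^\prec$, so only after that call does $B' = O \cup C = V_0^\prec$ hold. The clean way to state your induction is that $A = B = V_{\cup k-1}^\prec$ at the start of the call with parameter $k$ for every $k \geq 1$, treating $k=0$ as a set-up phase. This does not affect the correctness of your overall argument, but the conflation of $O$ with $V_0^\prec$ is worth flagging as it is one of the genuine structural differences between Pauli flow and gflow.
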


\begin{proof}
The function $\texttt{PauliFlow}(V, \Gamma, I, O, \lambda)$ in Algorithm \ref{algo:FlowIdentify} takes sets $V$ and $I, O \subseteq V$ of vertices, an adjacency matrix $\Gamma$ over $V$ and a basis labelling function $\lambda$ and returns ``$\mathrm{true}$'' with a Pauli flow if one exists and ``$\mathrm{false}$'' otherwise.

To see this, we consider the auxilliary method $\texttt{PauliFlowAux}(V, \Gamma, I, \lambda, A, B, k)$ which aims to identify $V_k^\prec$, given sets $A = V_{\cup k-1}^\prec$ of possible correctors and $B \subseteq V_{\cup k}^\prec$ ($B \supseteq A$) of solved vertices, and then proceed recursively over the remainder of the graph for higher depths. For the case of $k = 0$, $O \subseteq V_0^\prec$ has already been handled by the setup in $\texttt{PauliFlow}(V, \Gamma, I, O, \lambda)$ so it is sufficient to just find $V_k^{\prec, XY} \cup V_k^{\prec, XZ} \cup V_k^{\prec, YZ}$ due to Lemma \ref{lemma:V0}. For $k > 0$, $A = B$ so we are just finding $V_k^{\prec, XY} \cup V_k^{\prec, XZ} \cup V_k^{\prec, YZ}$ in accordance with Lemma \ref{lemma:Vk}. In each recursive call, it examines each candidate vertex in turn and looks for a witness set $K$ for membership into either $V_k^{\prec, XY}$, $V_k^{\prec, XZ}$, or $V_k^{\prec, YZ}$, from which we can identify a valid correction set. The global variables $p$ and $d$ map vertices to their correction sets and depths from the output (this defines the order $\prec$ over vertices with $v \prec w \Leftrightarrow d(v) > d(w)$).

This algorithm is guaranteed to terminate because $V$ is finite and $B \subseteq V$ strictly grows with each call for $k \geq 1$, so $C$ will eventually be an empty set.

The resulting Pauli flow must be maximally delayed because we are constructing the largest set possible at each value of $k$. If there were a more delayed Pauli flow $(p', \prec')$, there must be some minimal $k$ for which $V_k^{\prec'} \setminus V_k^\prec$ is non-empty. However, if a vertex $v$ belonged in this set then a suitable correction set for $v$ must exist. Algorithm \ref{algo:FlowIdentify} would find this when testing $v$ and place it in $V_k^\prec$.

For a given vertex $u$, finding the witness set $K$ in each of the three possible search cases can be achieved by solving the linear equation system $M_{A, u} X_K = S_{\tilde{\lambda}}$ in $\mathbb{F}_2$ defined by:

\begin{align}
M_{A, u} &:= \left[ \begin{array}{c}
\Gamma \cap \mathbb{K}_{A, u} \times \mathbb{P}_{A, u} \\
\hline
(\Gamma + \mathrm{Id}) \cap \mathbb{K}_{A, u} \times \mathbb{Y}_{A, u}
\end{array} \right] \\
S_{\tilde{\lambda}} &:= \begin{cases}
\left[ \begin{array}{c}
\{u\} \\
\hline
0
\end{array} \right] & \text{if } \tilde{\lambda} = XY \\
\left[ \begin{array}{c}
(N_\Gamma(u) \cap \mathbb{P}_{A, u}) \cup \{u\} \\
\hline
N_\Gamma(u) \cap \mathbb{Y}_{A, u}
\end{array} \right] & \text{if } \tilde{\lambda} = XZ \\
\left[ \begin{array}{c}
N_\Gamma(u) \cap \mathbb{P}_{A, u} \\
\hline
N_\Gamma(u) \cap \mathbb{Y}_{A, u}
\end{array} \right] & \text{if } \tilde{\lambda} = YZ \\
\end{cases}
\end{align}
where $\mathbb{K}_{A, u} := (A \cup \Lambda_u^X \cup \Lambda_u^Y) \cap I^C$ is the set of possible elements of the witness set $K$, $\mathbb{P}_{A, u} := (A \cup \Lambda_u^Y \cup \Lambda_u^Z)^C$ is the set of vertices in the past/present which should remain corrected afer measuring and correcting $u$, $\mathbb{Y}_{A, u} := \Lambda_u^Y \setminus A$ is the set of vertices we have to consider for condition \ref{PF3}, $\tilde{\lambda} \in \{XY, XZ, YZ\}$ denotes which of the three cases we are considering, and $X_K$ is the column vector with $1$ in the position of $v$ if $v \in K$ and a $0$ otherwise.

Taking $\tilde{\lambda} = XY$ as an example, the top block of equations encodes $\mathrm{Odd}(K) \cap \mathbb{P}_{A, u} = \{u\}$ and the lower block encodes $K \cap \Lambda_u^Y \setminus A = \mathrm{Odd}(K) \cap \Lambda_u^Y \setminus A$, so the solutions to these equations are exactly the possible witness sets $K$. Such solutions can be identified by Gaussian elimination and back substitution in $O(|V|^3)$ time.

This part of the algorithm is hit at most $|V|$ times per call to $\texttt{PauliFlowAux}$, which may be called at most $|V|$ times, hence the overall complexity of this algorithm is $O(|V|^5)$.
\end{proof}

The complexity of this algorithm is higher than the $O(|V|^4)$ for the equivalent method for identifying gflow \cite{Backens2020} \footnote{It should be noted that Eslamy et al. provide an even more efficient algorithm for finding gflow in $O(|V|^3)$ time \cite{Eslamy2018}. It may be possible to generalise this to Pauli flow in search of a more efficient routine.}. This is because we can no longer do a single gaussian elimination per depth round because the matrix $M_{A, u}$ and the range of the witness set $K$ are both dependent on the particular vertex $u$ under consideration.

\section{Focussed Sets and Pauli Flows}\label{sec:Focussed}

Even though the Pauli flow generated by Algorithm \ref{algo:FlowIdentify} is guaranteed to be maximally delayed, the correction sets may not be unique as the back substitution step may not have a single solution. In this section, we will investigate some transformations on Pauli flows resulting from this freedom to show that we can always reach a focussed Pauli flow.

\begin{lemma}\label{lemma:AddFlows}
Given a Pauli flow $(p, \prec)$ for a labelled open graph $(G, I, O, \lambda)$ with two vertices $u, v \in \overline{O}$ such that $u \prec v$, then $(p', \prec)$ is a Pauli flow where $p'(u) := p(u) \Delta p(v)$ and $\forall w \in \overline{O} \setminus \{u\} . p'(w) := p(w)$. Moreover, if $(p, \prec)$ is maximally delayed, then so is $(p', \prec)$.
\end{lemma}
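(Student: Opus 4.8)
The plan is to verify each of the nine Pauli flow conditions \ref{PF1}--\ref{PF9} directly for $(p', \prec)$. Since $p'$ coincides with $p$ on every vertex other than $u$, all of these conditions for vertices in $\overline{O} \setminus \{u\}$ are inherited from $(p,\prec)$ without change, and $p'(u) = p(u) \Delta p(v) \subseteq \overline{I}$ makes $p'$ well-typed; so the whole task reduces to checking the conditions at $u$ for the set $p'(u)$. The one algebraic ingredient needed is that the odd-neighbourhood map is $\mathbb{F}_2$-linear, i.e. $\mathrm{Odd}(A \Delta B) = \mathrm{Odd}(A) \Delta \mathrm{Odd}(B)$, so $\mathrm{Odd}(p'(u)) = \mathrm{Odd}(p(u)) \Delta \mathrm{Odd}(p(v))$ and membership of any vertex $w$ in $p'(u)$ (resp.\ $\mathrm{Odd}(p'(u))$) is the mod-$2$ sum of its memberships in $p(u)$ and $p(v)$ (resp.\ $\mathrm{Odd}(p(u))$ and $\mathrm{Odd}(p(v))$).

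Before the case analysis I would record some consequences of the hypothesis $u \prec v$ together with the flow conditions for $v$. By transitivity of the strict order, whenever $w \preceq u$ we have $w \prec v$ and hence $w \neq v$. From \ref{PF1} applied to $v$: if $\lambda(u) \notin \{X,Y\}$ then $u \notin p(v)$, since otherwise we would get $v \prec u$, contradicting $u \prec v$. From \ref{PF2} applied to $v$: if $\lambda(u) \notin \{Y,Z\}$ then $u \notin \mathrm{Odd}(p(v))$. From \ref{PF3} applied to $v$ with the vertex $u$ (legitimate since $u \preceq v$ and $u \neq v$): if $\lambda(u) = Y$ then $u \in p(v) \Leftrightarrow u \in \mathrm{Odd}(p(v))$.

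For \ref{PF1}: any $w \in p'(u)$ lies in $p(u) \cup p(v)$. If $w \in p(u)$, then under the hypotheses $w \neq u$, $\lambda(w) \notin \{X,Y\}$ condition \ref{PF1} for $u$ gives $u \prec w$. If instead $w \in p(v)$, then either $w = v$, whence $u \prec v = w$, or $w \neq v$ and \ref{PF1} for $v$ gives $v \prec w$, so $u \prec w$ by transitivity. Condition \ref{PF2} is handled identically with $\mathrm{Odd}$ in place of the raw sets. For \ref{PF3}, take $w \preceq u$ with $w \neq u$ and $\lambda(w) = Y$: condition \ref{PF3} for $u$ gives $w \in p(u) \Leftrightarrow w \in \mathrm{Odd}(p(u))$, and since $w \prec v$ with $w \neq v$, condition \ref{PF3} for $v$ gives $w \in p(v) \Leftrightarrow w \in \mathrm{Odd}(p(v))$; adding the two equivalences mod $2$ yields $w \in p'(u) \Leftrightarrow w \in \mathrm{Odd}(p'(u))$. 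Finally, for \ref{PF4}--\ref{PF9} only the bits $u \in p'(u)$ and $u \in \mathrm{Odd}(p'(u))$ matter: the observations above show that, according to $\lambda(u)$, the pair $(u \in p(v),\ u \in \mathrm{Odd}(p(v)))$ is $(0,0)$ when $\lambda(u) \in \{XY,XZ,YZ\}$, has first component $0$ when $\lambda(u) = Z$, has second component $0$ when $\lambda(u) = X$, and has its two components equal when $\lambda(u) = Y$; in each case adding such a pair mod $2$ to $(u \in p(u),\ u \in \mathrm{Odd}(p(u)))$ leaves unchanged exactly the component(s) that \ref{PF4}--\ref{PF9} constrain for that $\lambda(u)$, so the condition that held for $p(u)$ still holds for $p'(u)$.

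For the ``maximally delayed'' part, observe that $(p', \prec)$ uses the same partial order as $(p, \prec)$, so the vertex-depth sets $V_{\cup k}^\prec$, and hence the relation ``more delayed than'', are literally identical for the two flows; a Pauli flow more delayed than $(p', \prec)$ would therefore be one more delayed than $(p, \prec)$, which does not exist when $(p, \prec)$ is maximally delayed. I expect the only mildly finicky part to be keeping the $\lambda(u)$-case bookkeeping straight in \ref{PF4}--\ref{PF9}; everything else is a routine combination of transitivity of $\prec$ with $\mathbb{F}_2$-linearity of $\mathrm{Odd}$.
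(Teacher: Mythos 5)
Your proof is correct and follows essentially the same approach as the paper's: verify the Pauli flow conditions at $u$ by combining the $\mathbb{F}_2$-linearity of $\mathrm{Odd}$ with the constraints that conditions \ref{PF1}--\ref{PF3} applied to $v$, together with $u \prec v$, place on the memberships of $u$ and of vertices $w \preceq u$ in $p(v)$ and $\mathrm{Odd}(p(v))$. One slip worth flagging: your preliminary claim that ``whenever $w \preceq u$ we have $w \prec v$'' is false, because $\preceq$ is defined in this paper as $a \preceq b := \neg(b \prec a)$, so $w \preceq u$ only says $u$ does not strictly precede $w$ and in particular allows $w$ and $u$ to be incomparable; you cannot chain it with $u \prec v$ to get $w \prec v$. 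What actually follows (and what the paper uses) is the weaker statement $w \preceq v$ and $w \neq v$: if $v \prec w$ or $v = w$, then transitivity with $u \prec v$ gives $u \prec w$, contradicting $w \preceq u$. Since \ref{PF3} is stated with the hypothesis $w \preceq v$ rather than $w \prec v$, the corrected premise is exactly what you need, and the rest of your argument goes through unchanged.
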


\begin{proof}
The Pauli flow conditions hold trivially for any vertex in $\overline{O} \setminus \{u\}$ since the correction sets have not changed, so it is sufficient to show they are preserved for $u$. We should first observe that $\mathrm{Odd}(p'(u)) = \mathrm{Odd}(p(u) \Delta p(v)) = \mathrm{Odd}(p(u)) \Delta \mathrm{Odd}(p(v))$.

\ref{PF1}: For any $w \in p'(u)$ with $w \neq u$ and $\lambda(w) \notin \{X, Y\}$, we must have either $w \in p(u)$, $w = v$, or $w \in p(v) \wedge w \neq v$. In any of these cases, we have $u \prec w$ from \ref{PF1} for $(p, \prec)$ and $u \prec v$.

\ref{PF2}: This follows similarly from $u \prec v$ and \ref{PF2} on $\mathrm{Odd}(p(u))$ and $\mathrm{Odd}(p(v))$.

\ref{PF3}: For any $w \preceq u$ with $\lambda(w) = Y$, we also must have $w \preceq v$ since $u \prec v$. Hence by \ref{PF3}, $w \in p(u) \Leftrightarrow w \in \mathrm{Odd}(p(u))$ and the same for $p(v)$. Therefore, we find that $w \in p'(u) = p(u) \Delta p(v) \Leftrightarrow w \in \mathrm{Odd}(p(u)) \Delta \mathrm{Odd}(p(v)) = \mathrm{Odd}(p'(u))$ as required.

\ref{PF4}-\ref{PF6}: $u \notin p(v)$ and $u \notin \mathrm{Odd}(p(v))$ by \ref{PF1} and \ref{PF2} since $u \prec v$, so the requirements are given by the corresponding conditions for $(p, \prec)$.

\ref{PF7}: $u \notin \mathrm{Odd}(p(v))$ by \ref{PF2} and $u \in \mathrm{Odd}(p(u))$ by \ref{PF7}, so $u \in \mathrm{Odd}(p'(u))$.

\ref{PF8}: $u \notin p(v)$ by \ref{PF1} and $u \in p(u)$ by \ref{PF8}, so $u \in p'(u)$.

\ref{PF9}: $u \in p(v) \Leftrightarrow u \in \mathrm{Odd}(p(v))$ by \ref{PF3} and $u \in p(u) \Leftrightarrow u \notin \mathrm{Odd}(p(u))$ by \ref{PF9}, then it is straightforward to show $u \in p'(u) \Leftrightarrow u \notin \mathrm{Odd}(p'(u))$ by cases.

The maximally delayed property of a Pauli flow only concerns the partial order between the vertices, so since $(p, \prec)$ and $(p', \prec)$ both use $\prec$ the property is trivially preserved.
\end{proof}

This gives us a mechanism to generate new Pauli flows by adding correction sets together. We now show that this can help us to make progress towards satisfying the focussed property.


\begin{lemma}\label{lemma:AddFocussed}
For any labelled open graph $\Gamma$, if sets $\hat{p}, \hat{q} \subseteq \overline{I}$ are focussed over $S \subseteq \overline{O}$, then so is $\hat{p} \Delta \hat{q}$.
\end{lemma}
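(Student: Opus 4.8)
The plan is to verify the three focussing conditions \ref{FOC1}--\ref{FOC3} for the set $\hat{p} \Delta \hat{q}$ directly, using the key algebraic identity that taking odd neighbourhoods is $\mathbb{F}_2$-linear, i.e. $\mathrm{Odd}(\hat{p} \Delta \hat{q}) = \mathrm{Odd}(\hat{p}) \Delta \mathrm{Odd}(\hat{q})$. This is the same trick already used in the proof of Lemma \ref{lemma:AddFlows}, and it is the only structural fact we need; everything else is a short case analysis on membership in the four sets $\hat{p}$, $\hat{q}$, $\mathrm{Odd}(\hat{p})$, $\mathrm{Odd}(\hat{q})$.

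First I would handle \ref{FOC1}. Take $w \in S \cap (\hat{p} \Delta \hat{q})$. Then $w$ lies in exactly one of $\hat{p}$, $\hat{q}$; without loss of generality $w \in S \cap \hat{p}$, so by \ref{FOC1} for $\hat{p}$ we get $\lambda(w) \in \{XY, X, Y\}$. (The same argument with the roles swapped covers the other case, so really there is nothing to split on beyond symmetry.) Condition \ref{FOC2} is entirely analogous: if $w \in S \cap \mathrm{Odd}(\hat{p} \Delta \hat{q}) = S \cap (\mathrm{Odd}(\hat{p}) \Delta \mathrm{Odd}(\hat{q}))$, then $w$ is in exactly one of $\mathrm{Odd}(\hat{p})$, $\mathrm{Odd}(\hat{q})$, and applying \ref{FOC2} for whichever of $\hat{p}, \hat{q}$ that is gives $\lambda(w) \in \{XZ, YZ, Y, Z\}$.

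The remaining condition \ref{FOC3} is the one requiring a genuine (if small) computation. For $w \in S$ with $\lambda(w) = Y$ we must show $w \in \hat{p} \Delta \hat{q} \Leftrightarrow w \in \mathrm{Odd}(\hat{p} \Delta \hat{q})$. From \ref{FOC3} for $\hat{p}$ and $\hat{q}$ separately we know $w \in \hat{p} \Leftrightarrow w \in \mathrm{Odd}(\hat{p})$ and $w \in \hat{q} \Leftrightarrow w \in \mathrm{Odd}(\hat{q})$. Writing these as Boolean equalities over $\mathbb{F}_2$ — let $a = [w \in \hat{p}]$, $a' = [w \in \mathrm{Odd}(\hat{p})]$, $b = [w \in \hat{q}]$, $b' = [w \in \mathrm{Odd}(\hat{q})]$, so $a = a'$ and $b = b'$ — we have $[w \in \hat{p} \Delta \hat{q}] = a + b = a' + b' = [w \in \mathrm{Odd}(\hat{p}) \Delta \mathrm{Odd}(\hat{q})] = [w \in \mathrm{Odd}(\hat{p} \Delta \hat{q})]$, which is exactly what is needed. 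This is the only place linearity of $\Delta$ is used in a non-trivial direction, and it is the crux, though hardly an obstacle: the whole lemma is essentially the observation that each focussing condition is closed under symmetric difference because it is (a consequence of) an $\mathbb{F}_2$-linear constraint on the indicator vector of the set. I expect the proof to be three or four lines, with the $\mathbb{F}_2$-linearity of $\mathrm{Odd}$ being the one point worth stating explicitly.
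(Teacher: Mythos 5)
Your proof is correct and follows the same approach as the paper: both rely on the $\mathbb{F}_2$-linearity of $\mathrm{Odd}$ (i.e.\ $\mathrm{Odd}(\hat{p} \Delta \hat{q}) = \mathrm{Odd}(\hat{p}) \Delta \mathrm{Odd}(\hat{q})$), verify \ref{FOC1} and \ref{FOC2} by observing that membership in a symmetric difference implies membership in at least one constituent, and verify \ref{FOC3} by the same Boolean identity. Your explicit $\mathbb{F}_2$ notation for the \ref{FOC3} step is a slightly more formal rendering of exactly the same argument.
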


\begin{proof}
\ref{FOC1}: For any vertex $v \in (\hat{p} \Delta \hat{q}) \cap S$, we have either $v \in \hat{p}$ or $v \in \hat{q}$. Since $\hat{p}$ and $\hat{q}$ are focussed over $S$, the corresponding \ref{FOC1} condition gives $\lambda(v) \in \{XY, X, Y\}$.

\ref{FOC2}: Similarly, for any vertex $v \in \mathrm{Odd}(\hat{p} \Delta \hat{q}) \cap S = (\mathrm{Odd}(\hat{p}) \Delta \mathrm{Odd}(\hat{q})) \cap S$, $\lambda(v) \in \{XZ, YZ, Y, Z\}$ follows from either $v \in \mathrm{Odd}(\hat{p})$ or $v \in \mathrm{Odd}(\hat{q})$ and \ref{FOC2}.

\ref{FOC3}: For any $v \in S$ with $\lambda(v) = Y$, we have $v \in \hat{p} \Leftrightarrow v \in \mathrm{Odd}(\hat{p})$ and $v \in \hat{q} \Leftrightarrow v \in \mathrm{Odd}(\hat{q})$ from \ref{FOC3} for $\hat{p}$ and $\hat{q}$. Hence $v \in \hat{p} \Delta \hat{q} \Leftrightarrow v \in \mathrm{Odd}(\hat{p}) \Delta \mathrm{Odd}(\hat{q})$.
\end{proof}

\begin{lemma}\label{lemma:AddNotFocussed}
For any labelled open graph $\Gamma$, if sets $\hat{p}, \hat{q} \subseteq \overline{I}$ are not focussed over $\{v\}$ ($v \in \overline{O}$), then $\hat{p} \Delta \hat{q}$ is focussed over $\{v\}$.
\end{lemma}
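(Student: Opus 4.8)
The plan is to reduce the statement to a one-line fact of $\mathbb{F}_2$-linear algebra. The first observation is that whether a set $\hat{p} \subseteq \overline{I}$ is focussed over a \emph{singleton} $\{v\}$ depends only on $\lambda(v)$ together with the two bits $a_{\hat{p}} := [v \in \hat{p}]$ and $b_{\hat{p}} := [v \in \mathrm{Odd}(\hat{p})]$. Indeed, for $S = \{v\}$, condition \ref{FOC1} is either vacuous (when $\lambda(v) \in \{XY, X, Y\}$) or asserts $a_{\hat{p}} = 0$; condition \ref{FOC2} is either vacuous (when $\lambda(v) \in \{XZ, YZ, Y, Z\}$) or asserts $b_{\hat{p}} = 0$; and condition \ref{FOC3} is either vacuous (when $\lambda(v) \neq Y$) or asserts $a_{\hat{p}} = b_{\hat{p}}$.

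The key step is then a short per-label bookkeeping: running through the six possibilities for $\lambda(v)$, \emph{exactly one} of \ref{FOC1}, \ref{FOC2}, \ref{FOC3} is ever non-vacuous — namely \ref{FOC1} for $\lambda(v) \in \{XZ, YZ, Z\}$, \ref{FOC2} for $\lambda(v) \in \{XY, X\}$, and \ref{FOC3} for $\lambda(v) = Y$. Hence ``$\hat{p}$ is focussed over $\{v\}$'' is equivalent to a single linear equation $\ell_v(a_{\hat{p}}, b_{\hat{p}}) = 0$ over $\mathbb{F}_2$, where $\ell_v$ is the nonzero functional $(a,b) \mapsto a$, $(a,b) \mapsto b$, or $(a,b) \mapsto a+b$ depending on which of the three cases we are in. Consequently ``$\hat{p}$ is \emph{not} focussed over $\{v\}$'' is exactly $\ell_v(a_{\hat{p}}, b_{\hat{p}}) = 1$.

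To finish, I would invoke that both $\hat{p} \mapsto [v \in \hat{p}]$ and $\hat{p} \mapsto \mathrm{Odd}(\hat{p})$ are $\mathbb{F}_2$-linear in the indicator vector of $\hat{p}$ (the latter being multiplication by the adjacency matrix, as already used in the proof of Lemma \ref{lemma:AddFocussed}), so $a_{\hat{p} \Delta \hat{q}} = a_{\hat{p}} + a_{\hat{q}}$ and $b_{\hat{p} \Delta \hat{q}} = b_{\hat{p}} + b_{\hat{q}}$. Linearity of $\ell_v$ then yields $\ell_v(a_{\hat{p} \Delta \hat{q}}, b_{\hat{p} \Delta \hat{q}}) = \ell_v(a_{\hat{p}}, b_{\hat{p}}) + \ell_v(a_{\hat{q}}, b_{\hat{q}}) = 1 + 1 = 0$, so $\hat{p} \Delta \hat{q}$ is focussed over $\{v\}$.

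The only place requiring genuine care is the claim that, for each label, at most one FOC condition actually constrains the pair $(a, b)$: if two of them were simultaneously active then the ``focussed'' locus would be a single point rather than a line, its complement would not be a coset, and the symmetric difference of two non-focussed sets need not be focussed. So I would spell out that six-case table explicitly rather than wave at it; everything else in the argument is immediate once the linear-algebraic reformulation is in place.
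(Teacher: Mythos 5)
Your proof is correct and is essentially the same argument as the paper's. The paper proves it by exactly the three-way case split on $\lambda(v)$ you describe — $\{XY, X\}$, $\{XZ, YZ, Z\}$, $\{Y\}$ — observes in each case that the single non-vacuous condition forces $v$ to lie in both $\mathrm{Odd}(\hat{p})$ and $\mathrm{Odd}(\hat{q})$ (resp.\ $\hat{p}$ and $\hat{q}$, resp.\ $\hat{p} \Delta \mathrm{Odd}(\hat{p})$ and $\hat{q} \Delta \mathrm{Odd}(\hat{q})$), and then uses additivity of $\mathrm{Odd}$ and $\Delta$ to cancel, which is word-for-word your ``$\ell_v = 1 + 1 = 0$'' step dressed in set notation; your framing just makes the linear-algebraic structure and the ``exactly one active condition'' invariant explicit rather than tacit.
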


\begin{proof}
We consider each case for $\lambda(v)$ and how the focussed conditions could fail for $\hat{p}$ and $\hat{q}$:

$\lambda(v) \in \{XY, X\}$: \ref{FOC1} and \ref{FOC3} are trivially satisfied, so we must have $v \in \mathrm{Odd}(\hat{p})$ and $v \in \mathrm{Odd}(\hat{q})$ to fail \ref{FOC2}. This means $v \notin \mathrm{Odd}(\hat{p}) \Delta \mathrm{Odd}(\hat{q}) = \mathrm{Odd}(\hat{p} \Delta \hat{q})$, satisfying \ref{FOC2} for $\hat{p} \Delta \hat{q}$.

$\lambda(v) \in \{XZ, YZ, Z\}$: Similarly, \ref{FOC2} and \ref{FOC3} hold trivially, so we must have $v \in \hat{p}$ and $v \in \hat{q}$ to fail \ref{FOC1}. We hence have $v \notin \hat{p} \Delta \hat{q}$, satisfying \ref{FOC2} for $\hat{p} \Delta \hat{q}$.

$\lambda(v) = Y$: Now \ref{FOC1} and \ref{FOC2} are trivial and we have $v \in \hat{p} \Delta \mathrm{Odd}(\hat{p})$ and $v \in \hat{q} \Delta \mathrm{Odd}(\hat{q})$ to fail \ref{FOC3}. This now satisfies \ref{FOC3} since $v \notin \left( \hat{p} \Delta \mathrm{Odd}(\hat{p}) \right) \Delta \left( \hat{q} \Delta \mathrm{Odd}(\hat{q}) \right) = (\hat{p} \Delta \hat{q}) \Delta \mathrm{Odd}(\hat{p} \Delta \hat{q})$.
\end{proof}

Combining these two lemmas, we can find combinations of correction sets that fix unfocussed vertices whilst preserving those we have already focussed.


\begin{lemma}\label{lemma:FocusStep}(Generalisation of \cite[Lemma 3.13]{Backens2020})
Let $(G, I, O, \lambda)$ be a labelled open graph with a Pauli flow $(p, \prec)$ and some vertex $v \in \overline{O}$. Then there exists $p' : \overline{O} \to \mathcal{P}[\overline{I}]$ such that:

\begin{enumerate}
\item\label{FocusStepSame} $\forall w \in \overline{O} . v = w \vee p'(w) = p(w)$;
\item\label{FocusStepFocus} $p'(v)$ is focussed over $\overline{O} \setminus \{v\}$;
\item\label{FocusStepFlow} $(p', \prec)$ is a Pauli flow for $(G, I, O, \lambda)$.
\end{enumerate}
\end{lemma}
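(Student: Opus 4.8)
The plan is to iterate the combining lemmas (Lemma~\ref{lemma:AddFlows}, Lemma~\ref{lemma:AddFocussed}, Lemma~\ref{lemma:AddNotFocussed}) over the vertices that currently witness a failure of focussing for $p(v)$, processing them in an order compatible with $\prec$ so that fixing one vertex never un-fixes a vertex already dealt with. Concretely, let $W := \{ w \in \overline{O} \setminus \{v\} \mid p(v) \text{ is not focussed over } \{w\} \}$ be the set of ``bad'' vertices for $p(v)$. I would first observe, via the Pauli flow conditions \ref{PF1}--\ref{PF3} applied to $v$, that every $w \in W$ satisfies $v \prec w$: if $p(v)$ fails \ref{FOC1} at $w$ then $w \in p(v)$ with $\lambda(w) \notin \{XY, X, Y\}$, and since also $w \neq v$ and $\lambda(w) \notin \{X,Y\}$, condition \ref{PF1} gives $v \prec w$; failures of \ref{FOC2} and \ref{FOC3} similarly force $v \prec w$ via \ref{PF2} and \ref{PF3} respectively. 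This is the key structural fact that makes the induction go through.

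Next I would set up the induction on $|W|$. If $W = \emptyset$ we are done, taking $p' = p$. Otherwise pick a $\prec$-maximal element $w \in W$. Define $p''(v) := p(v) \Delta p(w)$ and $p''(u) := p(u)$ for $u \neq v$; by Lemma~\ref{lemma:AddFlows} (whose hypothesis $v \prec w$ we have just established) $(p'', \prec)$ is a Pauli flow. It remains to check that the bad set strictly shrank. Both $p(v)$ and $p(w)$ are focussed over $\{w\}$ — wait, that's not quite right; rather, $p(w)$ is focussed over $\overline{O} \setminus \{w\}$ by... no, we are not assuming $(p,\prec)$ is focussed. The correct argument: $p(w)$ \emph{is} focussed over $\{w\}$? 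No. I would instead use the following. For any $x \in \overline{O} \setminus \{v,w\}$ with $x$ \emph{not} bad for $p(v)$, i.e. $p(v)$ focussed over $\{x\}$: I want $p''(v) = p(v) \Delta p(w)$ still focussed over $\{x\}$. By Lemma~\ref{lemma:AddFocussed} it suffices that $p(w)$ is focussed over $\{x\}$. This holds because $p(w)$ is focussed over $\overline{O} \setminus \{w\}$ — but again that is the focussing hypothesis we do not have. The genuine fix is to check focussing over $\{x\}$ for $p(w)$ directly from the Pauli flow conditions for $w$: if $x \in p(w) \cap (\overline{O}\setminus\{w\})$ then either $\lambda(x) \in \{X,Y\}$ (fine for \ref{FOC1}) or $w \prec x$; in the latter case, since $w$ was chosen $\prec$-maximal in $W$ and $x \notin W$ gives no immediate contradiction, I need $x \notin W$ combined with $w \prec x$. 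Hmm — the clean way is: the Pauli flow conditions for $w$ say $p(w)$ is focussed over $\{x : w \prec x\} \cup \{x : \lambda(x) \in \{X,Y\} \text{ handled}\}$, and for $x$ with $x \preceq w$, $\lambda(x) \notin\{X,Y\}$, $x \in p(w)$ would violate \ref{PF1}. So $p(w)$ restricted to $\overline{O}$ is focussed over $\overline{O} \setminus \{w\}$ \emph{as a consequence of} \ref{PF1}--\ref{PF3} plus \ref{PF4}--\ref{PF9} for $w$ — i.e. every Pauli flow set is automatically focussed over its successor set. I would prove this small auxiliary claim first and it does the job: $p(w)$ is focussed over $\overline{O} \setminus \{w\}$, hence over $\{x\}$ for every $x \neq w$; Lemma~\ref{lemma:AddFocussed} then preserves focussing at every previously-good $x \neq w$, while Lemma~\ref{lemma:AddNotFocussed} turns the bad vertex $w$ good (using that both $p(v)$ and $p(w)$ are not focussed over $\{w\}$ — the former by $w \in W$, the latter... no, $p(w)$ \emph{is} focussed over... ).

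\textbf{Main obstacle.} The delicate point, which I would treat carefully, is exactly the bookkeeping at the vertex $w$ being fixed: I need $p(v) \Delta p(w)$ to be focussed over $\{w\}$, and Lemma~\ref{lemma:AddNotFocussed} requires \emph{both} $p(v)$ and $p(w)$ to be \emph{not} focussed over $\{w\}$. But $p(w)$ \emph{is} focussed over $\{w\}$ only if the flow conditions \ref{PF4}--\ref{PF9} for $w$ happen to align with \ref{FOC1}--\ref{FOC3} at $w$ — which in general they do not (e.g. $\lambda(w) = XZ$ has $w \in p(w)$, violating \ref{FOC1}). So in fact $p(w)$ is typically \emph{not} focussed over $\{w\}$, and one checks case-by-case on $\lambda(w)$ that the way \ref{PF4}--\ref{PF9} fail \ref{FOC1}--\ref{FOC3} at $w$ is precisely complementary to the way $p(v)$ fails them (both have $w$ in the offending set), so Lemma~\ref{lemma:AddNotFocussed} applies and $p(v)\Delta p(w)$ becomes focussed over $\{w\}$. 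Combined with the auxiliary claim that any Pauli flow set $p(w)$ is focussed over $\overline{O}\setminus\{w\}$ — making $p(w)$ ``good'' at every $x \neq w$ so that Lemma~\ref{lemma:AddFocussed} preserves all previous progress — the bad set $W$ for $p''(v)$ is contained in $W \setminus \{w\}$, strictly smaller. Iterating $|W|$ times produces the desired $p'$; properties~\ref{FocusStepSame} and~\ref{FocusStepFlow} are immediate from the construction and repeated application of Lemma~\ref{lemma:AddFlows}, and~\ref{FocusStepFocus} is the terminal condition $W = \emptyset$.
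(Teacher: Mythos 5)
There is a genuine gap, and it sits exactly where you flagged your ``main obstacle.'' The auxiliary claim you ultimately lean on --- that any Pauli flow correction set $p(w)$ is automatically focussed over \emph{all} of $\overline{O} \setminus \{w\}$ --- is false. The flow conditions \ref{PF1}--\ref{PF3} constrain $p(w)$ and $\mathrm{Odd}(p(w))$ only over vertices that are not strictly in the future of $w$: for instance \ref{PF1} says that if $x \in p(w)$, $x \neq w$, and $\lambda(x) \notin \{X, Y\}$ then $w \prec x$, so its contrapositive only yields $\lambda(x) \in \{X, Y\} \subseteq \{XY, X, Y\}$ for those $x \preceq w$. For $x$ with $w \prec x$, nothing stops $x \in p(w)$ with $\lambda(x) = Z$, which is a \ref{FOC1} violation. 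So the true statement is that $p(w)$ is focussed over $\{x \in \overline{O} : x \preceq w, x \neq w\}$, not over $\overline{O} \setminus \{w\}$. You half-notice this (``or $w \prec x$'') but then conclude the stronger, false version anyway.

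This matters for your termination argument. With $w$ chosen $\prec$-maximal in the bad set $W$, the set $W \setminus \{w\}$ lies in $\{x : x \preceq w\}$, so those elements remain as they were after the update $p(v) \Delta p(w)$ (in fact they stay bad --- you cannot apply Lemma~\ref{lemma:AddFocussed} or~\ref{lemma:AddNotFocussed} at a vertex where one set is focussed and the other is not, and a direct check on \ref{FOC1}--\ref{FOC3} shows the defect persists). But vertices $x$ with $w \prec x$ and $x \notin W$, i.e.\ previously good ones, can become bad, precisely because $p(w)$ need not be focussed at them. So the inclusion $W' \subseteq W \setminus \{w\}$ is unjustified, and the induction on $|W|$ does not close. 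The paper's proof avoids this by fixing a topological ordering $J$ of $\overline{O}$ respecting $\prec$ and sweeping from $\prec$-least to $\prec$-greatest, maintaining the monotone invariant that all $J(i)$ with $i < k$ are focussed; since every $J(i)$ with $i < k$ satisfies $J(i) \preceq J(k)$, the correct (restricted) auxiliary fact applies each time and nothing already processed can be un-done. Your high-level plan --- iterate Lemmas~\ref{lemma:AddFlows},~\ref{lemma:AddFocussed},~\ref{lemma:AddNotFocussed} --- is the right one; the fix is to adopt the paper's sweep order and state the auxiliary claim with the $x \preceq w$ restriction, rather than trying to argue a monotone shrinking of the bad set from the $\prec$-maximal end.
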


\begin{proof}
Let $J : \mathbb{Z}_{|\overline{O}|} \to \overline{O}$ be some indexing of the vertices that respects the order $\prec$ ($\forall i, j < |\overline{O}| . J(i) \prec J(j) \Rightarrow i < j$). We define a sequence of functions $p_k$ as:

\begin{align}
p_0(u) &:= p(u) \\
p_{k+1}(u) &:= \begin{cases}
p_k(u) \Delta p_k(J(k)) & \text{if } \left( \begin{array}{l} u = v, \\ J(k) \neq v, \\ p_k(v) \text{ is not focussed over } \{J(k)\} \end{array} \right) \\
p_k(u) & \text{otherwise} \\
\end{cases}
\end{align}

\ref{FocusStepSame} is satisfied for all $(p_k, \prec)$ by construction, and \ref{FocusStepFlow} is also satisfied for all by Lemma \ref{lemma:AddFlows}. To work towards \ref{FocusStepFocus}, we proceed inductively with hypothesis $\Phi(k) := \text{``} p_k(v) \text{ is focussed over } \{J(i)\}_{i < k} \setminus \{v\} \text{''}$. The $k = 0$ case holds vacuously.

Suppose we have $\Phi(k)$. If $J(k) = v$, then $p_{k+1}(v) = p_k(v)$ and $\Phi(k+1)$ is an immediate consequence of $\Phi(k)$. If $p_k(v)$ is focussed over $\{J(k)\}$, then $p_{k+1}(v) = p_k(v)$, so $\Phi(k+1)$ follows from this assumption and $\Phi(k)$. If, on the other hand, $p_k(v)$ is not focussed over $\{J(k)\}$, we have $p_{k+1}(v) = p_k(v) \Delta p_k(J(k))$. From conditions \ref{PF4}-\ref{PF9}, $p_k(J(k))$ is also not focussed over $\{J(k)\}$, so by Lemma \ref{lemma:AddNotFocussed} we have $p_{k+1}(v)$ is focussed over $\{J(k)\}$. For any of the remaining $i < k$ (where $J(i) \neq v$), $\Phi(k)$ says that $p_k(v)$ is focussed over $\{J(i)\}$. Since $J$ respects the order $\prec$, we also have $J(i) \preceq J(k)$, and hence conditions \ref{PF1}-\ref{PF3} imply that $p_k(J(k))$ is focussed over $\{J(i)\}$. We combine these with Lemma \ref{lemma:AddFocussed} to deduce that $p_{k+1}(v)$ is also focussed over $\{J(i)\}$.

At the end of this chain, we have $(p_{|\overline{O}|}, \prec)$ where $p_{|\overline{O}|}$ is focussed over $\overline{O} \setminus \{v\}$.
\end{proof}


\begin{lemma}\label{lemma:FocussingProof}(Restatement of Lemma \ref{lemma:Focussing})
If a labelled open graph has a Pauli flow, then it has a maximally delayed, focussed Pauli flow.
\end{lemma}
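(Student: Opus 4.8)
The plan is to combine the two results that have just been established: Theorem~\ref{thrm:Identification} (via its restatement Theorem~\ref{thrm:IdentificationProof}), which produces a \emph{maximally delayed} Pauli flow whenever any Pauli flow exists, and Lemma~\ref{lemma:FocusStep}, which shows that any single vertex's correction set can be made focussed over the rest of $\overline{O}$ without disturbing the other correction sets and without changing the partial order $\prec$. The key observation is that Lemma~\ref{lemma:FocusStep} leaves $\prec$ untouched, so it cannot destroy the maximally-delayed property (which, as noted in the proof of Lemma~\ref{lemma:AddFlows}, depends only on $\prec$).

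Concretely, I would start from a Pauli flow $(p, \prec)$ for the given labelled open graph, which exists by hypothesis; by Remark~\ref{rem:MaximallyDelayed} (or directly by running Algorithm~\ref{algo:FlowIdentify}) we may take $(p, \prec)$ to be maximally delayed. Then I would apply Lemma~\ref{lemma:FocusStep} iteratively, once for each vertex $v \in \overline{O}$: fixing an enumeration $v_1, \dots, v_m$ of $\overline{O}$, set $p^{(0)} := p$ and let $p^{(i)}$ be the map obtained by applying Lemma~\ref{lemma:FocusStep} to $(p^{(i-1)}, \prec)$ at the vertex $v_i$. Each step preserves the Pauli flow property (item~\ref{FocusStepFlow}) and the order $\prec$, and changes only the correction set of $v_i$ (item~\ref{FocusStepSame}), so it does not undo the focussing achieved at $v_1, \dots, v_{i-1}$; after step $i$ the set $p^{(i)}(v_i)$ is focussed over $\overline{O} \setminus \{v_i\}$ (item~\ref{FocusStepFocus}) and remains so thereafter. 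Hence $(p^{(m)}, \prec)$ is a focussed Pauli flow.

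Finally, since every step uses the same partial order $\prec$, and $(p, \prec)$ was maximally delayed, the resulting flow $(p^{(m)}, \prec)$ has the same depth sets $V_k^{\prec}$ and is therefore still maximally delayed. This gives the desired maximally delayed, focussed Pauli flow.

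There is essentially no obstacle here beyond the bookkeeping of the induction: the genuine work has already been done in Lemma~\ref{lemma:FocusStep} (and in turn in Lemmas~\ref{lemma:AddFlows}, \ref{lemma:AddFocussed}, \ref{lemma:AddNotFocussed}). The only point requiring a moment's care is verifying that focussing one vertex does not spoil the focussing of an earlier one --- but this is immediate from item~\ref{FocusStepSame} of Lemma~\ref{lemma:FocusStep}, since focussedness of $p(w)$ over $\overline{O}\setminus\{w\}$ depends only on the set $p(w)$ itself (and the fixed graph and labelling), not on the other correction sets. If one wanted to avoid even iterating, one could instead note that the construction in the proof of Lemma~\ref{lemma:FocusStep} can be run simultaneously for all $v$ with a common index ordering, but the sequential argument is cleaner to state.
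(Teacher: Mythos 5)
Your proof is correct and takes essentially the same route as the paper's: start from a maximally delayed Pauli flow (via Remark~\ref{rem:MaximallyDelayed}), apply Lemma~\ref{lemma:FocusStep} once for each $v \in \overline{O}$, observe that only $p(v)$ changes at each step so previously achieved focussing is preserved, and conclude that maximal delayedness persists because $\prec$ is untouched. The only difference is that you spell out explicitly the bookkeeping (why earlier focussing is not undone) that the paper leaves implicit.
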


\begin{proof}
Let $(p, \prec)$ be such a maximally delayed Pauli flow according to Remark \ref{rem:MaximallyDelayed}. Applying Lemma \ref{lemma:FocusStep} for each $v \in \overline{O}$ in turn, we reach some $(p', \prec)$ where, for every $v \in \overline{O}$, $p'(v)$ is focussed over $\overline{O} \setminus \{v\}$, i.e. $(p', \prec)$ is a focussed Pauli flow. Since the partial order $\prec$ remains the same, this is also still maximally delayed.
\end{proof}

For general measurement patterns, even focussed Pauli flows may not be unique. However, given multiple focussed Pauli flows, the differences between their correction sets are given by the focussed sets of the graph.


\begin{lemma}\label{lemma:AddFlowsAndFocussed}
Let $\Gamma = (G, I, O, \lambda)$ be a labelled open graph with a focussed Pauli flow $(p, \prec)$ and a focussed set $\hat{p} \subseteq \overline{I}$. Let $v \in \overline{O}$ be a vertex such that $\forall w \in \hat{p} \cup \mathrm{Odd}(\hat{p}) . \lambda(w) \in \{XY, XZ, YZ\} \Rightarrow w \neq v \wedge v \preceq w$. Then $(p', \prec')$ is a focussed Pauli flow, where:

\begin{equation}
p'(w) := \begin{cases}
p(w) \Delta \hat{p} & \text{if } w = v \\
p(w) & \text{if } w \neq v
\end{cases}
\end{equation}
and $\prec'$ is the transitive closure of $\prec \cup \{(v, w) | w \in \hat{p} \cup \mathrm{Odd}(\hat{p}) \wedge \lambda(w) \in \{XY, XZ, YZ\}\}$.
\end{lemma}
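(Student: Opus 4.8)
The plan is to verify the three defining properties in turn: first that $(p', \prec')$ is a Pauli flow, then that $\prec'$ is genuinely a strict partial order (acyclicity), and finally that $(p', \prec')$ is focussed. The key observation throughout will be the algebraic identity $\mathrm{Odd}(p'(v)) = \mathrm{Odd}(p(v)) \Delta \mathrm{Odd}(\hat p)$, which lets us split every membership question about $p'(v)$ and $\mathrm{Odd}(p'(v))$ into a ``$p$-part'' handled by the hypotheses on $(p,\prec)$ and a ``$\hat p$-part'' handled by the focussed-set conditions \ref{FOC1}--\ref{FOC3}. Since $p'(w) = p(w)$ for all $w \neq v$, all the Pauli flow conditions are immediate for those vertices once we know $\prec \subseteq \prec'$; so the work concentrates entirely on $v$.

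For the Pauli flow conditions at $v$: for \ref{PF1}, take $w \in p'(v)$ with $w \neq v$ and $\lambda(w) \notin \{X,Y\}$. Either $w \in p(v)$ (so $v \prec w$ by \ref{PF1} for $(p,\prec)$, hence $v \prec' w$) or $w \in \hat p$; in the latter case $\lambda(w) \notin \{X,Y\}$ combined with \ref{FOC1} forces $\lambda(w) \in \{XY, XZ, YZ\}$, whence $w \neq v$ and $v \prec' w$ by construction of $\prec'$. Condition \ref{PF2} is entirely analogous, using \ref{FOC2} and the ``$\mathrm{Odd}$'' half of the edges added to $\prec'$. For \ref{PF3}, take $w \preceq' v$ with $\lambda(w) = Y$, $w \neq v$: since $\prec \subseteq \prec'$ we get $w \preceq v$ in the original order, so \ref{PF3} for $(p,\prec)$ gives $w \in p(v) \Leftrightarrow w \in \mathrm{Odd}(p(v))$, and \ref{FOC3} gives $w \in \hat p \Leftrightarrow w \in \mathrm{Odd}(\hat p)$; combining via symmetric difference yields $w \in p'(v) \Leftrightarrow w \in \mathrm{Odd}(p'(v))$. (One must also check the newly-added $Y$-vertices $w$ with $v \prec' w$ do not violate \ref{PF3}, but \ref{PF3} only constrains $w \preceq v$, so these are irrelevant.) Finally \ref{PF4}--\ref{PF9}: the hypothesis $\forall w \in \hat p \cup \mathrm{Odd}(\hat p).\,\lambda(w) \in \{XY,XZ,YZ\} \Rightarrow w \neq v$ is precisely what guarantees that $v \notin \hat p$ and $v \notin \mathrm{Odd}(\hat p)$ whenever $\lambda(v)$ is planar, so $v \in p'(v) \Leftrightarrow v \in p(v)$ and $v \in \mathrm{Odd}(p'(v)) \Leftrightarrow v \in \mathrm{Odd}(p(v))$, and the conditions carry over unchanged from $(p,\prec)$; for $\lambda(v) \in \{X,Y,Z\}$ there is in fact no restriction tying $v$'s membership to anything, so again \ref{PF7}--\ref{PF9} need a small case check but follow from the focussed-over-$\{v\}$ property of $\hat p$ applied at $v$ (note $\hat p$ is focussed over all of $\overline O$, in particular over $\{v\}$).

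The step I expect to be the main obstacle is showing $\prec'$ is acyclic (a genuine strict partial order), since $\prec'$ is defined as a transitive closure of $\prec$ together with new edges out of $v$. A cycle could only arise if some newly added edge $(v,w)$ participates in a cycle, i.e.\ if $w \preceq v$ already holds in $\prec$ for some $w$ with $\lambda(w) \in \{XY,XZ,YZ\}$ and $w \in \hat p \cup \mathrm{Odd}(\hat p)$ --- but this is exactly forbidden by the hypothesis $\forall w \in \hat p \cup \mathrm{Odd}(\hat p).\,\lambda(w) \in \{XY,XZ,YZ\} \Rightarrow v \preceq w$ (equivalently $\neg(w \prec v)$), together with $w \neq v$; since all new edges leave $v$ and go to vertices that are not below $v$, no cycle through $v$ is created, and $\prec$ had none. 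Once acyclicity is in hand, the focussed property of $(p',\prec')$ reduces to showing $p'(v)$ is focussed over $\overline O \setminus \{v\}$ (the other $p'(w) = p(w)$ are unchanged and were already focussed over $\overline O \setminus \{w\}$), and this is immediate from Lemma~\ref{lemma:AddFocussed}: $p(v)$ is focussed over $\overline O \setminus \{v\}$ by hypothesis, $\hat p$ is focussed over $\overline O \supseteq \overline O \setminus \{v\}$, so their symmetric difference $p'(v) = p(v)\Delta\hat p$ is focussed over $\overline O \setminus \{v\}$.
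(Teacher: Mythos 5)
Your proof is correct and follows essentially the same route as the paper's: argue acyclicity of $\prec'$ by noting that any offending cycle would force $w \prec v$ for a vertex $w$ that the hypothesis places in the non-past of $v$, handle \ref{PF1}--\ref{PF3} by splitting $p'(v)$ into the $p(v)$ and $\hat p$ parts and invoking \ref{PF1}--\ref{PF3} for $(p,\prec)$ and \ref{FOC1}--\ref{FOC3} respectively, handle \ref{PF4}--\ref{PF6} via $v \notin \hat p \cup \mathrm{Odd}(\hat p)$, and conclude focussedness from Lemma~\ref{lemma:AddFocussed}. The only cosmetic difference is that the paper packages your ``small case check'' for \ref{PF7}--\ref{PF9} into a short contradiction argument using Lemma~\ref{lemma:AddFocussed} (if $p'(v)$ were focussed over $\{v\}$, so would $p'(v)\Delta\hat p = p(v)$ be, contradicting the Pauli flow condition for $p(v)$); the underlying content is identical.
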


\begin{proof}
Firstly, $\prec'$ is still a strict partial order. Transitivity is immediate from the definition as a transitive closure. For antisymmetry (and similarly for strictness), suppose on the contrary that we have some $a \prec' b$ and $b \prec' a$ (or directly $a \prec' a$). This gives a transitive loop $[a, \ldots, b, \ldots, a]$ where each step is either in $\prec$ or $\{(v, w) | w \in \hat{p} \cup \mathrm{Odd}(\hat{p}) \wedge \lambda(w) \in \{XY, XZ, YZ\}\}$. Since $\prec$ is a strict partial order, we cannot have such a loop where every step is in $\prec$, so at least one step must be some such $(v, w)$. We can freely eliminate inner loops around $v$, so we can assume wlog that $v$ only appears once in the loop. This means the rest of the loop is only from $\prec$, so $w \prec v$ by transitivity. However, we assumed that $v \preceq w$ since $w \in \hat{p} \cup \mathrm{Odd}(\hat{p})$ and $\lambda(w) \in \{XY, XZ, YZ\}$, giving us the contradiction we need.

Conditions \ref{PF1}-\ref{PF3} are preserved from the extension to $\prec'$ covering planar labels and the focussed property covering Pauli labels.

Conditions \ref{PF4}-\ref{PF6} are preserved since $v \notin \hat{p} \cup \mathrm{Odd}(\hat{p})$ gives $v \in p'(v) \Leftrightarrow v \in p(v)$ and $v \in \mathrm{Odd}(p'(v)) \Leftrightarrow v \in \mathrm{Odd}(p(v))$.

For conditions \ref{PF7}-\ref{PF9}, the correction amounts to saying that $p'(v)$ is not focussed over $\{v\}$. If this were not the case, then $p'(v) \Delta \hat{p} = p(v)$ would be focussed over $\{v\}$ by Lemma \ref{lemma:AddFocussed}, contradicting the corresponding Pauli flow condition for $p(v)$.

Finally, $p'(v)$ is focussed over $\overline{O} \setminus \{v\}$ using Lemma \ref{lemma:AddFocussed}, so $(p', \prec')$ is focussed.
\end{proof}


\begin{lemma}\label{lemma:AddFlowsGivesFocussed}
Let $\Gamma$ be a labelled open graph with two focussed Pauli flows $(p, \prec)$ and $(p', \prec')$. Then for any vertex $v \in \overline{O}$, $p(v) \Delta p'(v)$ is a focussed set.
\end{lemma}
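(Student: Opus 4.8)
The plan is to show that $\hat{p} := p(v) \Delta p'(v)$ satisfies each of the three focussed conditions \ref{FOC1}, \ref{FOC2}, \ref{FOC3} over all of $\overline{O}$, using that $p$ and $p'$ are \emph{focussed} Pauli flows. The key observation is that for a fixed vertex $v$, both $p(v)$ and $p'(v)$ are focussed over $\overline{O} \setminus \{v\}$ by the definition of a focussed Pauli flow, so Lemma \ref{lemma:AddFocussed} immediately gives that $p(v) \Delta p'(v)$ is focussed over $\overline{O} \setminus \{v\}$. Hence the only work is to establish that the focussed conditions also hold \emph{at} $v$ itself, i.e. that $\hat{p}$ is focussed over $\{v\}$ as well.

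To handle the vertex $v$, I would argue by cases on $\lambda(v)$, mirroring the case analysis in Lemma \ref{lemma:AddNotFocussed}. The point is that the Pauli flow conditions \ref{PF4}--\ref{PF9} precisely pin down the membership of $v$ in $p(v)$ and $\mathrm{Odd}(p(v))$ according to $\lambda(v)$ (and likewise for $p'(v)$). For instance, if $\lambda(v) \in \{XY, X\}$ then \ref{PF4}/\ref{PF7} force $v \notin p(v)$ and $v \in \mathrm{Odd}(p(v))$ — and the same for $p'(v)$ — so $v \notin \hat{p}$ and $v \notin \mathrm{Odd}(p(v)) \Delta \mathrm{Odd}(p'(v)) = \mathrm{Odd}(\hat{p})$, which is exactly what \ref{FOC1}--\ref{FOC3} demand at $v$ for that label. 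The cases $\lambda(v) \in \{XZ, YZ, Z\}$ (where \ref{PF5}/\ref{PF6}/\ref{PF8} force $v \in p(v)$, $v \in p'(v)$, hence $v \notin \hat{p}$) and $\lambda(v) = Y$ (where \ref{PF9}, equivalently $v \in p(v) \Leftrightarrow v \notin \mathrm{Odd}(p(v))$ for both flows, gives $v \in \hat{p} \Leftrightarrow v \notin \mathrm{Odd}(\hat{p})$) are analogous and each reduce to a short boolean computation on the symmetric differences. In every case the relevant focussed condition at $v$ holds vacuously or follows from cancellation.

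Combining the two observations — focussed over $\overline{O} \setminus \{v\}$ from Lemma \ref{lemma:AddFocussed}, and focussed over $\{v\}$ from the case analysis — yields that $\hat{p}$ is focussed over all of $\overline{O}$, i.e. $p(v) \Delta p'(v)$ is a focussed set. I do not anticipate a serious obstacle here; the main thing to get right is bookkeeping the exact membership conditions \ref{PF4}--\ref{PF9} versus \ref{FOC1}--\ref{FOC3} and checking that, label by label, the two flows agree on $v$'s status in a way that cancels in the symmetric difference. The one subtlety worth a sentence is that $\hat{p} \subseteq \overline{I}$ (needed for it to count as a ``focussed set'' in the sense of Definition \ref{def:FocussedFlow}) follows trivially since $p(v), p'(v) \subseteq \overline{I}$ and $\overline{I}$ is closed under symmetric difference.
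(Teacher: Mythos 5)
Your proposal matches the paper's proof exactly: apply Lemma \ref{lemma:AddFocussed} with $S = \overline{O} \setminus \{v\}$ and then handle $v$ itself via a case split on $\lambda(v)$ using conditions \ref{PF4}--\ref{PF9}. One small inaccuracy worth fixing: for $\lambda(v) = X$, condition \ref{PF7} gives only $v \in \mathrm{Odd}(p(v))$, not $v \notin p(v)$; this does not affect the argument, since for $\lambda(v) \in \{XY, X\}$ the only condition that actually bites at $v$ is \ref{FOC2}, which requires $v \notin \mathrm{Odd}(\hat{p})$ and is supplied by the correct part of your claim.
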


\begin{proof}
Each of $p(v)$ and $p'(v)$ are focussed over $\overline{O} \setminus \{v\}$, so their combination must also be by Lemma \ref{lemma:AddFocussed}. For each case of $\lambda(v)$, the corresponding condition from \ref{PF4}-\ref{PF9} is then enough to show that $p(v) \Delta p'(v)$ is also focussed over $\{v\}$.
\end{proof}

An important consequence of this result is that in order to fully identify the space of all focussed Pauli flows for a given labelled open graph, it is sufficient to find one using Algorithm \ref{algo:FlowIdentify} that we focus with Lemma \ref{lemma:Focussing}, and find all of the focussed sets. The following proofs show that the focussed sets form a group, allowing us to only need some generating set, and giving an algorithm for finding such generators.

\begin{lemma}
The focussed sets of a labelled open graph form a group under $\Delta$.
\end{lemma}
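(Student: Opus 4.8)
The plan is to verify the three group axioms for the set $\mathcal{F}$ of focussed sets of $\Gamma$ (i.e. subsets $\hat p \subseteq \overline{I}$ focussed over $\overline{O}$) under the symmetric difference operation $\Delta$. Associativity of $\Delta$ is standard for subsets of a fixed ground set, so nothing needs to be said there beyond a remark. The identity element is the empty set $\emptyset$: I would check that $\emptyset$ satisfies \ref{FOC1}, \ref{FOC2}, \ref{FOC3} vacuously, since $\emptyset \cap \overline{O} = \emptyset$ and $\mathrm{Odd}(\emptyset) = \emptyset$, and that $\hat p \Delta \emptyset = \hat p$. Closure under $\Delta$ is exactly the content of Lemma \ref{lemma:AddFocussed} applied with $S = \overline{O}$: if $\hat p$ and $\hat q$ are both focussed over $\overline{O}$, then so is $\hat p \Delta \hat q$.

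For inverses, the key observation is that every element of this group is its own inverse, because $\hat p \Delta \hat p = \emptyset$; so once closure and the identity are established, $\hat p$ being focussed is exactly what we need to conclude $\hat p$ has an inverse (namely itself) in $\mathcal{F}$. Thus the entire proof reduces to: (i) $\emptyset \in \mathcal{F}$, (ii) $\mathcal{F}$ is closed under $\Delta$ (cite Lemma \ref{lemma:AddFocussed} with $S = \overline{O}$), (iii) $\Delta$ is associative and commutative on $\mathcal{P}(\overline{I})$, (iv) $\hat p \Delta \hat p = \emptyset$ gives inverses.

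I do not anticipate a genuine obstacle here: all the real work — the interaction of $\Delta$ with $\mathrm{Odd}(\cdot)$ and with the three focussing conditions — was already done in Lemma \ref{lemma:AddFocussed}. The only thing to be slightly careful about is confirming that $\emptyset$ genuinely lies in the set (a trivial check of the three conditions) and phrasing the inverse argument so it is clear we are using the fact that $(\mathcal{P}(\overline{I}), \Delta)$ is already an elementary abelian $2$-group, of which $\mathcal{F}$ is shown to be a subgroup via the subgroup criterion (nonempty, closed under the operation, and — automatically, since every element is an involution — closed under inverses). Below is the proof.

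\begin{proof}
The power set $\mathcal{P}(\overline{I})$ forms an abelian group under $\Delta$ with identity $\emptyset$, in which every element is its own inverse (as $\hat p \Delta \hat p = \emptyset$); associativity and commutativity are standard. It therefore suffices to show that the focussed sets form a subgroup. The empty set is focussed over $\overline{O}$ since $\emptyset \cap \overline{O} = \emptyset = \mathrm{Odd}(\emptyset) \cap \overline{O}$, so conditions \ref{FOC1}, \ref{FOC2}, and \ref{FOC3} hold vacuously; hence the set of focussed sets is nonempty. Closure under $\Delta$ is exactly Lemma \ref{lemma:AddFocussed} with $S = \overline{O}$. Finally, closure under inverses is automatic, since the inverse of any focussed set $\hat p$ is $\hat p$ itself. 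Thus the focussed sets form a subgroup of $(\mathcal{P}(\overline{I}), \Delta)$.
\end{proof}
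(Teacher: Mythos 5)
Your proof is correct and takes essentially the same approach as the paper: closure via Lemma \ref{lemma:AddFocussed} with $S = \overline{O}$, identity $\emptyset$, and self-inverses. The only cosmetic difference is that you frame it explicitly as a subgroup of $(\mathcal{P}(\overline{I}), \Delta)$, which is a cleaner way to say what the paper says informally.
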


\begin{proof}
Closure is a direct consequence of Lemma \ref{lemma:AddFocussed} with $S = \overline{O}$. We then extend this to a group with identity $\emptyset$ and each focussed set is self-inverse.
\end{proof}


\begin{lemma}\label{lemma:NumberOfFocussedSets}
Any labelled open graph $\Gamma = (G, I, O, \lambda)$ with a focussed Pauli flow has $2^{|O|-|I|}$ distinct focussed sets.
\end{lemma}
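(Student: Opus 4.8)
The plan is to show that the assignment sending a focussed set $\hat{p}$ to its ``boundary action'' — the Pauli string it imposes on the outputs via the graph-state stabilisers — gives a group isomorphism between the group of focussed sets (under $\Delta$, by the previous lemma) and a specific subgroup of the Pauli group on the outputs, and then to count the latter directly. Concretely, associate to each focussed set $\hat p$ the operator $R(\hat p) := \bigl(\prod_{w \in \mathrm{Odd}(\hat p)} Z_w\bigr)\bigl(\prod_{w \in \hat p} X_w\bigr)$ obtained by multiplying the graph-state stabilisers indexed by $\hat p$ (Equation \ref{eq:GraphStateStabilizer}). The focussed conditions \ref{FOC1}--\ref{FOC3} are exactly what is needed to guarantee that all terms of $R(\hat p)$ landing on measured vertices are absorbed by the Pauli projections (Equation \ref{eq:PostEigen}) — $X$'s only hit vertices measured in $X$/$Y$/$XY$, $Z$'s only hit vertices measured in $Z$/$Y$/$XZ$/$YZ$, and the $Y$-compatibility condition \ref{FOC3} handles the $Y$ case — so that $R(\hat p)$ acts on the Clifford map $\mathcal{C}$ associated to the pattern (all measurement angles fixed to Pauli, as in Definition \ref{def:ExtractionStringDef} with no distinguished $v$) purely through a stabiliser over the outputs. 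In other words, $R(\hat p)$ restricted to the outputs is a stabiliser of the isometry $\mathcal{C}$, i.e.\ an element of the stabiliser group of the Choi state of $\mathcal{C}$ lying in the ``output-only free stabiliser'' part.

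Next I would check this is a group homomorphism and compute its kernel. Homomorphism: $\mathrm{Odd}(\hat p \,\Delta\, \hat q) = \mathrm{Odd}(\hat p)\,\Delta\,\mathrm{Odd}(\hat q)$, so $R(\hat p \Delta \hat q) = R(\hat p) R(\hat q)$ up to sign, which is fine for counting (we work in the Pauli group modulo phase, or track signs and note they are determined). Injectivity up to the relevant quotient: if $R(\hat p)$ is the identity on the outputs, then $\hat p \subseteq \overline{O}$ and $\mathrm{Odd}(\hat p) \subseteq \overline{O}$; combined with $\hat p$ being focussed this forces $\hat p$ to correspond to a product of graph-state stabilisers that is absorbed entirely by the measurement projections, hence $R(\hat p)\mathcal{C} = \pm\mathcal{C}$ with no residual boundary action — but more strongly I want $\hat p = \emptyset$. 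The cleanest route is: a non-empty focussed set with trivial boundary action would, by Lemma \ref{lemma:AddFlowsAndFocussed}, let us modify a focussed Pauli flow at a vertex without changing anything observable — but that does not immediately give $\hat p = \emptyset$. Instead I would argue directly via the stabiliser structure of $\mathcal{C}$: the map $\hat p \mapsto (\text{indicator of } \hat p)$ on $\overline{I}$ is a linear injection $\mathbb{F}_2^{|\overline{I}|} \to \mathbb{F}_2^{|\overline{I}|}$ already (a set is determined by its indicator), so injectivity of $R$ reduces to showing $R(\hat p) = I \Rightarrow \hat p = \emptyset$, which follows because $R(\hat p)$ being identity on all of $V$ means the corresponding product of graph-state stabilisers is $I$, and the graph-state stabiliser generators are independent.

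Finally, the image: by the above, $R$ injects the group of focussed sets into the group $\mathcal{S}_{\mathrm{out}}$ of stabilisers of $\mathcal{C}$ supported on the outputs (modulo phase). Conversely, given any such output stabiliser, the fact that $\mathcal{C}$ arises from the graph state means every stabiliser of $\mathcal{C}$ is (up to the Pauli absorptions) a product of graph-state stabilisers, and by choosing the indexing subset appropriately and using that the absorbed parts are exactly the focussed conditions, one recovers a focussed set mapping to it — so $R$ is onto $\mathcal{S}_{\mathrm{out}}$. Now $\mathcal{C}$ is an isometry from $|I|$ input qubits to $|O|$ output qubits (no ancillae needed, by Theorem \ref{thrm:Extraction}), and its image is a $2^{|I|}$-dimensional subspace of the $2^{|O|}$-dimensional output space; the group of Pauli operators (mod phase) stabilising a subspace of codimension $|O|-|I|$ has order $2^{|O|-|I|}$. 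Hence the group of focussed sets has order $2^{|O|-|I|}$, as claimed.

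The main obstacle I expect is making the ``image is exactly $\mathcal{S}_{\mathrm{out}}$'' step rigorous — i.e.\ showing every output-supported stabiliser of $\mathcal{C}$ comes from a focussed set — since this requires knowing that the stabiliser group of the projected graph state $\mathcal{C}$ is generated (modulo the Pauli-projection absorptions of Equation \ref{eq:PostEigen}) precisely by the images of the $|\overline{I}|$ graph-state stabiliser generators, and then cutting this group down correctly by the $|\overline{O}|$ independent measurement constraints. A careful dimension count — $|\overline{I}|$ generators, minus $|\overline{O}|$ constraints from the measured vertices being eliminated, equals $|\overline{I}| - |\overline{O}| = |O| - |I|$ free generators over the outputs — is really the heart of the argument, and I would want to justify the independence claims via the explicit binary-matrix / Gaussian-elimination picture already used in the proof of Theorem \ref{thrm:IdentificationProof}.
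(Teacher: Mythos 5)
There is a genuine gap in the injectivity step. You reduce injectivity of $R$ to the claim $R(\hat p) = I \Rightarrow \hat p = \emptyset$, and then argue this from independence of the graph-state stabilizer generators. But what you actually need is that the \emph{output restriction} of $R(\hat p)$ being trivial forces $\hat p = \emptyset$, which is a strictly stronger statement: $R(\hat p)$ can act non-trivially on measured vertices (and be absorbed by the Pauli projections via Equation~\ref{eq:PostEigen}) while its output substring is identity. Independence of graph-state stabilizers only rules out $R(\hat p)$ being the identity on \emph{all} of $V$. The paper handles exactly this via Lemma~\ref{lemma:EmptyFocussedSets}, which uses an anticommutation argument (Lemma~\ref{lemma:AntiCommutingStringsProof}): a non-empty focussed $\hat p$ containing some measured $v$ must anticommute with the primary extraction string $\mathbf{P}^{\bot v}$, so its output part cannot be trivial. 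Your proposed argument simply does not cover this case.

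There is also a more structural worry: you appeal to Theorem~\ref{thrm:Extraction} (that $\mathcal{C}$ is an honest isometry with a $2^{|I|}$-dimensional stabilizer-coded image, so the output stabilizer group has order $2^{|O|-|I|}$), but in the paper the present lemma is a prerequisite for that theorem — Theorem~\ref{thrm:StabBijection} cites Lemma~\ref{lemma:NumberOfFocussedSets} to promote injectivity to bijectivity, and the extraction theorem relies on that bijection to know the tableau is complete. Your route thus inverts the paper's logical order, and to make it non-circular you would need to independently establish that $\mathcal{C}$ is a non-zero isometry and count its output stabilizers, plus prove the surjectivity step you yourself flag as the ``main obstacle.'' The paper's actual proof sidesteps all of this: it is a self-contained induction over $|S|$, using Lemma~\ref{lemma:AddFocussed} and Lemma~\ref{lemma:AddNotFocussed} to show that symmetric difference with $p(v)$ (for a focussed Pauli flow $p$, which is focussed over $\overline{O}\setminus\{v\}$ but necessarily \emph{not} over $\{v\}$) gives a bijection that halves the count each time a new constraint vertex is added — landing at $2^{|\overline{I}|-|\overline{O}|} = 2^{|O|-|I|}$. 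That is a cleaner and more elementary route than the stabilizer-theoretic one you sketch.
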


\begin{proof}
We can find the number of subsets of $\overline{I}$ that are focussed over some $S \subseteq \overline{O}$ inductively over the size of $S$ using the following induction hypothesis:

\begin{equation}
\Phi(k) := \forall S \subseteq \overline{O} . |S| = k \Rightarrow \left|\left\{\hat{p} \subseteq \overline{I} \middle| \hat{p} \text{ is focussed over } S \right\}\right| = 2^{|\overline{I}| - |S|}
\end{equation}

For $k = 0$, we only have to consider $S = \emptyset$. Any subset of $|\overline{I}|$ satisfies the required properties, so there are $2^{|\overline{I}|}$ many such sets, so $\Phi(0)$ holds.

Suppose $\Phi(k)$ holds for some $k \geq 0$. Suppose $S \subseteq \overline{O}$ is of size $k+1$ and choose any vertex $v \in S$. Let $S_v := S \setminus v$, so $|S_v| = k$. $\Phi(k)$ implies there are $2^{|\overline{I}|-k}$ subsets of $\overline{I}$ that are focussed over $S_v$. The subsets focussed over $S$ are those that are focussed over both $S_v$ and $\{v\}$.

Let $(p, \prec)$ be some focussed Pauli flow for $\Gamma$, so $p(v)$ is focussed over $\overline{O} \setminus v$, and hence for $S_v$. Lemma \ref{lemma:AddFocussed} implies the sets focussed over $S_v$ are closed under symmetric difference with $p(v)$. However, Lemma \ref{lemma:AddNotFocussed} implies $\hat{p}$ is focussed over $\{p\}$ if and only if $\hat{p} \Delta p(v)$ is not because $p(v)$ corrects the measurement at $v$. This means taking the symmetric difference with $p(v)$ defines a bijection between the sets that are focussed for $S$ and those that are focussed for $S_v$ but not for $\{v\}$. Since these are disjoint and partition the $2^{|\overline{I}|-k}$ that are focussed over $S_v$, we must have $2^{|\overline{I}|-(k+1)}$ that are focussed over $S$, giving $\Phi(k+1)$.

Focussed sets are those that are focussed over the entirety of $\overline{O}$, so $\Phi(|\overline{O}|)$ tells us that there are $2^{|\overline{I}| - |\overline{O}|} = 2^{|O| - |I|}$ focussed sets.
\end{proof}


\begin{lemma}\label{lemma:StabAlgorithmProof}
Given a labelled open graph $\Gamma$ with focussed Pauli flow, there exists an algorithm that identifies $|O|-|I|$ independent generators for the group of focussed sets of $\Gamma$. Furthermore, this algorithm completes in time polynomial in the number of vertices in $\Gamma$.
\end{lemma}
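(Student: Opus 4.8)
The plan is to observe that the three conditions defining a focussed set are $\mathbb{F}_2$-linear in the indicator vector of the set, so that the group of focussed sets is the kernel of an explicit Boolean matrix, which Gaussian elimination handles in polynomial time.

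First I would fix an enumeration of $\overline{I}$ and identify each candidate set $\hat{p} \subseteq \overline{I}$ with its characteristic vector $x \in \mathbb{F}_2^{\overline{I}}$. The key point is that both the predicate ``$w \in \hat{p}$'' (for $w \in \overline{I}$) and the predicate ``$w \in \mathrm{Odd}(\hat{p})$'' are linear functionals of $x$: the former is the coordinate projection $x_w$, and the latter is $\sum_{v \in \overline{I}} \Gamma_{wv} x_v$, where $\Gamma$ denotes the adjacency matrix of $G$. Hence each of \ref{FOC1}, \ref{FOC2}, \ref{FOC3} (with $S = \overline{O}$) contributes, for each relevant vertex $w \in \overline{O}$, a homogeneous linear equation over $\mathbb{F}_2$. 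Concretely, \ref{FOC1} contributes $x_w = 0$ for each $w \in \overline{I} \cap \overline{O}$ with $\lambda(w) \in \{XZ, YZ, Z\}$ (and no equation when $w \notin \overline{I}$, since then $w \notin \hat{p}$ automatically); \ref{FOC2} contributes $\sum_{v \in \overline{I}} \Gamma_{wv} x_v = 0$ for each $w \in \overline{O}$ with $\lambda(w) \in \{XY, X\}$; and \ref{FOC3} contributes $x_w + \sum_{v \in \overline{I}} \Gamma_{wv} x_v = 0$ for each $w \in \overline{O}$ with $\lambda(w) = Y$, where $x_w$ is read as $0$ whenever $w \notin \overline{I}$. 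Collecting these rows into a matrix $M$ of size at most $|\overline{O}| \times |\overline{I}|$, a set $\hat{p}$ is focussed precisely when $Mx = 0$, so the group of focussed sets is exactly $\ker M$, in agreement with the group structure already established.

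With this reformulation in hand the remaining steps are routine. Lemma \ref{lemma:NumberOfFocussedSets} gives $\lvert \ker M \rvert = 2^{|O|-|I|}$, hence $\dim_{\mathbb{F}_2} \ker M = |O| - |I|$. The algorithm therefore reads: construct $M$ (polynomial in $|V|$, since it only inspects labels and adjacency), then perform Gaussian elimination over $\mathbb{F}_2$ to extract a basis of $\ker M$ by back-substituting unit vectors into the free columns, which costs $O(|V|^3)$. The output is a family of $|O|-|I|$ linearly independent vectors spanning $\ker M$, i.e.\ $|O|-|I|$ independent generators of the group of focussed sets, and the whole procedure is polynomial-time.

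I expect the only genuinely delicate part to be the bookkeeping in the linearisation: correctly treating vertices that are simultaneously non-input and non-output (so their coordinate in $x$ is live and they also impose constraints), and handling the degenerate instances of \ref{FOC1} and \ref{FOC3} in which $w \notin \overline{I}$, where the membership term must be dropped so that the condition does not accidentally force a spurious equation. Once the system $M$ is set up faithfully, correctness of the extracted basis and the size bound $|O|-|I|$ follow immediately from standard linear algebra over $\mathbb{F}_2$ together with the dimension count of Lemma \ref{lemma:NumberOfFocussedSets}.
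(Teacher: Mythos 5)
Your proposal is correct and matches the paper's approach: both encode the focussing conditions as a homogeneous $\mathbb{F}_2$-linear system over indicator vectors and extract a kernel basis via Gaussian elimination in polynomial time. The only cosmetic difference is that you keep the full variable set $\overline{I}$ and encode \ref{FOC1} as explicit zero-rows, whereas the paper builds this in by restricting the column index set to $\mathbb{P} = (O \cup \{w \in \overline{O} \mid \lambda(w) \in \{XY, X, Y\}\}) \cap \overline{I}$; otherwise the two systems are identical, as is the use of the $2^{|O|-|I|}$ count to pin down the kernel dimension.
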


\begin{proof}
Similar to Theorem \ref{thrm:Identification}, we can encode the conditions for focussed sets into a linear equation system $MX = S$ in $\mathbb{F}_2$ which can be solved by Gaussian elimination and back substitution to obtain a single focussed set.

\begin{align}
M &:= \left[ \begin{array}{c} \Gamma \cap \mathbb{P} \times \mathbb{O}^C \\ \hline (\Gamma + \mathrm{Id}) \cap \mathbb{P} \times \mathbb{Y} \end{array} \right] \\
S &:= \left[ \begin{array}{c} 0 \\ \hline 0 \end{array} \right] \\
\mathbb{P} &:= (O \cup \{w \in \overline{O} | \lambda(w) \in \{XY, X, Y\}\}) \cap \overline{I} \\
\mathbb{O} &:= O \cup \{w \in \overline{O} | \lambda(w) \in \{XZ, YZ, Y, Z\}\} \\
\mathbb{Y} &:= \{w \in \overline{O} | \lambda(w) = Y\}
\end{align}

In the above, we let $\Gamma$ stand for the adjacency matrix of its graph. We define $\mathbb{P}$ to be the set of vertices that could be included in a focussed set and $\mathbb{O}$ the set of vertices that could be in the odd neighbourhood. Solutions satisfy \ref{FOC1} since $X$ only ranges over $\mathbb{P}$. The top block of the system encodes \ref{FOC2} since multiplying by $\Gamma$ in $\mathbb{F}_2$ gives the odd neighbourhood, and similarly the bottom block encodes \ref{FOC3}.

We need to add additional conventions to make sure that the focussed sets we obtain are non-empty and independent. Since this system is underconstrained (when there exist non-empty focussed sets), there will always be some freedom of choice during the back substitution step. Since every focussed set must be a solution, these free substitutions must generate the full set. Hence we can obtain independent, non-empty focussed sets by taking a single free substitution for each focussed set, iterating through each free substitution in turn.

Since both $|\mathbb{P}|$ and $|\mathbb{O}^C| + |\mathbb{Y}|$ are at most $|V|$, the Gaussian elimination takes $O(|V|^3)$ time. Each back substitution takes $O(|V|^2)$ time, which we must repeat $|O|-|I|$ times giving $O(|V|^3)$ again.
\end{proof}

It should be noted that similar conditions can be incorporated into Algorithm \ref{algo:FlowIdentify} to directly obtain focussed Pauli flows by imposing further restrictions on the vertices that can be included in the correction sets and adding extra equations to restrict the odd neighbourhood.

A final comment we will make about the focussed conditions is that it removes the need for any ordering relation between Pauli vertices. When the measurement pattern uses only Pauli measurements, all of them can be measured simultaneously, giving a single round of corrections on the outputs. When there are additional planar measurements, we can do all Pauli measurements in a single layer before considering any of the planar measurements.


\begin{lemma}\label{lemma:InitialPaulis}
If a labelled open graph has a Pauli flow, then there exists a Pauli flow $(p, \prec)$ which satisfies $\forall v \in \overline{O} . \lambda(v) \in \{X, Y, Z\} \Rightarrow \forall u \in V . v \preceq u$.
\end{lemma}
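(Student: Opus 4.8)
The plan is to start from a \emph{focussed} Pauli flow — which exists whenever a Pauli flow exists by Lemma~\ref{lemma:Focussing} — and then to weaken its partial order by discarding exactly those ordering constraints that would force a Pauli-measured vertex to lie in the strict future of another vertex. The crucial point I would establish first is that, for a focussed flow, the only ordering constraints that conditions \ref{PF1}--\ref{PF3} impose are of the form $u \prec v$ with $\lambda(v)$ a \emph{planar} label.

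Concretely, let $(p, \prec)$ be a focussed Pauli flow. If $v \in p(u)$ with $v \neq u$, then \ref{FOC1} forces $\lambda(v) \in \{XY, X, Y\}$, so the hypothesis $\lambda(v) \notin \{X, Y\}$ of \ref{PF1} can only hold when $\lambda(v) = XY$; dually, if $v \in \mathrm{Odd}(p(u))$ with $v \neq u$, then \ref{FOC2} forces $\lambda(v) \in \{XZ, YZ, Y, Z\}$, so the hypothesis $\lambda(v) \notin \{Y, Z\}$ of \ref{PF2} can only hold when $\lambda(v) \in \{XZ, YZ\}$. In particular no Pauli-labelled vertex is ever required to lie strictly after another vertex. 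Moreover, once $p(u)$ is focussed over $\overline{O} \setminus \{u\}$, condition \ref{PF3} holds for \emph{any} partial order, since \ref{FOC3} already gives $v \in p(u) \Leftrightarrow v \in \mathrm{Odd}(p(u))$ for every $v \neq u$ with $\lambda(v) = Y$; and \ref{PF4}--\ref{PF9} do not mention the order at all.

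With this observation, I would define $\prec'$ to be the restriction of $\prec$ to pairs $(u, v)$ for which $\lambda(v)$ is a planar label or $v \in O$, and take $(p, \prec')$ as the required flow. I would check that $\prec'$ is a strict partial order (irreflexivity and acyclicity are inherited from $\prec$; transitivity holds because if $u \prec' w$ and $w \prec' v$ then $u \prec v$ and the successor condition on $v$ is witnessed by the second step). Then I would verify that $(p, \prec')$ is still a Pauli flow: whenever the hypotheses of \ref{PF1} or \ref{PF2} fire, the observation above shows $\lambda(v)$ is planar and $(p,\prec)$ already supplies $u \prec v$, so the pair survives into $\prec'$; \ref{PF3} holds by focussedness; and \ref{PF4}--\ref{PF9} are unchanged. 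Finally, for any $v$ with $\lambda(v) \in \{X, Y, Z\}$ there is by construction no $u$ with $u \prec' v$, i.e.\ $v \preceq u$ for all $u \in V$, which is exactly the claim.

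The genuinely novel step is the first observation — that focussedness collapses the order constraints of \ref{PF1}--\ref{PF3} to ``planar successors only''; everything afterwards is routine bookkeeping. The only delicate point is output vertices that happen to lie in some $p(u)$ (where $\lambda$ is undefined): these are handled by explicitly permitting $v \in O$ as a successor in the definition of $\prec'$, which is harmless since outputs never need to be minimal.
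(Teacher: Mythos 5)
Your proof is correct and follows essentially the same strategy as the paper's: start from a focussed Pauli flow (Lemma \ref{lemma:Focussing}), delete from $\prec$ exactly those pairs whose second component is a Pauli-measured vertex, and verify the flow conditions using focussedness. The paper's proof is quite terse (``\ref{PF1}--\ref{PF3} ... all follow from the focussed conditions''); your explicit derivation of why \ref{FOC1}--\ref{FOC3} collapse the order requirements to planar or output successors, together with the transitivity check for $\prec'$ and the remark about outputs in $p(u)$, simply spells out what the paper leaves implicit.
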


\begin{proof}
Given a Pauli flow $(p, \prec)$, we can assume wlog that it is focussed from Lemma \ref{lemma:Focussing}. Let $\prec' := \prec \setminus \{(u, v) \in V \times \overline{O} | \lambda(v) \in \{X, Y, Z\}\}$. By construction, this satisfies $\forall v \in \overline{O} . \lambda(v) \in \{X, Y, Z\} \Rightarrow \forall u \in V . v \preceq' u$. To show that $(p, \prec')$ is a valid Pauli flow, we just need to show that \ref{PF1}-\ref{PF3} are unaffected which all follow from the focussed conditions.
\end{proof}

\section{Proofs for Circuit Extraction}\label{sec:ExtractionProofs}

This section will explore how to identify extraction strings and their properties, building up to a proof of Theorem \ref{thrm:Extraction} for circuit extraction. As dicussed in Section \ref{sec:Extraction}, the key principle is to extract planar measurement angles as rotations over the outputs, to leave behind a stabilizer process. We start by building up an explicit construction for extraction strings from a focussed Pauli flow.


\begin{lemma}\label{lemma:EvenImaginary}
Given a Pauli flow $(p, \prec)$ for a labelled open graph $(G, I, O, \lambda)$, for any vertex $v \in \overline{O}$ the size of $p(v) \cap \mathrm{Odd}(p(v))$ is even.
\end{lemma}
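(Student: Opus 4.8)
The plan is to express $|p(v) \cap \mathrm{Odd}(p(v))|$ modulo $2$ as a bilinear-form-style quantity over $\mathbb{F}_2$ and show it vanishes. Write $S := p(v) \subseteq \overline{I}$ and let $A$ be the adjacency matrix of $G$ over $\mathbb{F}_2$, with characteristic vector $\mathbf{s}$ for $S$. Then $\mathrm{Odd}(S)$ has characteristic vector $A\mathbf{s}$, so $|p(v)\cap\mathrm{Odd}(p(v))| \bmod 2 = \mathbf{s}^\top A \mathbf{s} \pmod 2$. The key classical fact is that for a \emph{symmetric} matrix $A$ with zero diagonal over $\mathbb{F}_2$, the quadratic form $\mathbf{s}^\top A \mathbf{s}$ is identically zero: expanding, $\mathbf{s}^\top A \mathbf{s} = \sum_{i} A_{ii} s_i + 2\sum_{i<j} A_{ij} s_i s_j \equiv \sum_i A_{ii} s_i \equiv 0 \pmod 2$ since $G$ has no self-loops ($A_{ii} = 0$). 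So the obstacle, if any, is purely bookkeeping about which graph and which vertex set we are working over.

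First I would set up this $\mathbb{F}_2$-linear-algebra framing explicitly, noting that $G = (V,E)$ is a simple undirected graph so its adjacency matrix is symmetric with zero diagonal. Second, I would observe that for any set $A \subseteq V$, $|A \cap \mathrm{Odd}(A)| = \sum_{w \in A} |N_G(w) \cap A|$, and that this double-counts each edge with both endpoints in $A$ exactly twice (once from each endpoint), hence is even; this is the combinatorial restatement of $\mathbf{s}^\top A \mathbf{s} \equiv 0$ and avoids any matrix notation if preferred. Third, I would apply this with $A = p(v)$, which is a legitimate subset of $V$ (indeed of $\overline{I}$) by definition of Pauli flow, to conclude $|p(v) \cap \mathrm{Odd}(p(v))|$ is even.

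The main subtlety — barely an obstacle — is making sure the counting argument is stated cleanly: $\sum_{w\in p(v)}|N_G(w)\cap p(v)|$ counts ordered pairs $(w,w')$ with $w,w'\in p(v)$ and $w\sim w'$; since $G$ is simple there are no loops, so every such pair has $w\neq w'$ and the pair $(w',w)$ is counted separately, giving twice the number of edges inside $p(v)$, which is even. I do not anticipate needing any of the Pauli flow axioms \ref{PF1}--\ref{PF9} here; the statement is a pure graph-theoretic identity about $p(v)$ as a vertex subset, and the flow hypothesis is used only to guarantee $p(v)$ is well-defined. This lemma will presumably be used later to ensure that expressions like $i^{|p(v)\cap\mathrm{Odd}(p(v))|}$ appearing in extraction-string sign computations are real, consistent with the remark after the \ref{lemma:CommutationRules} that $iAB$ is a real Pauli string when $A,B$ anticommute.
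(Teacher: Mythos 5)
Your proposal is correct and is essentially the same argument as the paper's: the paper applies the handshake lemma to the induced subgraph on $p(v)$, which is exactly what your $\mathbb{F}_2$ quadratic form $\mathbf{s}^\top A \mathbf{s} \equiv 0$ encodes, and you correctly observe that only the well-definedness of $p(v) \subseteq \overline{I}$, not the flow axioms, is used. One small slip in the combinatorial restatement: $|A \cap \mathrm{Odd}(A)|$ is congruent modulo $2$ to $\sum_{w \in A} |N_G(w) \cap A|$, not equal to it (the right-hand side is twice the number of edges inside $A$ and generally exceeds the left, e.g.\ for a three-vertex path), but this does not affect the argument since only parity is used.
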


\begin{proof}
Let $G = (V, E)$ and define the subgraph $G' = (p(v), E \cap (p(v) \times p(v)))$. For any vertex $w \in V$, $w \in p(v) \cap \mathrm{Odd}(p(v))$ if and only if $w$ has odd degree in $G'$. Since the sum of the vertex degrees must equal $2$ times the number of edges in $G'$, there must be an even number of vertices with odd degree.
\end{proof}


\begin{lemma}\label{lemma:ExtractionFromFocussedProof}(Restatement of Lemma \ref{lemma:ExtractionFromFocussed})
Let $\Gamma = (G, I, O, \lambda)$ be a labelled open graph with some measurement angles $\alpha : \overline{O} \to \left[0, 2\pi \right)$ and a focussed Pauli flow $(p, \prec)$. Then for any vertex $v \in \overline{O}$, $p(v)$ determines a $P^{\bot v}$-extraction string.
\end{lemma}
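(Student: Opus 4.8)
The plan is to compute the linear map $\mathcal{C}$ from Definition~\ref{def:ExtractionStringDef} directly, track how the graph-state stabilizer associated with $p(v)$ acts on it, and verify that the measured-vertex projections absorb every Pauli term except for $P_v$ on the (not-yet-absorbed) vertex $v$ and a residual string over the outputs. Concretely, start from Equation~\eqref{eq:GraphStateStabilizer}: taking the product over $w \in p(v)$ of the single-vertex stabilizers gives
\[
E_G N_{\overline{I}} = \left( \prod_{w \in \mathrm{Odd}(p(v))} Z_w \right)\!\left( \prod_{w \in p(v)} X_w \right) E_G N_{\overline{I}}
\]
up to a sign (and I will need Lemma~\ref{lemma:EvenImaginary} to handle the $Y = iXZ$ bookkeeping on the overlap $p(v) \cap \mathrm{Odd}(p(v))$, which is what makes the overall phase a genuine sign rather than an imaginary unit). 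So $\left(\prod_{w \in p(v)} X_w\right)\left(\prod_{w \in \mathrm{Odd}(p(v))} Z_w\right)$ is, up to sign, a stabilizer of $E_G N_{\overline{I}}$, hence of $\mathcal{C}$ once we show it commutes past all the surviving bras.

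The core of the argument is the case analysis: for each measured vertex $w \neq v$ appearing in $p(v) \cup \mathrm{Odd}(p(v))$, check that the Pauli this stabilizer applies to $w$ is absorbed by the corresponding projector. There are two sub-cases. If $\lambda(w) \in \{X,Y,Z\}$, the projector is the Pauli bra $\bra{+_{\lambda(w),\alpha(w)}}_w$ and Equation~\eqref{eq:PostEigen} lets it absorb the matching single-qubit Pauli; here I use the focussed conditions \ref{FOC1}--\ref{FOC3} to guarantee that whenever $w \in p(v)$ (so an $X_w$ is applied) we have $\lambda(w) \in \{XY,X,Y\}$ — and on the Pauli side that forces $\lambda(w) \in \{X,Y\}$, which absorbs $X$; symmetrically for $Z_w$ from $\mathrm{Odd}(p(v))$ via \ref{FOC2}, and the $Y$ case (where both $X_w$ and $Z_w$ are present, composing to $Y_w$ up to phase) is exactly what \ref{FOC3} is designed to make consistent. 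If instead $\lambda(w) \in \{XY,XZ,YZ\}$, then since $(p,\prec)$ is a focussed Pauli flow, $p(v)$ is focussed over $\overline{O}\setminus\{v\}$, so \ref{FOC1}--\ref{FOC2} imply $w \notin p(v)$ unless $\lambda(w)=XY$, $w \notin \mathrm{Odd}(p(v))$ unless $\lambda(w) \in \{XZ,YZ\}$ — and in those allowed cases $w$ must lie in the future of $v$ (flow conditions \ref{PF1}--\ref{PF2}), hence $w \succ v$, so by the definition of $\mathcal{C}$ the projector at $w$ is $\bra{+_{\lambda(w),0}}_w$, i.e. a fixed-angle planar bra that is an eigenstate of the relevant Pauli and again absorbs it. Putting these together, the only vertex where the stabilizer contributes a non-absorbed Pauli is $v$ itself, and the residue on the outputs is $\mathbf{P} := \left(\prod_{w \in p(v) \cap O} X_w\right)\left(\prod_{w \in \mathrm{Odd}(p(v)) \cap O} Z_w\right)$ (up to phase). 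Finally, reading off \ref{PF4}--\ref{PF9} (or, more to the point, just the membership of $v$ in $p(v)$ and $\mathrm{Odd}(p(v))$), the Pauli left on $v$ is exactly $X$, $Y$, or $Z$ according to the three cases defining $P^{\bot v}$ — so $P^{\bot v}_v \mathbf{P}$ stabilizes $\mathcal{C}$ and $\mathbf{P}$ is a $P^{\bot v}$-extraction string.

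The main obstacle I expect is the phase/sign bookkeeping rather than the combinatorics: when $v \in p(v) \cap \mathrm{Odd}(p(v))$ (the $P^{\bot v}=Y$ case) we compose an $X_v$ and a $Z_v$ and must land on $Y_v$ with a \emph{real} coefficient, and more generally the product of the single-vertex stabilizers over $p(v)$ picks up factors of $i$ from every vertex in $p(v)\cap\mathrm{Odd}(p(v))$; Lemma~\ref{lemma:EvenImaginary} says there is an even number of these, so these combine to $\pm 1$, but one has to be careful that the "$\approx$" in the extraction-string definition is doing the right work (it already swallows scalars, and an $X$-, $Y$-, or $Z$-extraction string is defined only up to such scalars, so a global sign is harmless). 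A secondary subtlety is making sure, in the $\lambda(w)$ planar sub-case, that every such $w$ genuinely satisfies $w \succ v$ and is therefore measured at angle $0$ in $\mathcal{C}$ — this is where the interplay between "focussed" (giving the membership restrictions on $w$) and the Pauli flow order conditions \ref{PF1}--\ref{PF2} (giving $v \prec w$) is essential, and it is worth spelling out that the vertices in $p(v)\cup\mathrm{Odd}(p(v))$ with Pauli labels are precisely the ones for which no order constraint is needed, which is exactly why $\mathcal{C}$ retains their original measurement angles $\alpha(w)$ without causing trouble.
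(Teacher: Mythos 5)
Your proposal is correct and takes essentially the same approach as the paper's proof: both start from the graph-state stabilizer generated by $p(v)$, invoke Lemma~\ref{lemma:EvenImaginary} to handle the $Y=iXZ$ sign bookkeeping, use the focussed conditions \ref{FOC1}--\ref{FOC3} together with \ref{PF1}--\ref{PF2} to absorb every Pauli at a measured $w\neq v$ into either its Pauli projector (via Equation~\eqref{eq:PostEigen}) or the angle-$0$ planar projector guaranteed by $w\succ v$, and read off $P^{\bot v}$ from the membership of $v$ in $p(v)$ and $\mathrm{Odd}(p(v))$. The only cosmetic difference is that the paper's write-up partitions the stabilizer into its $X$-only, $Z$-only, and $Y$ pieces before absorbing, whereas you partition by whether $\lambda(w)$ is Pauli or planar; the underlying argument is the same.
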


\begin{proof}
Consider an arbitrary measured vertex $v \in \overline{O}$.

Since each correction set $p(v)$ is a subset of the non-input vertices, we can combine the graph state stabilizers. We may reorder the $Z$ and $X$ terms with the possible introduction of a $(-1)$.

\begin{equation}
E_G N_{\overline{I}} = (-1)^a \left( \prod_{u \in p(v)} X_u \right) \left( \prod_{u \in \mathrm{Odd}(p(v))} Z_u \right) E_G N_{\overline{I}}
\end{equation}
where $a = |E \cap (p(v) \times p(v))|$ is the number of edges in the subgraph of $p(v)$, since for each edge here we have to reorder the $Z$ and $X$ terms on exactly one of the two vertices.

There are an even number of vertices with both an $X$ and a $Z$ (see Lemma \ref{lemma:EvenImaginary}), so we can apply $Y = iXZ$ on all such instances, again introducing a possible $(-1)$ term.

\begin{equation}
\begin{split}
E_G N_{\overline{I}} = (-1)^b & \left( \prod_{u \in p(v) \setminus \mathrm{Odd}(p(v))} X_u \right) \left( \prod_{u \in \mathrm{Odd}(p(v)) \setminus p(v)} Z_u \right) \\ & \left( \prod_{u \in p(v) \cap \mathrm{Odd}(p(v))} Y_u \right) E_G N_{\overline{I}}
\end{split}
\end{equation}
where $b = a + |p(v) \cap \mathrm{Odd}(p(v))|/2$.

Since $(p, \prec)$ is a Pauli flow, every vertex in $p(v)$ or $\mathrm{Odd}(p(v))$ is either a Pauli measurement or greater than $v$ in $\prec$ (from conditions \ref{PF1} and \ref{PF2}). To fit the form from Equation \ref{eq:ExtractionStringDef}, we consider adding the corresponding projections. Since $(p, \prec)$ is focussed, each vertex (besides outputs and $v$) in $p(v) \setminus \mathrm{Odd}(p(v))$ is projected into an $X$ basis eigenvector, and similarly $\mathrm{Odd}(p(v)) \setminus p(v)$ into $Z$ and $p(v) \cap \mathrm{Odd}(p(v))$ into $Y$. This means we can absorb these Pauli operators into the projections, again with the possible introduction of a $(-1)$.

\begin{equation}\label{eq:FlowExtractionString}
\begin{split}
\left( \prod_{\substack{u \succ v \\ \lambda(u) \notin \{X, Y, Z\}}} \bra{+_{\lambda(u), 0}}_u \right) \left( \prod_{\substack{u \in \overline{O} \setminus \{v\} \\ \lambda(u) \in \{X, Y, Z\}}} \bra{+_{\lambda(u), \alpha(u)}}_u \right) E_G N_{\overline{I}} \\
= (-1)^c \left( \prod_{\substack{u \succ v \\ \lambda(u) \notin \{X, Y, Z\}}} \bra{+_{\lambda(u), 0}}_u \right) \left( \prod_{\substack{u \in \overline{O} \setminus \{v\} \\ \lambda(u) \in \{X, Y, Z\}}} \bra{+_{\lambda(u), \alpha(u)}}_u \right) \\
\left( \prod_{u \in (p(v) \setminus \mathrm{Odd}(p(v))) \cap (O \cup \{v\})} X_u \right)
\left( \prod_{u \in (\mathrm{Odd}(p(v)) \setminus p(v)) \cap (O \cup \{v\})} Z_u \right) \\
\left( \prod_{u \in p(v) \cap \mathrm{Odd}(p(v)) \cap (O \cup \{v\})} Y_u \right) E_G N_{\overline{I}}
\end{split}
\end{equation}
where $c = b + \left| \left( p(v) \cup \mathrm{Odd}(p(v)) \right) \cap \{u \in \overline{O} | \lambda(u) \in \{X, Y, Z\} \wedge \alpha(u) = \pi\} \right|$.

This is now a Pauli operator with $P^{\bot v}$ on $v$ and remainder over only the outputs, making it a valid primary extraction string.
\end{proof}

Recall that the \ref{lemma:GadgetStabCorrespondence} can use these extraction strings to remove the measurement angles. We can summarise Equation \ref{eq:RotateBasis} by
\begin{align}
\bra{+_{\lambda(v), \alpha(v)}} &\approx \bra{+_{\lambda(v), 0}} e^{(-1)^{D_v} i \tfrac{\alpha(v)}{2} P^{\bot v}} \\
D_v &:= \begin{cases}
1 & \text{if } \lambda(v) = YZ \\
0 & \text{otherwise}
\end{cases}
\end{align}
where $D_v$ dictates the direction of rotation about the Bloch sphere. This means the rotation we extract over the outputs is $e^{(-1)^{D_v} i \tfrac{\alpha(v)}{2} \mathbf{P}^{\bot v}}$.

After extracting all planar measurement angles, we are left with the following stabilizer process:
\begin{equation}
\mathcal{C} = \left( \prod_{\substack{u \in \overline{O} \\ \lambda(u) \notin \{X, Y, Z\}}} \bra{+_{\lambda(u), 0}}_u \right) \left( \prod_{\substack{u \in \overline{O} \\ \lambda(u) \in \{X, Y, Z\}}} \bra{+_{\lambda(u), \alpha(u)}}_u \right) E_G N_{\overline{I}}
\end{equation}

We can characterise this by finding its isometry tableau. Recall from Section \ref{sec:PDDAG} that the isometry tableau generalises variants of the stabilizer tableau representation from previous literature for both stabilizer states \cite{VandenNest2004}, where the rows represent generators for the stabilizer group, and unitaries \cite{Aaronson2008}, where the rows represent the actions of the unitary on $Z_u$ and $X_u$ for each input $u$ - that is, the operator $\mathbf{P}$ such that $\mathbf{P} \mathcal{C} Z_u = \mathcal{C}$. It is enough for us to find such $Z_u$ and $X_u$ actions for each input along with $|O| - |I|$ generators for the free stabilizers over the outputs.

Since $Z$ on an input will commute through the \CZ~gates, we can get the corresponding stablizer row from the primary extraction string assuming $P^{\bot u} = Z$ for this input. This is guaranteed since the correction set cannot contain any inputs, including itself.


We can use the same trick to obtain the $X$ rows from primary extraction strings. In this case, we find an altered pattern $(\Gamma', \alpha')$ such that $\mathcal{C} \approx \mathcal{C}' H_u$. The Hadamard maps $X$ to a $Z$, meaning we can use the extraction string with respect to $(\Gamma', \alpha')$. One way to achieve this is by input extension.

\begin{definition}[Input Extension]
Given an open graph $(G, I, O)$, \textit{input extension} about $u \in I$ adds a new vertex $u'$ that is only connected to $u$ and replaces $u$ in the input set. Formally, $V' = V \cup \{u'\}$, $E' = E \cup \{u' \sim u\}$, $I' = (I \cup \{u'\}) \setminus \{u\}$.
\end{definition}

\begin{lemma}\label{lemma:InputExtensionSemantics}
Let $(\Gamma, \alpha)$ describe a measurement pattern with some chosen input $u \in I$. Taking an input extension about $u$ gives an equivalent measurement pattern $(\Gamma', \alpha')$ assuming $\lambda' = \lambda \cup \{u' \mapsto XY\}$, $\alpha' = \alpha \cup \{u' \mapsto 0\}$, and a Hamadard gate is applied to input $u'$ before the start of the pattern.
\end{lemma}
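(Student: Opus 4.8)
The plan is to verify that the two measurement patterns $(\Gamma, \alpha)$ and $(\Gamma', \alpha')$ compute the same linear map up to the stated Hadamard correction, by unfolding the definition of the pattern semantics (Equation~\ref{eq:PatternSemantics}) on the extended side and showing it collapses to the original. First I would write out $M_{\Gamma', \alpha'}$ explicitly: it differs from $M_{\Gamma, \alpha}$ only by the presence of the fresh vertex $u'$, which contributes a preparation $N_{u'}$ (since $u' \notin I'$), a new \CZ\ edge $E_{u u'}$, and a measurement $\bra{+_{XY, 0}}_{u'} = \bra{+_X}_{u'}$. So the claim reduces to the local identity
\begin{equation*}
\bra{+_X}_{u'} E_{u u'} N_{u'} \;\approx\; H_u
\end{equation*}
acting on the pair of qubits $u, u'$, read as a map that consumes $u'$. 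I would check this by direct computation: $E_{uu'} N_{u'}$ sends a qubit on wire $u$ in state $\ket{\psi}$ to the two-qubit graph-state-like vector $\tfrac{1}{\sqrt 2}(\ket{0}_{u'}\ket{\psi} + \ket{1}_{u'} Z_u\ket{\psi})$, and then projecting $u'$ onto $\bra{+_X} = \tfrac{1}{\sqrt2}(\bra 0 + \bra 1)$ yields $\tfrac12(\ket\psi + Z_u\ket\psi)$ — which is not quite Hadamard, so the correct intermediate is that $\bra{+_X}_{u'} E_{uu'} N_{u'}$ equals $\tfrac{1}{\sqrt2}$ times the projector onto $\ket{+}_u$ along... better to recompute: the standard identity is that $\bra{+_X}_{u'} CZ_{uu'} \ket{+}_{u'}$, viewed as an operator on $u$, equals $H_u$ up to normalisation and global phase. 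This is the well-known ``Hadamard as measured graph state edge'' fact, and it is exactly the building block of the one-way model; I would cite or reprove it in one line using the $\{\ket0,\ket1\}$ basis.

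Next I would assemble the global statement. Since $u'$ is connected only to $u$ in $\Gamma'$, the new commands $N_{u'}$, $E_{uu'}$, $\bra{+_X}_{u'}$ touch only the $u$/$u'$ subsystem, so they commute past every other command in the pattern (all entangling gates $E_{vw}$ with $v, w \neq u'$, all other preparations, and all other measurements act on disjoint qubits or commute with the \CZ\ on edge $uu'$). Hence
\begin{equation*}
M_{\Gamma', \alpha'} \;=\; M_{\Gamma, \alpha} \circ \left( \bra{+_X}_{u'} E_{uu'} N_{u'} \right) \;\approx\; M_{\Gamma, \alpha} \circ H_u,
\end{equation*}
where in the middle expression the factor in parentheses is understood as the map eliminating the ancilla wire $u'$ and leaving a single wire $u$ which then feeds into the original pattern as its input. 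Rearranging, $M_{\Gamma', \alpha'}$ precomposed with $H_u$ on the new input $u'$ equals $M_{\Gamma, \alpha}$, which is precisely the claim that applying a Hadamard to input $u'$ before running $(\Gamma', \alpha')$ reproduces $(\Gamma, \alpha)$.

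I expect the only subtlety — and hence the main obstacle — to be bookkeeping around which vertex plays the role of ``input wire'' in the Choi/operator picture: in $\Gamma'$ the input is $u'$ while the original input $u$ has become an ordinary non-input measured-or-internal vertex carrying label $XY$ with angle $0$, and one must check that the node originally supplying $u$'s input now correctly receives the state from $u'$ through the measured edge. Concretely, one needs that $u$ itself is not measured in $(\Gamma', \alpha')$ in a way that conflicts — but $u$ keeps whatever role it had (it was an input, so in $\Gamma$ it had no measurement; in $\Gamma'$ it is still not in $I'$... ) — so one must be slightly careful and perhaps phrase it as: $u$ in $\Gamma'$ inherits $\lambda, \alpha$ only if it was non-output, otherwise stays an output; in all cases the local identity above handles it. The rest is routine: normalisation constants and global phase are absorbed by the ``$\approx$'' convention used throughout the paper, and commutativity of spatially disjoint commands is immediate from the tensor structure.
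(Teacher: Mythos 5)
Your overall strategy is the right one and matches the paper's: isolate the new commands that $\Gamma'$ adds relative to $\Gamma$, observe they commute past everything else because $u'$ is only connected to $u$, and reduce the whole claim to a single two-qubit ``Hadamard as measured graph-state edge'' identity, which the paper states as $H = \sqrt{2}\,(\bra{+_{XY,0}} \otimes \mathbb{I})\, CZ\, (\mathbb{I} \otimes \ket{+_{X,0}})$. However, you have the input extension backwards, and this is not just a typo: it derails the local computation.

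By definition, $I' = (I \cup \{u'\}) \setminus \{u\}$, so $u' \in I'$ (not $u' \notin I'$) and $u \notin I'$. Consequently the fresh vertex $u'$ is \emph{not} prepared -- it is the new input wire and carries the new measurement $\bra{+_{XY,0}}_{u'}$; it is the old input $u$ that is demoted to a non-input and therefore acquires the preparation $N_u$. The local block you need to analyse is therefore $\bra{+_{X}}_{u'}\, E_{uu'}\, N_{u}$, a map taking the input wire $u'$ to the wire $u$ that feeds into the rest of $\Gamma$. Instead you computed $\bra{+_X}_{u'}\, E_{uu'}\, N_{u'}$, in which $u'$ is both prepared \emph{and} projected and $u$ is left untouched; that operator is the rank-one projector $\tfrac12(I + Z)_u = \ket{0}\!\bra{0}_u$, which is exactly what you obtained and why it ``is not quite Hadamard.'' Your attempted fix, citing $\bra{+_X}_{u'} CZ_{uu'} \ket{+}_{u'}$, is the same wrong expression. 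Putting the preparation on $u$ and the projection on $u'$ gives $\tfrac{1}{\sqrt2}H$ as a map from wire $u'$ to wire $u$, and everything else in your argument then goes through. You should also fix the final remark: in $\Gamma'$ it is $u'$, not $u$, that carries the fresh $(\lambda, \alpha) = (XY, 0)$ label; $u$ keeps whatever label it had in $\Gamma$ (or none, if $u \in O$).
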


\begin{proof}
It is enough to verify the following equation:
\begin{equation}
H = \sqrt{2} \left( \bra{+_{XY, 0}} \otimes \mathbb{I} \right) CZ \left( \mathbb{I} \otimes \ket{+_{X, 0}} \right)
\end{equation}
\end{proof}

In order to actually use this, we need to show that Pauli flows are preserved by input extensions.

\begin{lemma}\label{lemma:InputExtension}(Generalisation of \cite[Lemma 3.8]{Backens2020})
Let $\Gamma$ and $\Gamma'$ be labelled open graphs related by an input extension on vertex $u \in I$ (generating $u' \in I'$). If $\Gamma$ has a Pauli flow, then so does $\Gamma'$.
\end{lemma}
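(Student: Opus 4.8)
The plan is to take a Pauli flow $(p, \prec)$ for $\Gamma$ and construct a Pauli flow $(p', \prec')$ for $\Gamma'$ by specifying, for each measured vertex of $\Gamma'$, its correction set and where the new vertex $u'$ sits in the order. The natural guess is: keep $p'(w) = p(w)$ for every $w \in \overline{O}$ that was already measured in $\Gamma$, set $p'(u')$ to be some small set that corrects the freshly-added $XY$-measured vertex $u'$, and extend $\prec$ so that $u'$ comes last (i.e. $u'$ is $\prec'$-maximal among measured vertices, placed just before the outputs). Since $u'$ is only adjacent to $u$ in $G'$, and $u$ is now a non-input in $\overline{I'}$, the obvious candidate is $p'(u') := \{u\}$: then $\mathrm{Odd}_{G'}(\{u\}) = N_{G'}(u) = N_G(u) \cup \{u'\}$, which contains $u'$ as required by \ref{PF4}, and does not contain $u$, consistent with \ref{PF4}'s $u' \notin p'(u')$.

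First I would verify that this $(p', \prec')$ satisfies all nine conditions. For the pre-existing vertices $w$, conditions \ref{PF4}--\ref{PF9} are unchanged since $p'(w)=p(w)$ and these only reference $w$ itself and $\lambda(w)$, which are untouched; but I must check that the neighbourhood change $E' = E \cup \{u \sim u'\}$ does not spoil \ref{PF1}--\ref{PF3}. The only affected odd-neighbourhood memberships are at $u$ and $u'$: for a set $A \subseteq \overline{I'}$, $\mathrm{Odd}_{G'}(A)$ differs from $\mathrm{Odd}_G(A)$ only possibly at $u$ (if $u' \in A$) and at $u'$ (if $u \in A$). Since $u'$ is a fresh vertex, $u' \notin p(w)$ for any old $w$, so $\mathrm{Odd}_{G'}(p'(w)) = \mathrm{Odd}_G(p(w))$ except possibly at $u'$ itself when $u \in p(w)$; but $u \in I$, and correction sets in the original flow cannot contain inputs, so in fact $\mathrm{Odd}_{G'}(p'(w)) = \mathrm{Odd}_G(p(w))$ exactly, and \ref{PF1}--\ref{PF3} transfer verbatim (with $u'$ not appearing anywhere, and $u \preceq' w$ whenever $u \prec w$). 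For the new vertex $u'$: \ref{PF1} requires $u' \prec' v$ for all $v \in p'(u') = \{u\}$ with $\lambda'(v) \notin \{X,Y\}$; since $u \in I$ it has no label, so this is vacuous — actually here I need to be careful, as $u$ is an input it is also possibly... no, $u \in I \subseteq V$ but $u \notin \overline{O}$ need not hold; if $u \in O$ then $\lambda'(u)$ is undefined and $u$ is not constrained, and in any case we only need $u' \prec' u$ which we can impose by putting $u'$ before all outputs. Condition \ref{PF2} requires $u' \prec' v$ for $v \in \mathrm{Odd}_{G'}(\{u\}) = N_G(u) \cup \{u'\}$ with $v \neq u'$ and $\lambda'(v) \notin \{Y,Z\}$; this is the crux, and it is exactly why we need $u'$ to be $\prec'$-maximal — every such $v$ was measured (or is an output) in $\Gamma$, and making $u'$ later than all of them in $\prec'$ handles it. Condition \ref{PF3} with $v \preceq' u'$, $\lambda'(v) = Y$: since $u'$ is maximal, $v \preceq' u'$ holds for all $v$, and we need $v \in \{u\} \Leftrightarrow v \in N_G(u) \cup \{u'\}$; as $u \notin N_G(u)$ and $v \neq u'$, both sides are "$v = u$", which is false for a $Y$-measured $v$ (inputs have no label), so this holds — wait, it reads $v \in p'(u') \Leftrightarrow v \in \mathrm{Odd}(p'(u'))$, i.e. $(v=u) \Leftrightarrow (v \in N_G(u))$, and for $v$ with $\lambda'(v)=Y$ we have $v \neq u$ (as $u$ unlabelled), so LHS false; RHS is $v \in N_G(u)$ which could be true — so actually \ref{PF3} could fail! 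This forces $p'(u')$ to be chosen more carefully, or the order extended further.

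The main obstacle, then, is precisely this interaction: choosing $p'(u')$ and $\prec'$ so that \ref{PF2} and \ref{PF3} simultaneously hold for $u'$. I expect the resolution is to mimic the corresponding gflow argument (Backens et al., Lemma 3.8): rather than $p'(u') = \{u\}$ outright, one takes $p'(u') = \{u\} \mathbin{\Delta} (\text{something built from the original flow's correction data at } u\text{'s neighbours})$, or — more likely, given the reference to that lemma — one instead uses $p(u)$-style data is not available since $u$ is an input; so the cleaner route is to observe that in $\Gamma$, $u \in I$ so $u$ is never in any correction set, hence no old condition references $u$ via membership, and then place $u'$ maximally while absorbing any $Y$-neighbour obstruction of \ref{PF3} by noting such a neighbour $v$ satisfies $v \prec' u'$ and checking \ref{PF3} is actually a biconditional that must be read in the other direction. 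I would resolve the sign/parity bookkeeping by a short case analysis on whether $u$ has $Y$-labelled neighbours, and if genuine obstructions remain, enlarge $p'(u')$ by a focussed set (available via Lemma~\ref{lemma:Focussing} and Lemma~\ref{lemma:AddFlowsAndFocussed}) to repair \ref{PF3} without disturbing the others. Finally I would note the order-extension is acyclic because $u'$ is only ever added as a successor, never a predecessor, of existing vertices, so $\prec'$ remains a strict partial order, completing the construction.
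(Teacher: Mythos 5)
There is a genuine gap, and it is a sign error in the direction of the partial order. You extend $\prec'$ so that $u'$ is maximal (measured last), and then claim this ``handles'' conditions \ref{PF1} and \ref{PF2}, which demand $u' \prec' v$ for $v \in p'(u')$ and $v \in \mathrm{Odd}_{G'}(p'(u'))$ (modulo Pauli exceptions). But if $u'$ is later than all such $v$ in the strict partial order, then $u' \prec' v$ is \emph{false}, not true. The statement ``making $u'$ later than all of them in $\prec'$ handles it'' is backwards: it is exactly what breaks \ref{PF1}--\ref{PF2}. Having started from this inverted placement, the obstruction you hit at \ref{PF3} is also self-inflicted: it arises only because you have made every $Y$-labelled neighbour of $u$ a $\preceq'$-predecessor of $u'$. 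Your subsequent ``repair'' by enlarging $p'(u')$ with a focussed set doesn't address this --- \ref{PF1} and \ref{PF2} are already violated before \ref{PF3} enters the picture, and a focussed adjustment cannot fix an order constraint. (A secondary, smaller slip: you treat $u$ as ``unlabelled'' because it was an input in $\Gamma$, but $\lambda$ is defined on all non-outputs, so a non-output input $u$ does carry a plane/Pauli label; moreover in $\Gamma'$ the vertex $u$ is no longer an input at all.)

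The correct construction, which the paper uses, keeps $p'(u') = \{u\}$ but takes $\prec'$ to be the transitive closure of $\prec \cup \{(u', w) \mid w \in N_G(u) \cup \{u\}\}$, i.e.\ it places $u'$ \emph{before} $u$ and all of $u$'s original neighbours. Then $u' \prec' w$ holds for every $w \in p'(u') \cup (\mathrm{Odd}_{G'}(p'(u')) \setminus \{u'\})$, which directly gives \ref{PF1} and \ref{PF2}; and no vertex $v$ with $v \preceq' u'$ can lie in $N_G(u) \cup \{u\}$, so \ref{PF3} is vacuous at $u'$. Condition \ref{PF4} holds because $u' \notin \{u\}$ and $u' \in N_{G'}(u) = \mathrm{Odd}_{G'}(\{u\})$. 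Your verification that the old vertices are unaffected is fine and for the right reason ($u \in I$ cannot appear in any $p(w)$, so $\mathrm{Odd}_{G'}(p(w)) = \mathrm{Odd}_G(p(w))$ and no old condition mentions $u'$); that part matches the paper. The missing idea is simply that the new vertex must be placed at the start, not the end, of the measurement order.
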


\begin{proof}
Suppose $\Gamma$ has a Pauli flow $(p, \prec)$. Let $p' = p \cup \{u' \mapsto \{u\} \}$ and $\prec'$ be the transitive closure of $\prec \cup \{(u', w) | w \in N_G(u) \cup \{u\}\}$.

For any $v \in V \setminus O$, $u' \notin p'(v) = p(v)$ and $u' \notin \mathrm{Odd}_{G'}(p'(v)) = \mathrm{Odd}_{G}(p(v))$ since its only neighbour $u$ is an input in $\Gamma$ which could not appear in any correction sets from $p$. This allows us to inherit all of the Pauli flow properties from $(p, \prec)$ for $v$ as they have remained unchanged.

For $u'$, the definition of $\prec'$ guarantees conditions \ref{PF1}-\ref{PF3} since $u' \prec' w$ for every $w \in N_G(u) \cup \{u\} = (\mathrm{Odd}_{G'}(p'(u')) \setminus \{u'\}) \cup p'(u')$. From the construction of $p'(u')$, we satisfy condition \ref{PF4}, and the remainder hold trivially.
\end{proof}

The remaining stabilizer generators can be obtained from the focussed sets of the measurement pattern.


\begin{lemma}\label{lemma:StabsFromFocussedSets}
Given a measurement pattern $(\Gamma, \alpha)$ where the labelled open graph $\Gamma = (G, I, O, \lambda)$ has a focussed set $\hat{p}$, then $\hat{p}$ determines a stabilizer $\mathbf{\hat{P}}$ of the linear map:
\begin{equation}
\left( \prod_{\substack{w \in \overline{O} \\ \lambda(w) \notin \{X, Y, Z\}}} \bra{+_{\lambda(w), 0}}_w \right) \left( \prod_{\substack{w \in \overline{O} \\ \lambda(w) \in \{X, Y, Z\}}} \bra{+_{\lambda(w), \alpha(w)}}_w \right) E_G N_{\overline{I}}
\end{equation}
\end{lemma}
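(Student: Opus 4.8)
The plan is to mimic the proof of Lemma~\ref{lemma:ExtractionFromFocussedProof}, but without the final projection/absorption on a distinguished vertex $v$, since a focussed set $\hat{p}$ need not be tied to any particular measured vertex. First I would use the graph state stabilizers of Equation~\eqref{eq:GraphStateStabilizer}: since $\hat{p} \subseteq \overline{I}$, the product $\prod_{u \in \hat{p}} \left( \left(\prod_{w \in N_G(u)} Z_w\right) X_u \right)$ is a product of stabilizers of $E_G N_{\overline{I}}$, and collecting the $X$ and $Z$ contributions gives, up to a sign, $\left(\prod_{u \in \hat{p}} X_u\right)\left(\prod_{u \in \mathrm{Odd}(\hat{p})} Z_u\right) E_G N_{\overline{I}} = (-1)^a E_G N_{\overline{I}}$ where $a = |E \cap (\hat{p} \times \hat{p})|$ counts the edges internal to $\hat{p}$ (one reorder of $Z$ past $X$ per such edge).

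Next I would rewrite coincident $X_u Z_u$ pairs as $Y_u$. By Lemma~\ref{lemma:EvenImaginary}'s underlying handshake argument (the subgraph on $\hat{p}$ has an even number of odd-degree vertices), $|\hat{p} \cap \mathrm{Odd}(\hat{p})|$ is even, so substituting $Y = iXZ$ on all such vertices only introduces a further sign, leaving $X_u$ on $\hat{p} \setminus \mathrm{Odd}(\hat{p})$, $Z_u$ on $\mathrm{Odd}(\hat{p}) \setminus \hat{p}$, and $Y_u$ on $\hat{p} \cap \mathrm{Odd}(\hat{p})$. Then I would commute this Pauli operator through all the projections $\bra{+_{\lambda(w), 0}}_w$ (for planar $w$) and $\bra{+_{\lambda(w), \alpha(w)}}_w$ (for Pauli $w$): by the focussed conditions \ref{FOC1}--\ref{FOC3} (with $S = \overline{O}$), every measured vertex in $\hat{p}\setminus\mathrm{Odd}(\hat{p})$ has $\lambda(w) \in \{XY, X\}$ (for the $X_w$ term to be absorbable we need $\lambda(w)=X$, but an $X_w$ commutes with an $XY$-projection at angle~$0$ up to a sign, and a $Y$-labelled one is excluded since $w \notin \mathrm{Odd}(\hat p)$), similarly $\mathrm{Odd}(\hat{p})\setminus\hat{p}$ has $\lambda(w) \in \{XZ, Z\}$, and $\hat{p}\cap\mathrm{Odd}(\hat{p})$ has $\lambda(w) = Y$ by \ref{FOC3}. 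In each case the Pauli either commutes with the projection at angle~$0$ (up to sign) or is absorbed by Equation~\eqref{eq:PostEigen} (again up to a sign depending on $\alpha(w) \in \{0,\pi\}$), so every measured-vertex factor is removed, leaving only operators supported on $O$. The accumulated sign is then tracked exactly as in the proof of Lemma~\ref{lemma:ExtractionFromFocussedProof}.

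The resulting operator $\mathbf{\hat P} := (-1)^{c}\left(\prod_{u \in (\hat{p}\setminus\mathrm{Odd}(\hat{p}))\cap O} X_u\right)\left(\prod_{u \in (\mathrm{Odd}(\hat{p})\setminus\hat{p})\cap O} Z_u\right)\left(\prod_{u \in \hat{p}\cap\mathrm{Odd}(\hat{p})\cap O} Y_u\right)$ is supported entirely on the outputs and satisfies $\mathbf{\hat P}\,\mathcal{C} = \mathcal{C}$ for the stated $\mathcal{C}$, i.e.\ it is a stabilizer of the linear map. The main obstacle I expect is purely bookkeeping: carefully justifying the ``commute past an angle-$0$ planar projection up to a sign'' step for each plane (e.g.\ $X_w \bra{+_{XY,0}}_w = \bra{+_{XY,0}}_w X_w$, $Z_w\bra{+_{XZ,0}}_w = \bra{+_{XZ,0}}_w Z_w$, etc.) and confirming that the focussed conditions genuinely rule out every incompatible label, so that no measured-qubit Pauli survives. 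This is exactly the same casework as in Lemma~\ref{lemma:ExtractionFromFocussedProof} restricted to the situation without the extra $P^{\bot v}$ factor on $v$, so once that correspondence is spelled out the proof is essentially a specialisation of the earlier argument.
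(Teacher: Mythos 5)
Your proposal matches the paper's approach exactly: the paper's proof of this lemma is a one-sentence deferral to the argument of Lemma~\ref{lemma:ExtractionFromFocussedProof}, and what you have written is precisely that argument specialised to the case without a distinguished vertex $v$ and without the $P^{\bot v}_v$ factor. The construction (build the graph-state stabilizer from $\hat p$, tidy up $X/Z$ into $Y$ using the even-degree handshake, absorb every measured-qubit Pauli into its projection using the focussed conditions, and track signs) is correct. One small slip worth fixing for the record: you state that measured vertices in $\mathrm{Odd}(\hat p)\setminus\hat p$ have $\lambda(w)\in\{XZ,Z\}$, but \ref{FOC2} together with \ref{FOC3} only narrows this to $\{XZ,YZ,Z\}$; the $YZ$ case is equally harmless since $\bra{+_{YZ,0}}_w = \bra{0}_w$ is a $+1$-eigenvector of $Z_w$, so the absorption step still goes through, but the $YZ$ label should be listed.
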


\begin{proof}
This follows similarly to Lemma \ref{lemma:ExtractionFromFocussed}. After applying the measurement projections, the only Pauli terms that are not absorbed into the projections are over the output qubits.
\end{proof}

Lemma \ref{lemma:NumberOfFocussedSets} implies that we can get $|O|-|I|$ generators for the group of focussed sets, and we are looking for $|O|-|I|$ generators for the free stabilizers. For this to work, we need the map from Lemma \ref{lemma:StabsFromFocussedSets} to be a bijection (i.e. the generators for the focussed sets don't degenerate when mapped into stabilizers). We go one step further by showing that it also preserves the group action as a result of Lemmas \ref{lemma:MultiplyStrings} and \ref{lemma:EqualStrings}.


\begin{lemma}\label{lemma:MultiplyStrings}
Let $\Gamma$ be a labelled open graph. For any set of vertices $S \subseteq \overline{I}$, denote the output substring of the stabilizer as $\mathbf{P}^S = \left( \prod_{w \in O \cap S} X_w \right) \left( \prod_{w \in O \cap \mathrm{Odd}(S)} Z_w \right)$. Then for any $S, T \subseteq \overline{I}$ we have $\mathbf{P}^S \mathbf{P}^T \approx \mathbf{P}^{S \Delta T}$.
\end{lemma}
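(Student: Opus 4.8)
The plan is to work directly from the definitions of $\mathbf{P}^S$, $\mathbf{P}^T$, and $\mathbf{P}^{S\Delta T}$ as products of Pauli operators over the output vertices, and to verify the equality qubit-by-qubit up to sign (the $\approx$ accounts for a possible global phase, which is all that can arise since each operator in the product squares to the identity and Paulis on a single qubit multiply to another Pauli up to a fourth root of unity).

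First I would record the key set-theoretic fact that $\mathrm{Odd}(S \Delta T) = \mathrm{Odd}(S) \Delta \mathrm{Odd}(T)$, which holds because the map $A \mapsto \mathrm{Odd}(A)$ is linear over $\mathbb{F}_2$ (it is multiplication by the adjacency matrix). Then $O \cap (S \Delta T) = (O \cap S)\,\Delta\,(O \cap T)$ and $O \cap \mathrm{Odd}(S \Delta T) = (O \cap \mathrm{Odd}(S))\,\Delta\,(O \cap \mathrm{Odd}(T))$. So the $X$-support of $\mathbf{P}^{S\Delta T}$ is the symmetric difference of the $X$-supports of $\mathbf{P}^S$ and $\mathbf{P}^T$, and likewise for the $Z$-supports.

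Next I would fix an output vertex $w$ and compare the single-qubit operator that $\mathbf{P}^S \mathbf{P}^T$ places on $w$ with the one $\mathbf{P}^{S\Delta T}$ places on $w$. On qubit $w$, $\mathbf{P}^S$ contributes $X^{a}Z^{b}$ with $a = [w \in O\cap S]$, $b = [w \in O\cap\mathrm{Odd}(S)]$, and $\mathbf{P}^T$ contributes $X^{c}Z^{d}$ similarly; their product is $X^aZ^bX^cZ^d = \pm X^{a\oplus c}Z^{b\oplus d}$, which matches the $w$-component of $\mathbf{P}^{S\Delta T}$ up to sign by the symmetric-difference computation above. Collecting the (finitely many) local signs, together with the reordering needed to bring $\mathbf{P}^S\mathbf{P}^T$ into the canonical ``all $X$'s then all $Z$'s'' form of $\mathbf{P}^{S\Delta T}$, yields a single overall scalar of unit modulus, giving $\mathbf{P}^S\mathbf{P}^T \approx \mathbf{P}^{S\Delta T}$.

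The only mildly delicate point — the ``main obstacle'' such as it is — is bookkeeping the signs: one must be careful that all the discrepancies ($XZ$ versus $ZX$ on shared qubits, and the reordering of the interleaved $X$ and $Z$ factors coming from the two different products) combine into a genuine global scalar rather than something qubit-dependent. Since the statement only asserts equality up to global phase ($\approx$), this is harmless, and I would simply remark that the precise scalar is $\pm 1$ or $\pm i$ without tracking it, exactly as in the analogous sign arguments in the proof of Lemma~\ref{lemma:ExtractionFromFocussedProof}.
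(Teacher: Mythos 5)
Your proof is correct and rests on the same two facts as the paper's: the $\mathbb{F}_2$-linearity of $\mathrm{Odd}$ (so $\mathrm{Odd}(S\Delta T) = \mathrm{Odd}(S)\Delta\mathrm{Odd}(T)$) and the anticommutativity of $X$ and $Z$, with the $\approx$ absorbing the resulting global sign. The only difference is presentational — you verify qubit-by-qubit, while the paper commutes the $Z$-block of $\mathbf{P}^S$ past the $X$-block of $\mathbf{P}^T$ in one step and then squares — which is the same argument in a different bookkeeping.
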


\begin{proof}
This follows from the anticommutativity of $X$ and $Z$ and $\mathrm{Odd}(S \Delta T) = \mathrm{Odd}(S) \Delta \mathrm{Odd}(T)$.

\pagebreak
\begin{equation}
\begin{split}
\mathbf{P}^S \mathbf{P}^T &= \left( \prod_{w \in O \cap S} X_w \right) \left( \prod_{w \in O \cap \mathrm{Odd}(S)} Z_w \right) \left( \prod_{w \in O \cap T} X_w \right) \left( \prod_{w \in O \cap \mathrm{Odd}(T)} Z_w \right) \\
&\approx \left( \prod_{w \in O \cap S} X_w \right) \left( \prod_{w \in O \cap T} X_w \right) \left( \prod_{w \in O \cap \mathrm{Odd}(S)} Z_w \right) \left( \prod_{w \in O \cap \mathrm{Odd}(T)} Z_w \right) \\
&= \left( \prod_{w \in O \cap (S \Delta T)} X_w \right) \left( \prod_{w \in O \cap \mathrm{Odd}(S \Delta T)} Z_w \right) \\
&= \mathbf{P}^{S \Delta T}
\end{split}
\end{equation}
\end{proof}


\begin{lemma}\label{lemma:EqualStrings}
For any non-zero linear map $\mathcal{C}$ with stabilizers $A$ and $B$, if $A$ and $B$ share the same Pauli string then $A = B$.
\end{lemma}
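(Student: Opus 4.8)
The plan is to show that if $A$ and $B$ are stabilizers of a non-zero linear map $\mathcal{C}$ that share the same Pauli string, then they must have the same sign, and hence be equal. The statement amounts to saying that a given Pauli string can stabilize $\mathcal{C}$ with at most one choice of sign. The natural approach is by contradiction: suppose $A$ and $B$ have the same underlying Pauli string $P$ but opposite signs, so (up to rescaling) $A = P$ and $B = -P$. Then both $P\mathcal{C} = \mathcal{C}$ and $-P\mathcal{C} = \mathcal{C}$ hold.

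\textbf{Key steps.} First I would unpack what ``stabilizer'' means here: $A\mathcal{C} = \mathcal{C}$ and $B\mathcal{C} = \mathcal{C}$. Writing $A = s_A P$ and $B = s_B P$ with signs $s_A, s_B \in \{+1, -1\}$ (or more generally phases $\pm 1$, $\pm i$, though sharing a real Pauli string they are $\pm 1$), we get $s_A P \mathcal{C} = \mathcal{C}$ and $s_B P \mathcal{C} = \mathcal{C}$. Subtracting (or chaining the equalities) gives $s_A P \mathcal{C} = s_B P \mathcal{C}$, hence $(s_A - s_B) P \mathcal{C} = 0$. Since $P$ is a Pauli string it is invertible (indeed unitary), so $P\mathcal{C} = 0$ unless $s_A = s_B$. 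But $P\mathcal{C} = 0$ forces $\mathcal{C} = P^{-1} \cdot 0 = 0$, contradicting the hypothesis that $\mathcal{C}$ is non-zero. Therefore $s_A = s_B$, and since the Pauli strings agree by assumption, $A = B$.

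\textbf{Main obstacle.} There is essentially no obstacle here — this is a one-line linear-algebra fact, and the only subtlety is being careful about exactly which scalars can multiply a ``Pauli string'' in the sign convention used for the isometry tableau (the paper's tableaux track a sign in $\{+, -\}$, so $A$ and $B$ differ by a factor in $\{+1, -1\}$ when their strings match). Once that is pinned down, invertibility of Pauli operators does all the work. The only thing to state clearly is why $P\mathcal{C} = 0 \Rightarrow \mathcal{C} = 0$, which is immediate from $P$ being invertible.
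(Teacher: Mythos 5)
Your proof is correct and is essentially the same argument as the paper's: both observe that $A$ and $B$ can only differ by a scalar, and that a non-trivial scalar would force $\mathcal{C}=0$. The paper phrases this via the stabilizer group ($AB = e^{i\theta}I$ is a stabilizer, hence $\mathcal{C} = e^{i\theta}\mathcal{C}$), while you reach $\mathcal{C}=0$ more directly from invertibility of $P$ — a cosmetic difference, not a different route.
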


\begin{proof}
If $A$ and $B$ share the same Pauli string, they can only differ by phase, so $B = e^{i\theta} A$. Since stabilizers form a group, $AB = A(e^{i\theta} A) = e^{i\theta}I$ is also a stabilizer, i.e. $\mathcal{C} = e^{i\theta} \mathcal{C}$. $e^{i\theta} \neq 1$ implies $\mathcal{C}$ is a zero map, so we must have $A = B$.
\end{proof}

Because the groups of focussed sets and free stabilizers have the same finite cardinality, injectivity is enough to show that we have an isomorphism. We can prove this by showing that only the empty set gets mapped to the identity.


\begin{lemma}\label{lemma:AntiCommutingStringsProof}
Let $(\Gamma, \alpha)$ describe a measurement pattern with a focussed Pauli flow $(p, \prec)$. Then for any vertices $u, v \in \overline{O}$ the primary extraction strings satisfy
\begin{equation}
\mathbf{P}^{\bot u} \mathbf{P}^{\bot v} = (-1)^{F_{u \to v}^p + F_{v \to u}^p} \mathbf{P}^{\bot v} \mathbf{P}^{\bot u}
\end{equation}
where $F_{x \to y}^p := |\{y\} \cap (p(x) \cup \mathrm{Odd}(p(x)))|$ indicates whether $y$ is used in the correction of $x$. Similarly, given a focussed set $\hat{p}$, its stabilizer satisfies
\begin{equation}
\mathbf{P}^{\bot u} \mathbf{\hat{P}} = (-1)^{G_{\hat{p} \to u}} \mathbf{\hat{P}} \mathbf{P}^{\bot u}
\end{equation}
where $G_{\hat{p} \to u} = |\{u\} \cap (\hat{p} \cup \mathrm{Odd}(\hat{p}))|$ indicates whether $u$ is used in the generation of the stabilizer.
\end{lemma}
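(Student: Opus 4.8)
The plan is to realise each extraction string as the restriction to $O$ of a product of graph-state stabilizer generators, use that such products pairwise commute, and then locate the source of non-commutation on $O$ by comparing the two Pauli strings position by position on the rest of $V$. Throughout, recall that for Pauli strings $A,B$ one has $AB=(-1)^sBA$ with $s$ the number of positions at which their single-qubit factors anticommute (i.e.\ are both non-identity and distinct), and that this exponent is additive over any partition of the qubit set.

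For $w\in\overline{I}$ let $S_w := \left(\prod_{z\in N_G(w)} Z_z\right) X_w$ be the canonical stabilizer generator of Equation \ref{eq:GraphStateStabilizer}, and for $x\in\overline{O}$ put $\mathbf{Q}^x := \prod_{w\in p(x)} S_w$. Expanding the product shows $\mathbf{Q}^x$ is, up to sign, the Pauli string carrying $X$ on $p(x)\setminus\mathrm{Odd}(p(x))$, $Z$ on $\mathrm{Odd}(p(x))\setminus p(x)$, $Y$ on $p(x)\cap\mathrm{Odd}(p(x))$, and the identity elsewhere. The proof of Lemma \ref{lemma:ExtractionFromFocussedProof} shows, using focussedness, that the component of $\mathbf{Q}^x$ at $x$ is $P^{\bot x}$, its restriction to $O$ is $\mathbf{P}^{\bot x}$ (up to sign), and its component at every other measured vertex $w$ lies in $\{I, \mu(w)\}$, where $\mu(w)$ is the Pauli given by $\mu(w)=X$ if $\lambda(w)\in\{XY,X\}$, $\mu(w)=Z$ if $\lambda(w)\in\{XZ,YZ,Z\}$, and $\mu(w)=Y$ if $\lambda(w)=Y$ (this is the Pauli whose eigenstate is the projection applied at $w$, so that $\mathbf{Q}^x$'s contribution there is absorbed). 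The analogous construction for a focussed set, $\hat{\mathbf{Q}} := \prod_{w\in\hat{p}} S_w$, has restriction to $O$ equal to $\mathbf{\hat{P}}$ (up to sign) and component at every measured vertex $w$ in $\{I,\mu(w)\}$, now with no exceptional vertex since $\hat{p}$ is focussed over all of $\overline{O}$.

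Since the $S_w$ pairwise commute, $\mathbf{Q}^u$ and $\mathbf{Q}^v$ commute, so they anticommute at an even number of positions of $V = O \sqcup \{u\} \sqcup \{v\} \sqcup (\overline{O}\setminus\{u,v\})$. On $O$ this count is exactly the sign exponent relating $\mathbf{P}^{\bot u}\mathbf{P}^{\bot v}$ to $\mathbf{P}^{\bot v}\mathbf{P}^{\bot u}$. On each $w\in\overline{O}\setminus\{u,v\}$ both operators carry an element of $\{I,\mu(w)\}$, so they commute there and contribute $0$. At $u$, $\mathbf{Q}^u$ carries $P^{\bot u}$ while $\mathbf{Q}^v$ carries $\mu(u)$ if $u\in p(v)\cup\mathrm{Odd}(p(v))$ and the identity otherwise; granting that $P^{\bot u}$ and $\mu(u)$ anticommute, the contribution at $u$ is $F_{v \to u}^p$, and symmetrically the contribution at $v$ is $F_{u \to v}^p$. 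Hence the exponent on $O$ is congruent to $F_{u \to v}^p + F_{v \to u}^p$ modulo $2$, the first identity (the case $u=v$ being trivial). Repeating with $\hat{\mathbf{Q}}$ in place of $\mathbf{Q}^v$ — there is now no distinguished vertex $v$, the measured vertices other than $u$ contribute $0$, and the contribution at $u$ is $G_{\hat{p} \to u}$ — gives the second identity.

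The one genuine verification, and the main obstacle, is that $P^{\bot u}$ anticommutes with $\mu(u)$; as distinct non-identity single-qubit Paulis always anticommute, it is enough to see they are distinct. This is a short case split on $\lambda(u)$. For planar $\lambda(u)$, conditions \ref{PF4}--\ref{PF6} give $P^{\bot u}$ equal to $Z$, $Y$, $X$ for $\lambda(u)=XY,XZ,YZ$ respectively, while $\mu(u)$ equals $X$, $Z$, $Z$ respectively. For $\lambda(u)\in\{X,Y,Z\}$, conditions \ref{PF7}--\ref{PF9} force respectively $u\in\mathrm{Odd}(p(u))$, exactly one of $u\in p(u)$ and $u\in\mathrm{Odd}(p(u))$, and $u\in p(u)$; so $P^{\bot u}$ lies in $\{Y,Z\}$, $\{X,Z\}$, $\{X,Y\}$ respectively, each disjoint from the corresponding $\mu(u)\in\{X\}$, $\{Y\}$, $\{Z\}$. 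Finally, that $\mathbf{Q}^v$ (resp.\ $\hat{\mathbf{Q}}$) carries $\mu(u)$ rather than some other non-identity Pauli at $u$, when its component there is non-trivial, is exactly the focussedness argument of Lemma \ref{lemma:ExtractionFromFocussedProof} read off at $u\in\overline{O}$; the remaining steps are routine symplectic bookkeeping with Pauli strings.
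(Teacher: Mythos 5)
Your argument is correct, and it reaches the same conclusion by a different route than the paper's. The paper proves this by constructing the auxiliary map $\mathcal{C}$ (the pattern with both $u$ and $v$ left unprojected), showing that extended operators $(-1)^{A_{u \to v}^p} P^{\bot u}_u P^{u \to v}_v \mathbf{P}^{\bot u}$ and $(-1)^{A_{v \to u}^p} P^{v \to u}_u P^{\bot v}_v \mathbf{P}^{\bot v}$ are stabilizers of $\mathcal{C}$, invoking the fact that the stabilizer group of a non-zero map is abelian, and then extracting the sign by commuting the two through each other. You instead work directly with the products $\mathbf{Q}^x = \prod_{w \in p(x)} S_w$ of graph-state generators, which commute on the nose because the $S_w$ do, and then count anticommuting positions across the partition $O \sqcup \{u\} \sqcup \{v\} \sqcup (\overline{O}\setminus\{u,v\})$. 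The crucial ingredient — that $P^{\bot x}$ is orthogonal to the measured Pauli/plane at $x$, so $P^{\bot u}$ anticommutes with $\mu(u)$ whenever $\mathbf{Q}^v$ is non-identity there — is identical in both proofs; the paper phrases it as ``$P^{\bot x}$ anticommutes with any $P^{y \to x} \neq I$'' and your case split on $\lambda(u)$ verifies exactly that. What your route buys is independence from the stabilizer-of-$\mathcal{C}$ machinery: you avoid having to re-derive (via Lemma \ref{lemma:ExtractionFromFocussed}) that the extended operators stabilize an auxiliary map, since the commutativity you need is already manifest at the level of the raw Pauli strings. The cost is that you rely on the position-by-position decomposition of the commutation sign, which is elementary but needs to be stated; the paper instead uses the abelian-group fact as a black box. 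Both handle the $u=v$ edge case trivially, and both derivations are insensitive to the overall signs on $\mathbf{P}^{\bot u}$ and $\mathbf{P}^{\bot v}$, which is what makes the ``up to sign'' qualifications in your second paragraph harmless.
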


\begin{proof}
Since $\mathbf{P}^{\bot u}$ and $\mathbf{P}^{\bot v}$ are tensor products of Pauli matrices, they must either commute or anticommute. Consider the linear map $\mathcal{C}$ given by:
\begin{equation}
\mathcal{C} := \left( \prod_{\substack{w \in \overline{O} \setminus \{u, v\} \\ \lambda(w) \notin \{X, Y, Z\} \\ u \prec w \vee v \prec w}} \bra{+_{\lambda(w), 0}}_w \right) \left( \prod_{\substack{w \in \overline{O} \setminus \{u, v\} \\ \lambda(w) \in \{X, Y, Z\}}} \bra{+_{\lambda(w), \alpha(w)}}_w \right) E_G N_{\overline{I}}
\end{equation}

This may not be in the exact form needed to introduce extraction strings for $u$ and $v$ if they use each other for corrections. Let $P^{x \to y}$ be the Pauli induced on $y$ when correcting $x$.

\pagebreak
\begin{equation}
P^{x \to y} := \begin{cases}
I & \text{if } y \notin p(x) \cup \mathrm{Odd}(p(x)) \\
X & \text{if } y \in p(x) \setminus \mathrm{Odd}(p(x)) \\
Y & \text{if } y \in p(x) \cap \mathrm{Odd}(p(x)) \\
Z & \text{if } y \in \mathrm{Odd}(p(x)) \setminus p(x)
\end{cases}
\end{equation}

We can still follow the construction in the proof of Lemma \ref{lemma:ExtractionFromFocussed} to deduce that $(-1)^{A_{u \to v}^p} P^{\bot u}_u P^{u \to v}_v \mathbf{P}^{\bot u}$ and $(-1)^{A_{v \to u}^p} P^{v \to u}_u P^{\bot v}_v \mathbf{P}^{\bot v}$ are stabilizers of $\mathcal{C}$ where

\begin{equation}
A_{x \to y}^p := \begin{cases}
F_{x \to y}^p & \text{if } \lambda(y) \in \{X, Y, Z\} \wedge \alpha(y) = \pi \\
0 & \text{otherwise}
\end{cases}
\end{equation}

Comparing conditions \ref{PF4}-\ref{PF9} and the focussed conditions, we see that, for any vertex $x \in \overline{O}$, $P^{\bot x}$ is orthogonal to the measurement plane/Pauli of $x$ and hence anticommutes with any $P^{y \to x} \neq I$. Equationally, $P^{\bot x} P^{y \to x} = (-1)^{F_{y \to x}^p} P^{y \to x} P^{\bot x}$.

The stabilizer group of $\mathcal{C}$ is abelian, so we have

\begin{equation}
\begin{split}
& (-1)^{A_{u \to v}^p} P^{\bot u}_u P^{u \to v}_v \mathbf{P}^{\bot u} (-1)^{A_{v \to u}^p} P^{v \to u}_u P^{\bot v}_v \mathbf{P}^{\bot v} \\
=& (-1)^{A_{v \to u}^p} P^{v \to u}_u P^{\bot v}_v \mathbf{P}^{\bot v} (-1)^{A_{u \to v}^p} P^{\bot u}_u P^{u \to v}_v \mathbf{P}^{\bot u} \\
=& (-1)^{F_{u \to v}^p + F_{v \to u}^p} (-1)^{A_{u \to v}^p} P^{\bot u}_u P^{u \to v}_v \mathbf{P}^{\bot v} (-1)^{A_{v \to u}^p} P^{v \to u}_u P^{\bot v}_v \mathbf{P}^{\bot u}
\end{split}
\end{equation}
i.e. $\mathbf{P}^{\bot u} \mathbf{P}^{\bot v} = (-1)^{F_{u \to v}^p + F_{v \to u}^p} \mathbf{P}^{\bot v} \mathbf{P}^{\bot u}$.

The proof for focussed sets is virtually the same, though we only need care about anticommuting Paulis on $u$.
\end{proof}


\begin{lemma}\label{lemma:EmptyFocussedSets}
Given a labelled open graph $\Gamma = (G, I, O, \lambda)$ with a focussed Pauli flow and any focussed set $\hat{p}$, then $(\hat{p} \cup \mathrm{Odd}(\hat{p})) \cap O = \emptyset \Leftrightarrow \hat{p} = \emptyset$.
\end{lemma}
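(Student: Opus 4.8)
The plan is to handle the two directions of the biconditional separately, with the backward direction being immediate and the forward direction being the substantive one, proved by an ``injectivity'' argument: a focussed set whose Pauli trace on the outputs is trivial cannot interact with any vertex at all, hence must be empty. For the easy direction $\hat{p} = \emptyset \Rightarrow (\hat{p} \cup \mathrm{Odd}(\hat{p})) \cap O = \emptyset$, note $\mathrm{Odd}(\emptyset) = \emptyset$ so the left-hand side is empty outright.

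For the forward direction, assume $(\hat{p} \cup \mathrm{Odd}(\hat{p})) \cap O = \emptyset$. First I would fix any measurement angles $\alpha$ (the all-zero assignment will do, since the claim is purely graph-theoretic) so that $(\Gamma, \alpha)$ is a measurement pattern to which the extraction machinery applies. By Lemma \ref{lemma:StabsFromFocussedSets}, $\hat{p}$ determines a stabilizer $\mathbf{\hat{P}}$ of the associated linear map, and, replaying the absorption computation from the proof of Lemma \ref{lemma:ExtractionFromFocussedProof}, the Pauli string of $\mathbf{\hat{P}}$ is supported entirely on $(\hat{p} \cup \mathrm{Odd}(\hat{p})) \cap O$, which we assumed to be empty; thus $\mathbf{\hat{P}}$ has trivial Pauli string. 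Since $I$ is also a stabilizer of the same nonzero linear map, Lemma \ref{lemma:EqualStrings} forces $\mathbf{\hat{P}} = I$. Now I would invoke the focussed Pauli flow $(p, \prec)$ guaranteed by hypothesis: Lemma \ref{lemma:ExtractionFromFocussedProof} supplies, for every $u \in \overline{O}$, a primary extraction string $\mathbf{P}^{\bot u}$, which is a nonzero Pauli operator over the outputs, and Lemma \ref{lemma:AntiCommutingStringsProof} gives $\mathbf{P}^{\bot u} \mathbf{\hat{P}} = (-1)^{G_{\hat{p} \to u}} \mathbf{\hat{P}} \mathbf{P}^{\bot u}$ with $G_{\hat{p} \to u} = |\{u\} \cap (\hat{p} \cup \mathrm{Odd}(\hat{p}))|$. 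Substituting $\mathbf{\hat{P}} = I$ collapses this to $\mathbf{P}^{\bot u} = (-1)^{G_{\hat{p} \to u}} \mathbf{P}^{\bot u}$; as $\mathbf{P}^{\bot u} \neq 0$ and $G_{\hat{p} \to u} \in \{0,1\}$, this forces $G_{\hat{p} \to u} = 0$, i.e.\ $u \notin \hat{p} \cup \mathrm{Odd}(\hat{p})$. Letting $u$ range over all of $\overline{O}$ yields $(\hat{p} \cup \mathrm{Odd}(\hat{p})) \cap \overline{O} = \emptyset$, and combined with the assumption $(\hat{p} \cup \mathrm{Odd}(\hat{p})) \cap O = \emptyset$ together with $V = O \cup \overline{O}$ we get $\hat{p} \cup \mathrm{Odd}(\hat{p}) = \emptyset$, hence $\hat{p} = \emptyset$.

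The one step that is not a direct citation is the claim that the Pauli string of $\mathbf{\hat{P}}$ is confined to $(\hat{p} \cup \mathrm{Odd}(\hat{p})) \cap O$; I expect this to be the main thing to spell out, by rerunning the argument in the proof of Lemma \ref{lemma:ExtractionFromFocussedProof} in the setting of Lemma \ref{lemma:StabsFromFocussedSets}: combine the graph-state stabilizers indexed by $\hat{p}$, reorganise the resulting $X$/$Z$ terms into $Y$s where they coincide, and then absorb every Pauli term sitting on a measured vertex into its measurement projection using the focussed conditions \ref{FOC1}--\ref{FOC3}, so that only terms on $(\hat{p} \cup \mathrm{Odd}(\hat{p})) \cap O$ survive. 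All remaining bookkeeping with signs is already subsumed by the lemmas cited above.
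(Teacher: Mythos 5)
Your proof is correct and follows the paper's own argument: both reduce the question to the observation that a focussed set with empty output support corresponds to the identity stabilizer (via Lemma~\ref{lemma:StabsFromFocussedSets} and Lemma~\ref{lemma:EqualStrings}), which is incompatible with the anticommutation guaranteed by Lemma~\ref{lemma:AntiCommutingStringsProof} for any vertex of $\hat{p} \cup \mathrm{Odd}(\hat{p})$. The only cosmetic difference is that the paper argues by contradiction using a single witness $v \in \hat{p} \cap \overline{O}$, whereas you directly derive $G_{\hat{p} \to u} = 0$ for every $u \in \overline{O}$ and conclude positively.
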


\begin{proof}
$\hat{p} = \emptyset \Rightarrow (\hat{p} \cup \mathrm{Odd}(\hat{p})) \cap O = \emptyset$ is trivial. For the other direction, suppose for a contradiction that we have a non-empty $\hat{p}$ for which $(\hat{p} \cup \mathrm{Odd}(\hat{p})) \cap O = \emptyset$, so we have some $v \in \hat{p} \cap \overline{O}$ and the corresponding stabilizer is the identity. However, Lemma \ref{lemma:AntiCommutingStringsProof} implies that the stabilizer must anticommute with the primary extraction string $\mathbf{P}^{\bot v}$, contradicting the identity assumption.
\end{proof}


\begin{theorem}\label{thrm:StabBijection}
Let $(\Gamma, \alpha)$ describe a measurement pattern with a focussed Pauli flow. Then the construction from Lemma \ref{lemma:StabsFromFocussedSets} gives an isomorphism between the groups of focussed sets and the stabilizers of the linear map for $(\Gamma, \alpha)$ after removing all planar measurement angles.
\end{theorem}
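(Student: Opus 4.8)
The plan is to exhibit the assignment $\Phi : \hat p \mapsto \mathbf{\hat P}$ from Lemma \ref{lemma:StabsFromFocussedSets} as a group isomorphism from the group of focussed sets of $\Gamma$ (under $\Delta$) onto the group of stabilizers of $\mathcal{C}$, where $\mathcal{C}$ is the stabilizer process left after zeroing every planar measurement angle. I would do this in three steps: $\Phi$ is a well-defined homomorphism, $\Phi$ is injective, and the two groups have the same finite order $2^{|O|-|I|}$ — at which point injectivity together with equal finite order forces $\Phi$ to be a bijection, hence an isomorphism.

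For well-definedness and the homomorphism property: Lemma \ref{lemma:StabsFromFocussedSets}, built along the same lines as Lemma \ref{lemma:ExtractionFromFocussed}, produces a stabilizer of $\mathcal{C}$ whose phase-free Pauli string over the outputs is exactly $\mathbf{P}^{\hat p}$ in the notation of Lemma \ref{lemma:MultiplyStrings} (the $Y = iXZ$ rewrites and reorderings only contribute to a global sign). Since $\mathcal{C}$ is non-zero, Lemma \ref{lemma:EqualStrings} tells us a stabilizer of $\mathcal{C}$ is pinned down by its Pauli string, so $\Phi(\hat p)$ is well defined. Then for focussed sets $\hat p,\hat q$ (whose symmetric difference is again focussed by Lemma \ref{lemma:AddFocussed}), the product $\Phi(\hat p)\Phi(\hat q)$ is a stabilizer of $\mathcal{C}$ with Pauli string $\mathbf{P}^{\hat p}\mathbf{P}^{\hat q} \approx \mathbf{P}^{\hat p \Delta \hat q}$ by Lemma \ref{lemma:MultiplyStrings}; since $\Phi(\hat p \Delta \hat q)$ is the stabilizer with that same string, Lemma \ref{lemma:EqualStrings} yields $\Phi(\hat p)\Phi(\hat q) = \Phi(\hat p \Delta \hat q)$.

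For injectivity, because the domain is an abelian $2$-group it suffices to check the kernel is trivial: if $\Phi(\hat p)$ is the identity operator then $\mathbf{P}^{\hat p}$ is the trivial string, so $O \cap \hat p = \emptyset$ and $O \cap \mathrm{Odd}(\hat p) = \emptyset$, i.e. $(\hat p \cup \mathrm{Odd}(\hat p)) \cap O = \emptyset$, and then Lemma \ref{lemma:EmptyFocussedSets} gives $\hat p = \emptyset$. For the cardinality count I would quote Lemma \ref{lemma:NumberOfFocussedSets} for the $2^{|O|-|I|}$ focussed sets, and observe that, since $\Gamma$ admits a Pauli flow, the pattern is deterministic for any choice of angles and so $\mathcal{C}$ is proportional to a Clifford isometry on $|I| \to |O|$ qubits; hence the stabilizers of $\mathcal{C}$ are exactly the free-output-stabilizer group of its isometry tableau from Section \ref{sec:PDDAG}, which has order $2^{|O|-|I|}$.

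I expect the main obstacle to be purely in the bookkeeping: making sure $\mathbf{\hat P}$ is taken to be the genuine phase-pinned stabilizer (so that symmetric difference on the left really corresponds to multiplication on the right, via Lemmas \ref{lemma:MultiplyStrings} and \ref{lemma:EqualStrings}), and making sure the codomain group genuinely has order $2^{|O|-|I|}$ — which is where one must invoke that the existence of a Pauli flow makes $\mathcal{C}$ an isometry with $|O|-|I|$ independent free stabilizers. There is no single hard lemma here; the theorem is a repackaging of the preceding results.
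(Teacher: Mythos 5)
Your proposal is correct and follows essentially the same route as the paper's proof: the homomorphism property via the \ref{lemma:MultiplyStrings} and Lemma \ref{lemma:EqualStrings}, injectivity via Lemma \ref{lemma:EmptyFocussedSets}, and bijectivity by the cardinality count from Lemma \ref{lemma:NumberOfFocussedSets}. The one place you are more explicit than the paper is in verifying that the codomain group of stabilizers of $\mathcal{C}$ has order exactly $2^{|O|-|I|}$ (via $\mathcal{C}$ being a nonzero Clifford isometry on $|I| \to |O|$ qubits); the paper leaves this step implicit in the phrase ``bijective by cardinality,'' so your filling it in is a small but harmless clarification rather than a deviation.
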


\begin{proof}
The group action is preserved due to Lemmas \ref{lemma:MultiplyStrings} and \ref{lemma:EqualStrings}. By Lemma \ref{lemma:EmptyFocussedSets}, the map is injective (if we had two distinct focussed sets with the same stabilizer, combining them with $\Delta$ would give a non-empty focussed set that maps to the identity stabilizer) and subsequently bijective by cardinality from Lemma \ref{lemma:NumberOfFocussedSets}.
\end{proof}

This isomorphism guarantees that the focussed sets identified by Lemma \ref{lemma:StabAlgorithmProof} actually give a generating set for the stabilizer group, giving a valid isometry tableau. All that remains is to put everything together into a formal procedure.


\begin{theorem}\label{thrm:ExtractionProof}(Restatement of Theorem \ref{thrm:Extraction})
Let $(\Gamma, \alpha)$ describe a measurement pattern where $\Gamma$ has a Pauli flow. Then there is an algorithm that identifies an equivalent circuit requiring no ancillae which completes in time polynomial in the number of vertices in $\Gamma$.
\end{theorem}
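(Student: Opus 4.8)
The plan is to realise the equivalent circuit as a PDDAG — one rotation node per planar-measured vertex, sitting in front of a Clifford process described by an isometry tableau — and then to synthesise that PDDAG into gates. The main ingredients are all already available: Theorem~\ref{thrm:Identification} to find a Pauli flow, Lemma~\ref{lemma:Focussing} to make it focussed, Lemma~\ref{lemma:ExtractionFromFocussed} to turn correction sets into extraction strings, and the \ref{lemma:GadgetStabCorrespondence} to push the measurement rotations onto the outputs. First I would run Algorithm~\ref{algo:FlowIdentify}: by Theorem~\ref{thrm:Identification} this decides in $O(|V|^5)$ time whether $\Gamma$ has a Pauli flow and, if so, returns a maximally delayed one (otherwise the whole procedure reports ``no circuit''). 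Then apply the focussing construction of Lemma~\ref{lemma:Focussing} (a polynomial number of symmetric differences of correction sets, via Lemma~\ref{lemma:FocusStep}) to obtain a focussed Pauli flow $(p, \prec)$, and optionally invoke Lemma~\ref{lemma:InitialPaulis} so that all Pauli-measured vertices sit below every other vertex in $\prec$.

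Next comes the extraction sweep over the planar-measured vertices, taken in $\succ$-order (latest measurement first). When vertex $v$ is reached, all later planar measurements have already been converted to angle-$0$ projections, so the portion of the residual map lying to the right of $v$'s projection is exactly the map $\mathcal{C}$ of Definition~\ref{def:ExtractionStringDef}. By Lemma~\ref{lemma:ExtractionFromFocussed}, $p(v)$ yields a primary extraction string $\mathbf{P}^{\bot v}$ over the outputs with $P^{\bot v}_v\,\mathbf{P}^{\bot v}$ a stabilizer of $\mathcal{C}$. Using Equation~\ref{eq:RotateBasis} I would rewrite $\bra{+_{\lambda(v),\alpha(v)}}_v \approx \bra{+_{\lambda(v),0}}_v\, e^{(-1)^{D_v} i \tfrac{\alpha(v)}{2} P^{\bot v}_v}$, then apply the \ref{lemma:GadgetStabCorrespondence} with $A = P^{\bot v}_v$ and $B = P^{\bot v}_v\,\mathbf{P}^{\bot v}$ (these commute, being supported on $v$ and on the disjoint output qubits, and $B$ stabilises $\mathcal{C}$) to conclude $e^{(-1)^{D_v} i \tfrac{\alpha(v)}{2} P^{\bot v}_v}\,\mathcal{C} = e^{(-1)^{D_v} i \tfrac{\alpha(v)}{2} \mathbf{P}^{\bot v}}\,\mathcal{C}$. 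The new rotation is supported entirely on outputs, so (using the \ref{lemma:CommutationRules} to commute it past the remaining non-output projections) it joins the growing list of output rotations; it is discarded when $\mathbf{P}^{\bot v} = I$, which is what happens for measured vertices in $V_0^\prec$ (cf.\ Figure~\ref{fig:MeasuredV0}). After all planar vertices are processed, the map that remains is the all-Pauli stabilizer process $\mathcal{C}$, and the rotation DAG with partial order induced by $\prec$ (equivalently, by anticommutation of the strings) is in place.

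It then remains to characterise $\mathcal{C}$ by an isometry tableau. For each input $u$ the $Z_u$-row is the primary extraction string for $u$ taken with $P^{\bot u} = Z$ (inputs, and in particular $u$ itself, never belong to any correction set, so conditions \ref{PF1}--\ref{PF2} force this), and the $X_u$-row is obtained by an input extension at $u$ (Lemma~\ref{lemma:InputExtensionSemantics}, with the Pauli flow preserved by Lemma~\ref{lemma:InputExtension}), which converts $X_u$ into a $Z$ on the fresh input and reduces us to the previous case. The $|O|-|I|$ remaining generators for the free output stabilizers come from Lemma~\ref{lemma:StabAlgorithmProof}, which produces $|O|-|I|$ independent focussed sets by Gaussian elimination, mapped to a genuine generating set of the free stabilizers of $\mathcal{C}$ by Theorem~\ref{thrm:StabBijection}; note that a focussed Pauli flow forces $|O| \geq |I|$ (Lemma~\ref{lemma:NumberOfFocussedSets}), so this count is non-negative. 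Finally synthesise: realise the isometry tableau by a Clifford circuit, embedding it in a unitary tableau over $|O|-|I|$ qubits initialised in $\ket{0}$ (standard tableau synthesis, e.g.\ \cite{Aaronson2008}), and realise each rotation node by a Pauli-exponential gadget \cite{Cowtan2020}, composed in any topological order of the DAG. The initialised qubits become genuine outputs rather than ancillae, so no ancilla qubits are used. Tallying costs — $O(|V|^5)$ for identification, polynomially many $\mathbb{F}_2$ Gaussian eliminations for focussing, for each extraction string, and for the tableau, plus a polynomial-size synthesis — gives a running time polynomial in $|V|$.

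The main obstacle is not any individual lemma but the correctness bookkeeping of the extraction sweep. One must verify that at every stage the residual map, after commuting past the already-extracted output rotations and the yet-to-be-processed projections, really is of the exact shape $\mathcal{C}$ in Definition~\ref{def:ExtractionStringDef} (so that Lemma~\ref{lemma:ExtractionFromFocussed} is applicable at $v$), that the projections for earlier and incomparable planar vertices sit harmlessly to the left of $v$'s projection and commute with the newly produced output rotation, and that every rewrite — both the splitting of $\bra{+_{\lambda(v),\alpha(v)}}$ into $\bra{+_{\lambda(v),0)}}$ times a rotation and the $(-1)$ factors already folded into the extraction string by Lemma~\ref{lemma:ExtractionFromFocussed} — is an equality up to the global scalar we are permitted to ignore. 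Once this is nailed down, the theorem follows by stringing the already-established results together with the polynomial-time cost bounds above.
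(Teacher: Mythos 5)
Your proposal follows essentially the same approach as the paper: identify a Pauli flow via Theorem~\ref{thrm:Identification}, focus it with Lemma~\ref{lemma:Focussing}, sweep over measured vertices in reverse $\prec$-order using Lemma~\ref{lemma:ExtractionFromFocussed} together with the \ref{lemma:GadgetStabCorrespondence} to migrate each planar rotation onto the outputs, characterise the residual Clifford by building the isometry tableau from input extraction strings, input extensions (Lemmas~\ref{lemma:InputExtensionSemantics},~\ref{lemma:InputExtension}), and the $|O|-|I|$ free stabilizers from Lemma~\ref{lemma:StabAlgorithmProof} via Theorem~\ref{thrm:StabBijection}, then synthesise. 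Your explicit choice $A=P^{\bot v}_v$, $B=P^{\bot v}_v\mathbf{P}^{\bot v}$ spells out the paper's invocation of the \ref{lemma:GadgetStabCorrespondence} a bit more concretely, but the argument and complexity accounting are otherwise the same.
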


\begin{proof}
Using Theorem \ref{thrm:Identification} and Lemma \ref{lemma:Focussing}, we can identify a maximally delayed, focussed Pauli flow $(p, \prec)$. Let $J : \mathbb{Z}_{|\overline{O}|} \to \overline{O}$ be an indexing over the measured qubits such that $J(i) \prec J(j) \Rightarrow j < i$, i.e. $J$ iterates from outputs to inputs respecting $\prec$. We will define a sequence of patterns $(\Gamma, \alpha_k)$ and circuits $C_k$ such that $\llbracket C_k \rrbracket \cdot \llbracket \Gamma; \alpha_k \rrbracket \approx \llbracket \Gamma; \alpha \rrbracket$ (where $\llbracket - \rrbracket$ denotes the interpretation as a linear function). We start with $\alpha_0 = \alpha$ and $C_0$ is the empty (identity) circuit.

For any $k \geq 0$, the update from $\alpha_k$ to $\alpha_{k+1}$ is just setting the angle of $J(k)$ to $0$ if it is planar.

\begin{equation}
\alpha_{k}(J(j)) := \begin{cases}
0 & \text{if } j < k \wedge \lambda(J(i)) \in \{XY, XZ, YZ\} \\
\alpha(J(j)) & \text{otherwise}
\end{cases}
\end{equation}

Defining the update from $C_k$ to $C_{k+1}$ requires us to find a Pauli string $\mathbf{P}$ over the outputs such that:

\begin{equation}\label{eq:MBQCextract}
\begin{split}
\left( \prod_{j < k \wedge \lambda(J(j)) \notin \{X, Y, Z\}} \bra{+_{\lambda(J(j)), 0}}_{J(j)} \right) & \\ \left( \prod_{j \geq k \vee \lambda(J(j)) \in \{X, Y, Z\}} \bra{+_{\lambda(J(j)), \alpha(J(j))}}_{J(j)} \right) & E_G N_{\overline{I}} \\
\approx e^{i\tfrac{\alpha(J(k))}{2}\mathbf{P}} \left( \prod_{j \leq k \wedge \lambda(J(j)) \notin \{X, Y, Z\}} \bra{+_{\lambda(J(j)), 0}}_{J(j)} \right) & \\ \left( \prod_{j > k \vee \lambda(J(j)) \in \{X, Y, Z\}} \bra{+_{\lambda(J(j)), \alpha(J(j))}}_{J(j)} \right) & E_G N_{\overline{I}}
\end{split}
\end{equation}

When $\lambda(J(k)) \in \{X, Y, Z\}$ the projections on each side of Equation \ref{eq:MBQCextract} are identical, meaning it suffices to let $\mathbf{P}$ be the identity and set $C_{k+1} = C_k$.

When $\lambda(J(k)) \in \{XY, XZ, YZ\}$, the projectors differ by a $e^{(-1)^{D_{J(k)}} i \tfrac{\alpha(J(k))}{2}P^{\bot J(k)}_{J(k)}}$ according to Equation \ref{eq:RotateBasis}. Equation \ref{eq:MBQCextract} then holds by using the \ref{lemma:GadgetStabCorrespondence} with a primary extraction string for $J(k)$, which we can obtain by Lemma \ref{lemma:ExtractionFromFocussed}. We decompose this rotation into basic gates (for example, by \CX~ladder constructions \cite{Barkoutsos2018, Cowtan2020}) and add them to the start of $C_k$ so that $\llbracket C_{k+1} \rrbracket = \llbracket C_k \rrbracket \cdot e^{(-1)^{D_{J(k)}} i \tfrac{\alpha(J(k))}{2} \mathbf{P}^{\bot J(k)}}$.

At the end of this sequence, we have the pattern $(\Gamma, \alpha_{|\overline{O}|})$ where every measurement is in a Pauli basis and the associated circuit $C_{|\overline{O}|}$. This means the corresponding isometry for this pattern is a stabilizer process.

To characterise this process, it is sufficient to identify its isometry tableau. For each input, we can find the $Z$ row from the primary extraction string of the input vertex, and by taking an input extension with Lemma \ref{lemma:InputExtension} we can similarly get the $X$ row. Each of our $|O|-|I|$ generators for the free stabilizers is given by Lemma \ref{lemma:StabAlgorithmProof}.

We can efficiently synthesise the isometry tableau and qubit preparations into a circuit $C_{\mathrm{stab}}$. The final circuit is the combination of $C_{\mathrm{stab}}$ followed by $C_{|\overline{O}|}$ which implements the same operator as $(\Gamma, \alpha)$.

To examine the complexity, we suppose an explicit Pauli flow is not given up front. Theorem \ref{thrm:Identification} says that we can identify it in $O(|V|^5)$ time. We can extend the flow to consider all of the input extensions simultaneously using Lemma \ref{lemma:InputExtension} in $O(|I|)$ time (ignoring the update to $\prec$ since it won't affect the outcome of the extraction procedure). The construction of Lemma \ref{lemma:Focussing} focusses this Pauli flow in $O(|V|^3)$ time (we apply at most $O(|V|^2)$ updates, each of which takes $O(|V|)$ time to identify and apply). Adding the extra focussed sets also takes $O(|V|^3)$ time by Lemma \ref{lemma:StabAlgorithmProof}. Synthesising this into a circuit using the naive \CX~ladder construction takes $O(|V| \cdot |O|)$ time for the rotations from non-Pauli measurements, and $O(|O|^3)$ to synthesise a stabilizer tableau using Aaronson and Gottesman's method \cite{Aaronson2008} (mapping an isometry tableau into that form just requires identifying the additional generators to fill the Pauli group, which can be done by Gaussian elimination in $O(|O|^3)$ time). The overall complexity is therefore dominated by the $O(|V|^5)$ Pauli flow identification.
\end{proof}

\begin{example}\label{ex:ExtractionExample}
Suppose we start with the following measurement pattern and focussed Pauli flow.

\begin{center}
\tikzfig{tikz_figs/mbqc_rewrite_initial}
\begin{tabular}{cc|ccc}
$v$ & $\lambda(v)$ & $p(v)$ & $\mathrm{Odd}(p(v))$ & $\{u | v \prec u\}$ \\
\hline
$i$ & $XY$ & $b, o_2$ & $i, a$ & $a, b, c, o_1, o_2$ \\
$a$ & $YZ$ & $a, c, d, o_2$ & $d, o_1, o_2$ & $c, o_1, o_2$ \\
$b$ & $XY$ & $c, d, o_1$ & $b, d, o_1, o_2$ & $c, o_1, o_2$ \\
$c$ & $XY$ & $o_1$ & $c$ & $o_1$ \\
$d$ & $Y$ & $o_2$ & $d$ & $o_2$ \\
\end{tabular}
\end{center}

Because we have two outputs and only one input, there is also an extra focussed set $\hat{p} = \{c, o_2\}$. We start by constructing the primary extraction strings for each vertex from Lemma \ref{lemma:ExtractionFromFocussed}. Let $a_d \in \{0, 1\}$ be such that $\alpha(d) = a_d \pi$. Recall that we get one phase flip per edge between adjacent vertices in the correction/focussed set ($|E \cap (p(v) \times p(v))|$), one phase flip for every two $Y$s that appear in the graph state stabilizer ($|p(v) \cap \mathrm{Odd}(p(v))| / 2$), and one for each term in the graph state stabilizer that is absorbed from a Pauli measurement with angle $\pi$ ($|(p(v) \cup \mathrm{Odd}(p(v))) \cap \{w | \lambda(w) \in \{X, Y, Z\} \wedge \alpha(w) = \pi\}|$).

\begin{center}
\begin{tabular}{c|cc|ccc|c}
$S$ & $S \cap O$ & $\mathrm{Odd}(S) \cap O$ & Edges & $Y$s & Paulis & $\mathbf{P}^{S}$ \\
\hline
$p(i)$ & $o_2$ & $ $ & $0$ & $0$ & $ $ & $I_1X_2$ \\
$p(a)$ & $o_2$ & $o_1, o_2$ & $4$ & $2$ & $d$ & $(-1)^{a_d+1}Z_1 Y_2$ \\
$p(b)$ & $o_1$ & $o_1, o_2$ & $2$ & $2$ & $d$ & $(-1)^{a_d+1}Y_1 Z_2$ \\
$p(c)$ & $o_1$ & $ $ & $0$ & $0$ & $ $ & $X_1I_2$ \\
$p(d)$ & $o_2$ & $ $ & $0$ & $0$ & $ $ & $I_1X_2$ \\
\hline
$\hat{p}$ & $o_2$ & $o_1$ & $0$ & $0$ & $ $ & $Z_1X_2$ \\
\end{tabular}
\end{center}

We start by extracting the planar angles as rotations in $\succ$-order. We must start with $c$, extracing it as $e^{i \tfrac{\alpha(c)}{2} X_1 I_2}$ over the outputs.

Having extracted this, the projection on $c$ becomes the positive $X$ basis vector so it can absorb the corresponding Paulis on the graph state stabilizer from $p(a)$ and $p(b)$. Either $a$ or $b$ can be extracted next, giving $e^{(-1)^{a_d} \tfrac{\alpha(a)}{2} Z_1Y_2}$ and $e^{(-1)^{a_d+1} \tfrac{\alpha(b)}{2} Y_1Z_2}$ respectively. Note that the phase of the rotation from $a$ got an extra $-1$ since $D_a = 1$ ($\lambda(a) = YZ$).

The final rotation to be extracted is $e^{i \tfrac{\alpha(i)}{2} I_1X_2}$ from $i$. We don't extract any rotation from $d$ since the measurement is already in a Pauli basis, and hence we can treat it as part of the leftover stabilizer process.

We can get the isometry tableau for this from the primary extraction strings. Having both the input and its only neighbour labelled with $XY$ means we don't necessarily have to perform input extensions to obtain the rows of the tableau during extraction, since they will function identically to $i$ and $b$ here because $\Gamma$ can already be viewed as an input extension of $\Gamma \setminus \{i\}$ with input $b$. This process maps $Z$ on the input to $I_1X_2$ ($\mathbf{P}^{\bot i}$) over the outputs and $X$ on the input to $(-1)^{a_d+1}Y_1Z_2$ ($\mathbf{P}^{\bot b}$). We have an extra row from the free stabilizer $Z_1X_2$ obtained from $\hat{p}$.

The structure we now have exactly maps to the following PDDAG:

\begin{center}
\addtolength{\tabcolsep}{-3pt}
\begin{tabular}{c|cc|c}
Ins & \multicolumn{2}{c|}{Outs} & Sign \\
\hline
$X$ & $Y$ & $Z$ & $(-)^{a_d+1}$ \\
\hline
$Z$ & & $X$ & $+$ \\
\hline
& $Z$ & $X$ & $+$ \\
\end{tabular}
\addtolength{\tabcolsep}{3pt}
\begin{tikzcd}[ampersand replacement=\&,row sep=0.cm,column sep=0.5cm]
(I_1 X_2, \alpha(i)) \arrow{r} \arrow{rd} \& ((-1)^{a_d} Z_1 Y_2, \alpha(a)) \arrow{r} \& (X_1 I_2, \alpha(c)) \\
\& ((-1)^{a_d+1}Y_1 Z_2, \alpha(b)) \arrow{ru} \&
\end{tikzcd}
\end{center}

Note that producing this PDDAG just amounted to reading off the focussed Pauli flow, using the correction sets to produce the strings and the order $\prec$ for the dependency relation between the rotations.

For the rotations from $a$ and $b$, we put the $(-1)$ phase with the strings rather than the angle. Whilst nodes $(-\mathbf{P}, \theta)$ and $(\mathbf{P}, -\theta)$ represent the same rotations, it is useful to keep the convention of writing the extracted rotation as $((-1)^{D_v} \mathbf{P}^{\bot v}, \alpha(v))$ to simplify the simulations in Section \ref{sec:Rewrites}.

We can then synthesise a circuit from the PDDAG component-wise. We start with the isometry tableau, for which an example circuit is given below.

\begin{center}
\tikzfig{tikz_figs/circ_extracted_tableau}
\end{center}

For the rotations, the $(I_1X_2, \alpha(i))$ and $(X_1I_2, \alpha(c))$ are simply \RX~gates. The other two rotations commute, and so can be efficiently diagonalised simultaneously.

\begin{center}
\tikzfig{tikz_figs/circ_extracted_rotations}
\end{center}

Note that the angles of rotations acquire an extra $-1$ phase flip, since we defined $(\mathbf{P}, \theta)$ to map to $e^{i \tfrac{\theta}{2} \mathbf{P}}$, but basic rotation gates are defined as $RP(\theta) = e^{-i \tfrac{\theta}{2} P}$.

Putting it all together, we obtain the following circuit:

\begin{center}
\resizebox{\linewidth}{!}{\tikzfig{tikz_figs/circ_extracted}}
\end{center}
\end{example}

\section{Proofs for Rewrite Correspondences}\label{sec:RewritesProofs}

Throughout this section, we will follow a general pattern when deriving the correspondences between rewrites of measurement patterns and PDDAGs:

\begin{enumerate}
\item Prove that the existence of Pauli flows is preserved by giving explicit updates to the focussed Pauli flows (and focussed sets);
\item Give the explicit updates to the primary extraction strings (and free stabilizers) for each rewrite;
\item Justify that the same updates are performed by the PDDAG rewrite.
\end{enumerate}

We will also add additional proofs that semantics are preserved by rewriting the measurement pattern as required. When deriving equalities in the following proofs, we will freely use $\approx$ to ignore global constants that are later removed symmetrically using the same substitution.

\subsection{Relabelling Pauli Measurements}

\begin{lemma}\label{lemma:FixPauliSemantics}
Let $(\Gamma, \alpha)$ describe a measurement pattern and $u \in \overline{O}$ be a vertex with $\lambda(u) \in \{XY, XZ, YZ\}$ and $\alpha(u) \in \{0, \tfrac{\pi}{2}, \pi, \tfrac{3\pi}{2}\}$. Then there exists an equivalent pattern $(\Gamma', \alpha')$ from relabelling $u$ as a Pauli measurement, where the new labels and angles are given by
\begin{equation}\label{eq:PauliPlanarUpdate}
(\lambda'(u), \alpha'(u)) = \begin{cases}
(X, \alpha(u)) & \text{if } \lambda(u) = XY \wedge \alpha(u) \in \{0, \pi\} \\
(Y, \alpha(u) - \tfrac{\pi}{2}) & \text{if } \lambda(u) = XY \wedge \alpha(u) \in \{\tfrac{\pi}{2}, \tfrac{3\pi}{2}\} \\
(Z, \alpha(u)) & \text{if } \lambda(u) = XZ \wedge \alpha(u) \in \{0, \pi\} \\
(X, \alpha(u) - \tfrac{\pi}{2}) & \text{if } \lambda(u) = XZ \wedge \alpha(u) \in \{\tfrac{\pi}{2}, \tfrac{3\pi}{2}\} \\
(Z, \alpha(u)) & \text{if } \lambda(u) = YZ \wedge \alpha(u) \in \{0, \pi\} \\
(Y, \alpha(u) - \tfrac{\pi}{2}) & \text{if } \lambda(u) = YZ \wedge \alpha(u) \in \{\tfrac{\pi}{2}, \tfrac{3\pi}{2}\} \\
\end{cases}
\end{equation}
and $(\lambda'(v), \alpha'(v)) = (\lambda(v), \alpha(v))$ for any $v \in \overline{O} \setminus \{u\}$.
\end{lemma}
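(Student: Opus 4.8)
The plan is to reduce everything to a single-qubit identity about the projectors. By the definition of the linear map of a pattern (Equation~\ref{eq:PatternSemantics}), $M_{\Gamma,\alpha}$ is a product of the bras $\bra{+_{\lambda(v),\alpha(v)}}_v$ composed with $E_G N_{\overline{I}}$, and relabelling $u$ changes none of the preparations, none of the entangling gates, and none of the projectors except the one on $u$. Hence it suffices to show that in each row of Equation~\ref{eq:PauliPlanarUpdate} we have $\ket{+_{\lambda(u),\alpha(u)}} \propto \ket{+_{\lambda'(u),\alpha'(u)}}$, since a global scalar on that projector is then cancelled symmetrically from both sides under $\approx$, giving $M_{\Gamma,\alpha} \approx M_{\Gamma',\alpha'}$.

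First I would write out this reduction explicitly, cancelling the common factors, and then dispatch the six cases by direct substitution into the planar and Pauli basis formulas from Section~\ref{sec:MBQC}. For example, $\ket{+_{XY,\pi/2}} = \tfrac{1}{\sqrt2}(\ket0 + i\ket1) = \ket{+_{Y,0}}$ handles the $\lambda(u) = XY$, $\alpha(u) \in \{\pi/2, 3\pi/2\}$ row (the shift $\alpha(u)-\tfrac{\pi}{2}$ sending the angle into $\{0,\pi\}$, exactly the range where the $Y$ basis states are indexed by $a \in \{0,1\}$), while $\ket{+_{XZ,\pi/2}} = \cos\tfrac{\pi}{4}\ket0 + \sin\tfrac{\pi}{4}\ket1 = \tfrac{1}{\sqrt2}(\ket0+\ket1) = \ket{+_{X,0}}$ handles one of the $XZ$ rows; the remaining cases are entirely analogous. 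The only non-trivial bookkeeping is a harmless global factor — $-1$ in the $\alpha(u) = 3\pi/2$ instances of the $XZ$ and $YZ$ rows, and $i$ in $\ket{+_{YZ,\pi}}$ — which is absorbed by $\approx$; this, together with checking that each angle shift lines up against the $(-1)^a$ sign conventions of the Pauli basis states, is the main (and only) obstacle, and it is purely computational.

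Finally, although the lemma only asserts equivalence of patterns, I would note that the same case check automatically matches the negative-outcome projectors as well: since the negative outcome at angle $\alpha$ is the positive outcome at $\alpha+\pi$ for both the planar and the Pauli bases, equality of the positive projectors at every Clifford angle forces equality of the negative ones, so the full CPTP channel is preserved, not merely the intended branch.
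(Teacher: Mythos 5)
Your argument is correct and essentially the same as the paper's: the paper records the embeddings $\bra{+_{P,\alpha}} \approx \bra{+_{PQ,\alpha'}}$ of the Pauli projectors into the three planes (Equation \ref{eq:PauliPlanarMap}) and observes that these uniquely determine the case table, while your case-by-case verification of $\ket{+_{\lambda(u),\alpha(u)}} \propto \ket{+_{\lambda'(u),\alpha'(u)}}$ is a direct check of exactly those same identities. The explicit reduction to the single-qubit projector on $u$ (nothing else in $M_{\Gamma,\alpha}$ changes) and the remark that the negative-outcome projectors match for free because both parameterisations shift by $\pi$ are both correct and simply spell out what the paper leaves implicit.
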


\begin{proof}
The following equations describe how the Pauli measurements are embedded into the planes:

\begin{equation}\label{eq:PauliPlanarMap}
\begin{split}
\bra{+_{X, \alpha}} &\approx \bra{+_{XY, \alpha}} \approx \bra{+_{XZ, \alpha + \tfrac{\pi}{2}}} \\
\bra{+_{Y, \alpha}} &\approx \bra{+_{XY, \alpha + \tfrac{\pi}{2}}} \approx \bra{+_{YZ, \alpha + \tfrac{\pi}{2}}} \\
\bra{+_{Z, \alpha}} &\approx \bra{+_{XZ, \alpha}} \approx \bra{+_{YZ, \alpha}} \\
\end{split}
\end{equation}

The equivalent Pauli measurement label and angle are determined uniquely by these embeddings.
\end{proof}


\begin{lemma}\label{lemma:FixPauliProof}
Let $(\Gamma, \alpha)$ and $(\Gamma', \alpha')$ describe measurement patterns related by relabelling some planar $u \in \overline{O}$ as a Pauli according to Lemma \ref{lemma:FixPauliSemantics}. If $\Gamma$ has a focussed Pauli flow $(p, \prec)$, then $(p', \prec)$ is a focussed Pauli flow for $\Gamma'$ where for any $v \in \overline{O}$
\begin{equation}
p'(v) := \begin{cases}
p(v) \Delta p(u) & \text{if } v \neq u \wedge u \in p(v) \cup \mathrm{Odd}(p(v)) \wedge \alpha(u) \in \{\tfrac{\pi}{2}, \tfrac{3\pi}{2}\} \\
p(v) & \text{otherwise}
\end{cases}
\end{equation}

Furthermore, focussed sets $\hat{p}$ can be updated as
\begin{equation}
\hat{p'} := \begin{cases}
\hat{p} \Delta p(u) & \text{if } u \in \hat{p} \cup \mathrm{Odd}(\hat{p}) \wedge \alpha(u) \in \{\tfrac{\pi}{2}, \tfrac{3\pi}{2}\} \\
\hat{p} & \text{otherwise}
\end{cases}
\end{equation}
\end{lemma}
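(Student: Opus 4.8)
The plan is to exploit the fact that $\Gamma$ and $\Gamma'$ share the same underlying graph, inputs, and outputs and differ only in the label of the single vertex $u$; since Lemma \ref{lemma:FixPauliSemantics} already supplies semantic equivalence, everything here is combinatorial bookkeeping built on Lemmas \ref{lemma:AddFlows} and \ref{lemma:AddFocussed}. The one observation driving the whole argument is that whenever the update is applied to a vertex $v$ — i.e.\ $v \neq u$, $\alpha(u) \in \{\tfrac{\pi}{2}, \tfrac{3\pi}{2}\}$, and $u \in p(v) \cup \mathrm{Odd}(p(v))$ — we have $v \prec u$: since $\lambda(u)$ is planar we have $\lambda(u) \notin \{X,Y\}$ and $\lambda(u) \notin \{Y,Z\}$, so $u \in p(v)$ forces $v \prec u$ by \ref{PF1} and $u \in \mathrm{Odd}(p(v))$ forces $v \prec u$ by \ref{PF2}. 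The analogous statement for a focussed set needs no order and is trivial.

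The first step is to show that $(p, \prec)$ with the \emph{original} $p$ is already a valid Pauli flow for $\Gamma'$. For the vertex $u$ itself, each relabelling in Equation \ref{eq:PauliPlanarUpdate} takes a planar label whose condition \ref{PF4}, \ref{PF5}, or \ref{PF6} on $p(u)$ implies the Pauli condition \ref{PF7}, \ref{PF8}, or \ref{PF9} demanded by $\lambda'(u)$, so these hold unchanged. For any other vertex $w$, only the clauses of \ref{PF1}--\ref{PF3} mentioning $u$ can change: relabelling a planar vertex to a Pauli vertex can only \emph{remove} a $\prec$-obligation from \ref{PF1} or \ref{PF2}, and the one new obligation that appears in \ref{PF3} (when $\lambda'(u) = Y$, i.e.\ from the $XY$ or $YZ$ cases) is vacuous — any $w$ with $u \preceq w$ has $u \notin p(w)$ (else $w \prec u$ by \ref{PF1}) and $u \notin \mathrm{Odd}(p(w))$ (else $w \prec u$ by \ref{PF2}). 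Hence $(p,\prec)$ satisfies \ref{PF1}--\ref{PF9} for $\Gamma'$. Then, since every updated $v$ satisfies $v \prec u$ and $p(u)$ is never itself updated, iterating Lemma \ref{lemma:AddFlows} over $\Gamma'$ (taking the earlier vertex to be $v$ and the later to be $u$) shows $(p', \prec)$ is also a Pauli flow for $\Gamma'$.

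The second step is the focussed property. For each $v \in \overline{O}$, $p'(v)$ is either $p(v)$ or $p(v) \Delta p(u)$; in the latter case Lemma \ref{lemma:AddFocussed} (with $S = \overline{O} \setminus \{u,v\}$, using that $p(v)$ is focussed over $\overline{O} \setminus \{v\}$ and $p(u)$ over $\overline{O} \setminus \{u\}$) makes it focussed over $\overline{O} \setminus \{u,v\}$ in $\Gamma$, hence in $\Gamma'$ since only $u$'s label moved. It then remains to check the focussed conditions at the single vertex $w = u$ against $\lambda'(u)$, which I would do by a short case split on $\lambda(u) \in \{XY, XZ, YZ\}$: the $\Gamma$-focussedness of $p(v)$ (via \ref{FOC1}/\ref{FOC2}) pins down which of $u \in p(v)$, $u \in \mathrm{Odd}(p(v))$ can hold, and \ref{PF4}/\ref{PF5}/\ref{PF6} on $p(u)$ then say exactly how $\Delta p(u)$ toggles the membership of $u$ in $p'(v)$ and $\mathrm{Odd}(p'(v))$ — the $XY \to Y$ and $YZ \to Y$ relabellings put $u$ into both (restoring \ref{FOC3}) while $XZ \to X$ removes $u$ from $\mathrm{Odd}(p'(v))$. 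The unupdated case $p'(v) = p(v)$ (which subsumes $v = u$, all of $\alpha(u) \in \{0,\pi\}$, and any $v$ with $u \notin p(v) \cup \mathrm{Odd}(p(v))$) is the same computation with no toggling, using the relabellings $XY \to X$, $XZ \to Z$, $YZ \to Z$. The focussed-set claim is proven verbatim with $\overline{O}$ in place of $\overline{O} \setminus \{v\}$ and no order to track.

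The only real obstacle is the sheer volume of the case analysis — six relabellings crossed with three focussed conditions and the membership toggle — but there is no conceptual difficulty once one notes that $\Delta p(u)$ is precisely engineered to flip whichever of ``$u \in \cdot$'' or ``$u \in \mathrm{Odd}(\cdot)$'' the new Pauli label of $u$ requires, and that it is only ever applied to vertices earlier than $u$ so that Lemmas \ref{lemma:AddFlows} and \ref{lemma:AddFocussed} apply unchanged.
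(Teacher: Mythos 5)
Your proof is correct and follows the same backbone as the paper's (Lemmas \ref{lemma:AddFlows} and \ref{lemma:AddFocussed}), but reorganizes it in two ways. First, you establish that $(p, \prec)$ is already a Pauli flow for $\Gamma'$ and then apply Lemma \ref{lemma:AddFlows} over $\Gamma'$; the paper instead applies Lemma \ref{lemma:AddFlows} over $\Gamma$ to get $(p',\prec)$ as a flow of $\Gamma$ and then transfers it to $\Gamma'$ by observing that the flow conditions only relax. In the process you explicitly verify the one non-obvious clause that the paper passes over with the phrase ``\ref{PF1}--\ref{PF3} are strictly weakened'': when $\lambda'(u)=Y$, condition \ref{PF3} actually \emph{gains} an obligation for each $w$ with $u\preceq w$, and your argument that this new obligation is vacuous (because $u\notin p(w)\cup\mathrm{Odd}(p(w))$ by \ref{PF1}/\ref{PF2}) is exactly the justification the paper implicitly relies on. Second, for showing $p'(v)$ is focussed over $\{u\}$ in the updated case, you do a manual case split on $\lambda(u)\in\{XY,XZ,YZ\}$ tracking how $\Delta p(u)$ toggles the memberships $u\in p'(v)$ and $u\in\mathrm{Odd}(p'(v))$; the paper instead notes that both $p(v)$ and $p(u)$ fail to be focussed over $\{u\}$ in $\Gamma'$ and then invokes Lemma \ref{lemma:AddNotFocussed} to conclude in one step. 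Your unrolled case analysis is equivalent (it is essentially a re-derivation of that lemma in the special case at hand), so both routes land in the same place; the paper's is shorter, yours is more self-contained.
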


\begin{proof}
Suppose that $(p, \prec)$ is a focussed Pauli flow for $\Gamma$. Since $\lambda(u) \in \{XY, XZ, YZ\}$, any $v \in \overline{O} \setminus \{u\}$ such that $u \in p(v) \cup \mathrm{Odd}(p(v))$ must satisfy $v \prec u$ by conditions \ref{PF1} and \ref{PF2}. $(p', \prec)$ is then also a Pauli flow for $\Gamma$ using Lemma \ref{lemma:AddFlows}. Any Pauli flow for $\Gamma$ is also a Pauli flow for $\Gamma'$, since \ref{PF1}-\ref{PF3} are strictly weakened and \ref{PF7}-\ref{PF9} for $\Gamma'$ follow from \ref{PF4}-\ref{PF6} for $\Gamma$.

To show that $(p', \prec)$ is focussed for $\Gamma'$, consider and vertex $v \in \overline{O}$. As $\lambda$ and $\lambda'$ only differ on $u$, when $p'(v) = p(v)$ it remains focussed over $\overline{O} \setminus \{u, v\}$ in $\Gamma'$. This is enough when $v = u$. If $u \notin p(v) \cup \mathrm{Odd}(p(v))$, then it is trivially focussed over $\{u\}$ in $\Gamma'$. Finally, in each of the cases of Equation \ref{eq:PauliPlanarUpdate} where $\alpha(u) \in \{0, \pi\}$, the compatible corrections in the focussed conditions do not change, so it remains focussed over $\{u\}$ in $\Gamma'$.

Suppose instead that $v \neq u$, $u \in p(v) \cup \mathrm{Odd}(p(v))$, and $\alpha(u) \in \{\tfrac{\pi}{2}, \tfrac{3\pi}{2}\}$, so $p'(v) = p(v) \Delta p(u)$. Both $p(v)$ and $p(u)$ remain focussed over $\overline{O} \setminus \{u, v\}$ in $\Gamma'$, and Lemma \ref{lemma:AddFocussed} gives the same for $p'(v)$. For each case of $\lambda(u) \in \{XY, XZ, YZ\}$, what was a compatible correction for $\lambda(u)$ is now incompatible with $\lambda'(u)$, so $p(v)$ is not focussed over $\{u\}$ in $\Gamma'$. Additionally, conditions \ref{PF7} and \ref{PF9} mean that $p(u)$ is also not focussed over $\{u\}$. However, we can then use Lemma \ref{lemma:AddNotFocussed} to conclude that their combination ($p'(v)$) is focussed.

The proof of updating a focussed set $\hat{p}$ to $\hat{p'}$ follows similarly.
\end{proof}


\begin{lemma}\label{lemma:FixPauliStrings}
Let $(\Gamma, \alpha)$ and $(\Gamma', \alpha')$ be the measurement patterns and respective focussed Pauli flows $(p, \prec)$ and $(p', \prec)$ from Lemma \ref{lemma:FixPauliProof} formed by changing the label of some $u \in \overline{O}$ from a planar measurement to a Pauli measurement. Then the new primary extraction string for each vertex $v \in \overline{O} \setminus \{u\}$ is given by
\begin{equation}
\mathbf{P'}^{\bot v} = P'^{\bot v}_v P^{\bot v}_v \mathbf{P}^{\bot v} \left( \cos \alpha(u) I - i \sin (-1)^{D_u + A_{u \to v}^p} \alpha(u) P^{u \to v}_v \mathbf{P}^{\bot u} \right)^{F_{v \to u}^p}
\end{equation}
and $\mathbf{P'}^{\bot u} = \mathbf{P}^{\bot u}$. Furthermore, a stabilizer from a focussed set $\hat{p}$ is updated to
\begin{equation}
\mathbf{\hat{P'}} = \mathbf{\hat{P}} \left( \cos \alpha(u) I - i \sin (-1)^{D_u} \alpha(u) \mathbf{P}^{\bot u} \right)^{G_{\hat{p} \to u}}
\end{equation}
\end{lemma}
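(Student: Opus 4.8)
The plan is to compute $\mathbf{P'}^{\bot v}$ directly from the explicit construction in the proof of Lemma~\ref{lemma:ExtractionFromFocussedProof}, applied to the focussed Pauli flow $(p',\prec)$ of Lemma~\ref{lemma:FixPauliProof}, and to compare its sign-bookkeeping against the same construction for $(\Gamma,\alpha)$. The only datum that changes is the treatment of $u$: in the linear map $\mathcal{C}'$ of Definition~\ref{def:ExtractionStringDef} for $v$ in $\Gamma'$, the vertex $u$ always carries the Pauli eigenprojector $\bra{+_{\lambda'(u),\alpha'(u)}}_u$, whereas in $\mathcal{C}$ for $v$ in $\Gamma$ it carries $\bra{+_{\lambda(u),0}}_u$ when $v\prec u$ and no projection otherwise. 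Chaining Equation~\ref{eq:PauliPlanarMap} with the summary of Equation~\ref{eq:RotateBasis} gives $\bra{+_{\lambda'(u),\alpha'(u)}}_u \approx \bra{+_{\lambda(u),0}}_u e^{(-1)^{D_u} i \tfrac{\alpha(u)}{2} P^{\bot u}_u}$, which is the bridge between the two constructions.

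First I would dispatch the easy cases. For $v=u$ itself, $p'(u)=p(u)$ and the map of Definition~\ref{def:ExtractionStringDef} for $u$ never projects on $u$, so relabelling leaves it unchanged and $\mathbf{P'}^{\bot u}=\mathbf{P}^{\bot u}$. For $v\neq u$: if $v\not\prec u$ then conditions \ref{PF1} and \ref{PF2} (with $\lambda(u)$ planar) force $u\notin p(v)\cup\mathrm{Odd}(p(v))$, so $F_{v\to u}^p=0$, $p'(v)=p(v)$, and the graph-state-stabiliser product used to build $\mathbf{P}^{\bot v}$ already has the identity on $u$; adding or removing the disjoint projection at $u$ changes nothing, so $\mathbf{P'}^{\bot v}=\mathbf{P}^{\bot v}$, which is the formula when $F_{v\to u}^p=0$. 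The same holds when $v\prec u$ but $F_{v\to u}^p=0$. This leaves $F_{v\to u}^p=1$, which forces $v\prec u$, so $u$ appears in the stabiliser product with a nontrivial Pauli $P^{v\to u}_u$; comparing the focussed conditions \ref{FOC1}-\ref{FOC3} with $\lambda(u)$ shows $P^{v\to u}$ is exactly the $0$-angle Pauli of the plane $\lambda(u)$.

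When $\alpha(u)\in\{0,\pi\}$ we still have $p'(v)=p(v)$ and $\alpha'(u)=\alpha(u)$, and $\lambda'(u)$ is precisely the basis of $P^{v\to u}$, so the only difference between the two constructions is that $P^{v\to u}_u$ is absorbed into $\bra{+_{\lambda'(u),\alpha'(u)}}_u$ rather than the $0$-angle eigenprojector $\bra{+_{\lambda(u),0}}_u$; by Equation~\ref{eq:PostEigen} this multiplies by $(-1)^{[\alpha(u)=\pi]}=\cos\alpha(u)$, and since $\sin\alpha(u)=0$ and $P'^{\bot v}_v P^{\bot v}_v=I$ this is exactly the stated factor. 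When $\alpha(u)\in\{\tfrac{\pi}{2},\tfrac{3\pi}{2}\}$ we have $p'(v)=p(v)\,\Delta\,p(u)$, and I would split the stabiliser product for $p(v)\,\Delta\,p(u)$ into the product of the one for $p(v)$ and the one for $p(u)$, tracking the reordering sign via $\mathrm{Odd}(p(v)\Delta p(u))=\mathrm{Odd}(p(v))\Delta\mathrm{Odd}(p(u))$ and Lemma~\ref{lemma:EvenImaginary}. The $p(v)$-factor absorbs into all the other projections, leaving $P^{\bot v}_v$ on $v$, $P^{v\to u}_u$ on $u$ and $\mathbf{P}^{\bot v}$ on the outputs; the $p(u)$-factor leaves $P^{\bot u}_u$ on $u$, the unabsorbable $P^{u\to v}_v$ on $v$ (since $v$ is not projected in $\mathcal{C}'$) and $\mathbf{P}^{\bot u}$ on the outputs, contributing the sign $(-1)^{A_{u\to v}^p}$ exactly as in Lemma~\ref{lemma:AntiCommutingStringsProof}. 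The combined Pauli on $u$ is $P^{v\to u}_u P^{\bot u}_u$, which — checked over the three planar cases for $\lambda(u)$ — is $\pm i$ times the Pauli basis $\lambda'(u)$, hence absorbs into $\bra{+_{\lambda'(u),\alpha'(u)}}_u$; converting this absorption back via $\bra{+_{\lambda'(u),\alpha'(u)}}_u\approx\bra{+_{\lambda(u),0}}_u e^{(-1)^{D_u} i \tfrac{\alpha(u)}{2} P^{\bot u}_u}$ and collecting all factors of $i$ (including the one making $P'^{\bot v}_v P^{\bot v}_v$ proportional to $P^{u\to v}_v$) produces the $-i\sin((-1)^{D_u+A_{u\to v}^p}\alpha(u))\,P^{u\to v}_v\mathbf{P}^{\bot u}$ term. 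The focussed-set claim is the same argument with no distinguished $v$: $\hat{p'}$ is $\hat p$ or $\hat p\Delta p(u)$, the $P^{u\to v}_v$ factor and the $A_{u\to v}^p$ correction disappear, and one obtains $\mathbf{\hat{P'}}=\mathbf{\hat P}\left(\cos\alpha(u)I - i\sin((-1)^{D_u}\alpha(u))\mathbf{P}^{\bot u}\right)^{G_{\hat p\to u}}$.

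The main obstacle is this last case: making sure that every factor of $-1$ and of $i$ — from reordering $X$ and $Z$ in the combined stabiliser product, from reducing products of anticommuting Paulis on $u$ and on $v$ to standard form, and from the $\cos/\sin$ expansion of $e^{(-1)^{D_u} i \tfrac{\alpha(u)}{2} P^{\bot u}_u}$ at $\alpha(u)\in\{\tfrac{\pi}{2},\tfrac{3\pi}{2}\}$ — combines into precisely the sign $(-1)^{D_u+A_{u\to v}^p}$ inside the sine and nothing else; everything else reduces to a routine enumeration over $\lambda(u)$.
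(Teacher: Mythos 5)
Your overall strategy is correct, and your treatment of the easy cases ($v=u$, $F_{v\to u}^p=0$, and $\alpha(u)\in\{0,\pi\}$) matches the paper's. Where you diverge is the hard case $F_{v\to u}^p=1$, $\alpha(u)\in\{\tfrac{\pi}{2},\tfrac{3\pi}{2}\}$: you propose to decompose the graph-state stabilizer for $p'(v)=p(v)\,\Delta\,p(u)$ into the product of the two individual stabilizers, track the reordering signs via Lemma~\ref{lemma:EvenImaginary}, and work out the combined Pauli on $u$ and $v$ by hand. The paper instead never touches the combined stabilizer product: it writes the bridge $\mathcal{C'}\approx e^{(-1)^{D_u+A_{u\to v}^p}i\frac{\alpha(u)}{2}P^{u\to v}_v\mathbf{P}^{\bot u}}\mathcal{C}$, inserts the known stabilizer $P^{\bot v}_v\mathbf{P}^{\bot v}$ of $\mathcal{C}$ on the right, commutes it through the exponential using Lemma~\ref{lemma:AntiCommutingStringsProof} (picking up the $(-1)^{F_{v\to u}^p}$ sign), then splits the exponential as $e^{i(\phi+F\pi)\frac{A}{1}}$-style into $e^{-(-1)^{D_u+A^p_{u\to v}}F^p_{v\to u}i\alpha(u)P^{u\to v}_v\mathbf{P}^{\bot u}}\cdot e^{(-1)^{D_u+A^p_{u\to v}}i\frac{\alpha(u)}{2}P^{u\to v}_v\mathbf{P}^{\bot u}}$, reabsorbs the second factor into $\mathcal{C'}$, and reads the $\cos\alpha(u)\,I-i\sin((-1)^{D_u+A^p_{u\to v}}\alpha(u))\ldots$ term off a single Euler expansion. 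That route produces the $\cos/\sin$ form uniformly for all four values of $\alpha(u)$ in one algebraic step, with the only pointwise check being $P^{\bot v}P^{u\to v}\approx P'^{\bot v}$ — whereas your decomposition approach requires a case split over $\alpha(u)$ and an enumeration over $\lambda(u)$ to see the combined Pauli $P^{v\to u}_uP^{\bot u}_u$ is proportional to the new basis and to collect all the $\pm1$ and $\pm i$ factors into $(-1)^{D_u+A^p_{u\to v}}$. Your plan is sound but more exposed to bookkeeping error; the paper's commutation-through-exponential argument is a cleaner way to package the same content and avoids re-deriving the extraction string from scratch, at the cost of one step (the bridge $\mathcal{C'}\approx e^{\cdots}\mathcal{C}$) that itself pulls the rotation through all future projections via the \ref{lemma:GadgetStabCorrespondence}.
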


\begin{proof}
The case for $u$ is simple because $p'(u) = p(u)$ and the linear map $\mathcal{C}$ in the definition of extraction strings is unchanged. For the remainder, we consider some $v \in \overline{O} \setminus \{u\}$.

Firstly, we show that $P'^{\bot v} P^{\bot v} \approx I$ if $\alpha(u) \in \{0, \pi\}$ or $(P^{u \to v})^{F_{v \to u}^p}$ if $\alpha(u) \in \{\tfrac{\pi}{2}, \tfrac{3\pi}{2}\}$ to deduce that $\mathbf{P'}^{\bot v}$ only acts on the outputs. If either $u \notin p(v) \cup \mathrm{Odd}(p(v))$ (i.e. $F_{v \to u}^p = 0$) or $\alpha(u) \in \{0, \pi\}$, then $p'(v) = p(v)$ and hence $P'^{\bot v} = P^{\bot v}$. Otherwise, we have $p'(v) = p(v) \Delta p(u)$. By following a similar logic to the proof of Lemma \ref{lemma:MultiplyStrings} considering $\{v\}$ instead of $O$, we end up with $P'^{\bot v} \approx P^{\bot v} P^{u \to v}$.

Now we show that $\mathbf{P'}^{\bot v}$ is a primary extraction string in $(\Gamma', \alpha')$. Let $\mathcal{C'}$ be the linear map from Definition \ref{def:ExtractionStringDef} for $(\Gamma', \alpha')$ and let $\mathcal{C}$ be the corresponding for $(\Gamma, \alpha)$ (plus an extra projection on $u$ if $u \preceq v$).

\begin{align}
\mathcal{C'} &:= \left( \prod_{\substack{w \succ v \nonumber \\ \lambda'(w) \notin \{X, Y, Z\}}} \bra{+_{\lambda'(w), 0}}_w \right) \left( \prod_{\substack{w \in \overline{O} \setminus \{v\} \nonumber \\ \lambda'(w) \in \{X, Y, Z\}}} \bra{+_{\lambda'(w), \alpha'(w)}}_w \right) E_G N_{\overline{I}} \nonumber \\
&\approx \bra{+_{\lambda(u), \alpha(u)}}_u \left( \prod_{\substack{w \succ v \nonumber \\ w \neq u \nonumber \\ \lambda(w) \notin \{X, Y, Z\}}} \bra{+_{\lambda(w), 0}}_w \right) \left( \prod_{\substack{w \in \overline{O} \setminus \{v\} \nonumber \\ \lambda(w) \in \{X, Y, Z\}}} \bra{+_{\lambda(w), \alpha(w)}}_w \right) E_G N_{\overline{I}} \nonumber \\
&\approx \left( \prod_{\substack{w \succ v \vee w = u \nonumber \\ \lambda(w) \notin \{X, Y, Z\}}} \bra{+_{\lambda(w), 0}}_w \right) \left( \prod_{\substack{w \in \overline{O} \setminus \{v\} \nonumber \\ \lambda(w) \in \{X, Y, Z\}}} \bra{+_{\lambda(w), \alpha(w)}}_w \right) e^{(-1)^{D_u} i \tfrac{\alpha(u)}{2} P^{\bot u}_u} E_G N_{\overline{I}} \nonumber \\
&= e^{(-1)^{D_u + A_{u \to v}^p} i \tfrac{\alpha(u)}{2} P^{u \to v}_v \mathbf{P}^{\bot u}} \mathcal{C} \\
&= e^{(-1)^{D_u + A_{u \to v}^p} i \tfrac{\alpha(u)}{2} P^{u \to v}_v \mathbf{P}^{\bot u}} P^{\bot v}_v \mathbf{P}^{\bot v} \mathcal{C} \nonumber \\
&= P^{\bot v}_v e^{(-1)^{D_u + A_{u \to v}^p + F_{u \to v}^p} i \tfrac{\alpha(u)}{2} P^{u \to v}_v \mathbf{P}^{\bot u}} \mathbf{P}^{\bot v} \mathcal{C} \nonumber \\
&= P^{\bot v}_v \mathbf{P}^{\bot v} e^{(-1)^{D_u + A_{u \to v}^p + F_{v \to u}^p} i \tfrac{\alpha(u)}{2} P^{u \to v}_v \mathbf{P}^{\bot u}} \mathcal{C} \nonumber \\
&= P^{\bot v}_v \mathbf{P}^{\bot v} e^{-(-1)^{D_u + A_{u \to v}^p} F_{v \to u}^p i \alpha(u) P^{u \to v}_v \mathbf{P}^{\bot u}} e^{(-1)^{D_u + A_{u \to v}^p} i \tfrac{\alpha(u)}{2} P^{u \to v}_v \mathbf{P}^{\bot u}} \mathcal{C} \nonumber \\
&\approx P^{\bot v}_v \mathbf{P}^{\bot v} e^{-(-1)^{D_u + A_{u \to v}^p} F_{v \to u}^p i \alpha(u) P^{u \to v}_v \mathbf{P}^{\bot u}} \mathcal{C'} \nonumber \\
&= P^{\bot v}_v \mathbf{P}^{\bot v} \left( \cos \alpha(u) I - i \sin (-1)^{D_u + A_{u \to v}^p} \alpha(u) P^{u \to v}_v \mathbf{P}^{\bot u} \right)^{F_{v \to u}^p} \mathcal{C'} \nonumber \\
&= P'^{\bot v}_v \mathbf{P'}^{\bot v} \mathcal{C'} \nonumber
\end{align}

In the above derivation, moving $P^{\bot v}_v \mathbf{P}^{\bot v}$ through the rotation requires Lemma \ref{lemma:AntiCommutingStringsProof} and $P^{\bot v} P^{u \to v} = (-1)^{F_{u \to v}^p} P^{u \to v} P^{\bot v}$.

Finally, by doing a case distinction on $F_{v \to u}^p$ and Lemma \ref{lemma:MultiplyStrings}, the extraction string from $(p', \prec)$ matches $\mathbf{P'}^{\bot v}$ up to phase, with Lemma \ref{lemma:EqualStrings} making them equal.

The proof for focussed sets and stabilizers follows similarly.
\end{proof}


\begin{theorem}\label{thrm:FixPauliRelateProof}(Restatement of Theorem \ref{thrm:FixPauliRelate})
Let $(\Gamma, \alpha)$ describe a measurement pattern with some vertex $u \in \overline{O}$ such that $\lambda(u) \in \{XY, XZ, YZ\}$ and $\alpha(u) \in \{0, \tfrac{\pi}{2}, \pi, \tfrac{3\pi}{2}\}$. Relabelling $u$ to the equivalent Pauli label corresponds to pushing the rotation from $u$ to the start of the PDDAG and absorbing it into the initial stabilizer process.
\end{theorem}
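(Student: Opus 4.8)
The plan is to follow the three-step recipe set out at the start of Appendix~\ref{sec:RewritesProofs}. Steps~1 and~2 are already available: Lemma~\ref{lemma:FixPauliSemantics} gives that the relabelled pattern $(\Gamma', \alpha')$ is equivalent to $(\Gamma, \alpha)$, Lemma~\ref{lemma:FixPauliProof} that a focussed Pauli flow (and the focussed sets) survive with the stated updates, and Lemma~\ref{lemma:FixPauliStrings} computes the induced change to every primary extraction string and free stabilizer. So the remaining work is Step~3: to recognise those changes as the PDDAG manipulation in the statement. Write $\rho_u := e^{(-1)^{D_u} i \tfrac{\alpha(u)}{2} \mathbf{P}^{\bot u}}$ for the rotation that the extraction of Theorem~\ref{thrm:Extraction} produces from $u$ (its string $\mathbf{P}^{\bot u}$ is supported on the outputs by Lemma~\ref{lemma:ExtractionFromFocussed}). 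Since $\alpha(u) \in \{0, \tfrac{\pi}{2}, \pi, \tfrac{3\pi}{2}\}$, the exponent is a multiple of $\tfrac{\pi}{4}$, so $\rho_u$ is a Clifford operator. I would read ``pushing the rotation from $u$ to the start and absorbing it into the stabilizer process'' concretely as: commute $\rho_u$ rightward past the rotation of each planar $w$ with $w \prec u$, which by the \ref{lemma:CommutationRules} is free when $\mathbf{P}^{\bot w}$ commutes with $\mathbf{P}^{\bot u}$ and replaces $e^{i\tfrac{\alpha(w)}{2}\mathbf{P}^{\bot w}}$ by the conjugate $\rho_u\, e^{i\tfrac{\alpha(w)}{2}\mathbf{P}^{\bot w}}\, \rho_u^{-1}$ otherwise (computed via Equation~\ref{eq:AntiCommutingExpRule} when $\alpha(u)\in\{\tfrac{\pi}{2},\tfrac{3\pi}{2}\}$ and by plain Pauli conjugation when $\alpha(u)\in\{0,\pi\}$); then merge $\rho_u$ into $\mathcal{C}$, which conjugates the rows of the isometry tableau of $\mathcal{C}$ by $\rho_u$.

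The core of Step~3 is then a short case analysis lining these two descriptions up. For any planar $w \neq u$ with $w \not\prec u$ --- in particular every $w \succ u$ and every $w$ incomparable to $u$ --- conditions \ref{PF1}--\ref{PF2} give $u \notin p(w) \cup \mathrm{Odd}(p(w))$, so $F_{w \to u}^p = 0$ and Lemma~\ref{lemma:FixPauliStrings} yields $\mathbf{P'}^{\bot w} = \mathbf{P}^{\bot w}$, matching the fact that $\rho_u$ never crosses this rotation. For planar $w \prec u$, the same conditions give $w \notin p(u) \cup \mathrm{Odd}(p(u))$, hence $F_{u\to w}^p = 0$, $P^{u\to w} = I$, and therefore $P'^{\bot w} = P^{\bot w}$; by Lemma~\ref{lemma:AntiCommutingStringsProof} the strings $\mathbf{P}^{\bot w}$ and $\mathbf{P}^{\bot u}$ anticommute exactly when $F_{w\to u}^p$ is odd, and in that situation Lemma~\ref{lemma:FixPauliStrings} collapses to $\mathbf{P'}^{\bot w} = \mathbf{P}^{\bot w}\bigl(\cos\alpha(u)\,I - i\sin((-1)^{D_u}\alpha(u))\,\mathbf{P}^{\bot u}\bigr)$, which is precisely $\rho_u\, e^{i\tfrac{\alpha(w)}{2}\mathbf{P}^{\bot w}}\,\rho_u^{-1}$ rewritten by the \ref{lemma:CommutationRules} (up to where one chooses, following Example~\ref{ex:ExtractionExample}, to put stray $\pm$ signs). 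For the stabilizer block, relabelling $u$ replaces the factor $\bra{+_{\lambda(u),0}}_u\, e^{(-1)^{D_u} i\tfrac{\alpha(u)}{2} P^{\bot u}_u}$ of the residual process by $\bra{+_{\lambda'(u),\alpha'(u)}}_u \approx \bra{+_{\lambda(u),\alpha(u)}}_u$ (Lemma~\ref{lemma:FixPauliSemantics} and Equation~\ref{eq:RotateBasis}); using the primary extraction string $P^{\bot u}_u\mathbf{P}^{\bot u}$ to carry $P^{\bot u}_u$ onto the outputs shows the new stabilizer process is $\rho_u\mathcal{C}$ up to global phase, and the resulting change to the input-action rows and the free-stabilizer rows of the tableau is exactly conjugation by $\rho_u$, i.e. the updates of $\mathbf{P'}^{\bot (\text{input})}$ and $\mathbf{\hat{P}'}$ in Lemma~\ref{lemma:FixPauliStrings}. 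Because $(p',\prec)$ keeps the order $\prec$ and Lemma~\ref{lemma:AntiCommutingStringsProof} applied to it shows the new anticommutation structure is still compatible with $\prec$, the dependency DAG after the push coincides with the one extracted from $(\Gamma', \alpha')$, and nothing else changes.

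The hard part will be the sign bookkeeping rather than anything conceptual: reconciling the factors $(-1)^{D_u}$ and $(-1)^{A_{u\to v}^p}$ and the convention (Example~\ref{ex:ExtractionExample}) of attaching leftover $\pm$ signs to Pauli strings instead of to angles, so that ``$\rho_u(\cdot)\rho_u^{-1}$ via the \ref{lemma:CommutationRules}'' and ``the Lemma~\ref{lemma:FixPauliStrings} formula'' agree exactly rather than merely up to a harmless global phase. The algebra itself is easy once one notes that $\mathbf{P}^{\bot u}$ and all the rotations it meets are supported entirely on the outputs, so they slide past the remaining measurement projections without interaction and the whole movement is governed by the two lines of the \ref{lemma:CommutationRules}.
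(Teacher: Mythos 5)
Your proposal is correct and, in broad structure, matches the paper's: both reduce Step~3 to a case-by-case check that the formula of Lemma~\ref{lemma:FixPauliStrings} reproduces the effect of pushing the Clifford rotation $\rho_u$ through the rotation DAG via the \ref{lemma:CommutationRules}. The case split differs cosmetically — the paper examines $v \preceq u$ and subdivides by $F^p_{v\to u}$, whereas you split into $w\not\prec u$ versus $w\prec u$ — but since planar $w$ satisfies $F^p_{w\to u}=1$ only when $w\prec u$ (PF1--PF2), and the PDDAG only requires updates where the strings anticommute (by Lemma~\ref{lemma:AntiCommutingStringsProof} this happens only when $F^p_{u\to w}+F^p_{w\to u}$ is odd, forcing comparability), both partitions cover the same ground. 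The one place you genuinely deviate is the tableau: the paper treats input-extension rows and free-stabilizer rows uniformly, again via the formulae of Lemma~\ref{lemma:FixPauliStrings}, while you give a direct semantic argument that the residual stabilizer process itself becomes $\rho_u\mathcal{C}$ (up to global phase), from which conjugation of all tableau rows follows at once. That argument is sound, but it suppresses a step you should make explicit: the primary extraction string $P^{\bot u}_u \mathbf{P}^{\bot u}$ is stated to stabilize the process $\mathcal{C}$ of Definition~\ref{def:ExtractionStringDef}, which omits the projectors on planar vertices $w\preceq u$. To apply the \ref{lemma:GadgetStabCorrespondence} where you want to, you need first to observe that since $P^{\bot u}_u \mathbf{P}^{\bot u}$ acts only on $u$ and the outputs, it remains a stabilizer after adjoining projectors on the remaining planar measured qubits, and only then move the rotation's support onto the outputs and slide it past $\bra{+_{\lambda(u),0}}_u$. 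Filling in that intermediate observation makes the argument complete; the approach buys a cleaner conceptual reason why the whole tableau conjugates by $\rho_u$ than the paper's row-by-row check, at the modest cost of this extra bookkeeping.
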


\begin{proof}
The corresponding PDDAG initially has a rotation node $((-1)^{D_u}\mathbf{P}^{\bot u}, \alpha(u))$ (representing the operator $e^{(-1)^{D_u} i \tfrac{\alpha(u)}{2} \mathbf{P}^{\bot u}}$) since $u$ originally has a planar label. Moving this to the front of the circuit updates the Pauli strings of other vertices according to the \ref{lemma:CommutationRules}, so we need to check that the same update is made to planar vertices and the isometry tableau is updated appropriately.

For any planar vertex $v \preceq u$, $v \notin p(u) \cup \mathrm{Odd}(p(u))$ and hence $P^{u \to v} = I$ and $A_{u \to v}^p = 0$. If $u \notin p(v) \cup \mathrm{Odd}(p(v))$ then $\mathbf{P}^{\bot v}$ and $\mathbf{P}^{\bot u}$ commute, so no update is needed which matches the formula for $\mathbf{P'}^{\bot v}$ with $F_{v \to u}^p = 0$. Otherwise, they anticommute. If $\alpha(u) = 0$, the rotation is an identity so $\mathbf{P}^{\bot v}$ is unchanged. If $\alpha(u) = \pi$, the rotation is a Pauli operator and hence flips the phase of $v$'s rotation, matching the $\cos \alpha(u)$ update. For $\alpha(u) \in \{\tfrac{\pi}{2}, \tfrac{3\pi}{2}\}$, the extraction string is updated by multiplication with the correct phase.

For the tableau, we need to handle rows from both the inputs and free stabilizers. For an input $i$ gained by input extension with $\lambda(i) = XY$, the same reductions can be made as for planar vertices and the updates for free stabilizers are essentially identical. In each case of $\alpha(u)$, these updates match the effect on the isometry tableau since the \ref{lemma:CommutationRules} show the same effect for Paulis and generic rotations.
\end{proof}

\begin{example}\label{ex:FixPauliExample}
Let's start with the same measurement pattern from Example \ref{ex:ExtractionExample}. Suppose $\alpha(c) = \tfrac{\pi}{2}$, so we can obtain a semantically equivalent pattern by relabelling $(\lambda'(c), \alpha'(c)) := (Y, 0)$. This would make $p(a)$, $p(b)$ and $\hat{p}$ no longer focussed. Refocussing gives:

\begin{center}
\begin{tabular}{cc|ccc}
$v$ & $\lambda'(v)$ & $p'(v)$ & $\mathrm{Odd}(p'(v))$ & $\{u | v \prec' u\}$ \\
\hline
$i$ & $XY$ & $b, o_2$ & $i, a$ & $a, b, o_1, o_2$ \\
$a$ & $YZ$ & $a, c, d, o_1, o_2$ & $c, d, o_1, o_2$ & $o_1, o_2$ \\
$b$ & $XY$ & $c, d$ & $b, c, d, o_1, o_2$ & $o_1, o_2$ \\
$c$ & $Y$ & $o_1$ & $c$ & $o_1$ \\
$d$ & $Y$ & $o_2$ & $d$ & $o_2$ \\
\end{tabular}
\end{center}
and $\hat{p'} = \{c, o_1, o_2\}$. The restriction of $\prec$ to $\prec'$ is not necessary, but will help to see the resemblance to the PDDAG later.

Now let's look at applying the corresponding rewrite on the PDDAG from \ref{ex:ExtractionExample}. That is, we use the \ref{lemma:CommutationRules} to move $(X_1 I_2, \tfrac{\pi}{2})$ (i.e. $e^{i \tfrac{\pi}{4} X_1 I_2}$) into the tableau. $X_1 I_2$ anticommutes with the rotations from $a$ and $b$, the tableau row from $b$, and the free stabilizer from $\hat{p}$. Each of these strings will then be updated by multiplication by $X_1 I_2$. These are exactly the correction/focussed sets that were updated in the measurement pattern by combining with $p(c)$. The result of this rewrite on the PDDAG is:

\begin{center}
\addtolength{\tabcolsep}{-3pt}
\begin{tabular}{c|cc|c}
Ins & \multicolumn{2}{c|}{Outs} & Sign \\
\hline
$X$ & $Z$ & $Z$ & $(-)^{a_d}$ \\
\hline
$Z$ & & $X$ & $+$ \\
\hline
& $Y$ & $X$ & $+$ \\
\end{tabular}
\addtolength{\tabcolsep}{3pt}
\begin{tikzcd}[ampersand replacement=\&,row sep=0.cm,column sep=0.5cm]
(I_1 X_2, \alpha(i)) \arrow{r} \arrow{rd} \& ((-1)^{a_d+1}Y_1 Y_2, \alpha(a)) \\
\& ((-1)^{a_d}Z_1 Z_2, \alpha(b))
\end{tikzcd}
\end{center}

One can check that this is identical to the PDDAG extracted using $(p', \prec)$.
\end{example}

\subsection{$Z$ Measurement Elimination}


\begin{lemma}\label{lemma:ZElimSemantics}
Let $(\Gamma, \alpha)$ describe a measurement pattern with some vertex $u \in \overline{O}$ such that $\lambda(u) \in \{XZ, YZ, Z\}$ and $\alpha(u) = a\pi$ ($a \in \{0, 1\}$). Then eliminating vertex $u$ gives an equivalent measurement pattern $(\Gamma', \alpha')$ where $\Gamma' = (G \setminus \{u\}, I, O, \lambda)$, for all $w \in \overline{O} \setminus \{u\}$
\begin{equation}
\alpha'(w) := \begin{cases}
\alpha(w) + a\pi & \text{if } w \in N_G(u) \wedge \lambda(w) \in \{XY, X, Y\} \\
(-1)^a \alpha(w) & \text{if } w \in N_G(u) \wedge \lambda(w) \in \{XZ, YZ\} \\
\alpha(w) & \text{otherwise}
\end{cases}
\end{equation}
and each output neighbouring $u$ is followed by a $Z^a$~gate.
\end{lemma}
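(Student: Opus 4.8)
The plan is a direct computation on the linear map of Equation~\ref{eq:PatternSemantics}: push the projection at $u$ through the entangling layer $E_G$, and then slide the Pauli operators that fall out of it leftwards through the remaining measurement effects. First I would note that at the fixed angle $\alpha(u)=a\pi$ the three bases $XZ$, $YZ$, $Z$ all collapse onto the computational basis: reading off the explicit states in Section~\ref{sec:MBQC} gives $\bra{+_{\lambda(u),a\pi}}_u\approx\bra{a}_u$ whenever $\lambda(u)\in\{XZ,YZ,Z\}$ (the only discrepancy, a factor $i$ in the $YZ$ case at $a=1$, is absorbed into $\approx$). Then I would split $E_G=\bigl(\prod_{w\in N_G(u)}E_{uw}\bigr)E_{G'}$ with $E_{G'}$ the entangling operator of $G'=G\setminus\{u\}$, use $\bra{a}_u E_{uw}=\bra{a}_u Z_w^a$, and commute the diagonal operators $Z_w^a$ past $E_{G'}$ and the remaining preparations; since $\bra{a}_u N_u$ contributes only the scalar $\tfrac1{\sqrt2}$, this yields
\[
\bra{a}_u E_G N_{\overline I}\;\approx\;\Bigl(\prod_{w\in N_G(u)} Z_w^a\Bigr)\,E_{G'}\,N_{\overline I\setminus\{u\}},
\]
where I would also record the (implicit) hypothesis $u\notin I$ that makes the stated $\Gamma'=(G\setminus\{u\},I,O,\lambda)$ well-formed, so that $\overline I\setminus\{u\}=\overline{I'}$.

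Substituting this into Equation~\ref{eq:PatternSemantics} and commuting the other measurement effects (which act on disjoint qubits) past $\bra{a}_u$, I would be left with $\bigl(\prod_{v\in\overline O\setminus\{u\}}\bra{+_{\lambda(v),\alpha(v)}}_v\bigr)\bigl(\prod_{w\in N_G(u)} Z_w^a\bigr)E_{G'}N_{\overline{I'}}$, and would then push each $Z_w^a$ to the far left. Since $Z_w^a$ commutes freely with every measurement effect except the one at $w$, the case analysis reduces to a small commutation table that I would check from the explicit basis states: $\bra{+_{XY,\beta}}Z=\bra{+_{XY,\beta+\pi}}$ and likewise for $X$ and $Y$; $\bra{+_{XZ,\beta}}Z=\bra{+_{XZ,-\beta}}$ and likewise for $YZ$; and $\bra{+_{Z,\beta}}Z=\pm\bra{+_{Z,\beta}}$. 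Taking the $a$-th power (and using $Z^0=I$ when $a=0$) these reproduce exactly the three cases for $\alpha'$ in the statement, with $Z$-labelled neighbours landing in the ``otherwise'' branch and contributing only a global sign that $\approx$ discards. If instead $w\in O$ there is no effect at $w$ to absorb, so $Z_w^a$ passes through every remaining projection and emerges at the far left as a $Z^a$ gate applied after the pattern.

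Collecting everything gives $M_{\Gamma,\alpha}\approx\bigl(\prod_{w\in N_G(u)\cap O} Z_w^a\bigr)\bigl(\prod_{v\in\overline O\setminus\{u\}}\bra{+_{\lambda(v),\alpha'(v)}}_v\bigr)E_{G'}N_{\overline{I'}}=\bigl(\prod_{w\in N_G(u)\cap O} Z_w^a\bigr)M_{\Gamma',\alpha'}$, i.e. the pattern $(\Gamma',\alpha')$ followed by a $Z^a$ gate on each output neighbour of $u$, as claimed. I expect no conceptual obstacle; the only real work is the basis-by-basis commutation table together with the bookkeeping of which $Z_w^a$ terms survive, and the main risk is a sign slip in the $XZ$/$YZ$ reflection or in the $YZ$/$Z$ global-phase steps. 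It is worth stressing that the hypothesis $\alpha(u)\in\{0,\pi\}$ is essential: it is precisely what turns the $u$-projection into a computational-basis effect, so that the $CZ$s incident to $u$ resolve into honest Pauli operators $Z_w^a$ on the neighbours rather than anything acting back on $u$.
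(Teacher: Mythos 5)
Your proposal is correct and takes essentially the same approach as the paper: the paper's proof sketch also reduces the $u$-projection to a computational-basis effect, resolves the incident $CZ$s into $Z_w^a$ operators on the neighbours, and absorbs those into the neighbouring measurement bases or outputs (it just defers the planar-label bookkeeping to \cite[Lemma 4.7]{Backens2020} and Equation~\ref{eq:PauliPlanarMap} rather than spelling out the commutation table as you do). The extra detail you supply—the explicit $\bra{a}_u E_{uw}=\bra{a}_u Z_w^a$ step and the per-basis action of $Z$—is precisely what the cited lemma contains, so nothing is missing.
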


\begin{proof}
This is just an extension of the planar case given in the proof of \cite[Lemma 4.7]{Backens2020} to cover Pauli labels by considering the embeddings from Equation \ref{eq:PauliPlanarMap}. The idea is that the $Z$-basis projections $\bra{0}$ and $\bra{1}$ reduce each of the \CZ~gates on that qubit to either an identity or a $Z$ gate (modelled as a $Z$ conditioned on the measurement angle), in the same way that they reduce when applied to a $\ket{0}$ or $\ket{1}$. This $Z$ commutes with other \CZ~gates to the neighbouring outputs or to change the effective measurement on the neighbour. The qubit $u$ can then be removed since the only operations on it are the preparation and measurement.
\end{proof}


\begin{lemma}\label{lemma:PreserveZElimProof}(Generalisation of \cite[Lemma 3.4]{Backens2020})
Let $(\Gamma, \alpha)$ and $(\Gamma', \alpha')$ be measurement patterns related by elimination of a vertex $u$ as in Lemma \ref{lemma:ZElimSemantics}. If $(p, \prec)$ is a Pauli flow for $\Gamma$, then $(p', \prec)$ is a Pauli flow for $\Gamma'$ where for any $v \in \overline{O} \setminus \{u\}$
\begin{equation}
p'(v) := \begin{cases}
p(v) \Delta p(u) & \text{if } u \in p(v) \\
p(v) & \text{if } u \notin p(v)
\end{cases}
\end{equation}

If $(p, \prec)$ is focussed, then so is $(p', \prec)$. Furthermore, if $\hat{p}$ is a focussed set for $\Gamma$, then it is also a focussed set for $\Gamma'$.
\end{lemma}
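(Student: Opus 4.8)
The plan is to verify directly that $(p', \prec)$ satisfies each Pauli flow condition \ref{PF1}--\ref{PF9} for $\Gamma'$, and then read off the focussed statements from Lemma \ref{lemma:AddFocussed}. I would start with a few preliminary observations that carry the argument. Since $\lambda(u) \in \{XZ, YZ, Z\}$, conditions \ref{PF5}, \ref{PF6}, \ref{PF8} for $(p, \prec)$ force $u \in p(u)$; in particular $u \in \overline{I}$ (so $u \notin I$ and $\Gamma'$ is well-defined), and for every $v \in \overline{O} \setminus \{u\}$ the set $p'(v)$ contains no $u$ (either $p'(v) = p(v)$ with $u \notin p(v)$, or $p'(v) = p(v) \Delta p(u)$ with the two copies of $u$ cancelling), so $p'(v) \subseteq \overline{I'}$. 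Next, whenever $u \in p(v)$, applying \ref{PF1} with $w = u$ (using $\lambda(u) \notin \{X, Y\}$) gives $v \prec u$, and then \ref{PF1}, \ref{PF2}, \ref{PF3} at $u$ yield $v \notin p(u)$ when $\lambda(v) \notin \{X, Y\}$, $v \notin \mathrm{Odd}_G(p(u))$ when $\lambda(v) \notin \{Y, Z\}$, and $v \in p(u) \Leftrightarrow v \in \mathrm{Odd}_G(p(u))$ when $\lambda(v) = Y$. The only graph-theoretic input is that $\mathrm{Odd}_{G'}(S) = \mathrm{Odd}_G(S) \setminus \{u\}$ for any $S \subseteq V \setminus \{u\}$, which together with linearity of $\mathrm{Odd}_G$ over $\Delta$ gives $\mathrm{Odd}_{G'}(p'(v))$ as $\mathrm{Odd}_G(p(v)) \setminus \{u\}$ when $u \notin p(v)$ and as $\big(\mathrm{Odd}_G(p(v)) \Delta \mathrm{Odd}_G(p(u))\big) \setminus \{u\}$ when $u \in p(v)$.

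With these facts the verification of \ref{PF1}--\ref{PF3} for $(p', \prec)$ is routine: in the case $u \in p(v)$, any element witnessing a membership in $\Gamma'$ comes either from $p(v)$ or $\mathrm{Odd}_G(p(v))$, handled by the same condition for $(p, \prec)$ at $v$, or from $p(u)$ or $\mathrm{Odd}_G(p(u))$, handled by the corresponding condition at $u$ (which produces $u \prec \cdot$) together with $v \prec u$ and transitivity; for \ref{PF3} one additionally uses \ref{PF3} at $u$, which applies because the offending $Y$-vertex is $\preceq v \prec u$. The label conditions \ref{PF4}--\ref{PF9} concern only membership of $v$ in $p'(v)$ and $\mathrm{Odd}_{G'}(p'(v))$; here the preliminary facts about $v$ versus $p(u)$ show that in the relevant case for $\lambda(v)$ we have $v \in p'(v) \Leftrightarrow v \in p(v)$ and $v \in \mathrm{Odd}_{G'}(p'(v)) \Leftrightarrow v \in \mathrm{Odd}_G(p(v))$, so the condition transfers unchanged from $(p, \prec)$.

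For the focussed parts, I would note that for $A \subseteq \overline{I} \setminus \{u\}$ and $S \subseteq \overline{O}$ with $u \notin S$, ``$A$ is focussed over $S$'' means the same thing in $\Gamma$ and in $\Gamma'$, since $\lambda$ is unchanged and $\mathrm{Odd}_{G'}(A)$ agrees with $\mathrm{Odd}_G(A)$ on all of $V \setminus \{u\} \supseteq S$. Then if $(p, \prec)$ is focussed, $p(v)$ is focussed over $\overline{O} \setminus \{v\}$ and $p(u)$ over $\overline{O} \setminus \{u\}$ in $\Gamma$, hence both over $\overline{O} \setminus \{u, v\}$; by Lemma \ref{lemma:AddFocussed} so is $p'(v)$ (whether it equals $p(v)$ or $p(v) \Delta p(u)$), and since $\overline{O} \setminus \{u, v\} = \overline{O'} \setminus \{v\}$ this says $(p', \prec)$ is focussed in $\Gamma'$. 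For a focussed set $\hat{p}$, condition \ref{FOC1} and $\lambda(u) \in \{XZ, YZ, Z\}$ force $u \notin \hat{p}$, so $\hat{p} \subseteq \overline{I'}$, and $\hat{p}$ being focussed over $\overline{O} \supseteq \overline{O'}$ in $\Gamma$ is focussed over $\overline{O'}$ in $\Gamma'$. Throughout, the angle relabelling in $\alpha'$ is irrelevant, since the flow conditions depend only on the graph and on $\lambda$.

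The step I expect to be the main obstacle is the bookkeeping for $\mathrm{Odd}_{G'}$ versus $\mathrm{Odd}_G$ in the case $u \in p(v)$ -- seeing that replacing $p(v)$ by $p(v) \Delta p(u)$ is exactly what makes the $u$-incident edges cancel correctly -- together with the need to work through all six label cases \ref{PF4}--\ref{PF9} to confirm that $v$'s membership in $p'(v)$ and in its odd neighbourhood is unchanged. None of this is deep, but it is where the care is needed.
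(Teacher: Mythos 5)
Your proof is correct and arrives at the right conclusion, but it takes a noticeably different route from the paper's, and one intermediate claim is stated slightly too strongly.

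The paper's proof does not re-verify all nine Pauli flow conditions by hand. Instead it observes that whenever $u \in p(v)$, condition \ref{PF1} gives $v \prec u$, so the replacement $p(v) \mapsto p(v) \Delta p(u)$ is an instance of Lemma \ref{lemma:AddFlows}; repeated applications show $(p', \prec)$ is already a Pauli flow for the \emph{original} graph $(G, I, O, \lambda)$, and then one only needs $\mathrm{Odd}_{G \setminus u}(A) = \mathrm{Odd}_G(A) \setminus \{u\}$ (together with $u \notin p'(v)$) to transfer the conditions to $G \setminus \{u\}$. Your approach is a direct case-by-case verification, which is longer but self-contained; the paper's approach is shorter because it factors the bulk of the case analysis into the already-proved Lemma \ref{lemma:AddFlows}.

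One intermediate claim in your sketch should be weakened. You write that the preliminary facts give both $v \in p'(v) \Leftrightarrow v \in p(v)$ \emph{and} $v \in \mathrm{Odd}_{G'}(p'(v)) \Leftrightarrow v \in \mathrm{Odd}_G(p(v))$ uniformly. This is only true when $\lambda(v)$ is planar. When $\lambda(v) = X$ you only get the $\mathrm{Odd}$ biconditional (since $v$ may well lie in $p(u)$, as \ref{PF1} at $u$ is vacuous for $\lambda(v) \in \{X, Y\}$); when $\lambda(v) = Z$ you only get the $p'$ biconditional; and when $\lambda(v) = Y$ you only get that the XOR is preserved, via your observation $v \in p(u) \Leftrightarrow v \in \mathrm{Odd}_G(p(u))$. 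Fortunately \ref{PF7}, \ref{PF8}, \ref{PF9} each use exactly the piece that does survive, so your conclusion holds, but the statement as written should be ``the membership(s) tested by the relevant condition for $\lambda(v)$ are preserved,'' not both biconditionals.

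For the focussed parts, the paper uses the much simpler observation that if $(p, \prec)$ is focussed then \ref{FOC1} (applied to $p(v)$) and $\lambda(u) \in \{XZ, YZ, Z\}$ already force $u \notin p(v)$ for all $v \neq u$, so $p'(v) = p(v)$ identically and there is nothing to prove. Your argument via Lemma \ref{lemma:AddFocussed} is more general (it would handle the case $u \in p(v)$, which in fact cannot occur when the flow is focussed) and also correct, but you should notice that the $p(v) \Delta p(u)$ branch is vacuous here. Your treatment of free focussed sets $\hat{p}$ via \ref{FOC1} is exactly the paper's.
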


\begin{proof}
We first need to show that $p'$ is a function from $\overline{O} \setminus \{u\}$ to $\mathcal{P}[\overline{I} \setminus \{u\}]$. This follows since $\lambda(u) \in \{XZ, YZ, Z\}$ implies $u \in p(u)$ (by conditions \ref{PF5}, \ref{PF6}, and \ref{PF8}), so both cases of the definition of $p'(v)$ give $u \notin p'(v)$.

Because $\lambda(u) \in \{XZ, YZ, Z\}$, if $u \in p(v)$ then $v \prec u$ from \ref{PF1}. This means $(p', \prec)$ is a valid Pauli flow for $(G, I, O, \lambda)$ by repeated applications of Lemma \ref{lemma:AddFlows}. In particular, this means it satisfies all of the Pauli flow conditions for any $v \in \overline{O} \setminus \{u\}$. These aren't invalidated when restricting to $G \setminus \{u\}$ since $\mathrm{Odd}_{G \setminus \{u\}}(A) = \mathrm{Odd}_G(A) \setminus \{u\}$ for any $A \subseteq V \setminus \{u\}$ and $\forall v \in \overline{O} \setminus \{u\} . p'(v) \subseteq V \setminus \{u\}$.

If $(p, \prec)$ is a focussed Pauli flow, then $u \notin p(v)$ for any $v \neq u$ since $\lambda(u) \in \{XZ, YZ, Z\}$. This means $p'(v) = p(v)$ for every $v$, so the focussed property is trivially preserved. Similarly, the same holds for focussed sets.
\end{proof}


\begin{lemma}
Let $(\Gamma, \alpha)$ and $(\Gamma', \alpha')$ be the measurement patterns and respective Pauli flows $(p, \prec)$ and $(p', \prec)$ from Lemma \ref{lemma:PreserveZElimProof} formed by the elimination of a vertex $u$. Then the new primary extraction string for each vertex $v \in \overline{O} \setminus \{u\}$ is given by
\begin{equation}\label{eq:ZElimStrings}
\mathbf{P'}^{\bot v} = (-1)^{a|p(v) \cap N_G(u) \cap (O \cup \{v\})|}  \left( \prod_{\substack{n \in (N_G(u) \cup \{u\}) \cap \overline{O} \\ n \succ v \\ n = u \Rightarrow \lambda(n) \in \{XZ, YZ\} \\ n \neq u \Rightarrow \lambda(n) = XY}} (-1)^{F_{v \to n}^p} \right)^{a} \mathbf{P}^{\bot v}
\end{equation}
where $\alpha(u) = a \pi$. Furthermore, the stabilizer from a focussed set $\hat{p}$ is updated similarly.
\end{lemma}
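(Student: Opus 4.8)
The plan is to follow the three-step template of this section, mirroring the proof of Lemma~\ref{lemma:FixPauliStrings}. Flow preservation is already Lemma~\ref{lemma:PreserveZElimProof}, so what remains is to track the change in the primary extraction strings. The key simplification is that focussedness kills the change in the correction sets: since $\lambda(u)\in\{XZ,YZ,Z\}$, condition~\ref{FOC1} forces $u\notin p(v)$ for every $v\in\overline{O}\setminus\{u\}$, so $p'(v)=p(v)$; and since $\mathrm{Odd}_{G\setminus\{u\}}(A)=\mathrm{Odd}_{G}(A)\setminus\{u\}$ with $u\notin p(v)$, the membership of $v$ in $p(v)$ and $\mathrm{Odd}(p(v))$ is unchanged, hence $P'^{\bot v}=P^{\bot v}$. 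Thus the only data that changes when re-running the construction in the proof of Lemma~\ref{lemma:ExtractionFromFocussed} for $(\Gamma',\alpha')$ with correction set $p(v)$ are the graph ($G$ versus $G\setminus\{u\}$) and the measurement data of the neighbours of $u$ (Lemma~\ref{lemma:ZElimSemantics}). When $\lambda(u)\in\{XZ,YZ\}$ one first relabels $u$ to the Pauli $Z$ at angle $a\pi$ (legitimate as $a\pi\in\{0,\pi\}$) via Lemma~\ref{lemma:FixPauliProof}; its effect on the strings is Lemma~\ref{lemma:FixPauliStrings} evaluated at angle $a\pi$, which contributes exactly the $(-1)^{aF^{p}_{v\to u}}$ factor matching the $n=u$ term of the product (there $u\succ v$ is automatic by~\ref{PF1}/\ref{PF2}), so afterwards only the case $\lambda(u)=Z$ has to be treated.

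For $\lambda(u)=Z$ we compare the three sign contributions of the proof of Lemma~\ref{lemma:ExtractionFromFocussed} between $(\Gamma,\alpha)$ and $(\Gamma',\alpha')$. The edge count inside $p(v)$ is unchanged, as $u\notin p(v)$ means no such edge meets $u$. The combined graph-state stabilizer over $p(v)$ in $G\setminus\{u\}$ is the one in $G$ with any $Z_u$ deleted, and $Z_u$ occurs iff $u\in\mathrm{Odd}_{G}(p(v))$; in $(\Gamma,\alpha)$ that $Z_u$ was absorbed into the $Z$-eigenstate $\bra{a}_u$, giving $(-1)^{a}$. On the other side, the replacements $\alpha'(n)=\alpha(n)+a\pi$ on the $X$/$Y$-labelled neighbours $n$ of $u$ flip which of them count toward the absorbed-$\pi$ term (only when $a=1$), and the reduction $\bra{a}_u E_{G}N_{\overline{I}}\mapsto\bigl(\prod_{x\in N_{G}(u)}Z_x\bigr)^{a}E_{G\setminus\{u\}}N_{\overline{I}\setminus\{u\}}$ deposits $\bigl(\prod_{x\in N_{G}(u)}Z_x\bigr)^{a}$ on the surviving qubits. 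Splitting that factor: its part on output neighbours of $u$ (and on $v$ itself when $v\sim u$) has to be commuted past the $X$'s and $Y$'s that $P^{\bot v}_v\mathbf{P}^{\bot v}$ already carries there---and $\mathbf{P}^{\bot v}$ carries an $X$ or $Y$ on an output exactly when that output lies in $p(v)$---which produces $(-1)^{a\,|p(v)\cap N_{G}(u)\cap(O\cup\{v\})|}$; its part on the future $XY$-neighbours $n$ of $u$ cannot be pushed onto the outputs and is instead absorbed into $\alpha'(n)=\alpha(n)+a\pi$, which (once $n$ has been extracted) amounts to an extra Pauli factor $\propto\mathbf{P}^{\bot n}$ over the outputs, and commuting $\mathbf{P}^{\bot v}$ past it yields $(-1)^{F^{p}_{v\to n}}$ by Lemma~\ref{lemma:AntiCommutingStringsProof} (the companion term $F^{p}_{n\to v}$ vanishes since $n\succ v$ forces $v\notin p(n)\cup\mathrm{Odd}(p(n))$ by~\ref{PF1}/\ref{PF2}); the remaining $Z_x$, on the Pauli neighbours, are absorbed symmetrically into projections. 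Collecting these signs---reconciling the stray $(-1)^{a}$ from the $Z_u$-absorption with the output-supported count via the parity of $|N_{G}(u)\cap p(v)|$ and the focussed constraints---and fixing the scalar with Lemma~\ref{lemma:EqualStrings}, yields Equation~\ref{eq:ZElimStrings}. The assertion for a focussed set $\hat{p}$ and its stabilizer is the same computation with no commutation needed at $v$.

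Conceptually the lemma is light: focussedness removes the change in the correction sets, and projecting a $Z$-basis eigenvector through the entangling $CZ$s is the routine computation underlying Lemma~\ref{lemma:ZElimSemantics}. The real work, as in Lemma~\ref{lemma:FixPauliStrings}, is the sign bookkeeping---reconciling the $(-1)$'s coming from (i) absorbing the graph-stabilizer $Z_u$ into $u$'s projection, (ii) commuting the output-supported $Z_x$'s past $\mathbf{P}^{\bot v}$, and (iii) the shift in which $X$/$Y$/$XY$-neighbours of $u$ count as $\pi$-angled in $\Gamma'$---and showing their sum equals the two product factors of the claimed formula. Keeping straight whether a given neighbour of $u$ is an output, a future planar vertex, a past planar vertex (which drops out of both sides by~\ref{PF1}/\ref{PF2}) or a Pauli vertex, and its membership in $p(v)$ and $\mathrm{Odd}(p(v))$ (tightly constrained by~\ref{FOC1} and the flow conditions), is where essentially all the effort goes, but no technique beyond those already used in this appendix is needed.
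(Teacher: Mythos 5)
Your proof is correct and the underlying technique is the paper's: compare the extraction-string construction of Lemma~\ref{lemma:ExtractionFromFocussed} for $(\Gamma,\alpha)$ against that for $(\Gamma',\alpha')$, track the signs, and pin down the scalar with Lemma~\ref{lemma:EqualStrings}. The genuine structural difference is your reduction of the $\lambda(u)\in\{XZ,YZ\}$ cases to $\lambda(u)=Z$ via Lemma~\ref{lemma:FixPauliProof}/Lemma~\ref{lemma:FixPauliStrings} evaluated at angle $a\pi$, whereas the paper treats all three label cases in a single derivation with the indicator $z_u$. Your reduction is a clean modularization: at $a\pi$ the FixPauli update degenerates to $P'^{\bot v}_vP^{\bot v}_v\bigl((-1)^aI\bigr)^{F^p_{v\to u}}=(-1)^{aF^p_{v\to u}}$, which is exactly the $n=u$ term of the product in Equation~\eqref{eq:ZElimStrings} (the $n\succ v$ gate is harmless since for planar $u$, $F^p_{v\to u}=1$ forces $v\prec u$ by~\ref{PF1}/\ref{PF2}). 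Your parity argument for the remaining $Z$ case---that the $(-1)^a$ from absorbing the graph-stabilizer $Z_u$, present iff $\lvert N_G(u)\cap p(v)\rvert$ is odd, combines with the flipped $\pi$-counts on Pauli-$X/Y$ neighbours and the focussed partition of $N_G(u)\cap p(v)$ to leave precisely the output-supported and $XY$-planar factors---does work out modulo $2$, though it compresses several lines of the paper's explicit commutation into a single sentence.

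One local slip: the parenthetical ``the companion term $F^p_{n\to v}$ vanishes since $n\succ v$ forces $v\notin p(n)\cup\mathrm{Odd}(p(n))$ by~\ref{PF1}/\ref{PF2}'' is not true in general. Those conditions only exclude $v\in p(n)$ when $\lambda(v)\notin\{X,Y\}$ and $v\in\mathrm{Odd}(p(n))$ when $\lambda(v)\notin\{Y,Z\}$, and the lemma ranges over all $v\in\overline{O}\setminus\{u\}$, including Pauli-labelled $v$. The conclusion nevertheless holds for the right reason: commuting $P^{\bot v}_v\mathbf{P}^{\bot v}$ past $P^{n\to v}_v\mathbf{P}^{\bot n}$ produces one factor $(-1)^{F^p_{n\to v}}$ from the $v$-component ($P^{\bot v}P^{n\to v}=(-1)^{F^p_{n\to v}}P^{n\to v}P^{\bot v}$) and another inside $\mathbf{P}^{\bot v}\mathbf{P}^{\bot n}=(-1)^{F^p_{v\to n}+F^p_{n\to v}}\mathbf{P}^{\bot n}\mathbf{P}^{\bot v}$ from Lemma~\ref{lemma:AntiCommutingStringsProof}, and the two occurrences of $F^p_{n\to v}$ cancel---this is exactly how the paper's derivation arrives at the lone $(-1)^{F^p_{v\to n}}$. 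So the step should appeal to the cancellation rather than to a vanishing of $F^p_{n\to v}$.

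Beyond that, the outline is sound but remains a sketch: the three sign sources you name in the $Z$ case match the paper's derivation, and your recognition that focussedness gives $u\notin p(v)$, hence $p'(v)=p(v)$ and $P'^{\bot v}=P^{\bot v}$, is the same crucial first observation the paper makes.
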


\begin{proof}
The focussed property implies $u \notin p(v)$ (resp. $u \notin \hat{p}$), so $p'(v) = p(v)$ (resp. $\hat{p}$ is still a focussed set). Then $p'(v) \cup \mathrm{Odd}_{G \setminus u}(p'(v)) = (p(v) \cup \mathrm{Odd}_G(p(v))) \setminus \{u\}$ (resp. $\hat{p} \cup \mathrm{Odd}_{G \setminus u}(\hat{p}) = (\hat{p} \cup \mathrm{Odd}_G(\hat{p})) \setminus \{u\}$). This is enough to show that the strings are preserved up to phase.

We find a primary extraction string for $v$ in $(\Gamma', \alpha')$ by considering the linear map $\mathcal{C'}$ from Definition \ref{def:ExtractionStringDef}, using the respective linear map $\mathcal{C}$ for $(\Gamma, \alpha)$.
\begin{equation}
\mathcal{C'} = \left( \prod_{\substack{w \succ v \\ w \neq u \\ \lambda(w) \notin \{X, Y, Z\}}} \bra{+_{\lambda(w), 0}}_w \right) \left( \prod_{\substack{w \in \overline{O} \setminus \{u\} \\ \lambda(w) \in \{X, Y, Z\}}} \bra{+_{\lambda(w), \alpha'(w)}}_w \right) E_{G \setminus u} N_{\overline{I} \setminus \{u\}}
\end{equation}

We can reintroduce qubit $u$ with measurement angle $0$. We can then map the angles for the Pauli projections from $\alpha'$ to $\alpha$ using orthogonal Paulis, letting $z_u := 1$ if $\lambda(u) = Z$ and $0$ if $\lambda(u) \in \{XZ, YZ\}$.

\begin{equation}
\begin{split}
\mathcal{C'} \approx& \sqrt{2} \left( \prod_{\substack{w \succ v \\ w \neq u \\ \lambda(w) \notin \{X, Y, Z\}}} \bra{+_{\lambda(w), 0}}_w \right) \left( \prod_{\substack{w \in \overline{O} \setminus \{u\} \\ \lambda(w) \in \{X, Y, Z\}}} \bra{+_{\lambda(w), \alpha'(w)}}_w \right) \bra{+_{\lambda(u), 0}}_u E_{G} N_{\overline{I}} \\
\approx& \sqrt{2} \left( \prod_{\substack{w \succ v \vee w = u \\ \lambda(w) \notin \{X, Y, Z\}}} \bra{+_{\lambda(w), 0}}_w \right) \left( \prod_{\substack{w \in \overline{O} \\ \lambda(w) \in \{X, Y, Z\}}} \bra{+_{\lambda(w), \alpha(w)}}_w \right) \\ & (P^{\bot u}_u)^{z_u a} \left( \prod_{\substack{n \in N_G(u) \\ \lambda(n) \in \{X, Y\}}} P^{\bot n}_n \right)^{a} E_{G} N_{\overline{I}}
\end{split}
\end{equation}

We can then apply the stabilizer of the graph state centred around $u$ to move these Paulis to other qubits, some of which can be absorbed into projections again.

\begin{equation}
\begin{split}
\mathcal{C'} \approx& \sqrt{2} \left( \prod_{\substack{w \succ v \\ \lambda(w) \notin \{X, Y, Z\}}} \bra{+_{\lambda(w), 0}}_w \right) \left( \prod_{\substack{w \in \overline{O} \\ \lambda(w) \in \{X, Y, Z\}}} \bra{+_{\lambda(w), \alpha(w)}}_w \right) \\ & (P^{\bot u}_u)^{(1-z_u) a} \left( \prod_{n \in N_G(u) \cap (O \cup \{v\})} Z_n \right)^a \left( \prod_{\substack{n \in N_G(u) \cap \overline{O} \\ n \succ v \\ \lambda(n) = XY}} P^{\bot n}_n \right)^{a} \left( \prod_{\substack{n \in N_G(u) \cap \overline{O} \\ n \preceq v \\ \lambda(n) \notin \{X, Y, Z\}}} Z_n \right)^{a} E_{G} N_{\overline{I}}
\end{split}
\end{equation}

Each of the Paulis trapped under projections can be replaced by the primary extraction strings of those qubits. We leave the extraction strings without any order since reordering only introduces phases at this point which will be later removed.

\begin{equation}
\begin{split}
\mathcal{C'} \approx& \sqrt{2} \left(\prod_{\substack{w = u \\ w \succ v}} P^{u \to v}_v \mathbf{P}^{\bot u} \prod_{\substack{w = u \\ w \preceq v}} P^{\bot u}_u \right)^{(1-z_u) a} \left( \prod_{n \in N_G(u) \cap (O \cup \{v\})} Z_n \right)^a \\ & \left( \prod_{\substack{n \in N_G(u) \cap \overline{O} \\ n \succ v \\ \lambda(n) = XY}} P^{n \to v}_v \mathbf{P}^{\bot n} \right)^{a} \left( \prod_{\substack{n \in N_G(u) \cap \overline{O} \\ n \preceq v \\ \lambda(n) \notin \{X, Y, Z\}}} Z_n \right)^{a} \mathcal{C} \\
\end{split}
\end{equation}

We can now introduce $P^{\bot v}_v \mathbf{P}^{\bot v}$. Commuting this to the front induces phase changes according to Lemma \ref{lemma:AntiCommutingStringsProof} and $P^{\bot v} P^{u \to v} = (-1)^{F_{u \to v}^p} P^{u \to v} P^{\bot v}$. We then undo the previous steps to get back to $\mathcal{C'}$.

\begin{align}
\mathcal{C'} \approx& \sqrt{2} \left(\prod_{\substack{w = u \\ w \succ v}} P^{u \to v}_v \mathbf{P}^{\bot u} \prod_{\substack{w = u \\ w \preceq v}} P^{\bot u}_u \right)^{(1-z_u) a} \left( \prod_{n \in N_G(u) \cap (O \cup \{v\})} Z_n \right)^a \nonumber \\ & \left( \prod_{\substack{n \in N_G(u) \cap \overline{O} \\ n \succ v \\ \lambda(n) = XY}} P^{n \to v}_v \mathbf{P}^{\bot n} \right)^{a} \left( \prod_{\substack{n \in N_G(u) \cap \overline{O} \\ n \preceq v \\ \lambda(n) \notin \{X, Y, Z\}}} Z_n \right)^{a} P^{\bot v}_v \mathbf{P}^{\bot v} \mathcal{C} \nonumber \\
=& \sqrt{2} P^{\bot v}_v \left(\prod_{\substack{w = u \\ w \succ v}} (-1)^{F_{u \to v}^p} P^{u \to v}_v \mathbf{P}^{\bot u} \prod_{\substack{w = u \\ w \preceq v}} P^{\bot u}_u \right)^{(1-z_u) a} \\ & (-1)^{a|p(v) \cap N_G(u) \cap \{v\}|} \left( \prod_{n \in N_G(u) \cap (O \cup \{v\})} Z_n \right)^a \nonumber \\ & \left( \prod_{\substack{n \in N_G(u) \cap \overline{O} \\ n \succ v \\ \lambda(n) = XY}} (-1)^{F_{n \to v}^p} P^{n \to v}_v \mathbf{P}^{\bot n} \right)^{a} \left( \prod_{\substack{n \in N_G(u) \cap \overline{O} \\ n \preceq v \\ \lambda(n) \notin \{X, Y, Z\}}} Z_n \right)^{a} \mathbf{P}^{\bot v} \mathcal{C} \nonumber \\
=& \sqrt{2} P^{\bot v}_v \mathbf{P}^{\bot v} \left(\prod_{\substack{w = u \\ w \succ v}} (-1)^{F_{v \to u}^p} P^{u \to v}_v \mathbf{P}^{\bot u} \prod_{\substack{w = u \\ w \preceq v}} P^{\bot u}_u \right)^{(1-z_u) a} \nonumber \\ & (-1)^{a|p(v) \cap N_G(u) \cap (O \cup \{v\})|}  \left( \prod_{n \in N_G(u) \cap (O \cup \{v\})} Z_n \right)^a \nonumber \\ & \left( \prod_{\substack{n \in N_G(u) \cap \overline{O} \\ n \succ v \\ \lambda(n) = XY}} (-1)^{F_{v \to n}^p} P^{n \to v}_v \mathbf{P}^{\bot n} \right)^{a} \left( \prod_{\substack{n \in N_G(u) \cap \overline{O} \\ n \preceq v \\ \lambda(n) \notin \{X, Y, Z\}}} Z_n \right)^{a} \mathcal{C} \nonumber \\
\approx& P^{\bot v}_v \mathbf{P}^{\bot v} (-1)^{a|p(v) \cap N_G(u) \cap (O \cup \{v\})|}  \left( \prod_{\substack{n \in (N_G(u) \cup \{u\}) \cap \overline{O} \\ n \succ v \\ n = u \Rightarrow \lambda(n) \in \{XZ, YZ\} \\ n \neq u \Rightarrow \lambda(n) = XY}} (-1)^{F_{v \to n}^p} \right)^{a} \mathcal{C'} \nonumber
\end{align}

The calculations for focussed sets follow similarly. We can finish by using Lemma \ref{lemma:EqualStrings} to conclude that these primary extraction strings (resp. stabilizers) are the ones generated by the new Pauli flow (resp. focussed sets).
\end{proof}


\begin{theorem}\label{thrm:ZElimRelateProof}(Restatement of Theorem \ref{thrm:ZElimRelate})
Let $(\Gamma, \alpha)$ describe a measurement pattern with some vertex $u \in \overline{O}$ such that $\lambda(u) \in \{XZ, YZ, Z\}$ and $\alpha(u) \in \{0, \pi\}$. Eliminating $u$ from the graph corresponds to the following sequence of actions on the PDDAG:

\begin{enumerate}
\item\label{ZElimPDDAGInitial} If $u$ has a planar ($XZ$ or $YZ$) label, then its rotation is pulled from the rotation DAG into the stabilizer block;
\item\label{ZElimPDDAGOutputs} For each neighbour $n$ of $u$ that is an output, a $Z_n$ rotation of $\alpha(u)$ is pulled from the stabilizer block through the entire rotation DAG to the end of the circuit;
\item\label{ZElimPDDAGPlanar} For each neighbour $n$ of $u$ with $\lambda(n) = XY$, a $\mathbf{P}^{\bot n}$ rotation of $\alpha(u)$ is pulled from the stabilizer block and merged with the existing rotation for $n$.
\end{enumerate}
\end{theorem}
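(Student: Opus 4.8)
The plan is to follow the standard three-part template of this appendix. Parts one and two — preservation of (focussed) Pauli flow and the explicit update formula for the primary extraction strings and free stabilizers — are exactly Lemma \ref{lemma:PreserveZElimProof} and the lemma immediately preceding this theorem (culminating in Equation \ref{eq:ZElimStrings}). So the real content of the theorem is part three: checking that the PDDAG rewrite built from actions (1)--(3) realises precisely that change of Pauli strings, phases and angles, together with the $Z^a$ output gates from Lemma \ref{lemma:ZElimSemantics}.

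First I would reduce to the case $\lambda(u) = Z$. Since $\alpha(u)\in\{0,\pi\}$, Lemma \ref{lemma:FixPauliSemantics} relabels a planar $u$ (in either the $XZ$ or the $YZ$ plane) to $(Z,\alpha(u))$, and by Theorem \ref{thrm:FixPauliRelate} this relabelling is implemented on the PDDAG exactly by pulling the node $((-1)^{D_u}\mathbf{P}^{\bot u},\alpha(u))$ — a $0$- or $\pi$-rotation, i.e.\ the identity or a Pauli up to phase — into the initial stabilizer block. That is action (1). One then checks that $Z$-elimination of the relabelled vertex produces the same $(\Gamma',\alpha')$ as direct elimination of the planar vertex: both are given by Lemma \ref{lemma:ZElimSemantics}, whose formulas for $\alpha'$ and for the appended $Z^a$ gates depend on $u$ only through $\lambda(u)\in\{XZ,YZ,Z\}$, not on which of the three it is. Hence it suffices to prove the theorem for $\lambda(u)=Z$, where action (1) is vacuous and only (2)--(3) remain. (Here and below we use $u\notin I$, which is implicit in $\Gamma' = (G\setminus\{u\},I,O,\lambda)$.)

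For $\lambda(u)=Z$ the key algebraic observation is $\bra{+_{Z,a\pi}}_u E_{un} = \bra{+_{Z,a\pi}}_u Z_n^a$ for each neighbour $n$, so deleting the edges incident to $u$ turns the stabilizer block $\mathcal{C}$ into the new block $\mathcal{C}'$ while "releasing", for every $n\in N_G(u)$, a factor $Z_n^a$; the now-isolated vertex $u$ contributes only the scalar $\bra{+_{Z,a\pi}}_u N_u$ and is deleted. Each released $Z_n^a$ must then be commuted past everything to its left, i.e.\ later in time. If $n\in O$ it exits the circuit as a $(Z_n,a\pi)$ gate at the very end of the rotation DAG — action (2); if $n$ is measured it meets $n$'s projection: when $\lambda(n)=XY$ the projection's angle shifts by $a\pi$ (the $\alpha'$ update of Lemma \ref{lemma:ZElimSemantics}), which on the PDDAG side is merging a $(\mathbf{P}^{\bot n},a\pi)$ rotation into the extracted node for $n$ — action (3); when $\lambda(n)\in\{XZ,YZ\}$ the angle negates, absorbed by the sign of $\mathbf{P'}^{\bot n}$ together with the convention of writing each extracted rotation with a fixed angle; and when $\lambda(n)$ is already a Pauli the factor is absorbed with no visible change. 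Finally one verifies the phases match Equation \ref{eq:ZElimStrings}: a released $(Z_n,a\pi)$ with $n\in O$ anticommutes with an extracted string $\mathbf{P}^{\bot v}$ iff $\mathbf{P}^{\bot v}$ carries an $X$ or a $Y$ on qubit $n$, which by the construction in Lemma \ref{lemma:ExtractionFromFocussed} happens exactly when $n\in p(v)$, producing the factor $(-1)^{a|p(v)\cap N_G(u)\cap(O\cup\{v\})|}$; a $(\mathbf{P}^{\bot n},a\pi)$ from action (3) anticommutes with $\mathbf{P}^{\bot v}$ according to Lemma \ref{lemma:AntiCommutingStringsProof}, producing the $(-1)^{aF_{v\to n}^p}$ factors; and the identical analysis applies verbatim to the tableau rows (input-extension $Z$ and $X$ strings and free stabilizers). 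Equality up to the global phases discarded along the way is then pinned down by Lemma \ref{lemma:EqualStrings}, so the transformed PDDAG coincides with the one extracted from $(\Gamma',\alpha')$ via $(p',\prec)$.

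The main obstacle I expect is the sign bookkeeping of this last step: tracking precisely which rotation nodes and tableau rows each released $Z_n^a$ (or $\mathbf{P}^{\bot n}$) factor anticommutes with as it is transported through the whole DAG, and reconciling the resulting product of signs with the somewhat opaque exponents of Equation \ref{eq:ZElimStrings} — in particular the delicate interaction between the angle negation on $XZ/YZ$-neighbours and the "fixed-angle, sign-in-the-string" convention for extracted rotations, and making sure the "$\cup\{v\}$" refinements in those exponents come out of exactly that interaction.
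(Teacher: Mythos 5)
Your decomposition is sound and lands on essentially the same conclusion as the paper, but you take a mild detour the paper does not explicitly take. The paper treats all three PDDAG actions uniformly as ``Pauli operators moving through rotations'': since $\alpha(u)\in\{0,\pi\}$, action~(1) moves the Pauli $\mathbf{P}^{\bot u}$, and actions~(2)--(3) move the $Z_n$ or $\mathbf{P}^{\bot n}$ Paulis; a single \ref{lemma:CommutationRules}-based phase argument then shows each move accounts for exactly the signs in Equation~\ref{eq:ZElimStrings}. You instead split off action~(1) as an invocation of Theorem~\ref{thrm:FixPauliRelate}, reducing to $\lambda(u)=Z$, and then give a direct semantic derivation for the remaining actions via $\bra{+_{Z,a\pi}}_u E_{un} = \bra{+_{Z,a\pi}}_u Z_n^a$. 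The reduction is legitimate and arguably clarifying: Lemma~\ref{lemma:FixPauliProof} gives $p'=p$ for $\alpha(u)\in\{0,\pi\}$ and Lemma~\ref{lemma:ZElimSemantics}'s updates depend on $u$ only through $\lambda(u)\in\{XZ,YZ,Z\}$, so the composite matches the direct planar elimination at both the pattern and PDDAG levels. The cost is that your semantic ``releasing'' argument partially reproves the preceding lemma rather than just consuming Equation~\ref{eq:ZElimStrings} as the paper does, and your phrase that Pauli-labelled neighbours absorb $Z_n^a$ ``with no visible change'' glosses over the fact that the angle shift $\alpha'(n)=\alpha(n)+a\pi$ on $\{X,Y\}$-neighbours does feed into the $(-1)^c$ exponent of Lemma~\ref{lemma:ExtractionFromFocussed} and so must cancel inside the bookkeeping of the preceding lemma. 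You are right that the sign verification against $(-1)^{a|p(v)\cap N_G(u)\cap(O\cup\{v\})|}$ and the $(-1)^{aF^p_{v\to n}}$ factors is where the remaining work is; note also that reconciling Lemma~\ref{lemma:AntiCommutingStringsProof}'s two-sided exponent $F^p_{u\to v}+F^p_{v\to u}$ with the one-sided $F^p_{v\to n}$ in Equation~\ref{eq:ZElimStrings} relies on $F^p_{n\to v}=0$ for planar $v\prec n$, which follows from \ref{PF1}--\ref{PF2}; making that step explicit would close the gap you flagged.
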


\begin{proof}
Each action is moving Pauli operators through rotations. The \ref{lemma:CommutationRules} simplify to inducing phase flips. In the PDDAG, we can flip either the operator or angle of the rotation. The $\alpha(u) = 0$ case is trivial, so we suppose that $\alpha(u) = \pi$ ($a = 1$).

For both the rotations and tableau rows, we need only consider extraction strings of planar vertices. Each of the rotations moved through the PDDAG to/from location $w$ (either $w = u$ or $w \in N_G(u)$) would induce a phase shift in the rotation or tableau row for $v$ if the corresponding Pauli string $\mathbf{P}^{\bot w}$ anticommutes with $\mathbf{P}^{\bot v}$ and originated beyond $v$ (either $w$ is an output or $v \preceq w$), i.e. $w \in p(v) \cap O$ or $F_{v \to w}^p = 1$. Equation \ref{eq:ZElimStrings} also negates the angle on $v$ if $v \in N_G(u)$ and $\lambda(v) \in \{XZ, YZ\}$ (i.e. $v \in p(v)$), as covered in Lemma \ref{lemma:ZElimSemantics}.
\end{proof}

\begin{example}\label{ex:ZElimExample}
We will use the same example measurement pattern as Example \ref{ex:ExtractionExample}. $a$ is the only vertex with a label ready for vertex elimination, so to see something actually interesting we suppose $\alpha(a) = \pi$. Eliminating $a$ gives the following residual pattern.

\begin{center}
\tikzfig{tikz_figs/mbqc_rewrite_elim}
\begin{tabular}{ccc|ccc}
$v$ & $\lambda'(v)$ & $\alpha'(v)$ & $p'(v)$ & $\mathrm{Odd}(p'(v))$ & $\{u | v \prec u\}$ \\
\hline
$i$ & $XY$ & $\alpha(i)$ & $b, o_2$ & $i$ & $b, c, o_1, o_2$ \\
$b$ & $XY$ & $\alpha(b)+\pi$ & $c, d, o_1$ & $b, d, o_1, o_2$ & $c, o_1, o_2$ \\
$c$ & $XY$ & $\alpha(c)+\pi$ & $o_1$ & $c$ & $o_1$ \\
$d$ & $Y$ & $\alpha(d)+\pi$ & $o_2$ & $d$ & $o_2$ \\
\end{tabular}
\end{center}

The focussed set $\hat{p} = \{c, o_2\}$ remains valid. Following the rewrite sequence in the PDDAG, step \ref{ZElimPDDAGInitial} sees the rotation from $a$ pulled into the tableau. The $Z_1 Y_2$ term anticommutes with the rotation and tableau row from $i$, as well as the free stabilizer from $\hat{p}$.

\begin{center}
\addtolength{\tabcolsep}{-3pt}
\begin{tabular}{c|cc|c}
Ins & \multicolumn{2}{c|}{Outs} & Sign \\
\hline
$X$ & $Y$ & $Z$ & $(-)^{a_d+1}$ \\
\hline
$Z$ & & $X$ & $-$ \\
\hline
& $Z$ & $X$ & $-$ \\
\end{tabular}
\addtolength{\tabcolsep}{3pt}
\begin{tikzcd}[ampersand replacement=\&,row sep=0.cm,column sep=0.5cm]
(-I_1 X_2, \alpha(i)) \arrow{r} \& ((-1)^{a_d+1}Y_1 Z_2, \alpha(b)) \arrow{r} \& (X_1 I_2, \alpha(c))
\end{tikzcd}
\end{center}

For step \ref{ZElimPDDAGOutputs}, none of the neighbours of $a$ are outputs, so there is no change. For step \ref{ZElimPDDAGPlanar}, only $b$ and $c$ are labelled $XY$, so we move $Y_1Z_2$ and $X_1I_2$ from the tableau and merge into their corresponding rotations.

\begin{center}
\addtolength{\tabcolsep}{-3pt}
\begin{tabular}{c|cc|c}
Ins & \multicolumn{2}{c|}{Outs} & Sign \\
\hline
$X$ & $Y$ & $Z$ & $(-)^{a_d}$ \\
\hline
$Z$ & & $X$ & $+$ \\
\hline
& $Z$ & $X$ & $+$ \\
\end{tabular}
\addtolength{\tabcolsep}{3pt}
\begin{tikzcd}[ampersand replacement=\&,row sep=0.cm,column sep=0.5cm]
(I_1 X_2, \alpha'(i)) \arrow{r} \& ((-1)^{a_d}Y_1 Z_2, \alpha'(b)) \arrow{r} \& (X_1 I_2, \alpha'(c))
\end{tikzcd}
\end{center}

This same PDDAG is obtained by extracting from the reduced measurement pattern above. Since extraction will give everything in terms of $\alpha'$, we note that the phase on the primary extraction string from $b$ is now $(-1)^{1 + \alpha'(d)/\pi} = (-1)^{a_d}$.
\end{example}

\subsection{Local Complementation of Graphs}


\begin{definition}
Given a graph $G = (V, E)$ with some designated vertex $u \in V$, the local complementation of $G$ about $u$ is the operation resulting in the graph:
\begin{equation}
G \star u := (V, E \Delta \{(v, w) | (v, u), (w, u) \in E \wedge v \neq w \})
\end{equation}
\end{definition}


\begin{lemma}\label{lemma:LocalCompSemantics}
Let $(\Gamma, \alpha)$ describe a measurement pattern with $\Gamma = (G, I, O, \lambda)$ and some vertex $u \in \overline{I}$. Then performing a local complementation about $u$ gives an equivalent measurement pattern $(\Gamma', \alpha')$ where $\Gamma' = (G \star u, I, O, \lambda')$, for all $v \in \overline{O} \setminus \{u\}$

\begin{align}
(\lambda'(u), \alpha'(u)) &:= \begin{cases}
(XZ, \alpha(u) + \tfrac{\pi}{2}) & \text{if } \lambda(u) = XY \\
(XY, \tfrac{\pi}{2} - \alpha(u)) & \text{if } \lambda(u) = XZ \\
(YZ, \alpha(u) + \tfrac{\pi}{2}) & \text{if } \lambda(u) = YZ \\
(X, \alpha(u)) & \text{if } \lambda(u) = X \\
(Z, \alpha(u) + \pi) & \text{if } \lambda(u) = Y \\
(Y, \alpha(u)) & \text{if } \lambda(u) = Z \\
\end{cases} \\
(\lambda'(v), \alpha'(v)) &:= \begin{cases}
(XY, \alpha(v) + \tfrac{\pi}{2}) & \text{if } v \in N_G(u) \wedge \lambda(v) = XY \\
(YZ, \alpha(v)) & \text{if } v \in N_G(u) \wedge \lambda(v) = XZ \\
(XZ, -\alpha(v)) & \text{if } v \in N_G(u) \wedge \lambda(v) = YZ \\
(Y, \alpha(v)) & \text{if } v \in N_G(u) \wedge \lambda(v) = X \\
(X, \alpha(v) + \pi) & \text{if } v \in N_G(u) \wedge \lambda(v) = Y \\
(\lambda(v), \alpha(v)) & \text{otherwise}
\end{cases}
\end{align}
plus each output neighbouring $u$ is followed by an $RZ(-\tfrac{\pi}{2})$~gate and if $u$ is an output then it is followed by an $RX(\tfrac{\pi}{2})$~gate.
\end{lemma}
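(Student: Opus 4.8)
The plan is to establish the semantic equivalence $M_{\Gamma,\alpha} \approx M_{\Gamma',\alpha'}$ locally, by exhibiting a unitary that implements local complementation on the graph state and then absorbing it into the neighbouring measurements and outputs. The starting point is the well-known stabilizer identity for local complementation: if $U_u := e^{-i\frac{\pi}{4}X_u}\prod_{v\in N_G(u)} e^{i\frac{\pi}{4}Z_v}$ (a local Clifford supported on $u$ and its neighbourhood), then $U_u\, E_G N_{\overline I} \approx E_{G\star u} N_{\overline I}$. I would first verify this by using the graph-state stabilizers from Equation~\eqref{eq:GraphStateStabilizer} together with the \ref{lemma:CommutationRules}: conjugating each edge-introducing $CZ$ by the phase gates on the relevant qubits produces exactly the toggled edges among $N_G(u)$, with the diagonal terms cancelling. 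This is the algebraic core and is essentially a transcription of the standard ZX local-complementation rule into the $E_G N_{\overline I}$ formalism.

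Next I would insert $U_u^\dagger U_u$ just after $E_G N_{\overline I}$ in the expression for $M_{\Gamma,\alpha}$ (Equation~\eqref{eq:PatternSemantics}), so that $U_u$ converts $E_G N_{\overline I}$ into $E_{G\star u}N_{\overline I}$ and $U_u^\dagger$ must be pushed leftward through the measurement projections. The half-$\frac{\pi}{2}$ rotations $e^{\pm i\frac{\pi}{4}X_u}$, $e^{\pm i\frac{\pi}{4}Z_v}$ act on each bra $\bra{+_{\lambda(v),\alpha(v)}}_v$ exactly as a rotation of the measurement plane: using the identifications in Equation~\eqref{eq:RotateBasis} and Equation~\eqref{eq:PauliPlanarMap}, a $Z$-rotation by $-\frac{\pi}{2}$ maps $XY$-measurement at $\alpha$ to $XY$ at $\alpha+\frac{\pi}{2}$, maps $XZ$ to $YZ$, maps $YZ$ to $XZ$ (with angle negation), and on Pauli labels $X\mapsto Y$, $Y\mapsto X$ (with a $\pi$ shift), $Z$ fixed — and symmetrically the $X$-rotation on $u$ itself gives the six cases for $(\lambda'(u),\alpha'(u))$. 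For a neighbour $v$ that is an \emph{output}, there is no bra to absorb into, so the residual $e^{i\frac{\pi}{4}Z_v}$ survives as an $RZ(-\frac{\pi}{2})$ gate applied after the pattern; likewise if $u\in O$ the residual $e^{-i\frac{\pi}{4}X_u}$ becomes an $RX(\frac{\pi}{2})$ gate. Assembling these case-by-case computations yields precisely the stated $\lambda'$, $\alpha'$, and trailing gates.

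The main obstacle is bookkeeping rather than conceptual: getting every sign and every $\pm\frac{\pi}{2}$ offset correct across the six label cases for $u$ and the five for each neighbour $v$, including the directional subtlety (the $D_v$ distinction between $YZ$ and the others in Equation~\eqref{eq:RotateBasis} is what forces the angle negations, e.g. $YZ$ at $\alpha$ going to $XZ$ at $-\alpha$). I would organize this by writing $U_u$ as a product of single-qubit Cliffords, tracking separately (i) how $e^{-i\frac{\pi}{4}X_u}$ acts on the bra/label at $u$, and (ii) how each $e^{i\frac{\pi}{4}Z_v}$ acts on the bra/label at $v$ or escapes as an output gate, and checking each row of the two case-splits against Equation~\eqref{eq:PauliPlanarMap}. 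One should also note that the choice of overall sign convention for the phase gates in $U_u$ is what pins down whether $X\leftrightarrow Y$ labels on neighbours pick up the $+\pi$ shift on the $Y\to X$ branch but not on $X\to Y$; this asymmetry is genuine and must be reproduced exactly. No new flow machinery is needed for this lemma — it is purely about the linear map — so the proof is a finite verification once the conjugation identity $U_u E_G N_{\overline I}\approx E_{G\star u}N_{\overline I}$ is in hand.
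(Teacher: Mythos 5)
Your approach matches the paper's exactly: invoke Van den Nest's theorem to exchange $E_G N_{\overline I}$ for $E_{G\star u} N_{\overline I}$ at the cost of a local Clifford on $u$ and $N_G(u)$, then absorb that local Clifford into the measurement bras (or, on outputs, leave it as a trailing gate) and do a finite case analysis over $\lambda$. This is what the paper does, citing the planar cases to Backens et al.\ and extending via the embeddings of Equation \ref{eq:PauliPlanarMap}.

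However, there is a sign inconsistency that, as written, means your case-by-case calculation would \emph{not} reproduce the stated $(\lambda',\alpha')$ and trailing gates. From your $U_u := e^{-i\tfrac{\pi}{4}X_u}\prod_{v\in N_G(u)} e^{i\tfrac{\pi}{4}Z_v}$ you get
\begin{equation*}
U_u^\dagger = \Bigl(\prod_{v\in N_G(u)} e^{-i\tfrac{\pi}{4}Z_v}\Bigr) e^{i\tfrac{\pi}{4}X_u},
\end{equation*}
so the factor absorbed into (or surviving past) each neighbouring qubit is $e^{-i\tfrac{\pi}{4}Z_v} = RZ(+\tfrac{\pi}{2})$, and on $u$ it is $e^{i\tfrac{\pi}{4}X_u} = RX(-\tfrac{\pi}{2})$. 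These give, e.g., $\bra{+_{XY,\alpha}}e^{-i\tfrac{\pi}{4}Z} \approx \bra{+_{XY,\alpha - \tfrac{\pi}{2}}}$, the opposite of the lemma's $\alpha'(v)=\alpha(v)+\tfrac{\pi}{2}$, and the opposite-signed output gates. You then assert the residuals are $e^{i\tfrac{\pi}{4}Z_v}$ and $e^{-i\tfrac{\pi}{4}X_u}$, which \emph{do} match the lemma, but these are not $U_u^\dagger$ for your $U_u$; they are $U_u'^\dagger$ for the alternative $U_u' := e^{i\tfrac{\pi}{4}X_u}\prod_{v\in N_G(u)} e^{-i\tfrac{\pi}{4}Z_v}$. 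Both $U_u$ and $U_u'$ implement local complementation on the graph state (they differ by the graph-state stabilizer $X_u\prod_{v\in N_G(u)}Z_v$, and the paper's Equation \ref{eq:VandenNest} records both forms), so the fix is simply to start from $U_u'$; as it stands, your proof claims to derive the stated lemma from a formula that would derive the $(\alpha-\tfrac{\pi}{2})$-variant instead (cf.\ Remark \ref{rem:LCDirection}, which is precisely about this directional ambiguity). Beyond that sign, the sketch is sound, and the reduction of the remaining work to a finite case check against Equations \ref{eq:RotateBasis} and \ref{eq:PauliPlanarMap} is the right way to finish.
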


\begin{proof}
The case for planar measurements is given in the proof of \cite[Lemma 4.3]{Backens2020}. We can extend this to the Pauli measurements using the embeddings in Equation \ref{eq:PauliPlanarMap}. For completeness, the rough idea is as follows:

In our notation, Van den Nest's Theorem for local complementation \cite{VandenNest2004} (specifically, the version allowing for inputs as in \cite[Lemma 4.3]{Backens2020}) becomes:

\begin{equation}\label{eq:VandenNest}
E_{G \star u} N_{\overline{I}} \approx e^{-i \tfrac{\pi}{4} X_u} \left( \prod_{w \in N_G(u)} e^{i \tfrac{\pi}{4} Z_w} \right) E_G N_{\overline{I}} \approx e^{i \tfrac{\pi}{4} X_u} \left( \prod_{w \in N_G(u)} e^{-i \tfrac{\pi}{4} Z_w} \right) E_G N_{\overline{I}}
\end{equation}

Each of these rotations then modifies any subsequent measurement on that qubit. By considering each case for $\lambda(v)$, we find the updates for $\lambda'$ and $\alpha'$ are suitable to capture this rotation.
\end{proof}


We quote the following Lemma from Duncan et al. \cite{Duncan2020} as it is useful in showing the preservation of Pauli flow here.

\begin{lemma}\label{lemma:OddStar}(\cite[Lemma B.4]{Duncan2020})
Given a graph $G = (V, E)$, a subset $A \subseteq V$ and a vertex $u \in V$, we have

\begin{equation}
\mathrm{Odd}_{G \star u}(A) = \begin{cases}
\mathrm{Odd}_G(A) \Delta (N_G(u) \cap A) & \text{if } u \notin \mathrm{Odd}_G(A) \\
\mathrm{Odd}_G(A) \Delta (N_G(u) \setminus A) & \text{if } u \in \mathrm{Odd}_G(A) \\
\end{cases}
\end{equation}
\end{lemma}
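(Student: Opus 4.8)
The plan is to establish the identity vertex by vertex: for any $w \in V$, $w \in \mathrm{Odd}_{G\star u}(A)$ precisely when $|N_{G\star u}(w) \cap A|$ is odd, so it suffices to compare $N_{G\star u}(w) \cap A$ with $N_G(w) \cap A$ modulo $2$. First I would dispatch the easy vertices. Local complementation about $u$ only toggles edges between pairs of distinct neighbours of $u$, so no edge incident to $u$ is affected and no edge incident to a vertex outside $N_G(u)$ is affected; hence $N_{G\star u}(w) = N_G(w)$ whenever $w = u$ or $w \notin N_G(u)$, and for these $w$ both sides of the claimed formula agree with $\mathrm{Odd}_G(A)$, using that $u \notin N_G(u)$ so that $u$ itself lies in neither $N_G(u) \cap A$ nor $N_G(u) \setminus A$.

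The substance is the case $w \in N_G(u)$. Here the toggled edges incident to $w$ are exactly those joining $w$ to the remaining neighbours of $u$, so $N_{G\star u}(w) = N_G(w)\, \Delta\, (N_G(u) \setminus \{w\})$, and intersecting with $A$ and counting parities gives
\begin{equation*}
|N_{G\star u}(w) \cap A| \equiv |N_G(w) \cap A| + |N_G(u) \cap A| + [\,w \in A\,] \pmod 2,
\end{equation*}
where the last term records the deletion of $w$ (which lies in $N_G(u) \cap A$ precisely when $w \in A$) from $N_G(u) \cap A$. Reading the parity of $|N_G(u) \cap A|$ as the indicator of $u \in \mathrm{Odd}_G(A)$, this says: if $u \notin \mathrm{Odd}_G(A)$ then $w$ switches its odd-neighbourhood membership exactly when $w \in A$, i.e.\ $\mathrm{Odd}_{G\star u}(A)$ and $\mathrm{Odd}_G(A)$ differ on $N_G(u)$ exactly in $N_G(u) \cap A$; if $u \in \mathrm{Odd}_G(A)$ the extra $+1$ flips this to $w$ switching exactly when $w \notin A$, i.e.\ they differ exactly in $N_G(u) \setminus A$. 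Combining this with the easy vertices (on which nothing changes, and which meet neither $N_G(u) \cap A$ nor $N_G(u) \setminus A$) yields the stated two-case formula.

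There is no real obstacle here; the argument is pure mod-$2$ bookkeeping on neighbourhoods. The only two points requiring attention are the correction term $[\,w \in A\,]$, which reflects that $w$ must be excluded from the set $N_G(u) \setminus \{w\}$ of newly toggled neighbours, and the observation that the parity of $|N_G(u) \cap A|$, i.e.\ whether $u \in \mathrm{Odd}_G(A)$, is precisely what selects between the two branches of the formula.
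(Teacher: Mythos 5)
Your argument is correct and complete. The paper itself cites this result from Duncan et al.\ without reproducing a proof, so there is no in-text proof to compare against; your vertex-by-vertex parity count, centred on the identity $N_{G\star u}(w) = N_G(w) \Delta (N_G(u) \setminus \{w\})$ for $w \in N_G(u)$ and the observation that the correction term $[w \in A]$ together with the parity of $|N_G(u)\cap A|$ selects the branch, is the natural and standard way to establish the lemma, and the bookkeeping (including the fact that $u$ and the non-neighbours of $u$ lie outside both $N_G(u)\cap A$ and $N_G(u)\setminus A$) is handled correctly.
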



\begin{lemma}\label{lemma:LocalCompPreserveProof}
Let $(\Gamma, \alpha)$ and $(\Gamma', \alpha')$ be measurement patterns related by local complementation about a vertex $u \in \overline{I}$ as in Lemma \ref{lemma:LocalCompSemantics}. If $(p, \prec)$ is a Pauli flow for $\Gamma$, then $(p', \prec)$ is a Pauli flow for $\Gamma'$ where for any $v \in \overline{O}$

\begin{equation}
p'(v) := \begin{cases}
p(v) \Delta \{u\} & \text{if } u \in \mathrm{Odd}_G(p(v)) \\
p(v) & \text{if } u \notin \mathrm{Odd}_G(p(v))
\end{cases}
\end{equation}
\end{lemma}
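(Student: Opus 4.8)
The plan is to verify each Pauli flow condition \ref{PF1}--\ref{PF9} for $(p',\prec)$ on $\Gamma'=(G\star u,I,O,\lambda')$ by reusing the corresponding conditions for $(p,\prec)$ on $\Gamma$. First I would record the cheap structural facts: since $u\in\overline{I}$, the modified $p'$ still maps $\overline{O}$ into $\mathcal{P}[\overline{I}]$, and $\prec$ is unchanged so it is automatically a strict partial order. The geometric input is that $G\star u$ agrees with $G$ except on edges inside $N_G(u)$; consequently, for any $w\notin N_G(u)\cup\{u\}$ the set $N_{G\star u}(w)$, the label $\lambda'(w)=\lambda(w)$, and whether $w\in p'(v)$ or $w\in\mathrm{Odd}_{G\star u}(p'(v))$ all coincide with their $\Gamma$-counterparts (here using that $u\notin N_G(w)$ kills the effect of possibly adding $u$ to $p(v)$). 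So the flow conditions at such $w$ transfer verbatim, and all the work concentrates on $w\in N_G(u)\cup\{u\}$.

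For those I would compute $\mathrm{Odd}_{G\star u}(p'(v))$ explicitly from Lemma~\ref{lemma:OddStar}, splitting on whether $u\in\mathrm{Odd}_G(p(v))$; using $\mathrm{Odd}_{G\star u}(\{u\})=N_{G\star u}(u)=N_G(u)$ together with $\mathbb{F}_2$-linearity of $\mathrm{Odd}$ over $\Delta$, one obtains in \emph{both} cases the uniform description that for $w\in N_G(u)$ we have $w\in p'(v)\Leftrightarrow w\in p(v)$ and $w\in\mathrm{Odd}_{G\star u}(p'(v))\Leftrightarrow\big((w\in\mathrm{Odd}_G(p(v)))\ \mathrm{xor}\ (w\in p(v))\big)$, while for $w=u$ the two membership bits are pinned down directly by which case we are in. Reading this off, the ``correction type'' of $w$ (namely $I$/$X$/$Y$/$Z$ according to membership in $p'(v)$ and $\mathrm{Odd}_{G\star u}(p'(v))$) transforms as the swap $X\leftrightarrow Y$ with $Z$ and $I$ fixed when $w\in N_G(u)$, and as $Y\leftrightarrow Z$ (if $u\in\mathrm{Odd}_G(p(v))$) or identically (otherwise) when $w=u$.

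Next I would compare these type transitions with the label transitions in Lemma~\ref{lemma:LocalCompSemantics} to conclude that \ref{PF4}--\ref{PF9} hold at $w=v$, treating $v=u$, $v\in N_G(u)$, and $v\notin N_G(u)\cup\{u\}$ separately (the last by the first paragraph). For \ref{PF1}--\ref{PF3} at $w\neq v$, the claim is that the fixed order $\prec$ still suffices, i.e.\ no \emph{new} obligation $v\prec w$ is created: whenever the changed type or label of $w$ makes one of \ref{PF1}--\ref{PF3} in $\Gamma'$ demand $v\prec w$, that fact already follows in $\Gamma$. Most instances are immediate from \ref{PF1}/\ref{PF2} for $(p,\prec)$ (planar, $X$- and $Z$-labelled $w$ that change but stay ``blocking''); the remaining ones --- an off-diagonal type on a $Y$-labelled vertex, or a type that becomes off-diagonal for the new label --- I would dispatch by invoking \ref{PF3} for $(p,\prec)$ to rule out $w\preceq v$ and hence force $v\prec w$.

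I expect the main obstacle to be purely combinatorial bookkeeping: the verification fans out over (position of $u$ relative to $p(v)$) $\times$ (correction type of $w$) $\times$ (label $\lambda(w)$), and one must apply \ref{PF3} at exactly the right sub-cases to recover $v\prec w$; getting the $\mathrm{Odd}_{G\star u}$ computation right (in particular the $w=u$ bits and the sub-case $u\in p(v)$) is the other delicate point. Note that no semantic content from Van den Nest's theorem is needed here --- that is used only for Lemma~\ref{lemma:LocalCompSemantics} --- so this is a self-contained, if lengthy, check.
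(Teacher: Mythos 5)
Your proposal follows essentially the same route as the paper's proof: both reduce the odd-neighbourhood computation to the uniform identity $\mathrm{Odd}_{G\star u}(p'(v)) = \mathrm{Odd}_G(p(v)) \Delta (N_G(u) \cap p(v))$ via Lemma \ref{lemma:OddStar}, both split into $w=u$, $w \in N_G(u)$, and $w \notin N_G(u)\cup\{u\}$, and both lean on \ref{PF3} for $(p,\prec)$ at exactly the delicate sub-cases (a $Y$-labelled neighbour, or a neighbour whose $X \leftrightarrow Y$ type swap threatens to create a new $\prec$-obligation). Your ``correction type'' bookkeeping is a concise repackaging of the same case analysis the paper carries out explicitly, so this is the paper's argument in different clothing.
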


\begin{proof}
For this proof, we will suppose $u \in \overline{O}$. The case where $u$ is an output has strictly weaker requirements from the flow conditions and hence will be covered by the cases examined here.

Firstly, we prove a useful decomposition of the odd neighbourhoods of correction sets. If $u \in \mathrm{Odd}_G(p(v))$, then using Lemma \ref{lemma:OddStar} we have

\begin{equation}
\begin{split}
\mathrm{Odd}_{G \star u}(p'(v)) &= \mathrm{Odd}_{G \star u}(p(v)) \Delta \mathrm{Odd}_{G \star u}(\{u\}) \\
&= \mathrm{Odd}_G(p(v)) \Delta \left( N_G(u) \setminus p(v) \right) \Delta N_G(u) \\
&= \mathrm{Odd}_G(p(v)) \Delta \left( N_G(u) \cap p(v) \right)
\end{split}
\end{equation}
and similarly if $u \notin \mathrm{Odd}_G(p(v))$ we have
\begin{equation}
\begin{split}
\mathrm{Odd}_{G \star u}(p'(v)) &= \mathrm{Odd}_{G \star u}(p(v)) \\
&= \mathrm{Odd}_G(p(v)) \Delta \left( N_G(u) \cap p(v) \right)
\end{split}
\end{equation}

In either case, we can now use the following equation for the remainder of the proof:

\begin{equation}\label{eq:OddStarDecomp}
\mathrm{Odd}_{G \star u}(p'(v)) = \mathrm{Odd}_G(p(v)) \Delta \left( N_G(u) \cap p(v) \right)
\end{equation}

We will first prove the Pauli flow conditions for $v = u$ and then separately show that they all hold for any $v \in \overline{O} \setminus \{u\}$. For $u$, \ref{PF1} is preserved since $p'(u)$ and $p(u)$ are either identical or differ by $u$, and $\forall w \in \overline{O} \setminus \{u\} . \lambda'(w) \in \{X, Y\} \Leftrightarrow \lambda(w) \in \{X, Y\}$.

For condition \ref{PF2}, consider any $w \in \mathrm{Odd}_{G \star u}(p'(u))$ such that $w \neq u$ and $w \preceq u$. By using Equation \ref{eq:OddStarDecomp}, we can consider the following cases:

\begin{itemize}
\item Suppose $w \in \mathrm{Odd}_G(p(u))$ and $w \notin N_G(u) \cap p(u)$. From condition \ref{PF2} for $(p, \prec)$, we have that $\lambda(w) \in \{Y, Z\}$. If $\lambda(w) = Z$ then $\lambda'(w) = Z$ which is sufficient, so suppose that $\lambda(w) = Y$. Conditon \ref{PF3} for $(p, \prec)$ gives $w \in p(u)$, and hence $w \notin N_G(u)$ which implies $\lambda'(w) = Y$.
\item Suppose $w \notin \mathrm{Odd}_G(p(u))$ and $w \in N_G(u) \cap p(u)$. It must be the case that $\lambda(w) \neq Y$ by condition \ref{PF3} for $(p, \prec)$, and condition \ref{PF1} gives $\lambda(w) \in \{X, Y\}$. So $\lambda(w) = X$ and $\lambda'(w) = Y$ which is sufficient.
\end{itemize}

For condition \ref{PF3}, consider any $w \preceq u$ such that $u \neq w$ and $\lambda'(w) = Y$. $w \in p'(u)$ if and only if $w \in p(u)$ from the definition of $p'$.

\begin{itemize}
\item Suppose $w \in N_G(u)$. The definition of $\lambda'$ gives that $\lambda(w) = X$, therefore $w \notin \mathrm{Odd}_G(p(u))$ by condition \ref{PF2} for $(p, \prec)$. Equation \ref{eq:OddStarDecomp} now reduces to give $w \in \mathrm{Odd}_{G \star u}(p'(u)) \Leftrightarrow w \in p(u)$ ($w \in p'(u)$).
\item Suppose $w \notin N_G(u)$. We similarly find $\lambda'(w) = Y$, and then $w \in \mathrm{Odd}_G(p(u)) \Leftrightarrow w \in p(u)$ from condition \ref{PF3} for $(p, \prec)$. Equation \ref{eq:OddStarDecomp} reduces to give $w \in \mathrm{Odd}_{G \star u}(p'(u)) \Leftrightarrow w \in \mathrm{Odd}_G(p(u))$. Therefore, we have $w \in \mathrm{Odd}_{G \star u}(p'(u)) \Leftrightarrow w \in p'(u)$ as required.
\end{itemize}

Suppose $u \in \mathrm{Odd}_G(p(u))$. Equation \ref{eq:OddStarDecomp} implies $u \in \mathrm{Odd}_{G \star u}(p'(u))$ because $u \notin N_G(u)$. We also have $u \in p'(u)$ if and only if $u \notin p(u)$ from the definition of $p'$. Each of the conditions \ref{PF4}-\ref{PF9} then follow straightforwardly since the flipping of set membership matches the changes to labels between $\lambda$ and $\lambda'$.

Suppose instead that $u \notin \mathrm{Odd}_G(p(u))$. Equation \ref{eq:OddStarDecomp} now implies $u \notin \mathrm{Odd}_{G \star u}(p'(u))$. Since $u$ must be corrected in some way, $u \in p(u)$ must hold, and hence $u \in p'(u)$. The only measurement bases that correspond to this correction give $\lambda(u) \in \{YZ, Y, Z\}$. This also gives $\lambda'(u) \in \{YZ, Y, Z\}$, so the conditions \ref{PF4}-\ref{PF9} still hold.

Now we need to show that the same conditions hold for any $v \in \overline{O} \setminus \{u\}$. For \ref{PF1}-\ref{PF3}, we will do this by case distinction over the cases of $p'(v)$.

Suppose that $u \in \mathrm{Odd}_G(p(v))$. Note that $u \preceq v$ implies $\lambda(u) = Z \Leftrightarrow u \notin p(v)$ and $\lambda(u) = Y \Leftrightarrow u \in p(v)$ from conditions \ref{PF1}-\ref{PF3} for $(p, \prec)$.

\ref{PF1}: Consider some $w \in p'(v) = p(v) \Delta \{u\}$ with $w \neq v$ and $w \preceq v$.

\begin{itemize}
\item If $w = u$, then $w \notin p(v)$ and $u \preceq v$, so $\lambda(u) = Z$ and $\lambda'(u) = Y$.
\item If $w \in p(v)$ and $w \neq u$, then \ref{PF1} gives $\lambda(w) \in \{X, Y\}$, so $\lambda'(w) \in \{X, Y\}$.
\end{itemize}

\ref{PF2}: Consider some $w \in \mathrm{Odd}_{G \star u}(p'(v)) = \mathrm{Odd}_G(p(v)) \Delta (N_G(u) \cap p(v))$ with $w \neq v$ and $w \preceq v$.

\begin{itemize}
\item If $w = u$ then $u \preceq v$, so $\lambda(w) \in \{Y, Z\}$ and $\lambda'(w) \in \{Y, Z\}$.
\item If $w \in \mathrm{Odd}_G(p(v))$, $w \neq u$, and $w \notin N_G(u) \cap p(v)$ then by \ref{PF2} we have $\lambda(w) \in \{Y, Z\}$. If $\lambda(w) = Z$ then $\lambda'(w) = Z$. If $\lambda(w) = Y$, then $w \in p(v)$ by \ref{PF3}, so $w \notin N_G(u)$ by Equation \ref{eq:OddStarDecomp} and therefore $\lambda'(w) = Y$.
\item If $w \notin \mathrm{Odd}_G(p(v))$ and $w \in N_G(u) \cap p(v)$, then by \ref{PF1} and \ref{PF3} we must have $\lambda(w) = X$, so $\lambda'(w) = Y$.
\end{itemize}

\ref{PF3}: Consider any $w \preceq v$ with $w \neq v$ and $\lambda'(w) = Y$.

\begin{itemize}
\item If $w \in N_G(u)$, then $\lambda(w) = X$ and $w \neq u$. Using \ref{PF2} we get $w \notin \mathrm{Odd}_G(p(v))$, so Equation \ref{eq:OddStarDecomp} reduces to $w \in \mathrm{Odd}_{G \star u}(p'(v))$ if and only if $w \in p(v)$. Since $w \neq u$, this happens if and only if $w \in p'(v) = p(v) \Delta \{u\}$.
\item If $w = u$, then $\lambda(w) = Z$. By \ref{PF1} we have $u \notin p(v)$ and by assumption $u \in \mathrm{Odd}_G(p(v))$. These mean $u \in p'(v) = p(v) \Delta \{u\}$ and $u \in \mathrm{Odd}_{G \star u}(p'(v))$ by Equation \ref{eq:OddStarDecomp}.
\item If $w \notin N_G(u)$ and $w \neq u$, then $\lambda(w) = Y$. Equation \ref{eq:OddStarDecomp} gives $w \in \mathrm{Odd}_{G \star u}(p'(v))$ if and only if $w \in \mathrm{Odd}_G(p(v))$. Using \ref{PF3}, this happens if and only if $w \in p(v)$ and thus if and only if $w \in p'(v) = p(v) \Delta \{u\}$.
\end{itemize}

Otherwise, suppose $u \notin \mathrm{Odd}_G(p(v))$.

\ref{PF1}: Consider some $w \in p'(v) = p(v)$ with $w \neq v$ and $w \preceq v$.

\begin{itemize}
\item If $w = u$, then by \ref{PF1} and \ref{PF3} we have $\lambda(u) = X = \lambda'(w)$.
\item If $w \neq u$, then by \ref{PF1} we have $\lambda(w) \in \{X, Y\}$, so $\lambda'(w) \in \{X, Y\}$.
\end{itemize}

\ref{PF2}: Consider some $w \in \mathrm{Odd}_{G \star u}(p'(v)) = \mathrm{Odd}_G(p(v)) \Delta ( N_G(u) \cap p(v) )$ with $w \neq v$ and $w \preceq v$. Since $u \notin N_G(u)$ and $u \notin \mathrm{Odd}_G(p(v))$, $w \neq u$.

\begin{itemize}
\item If $w \in N_G(u)$, then Equation \ref{eq:OddStarDecomp} gives $w \in \mathrm{Odd}_G (p(v)) \Delta p(v)$. Conditions \ref{PF1}-\ref{PF3} imply that $\lambda(w) \in \{X, Z\}$, so $\lambda'(w) \in \{Y, Z\}$.
\item If $w \notin N_G(u)$, $w \in \mathrm{Odd}_G (p(v))$ and $\lambda'(w) = \lambda(w)$, so the result follows straightforwardly from \ref{PF2} for $(p, \prec)$.
\end{itemize}

\ref{PF3}: Consider any $w \preceq v$ with $w \neq v$ and $\lambda'(w) = Y$.

\begin{itemize}
\item If $w = u$, then $\lambda(u) = Z$. From \ref{PF1}, $u \notin p(v) = p'(v)$, and $u \notin \mathrm{Odd}_{G \star u}(p'(v))$ since $u \notin \mathrm{Odd}_G(p(v))$ by assumption and $u \notin N_G(u)$.
\item If $w \in N_G(u)$, then $\lambda(w) = X$. From \ref{PF2}, $w \notin \mathrm{Odd}_G(p(v))$. Equation \ref{eq:OddStarDecomp} then reduces to $w \in \mathrm{Odd}_{G \star u}(p'(v))$ if and only if $w \in p(v) = p'(v)$.
\item If $w \notin N_G(u)$ and $w \neq u$, then $\lambda(w) = Y$. Equation \ref{eq:OddStarDecomp} gives $w \in \mathrm{Odd}_{G \star u}(p'(v))$ if and only if $w \in \mathrm{Odd}_G(p(v))$, which happens if and only if $w \in p(v) = p'(v)$ by \ref{PF3} for $(p, \prec)$.
\end{itemize}

\ref{PF4}-\ref{PF6}: Suppose $\lambda(v) \in \{XY, XZ, YZ\}$. Since $v \in \overline{O} \setminus \{u\}$, $v \in p'(v)$ if and only if $v \in p(v)$. Whether or not $v \in p(v)$ and $v \in \mathrm{Odd}(p(v))$ is given by \ref{PF4}-\ref{PF6} for $(p, \prec)$ and the value of $\lambda(v)$.

\begin{itemize}
\item If $v \in N_G(u)$, then Equation \ref{eq:OddStarDecomp} gives $v \in \mathrm{Odd}_{G \star u}(p'(v))$ if and only if $v \in \mathrm{Odd}_G(p(v)) \Delta p(v)$. This change in the odd neighbourhood coincides with the definition of $\lambda'$ flipping the labels of $XZ$ and $YZ$.
\item If $v \notin N_G(u)$, then Equation \ref{eq:OddStarDecomp} gives $v \in \mathrm{Odd}_{G \star u}(p'(v))$ if and only if $v \in \mathrm{Odd}_G(p(v))$, which is enough since $\lambda'(v) = \lambda(v)$.
\end{itemize}

\ref{PF7}-\ref{PF9}: We will prove these conditions by considering the possible cases for $\lambda(v) \in \{X, Y, Z\}$ in turn. In each case, since $v \neq u$, $v \in p'(v)$ if and only if $v \in p(v)$.

If $\lambda(v) = Z$, then $\lambda'(v) = Z$ and we get $v \in p(v)$ ($v \in p'(v)$) from \ref{PF8} for $(p, \prec)$.

Suppose $\lambda(v) = X$, so $v \in \mathrm{Odd}_G(p(v))$ from \ref{PF7} for $(p, \prec)$.

\begin{itemize}
\item If $v \in N_G(u)$, then $\lambda'(v) = Y$ and Equation \ref{eq:OddStarDecomp} gives $v \in \mathrm{Odd}_{G \star u} (p'(v))$ if and only if $v \notin p(v)$ ($v \notin p'(v)$).
\item If $v \notin N_G(u)$, then $\lambda'(v) = X$ and Equation \ref{eq:OddStarDecomp} gives $v \in \mathrm{Odd}_{G \star u}(p'(v))$.
\end{itemize}

Suppose $\lambda(v) = Y$, so $v \in p(v) \Leftrightarrow v \notin \mathrm{Odd}_G(p(v))$ from \ref{PF9} for $(p, \prec)$.

\begin{itemize}
\item If $v \in N_G(u)$, then $\lambda'(v) = X$. Since $v \in \mathrm{Odd}_G(p(v)) \Delta p(v)$, Equation \ref{eq:OddStarDecomp} gives $v \in \mathrm{Odd}_{G \star u}(p'(v))$.
\item If $v \notin N_G(u)$, then $\lambda'(v) = Y$. Equation \ref{eq:OddStarDecomp} gives $v \in \mathrm{Odd}_{G \star u}(p'(v))$ if and only if $v \in \mathrm{Odd}_G(p(v))$. From \ref{PF9} and $v \neq u$, this happens if and only if $v \notin p'(v)$.
\end{itemize}
\end{proof}


\begin{lemma}\label{lemma:LocalCompFocussed}
Let $(\Gamma, \alpha)$ and $(\Gamma', \alpha')$ be the measurement patterns from Lemma \ref{lemma:LocalCompSemantics} related by local complementation about a vertex $u \in \overline{I}$. If $(p, \prec)$ is a focussed Pauli flow for $\Gamma$, then $(p', \prec)$ is a focussed pauli flow for $\Gamma'$, where $p'$ is defined in $\succ$-order (from outputs backwards) as

\begin{equation}
p'(v) := \left\{ \begin{array}{ll} p(v) \Delta \{u\} & \text{if } u \in \mathrm{Odd}_G(p(v)) \\ p(v) & \text{if } u \notin \mathrm{Odd}_G(p(v)) \end{array} \right\} \Delta \bigdelta_{\substack{w \in \overline{O} \setminus \{v\} \\ w \in p(v) \cup \mathrm{Odd}_{G}(p(v)) \\ w \in N_G(u) \cup \{u\} \\ w \in N_G(u) \Rightarrow \lambda(w) = XY \\ w = u \Rightarrow \lambda(u) \notin \{X, Y, Z\}}} p'(w)
\end{equation}

Furthermore, if $\hat{p}$ is a focussed set for $\Gamma$, then we obtain a corresponding focussed set for $\Gamma'$ as

\begin{equation}
\hat{p'} := \left\{ \begin{array}{ll} \hat{p} \Delta \{u\} & \text{if } u \in \mathrm{Odd}_G(\hat{p}) \\ \hat{p} & \text{if } u \notin \mathrm{Odd}_G(\hat{p}) \end{array} \right\} \Delta \bigdelta_{\substack{w \in \overline{O} \\ w \in \hat{p} \cup \mathrm{Odd}_{G}(\hat{p}) \\ w \in N_G(u) \cup \{u\} \\ w \in N_G(u) \Rightarrow \lambda(w) = XY \\ w = u \Rightarrow \lambda(u) \notin \{X, Y, Z\}}} p'(w)
\end{equation}
\end{lemma}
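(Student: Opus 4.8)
The plan is to bootstrap from the non-focussed case, Lemma~\ref{lemma:LocalCompPreserveProof}, and then ``re-focus'' in the style of Lemma~\ref{lemma:FocusStep}. Write $p''$ for the naive update of Lemma~\ref{lemma:LocalCompPreserveProof}, so that $(p'',\prec)$ is a Pauli flow for $\Gamma'$ and, by Equation~\ref{eq:OddStarDecomp}, $\mathrm{Odd}_{G\star u}(p''(v)) = \mathrm{Odd}_G(p(v))\,\Delta\,(N_G(u)\cap p(v))$; the map $p'$ of the statement is $p'(v) = p''(v)\,\Delta\,\bigdelta_{w\in W(v)}p'(w)$, with $W(v)$ the displayed index set. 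First I would check the recursion is well-founded: any $w\in W(v)$ with $w\neq v$ lies in $p(v)\cup\mathrm{Odd}_G(p(v))$ and is either a neighbour of $u$ with $\lambda(w)=XY$ or is $u$ with $\lambda(u)\notin\{X,Y,Z\}$, so in every case $\lambda(w)\notin\{X,Y\}$ if $w\in p(v)$ and $\lambda(w)\notin\{Y,Z\}$ if $w\in\mathrm{Odd}_G(p(v))$, whence \ref{PF1}--\ref{PF2} for $(p,\prec)$ give $v\prec w$. So if we process the vertices in $\succ$-order, every $p'(w)$ needed to compute or reason about $p'(v)$ has already been fixed.

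Next, $(p',\prec)$ is a Pauli flow for $\Gamma'$: starting from $(p'',\prec)$ and processing the vertices in $\succ$-order, then processing $W(v)$ one element at a time, each step replaces the current correction set of $v$ by its symmetric difference with $p'(w)$ for some $w$ with $v\prec w$ whose value is frozen --- exactly the operation licensed by Lemma~\ref{lemma:AddFlows}.

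The crux is focussing. Fix $v$. Using Equation~\ref{eq:OddStarDecomp} and the label-update table of Lemma~\ref{lemma:LocalCompSemantics}, I would first identify the set of $z\in\overline{O}\setminus\{v\}$ over which $p''(v)$ is \emph{not} focussed in $\Gamma'$. If $z\notin N_G(u)\cup\{u\}$ then $\lambda'(z)=\lambda(z)$ and the membership of $z$ in $p''(v)$ and in $\mathrm{Odd}_{G\star u}(p''(v))$ agrees with that in $\Gamma$, so the focussed conditions for $z$ are inherited from $p(v)$. If $z\in N_G(u)\cup\{u\}$, a finite case split on $\lambda(z)$ --- in which \ref{FOC1}--\ref{FOC3} for $p(v)$ in $\Gamma$ prune the impossible combinations of ``$z\in p(v)$'' and ``$z\in\mathrm{Odd}_G(p(v))$'' --- shows $p''(v)$ fails to be focussed over $\{z\}$ in $\Gamma'$ exactly when $z\in W(v)$. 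Now I argue by induction in $\succ$-order with hypothesis ``$p'(w)$ is focussed over $\overline{O}\setminus\{w\}$ in $\Gamma'$ for all $w\succ v$''. For $z\notin W(v)$: $p''(v)$ and every $p'(w)$ with $w\in W(v)$ are focussed over $\{z\}$ (the latter since $z\neq w$), so $p'(v)=p''(v)\,\Delta\,\bigdelta_{w\in W(v)}p'(w)$ is too by Lemma~\ref{lemma:AddFocussed}. For $z\in W(v)$: $p''(v)$ is not focussed over $\{z\}$, and neither is $p'(z)$ --- because for any valid Pauli flow, conditions \ref{PF4}--\ref{PF9} at $z$ force its correction set to be non-focussed over $\{z\}$ --- so $p''(v)\,\Delta\,p'(z)$ is focussed over $\{z\}$ by Lemma~\ref{lemma:AddNotFocussed}, and $\Delta$-ing in the remaining $p'(w)$, $w\in W(v)\setminus\{z\}$ (each focussed over $\{z\}$), keeps it so. Hence $p'(v)$ is focussed over $\overline{O}\setminus\{v\}$, and $(p',\prec)$ is a focussed Pauli flow for $\Gamma'$.

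The focussed-set claim for $\hat{p'}$ goes through identically, with $\hat{p}$ playing the role of a correction set focussed over all of $\overline{O}$ with no distinguished vertex: its naive update is focussed over $\overline{O}$ minus the analogous problematic index set $W'$, and $\Delta$-ing in $p'(w)$ for $w\in W'$ repairs those vertices via Lemmas~\ref{lemma:AddFocussed} and~\ref{lemma:AddNotFocussed} as before. I expect the main obstacle to be precisely the case analysis identifying the problematic set: one must run through the six possible labels of $u$ against all labels of its neighbours and the four membership patterns in $p(v)$ and $\mathrm{Odd}_G(p(v))$, with the original focussed conditions doing the pruning --- routine but lengthy, and sensitive to getting the update rules of Lemma~\ref{lemma:LocalCompSemantics} exactly right.
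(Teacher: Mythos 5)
Your proposal is correct and takes essentially the same route as the paper: both start from the naive update $(p'',\prec)$ of Lemma~\ref{lemma:LocalCompPreserveProof}, identify via the label-update case analysis that $p''(v)$ fails to be focussed precisely over the displayed index set $W(v)$, and then repair those failures by $\Delta$-ing in $p'(w)$ for $w\in W(v)$ processed in $\succ$-order, appealing to Lemmas~\ref{lemma:AddFlows}, \ref{lemma:AddFocussed}, and \ref{lemma:AddNotFocussed}. The paper phrases this as an application of the refocussing procedure from Lemma~\ref{lemma:Focussing} and then spells out the case analysis you defer as ``routine but lengthy''; your version makes the $\succ$-inductive bookkeeping a little more explicit, but the decomposition and the lemmas invoked are the same.
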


\begin{proof}
Lemma \ref{lemma:LocalCompPreserveProof} gives a Pauli flow for $\Gamma'$, which we will rename as $(q, \prec)$. We can view $(q, \prec)$ as an intermediate in the construction of $(p', \prec)$. It is not necessarily focussed, but we can refocus it using Lemma \ref{lemma:Focussing}. Recall that this adds correction sets together according to Lemma \ref{lemma:AddFlows} for any case where the focussed conditions do not hold for $q$. We just need to show that these cases match those in the definition of $p'$.

As usual, the proofs for Pauli flows and focussed sets are almost identical, so herein we could change out references of $q$ for

\begin{equation}
\hat{q} := \begin{cases}
\hat{p} \Delta \{u\} & \text{if } u \in \mathrm{Odd}_G(\hat{p}) \\
\hat{p} & \text{if } u \notin \mathrm{Odd}_G(\hat{p})
\end{cases}
\end{equation}

as an intermediate in the construction of $\hat{p'}$.

We will find the counterexamples to the focussed conditions for $(q, \prec)$ by considering $q(v)$ for an arbitrary $v \in \overline{O}$, and all of the cases for $w \in \overline{O} \setminus \{v\}$, covering each case of $\lambda(w)$ for each of $w = u$, $w \in N_G(u)$, and $w \notin N_G(u) \cup \{u\}$.

Firstly, we have $w = u$:

\begin{itemize}
\item $\lambda(u) = XY$, $\lambda'(u) = XZ$: By \ref{FOC2}, $u \notin \mathrm{Odd}_G(p(v))$, so $q(v) = p(v)$. This hence breaks \ref{FOC1} for $q$ if $u \in p(v)$, i.e. if $u \in p(v) \cup \mathrm{Odd}_G(p(v))$. \xmark
\item $\lambda(u) = XZ$, $\lambda'(u) = XY$: \ref{FOC1} gives $u \notin p(v)$. By Equation \ref{eq:OddStarDecomp}, we break \ref{FOC2} for $q$ if $u \in \mathrm{Odd}_G(p(v))$, i.e. if $u \in p(v) \cup \mathrm{Odd}_G(p(v))$. \xmark
\item $\lambda(u) = YZ = \lambda'(u)$: \ref{FOC1} gives $u \notin p(v)$. From the definition of $q(v)$, $u \in q(v)$ (breaking \ref{FOC1} for $q$) if $u \in \mathrm{Odd}_G(p(v))$, i.e. if $u \in p(v) \cup \mathrm{Odd}_G(p(v))$. \xmark
\item $\lambda(u) = X = \lambda'(u)$: By \ref{FOC2}, $u \notin \mathrm{Odd}_G(p(v))$. We hence have $u \notin \mathrm{Odd}_{G \star u}(q(v))$ by Equation \ref{eq:OddStarDecomp}. \cmark
\item $\lambda(u) = Y$, $\lambda'(u) = Z$: \ref{FOC3} gives $u \in p(v) \Leftrightarrow u \in \mathrm{Odd}_G(p(v))$. In both cases for the definition of $q(v)$, this gives $u \notin q(v)$. \cmark
\item $\lambda(u) = Z$, $\lambda'(u) = Y$: We have $u \notin p(v)$ from \ref{FOC1}. The definition of $q(v)$ then gives $u \in q(v) \Leftrightarrow u \in \mathrm{Odd}_G(p(v))$, and Equation \ref{eq:OddStarDecomp} gives $u \in \mathrm{Odd}_{G \star u}(q(v)) \Leftrightarrow u \in \mathrm{Odd}_G(p(v))$. Hence $u \in q(v) \Leftrightarrow u \in \mathrm{Odd}_{G \star u}(q(v))$. \cmark
\end{itemize}

Next, suppose $w \in N_G(u)$. In each case, the definition of $q(v)$ gives $w \in q(v) \Leftrightarrow w \in p(v)$ and Equation \ref{eq:OddStarDecomp} gives $w \in \mathrm{Odd}_{G \star u}(q(v)) \Leftrightarrow w \in p(v) \Delta \mathrm{Odd}_G(p(v))$.

\begin{itemize}
\item $\lambda(w) = XY$, $\lambda'(w) = XY$: By \ref{FOC2}, we have $w \notin \mathrm{Odd}_G(p(v))$. This means it breaks \ref{FOC2} for $q$ when $w \in p(v)$, i.e. when $w \in p(v) \cup \mathrm{Odd}_G(p(v))$. \xmark
\item $\lambda(w) \in \{XZ, YZ, Z\}$, $\lambda'(w) \in \{XZ, YZ, Z\}$: $w \notin p(v)$ by \ref{FOC1}, so $w \notin q(v)$. \cmark
\item $\lambda(w) = X$, $\lambda'(w) = Y$: $w \notin \mathrm{Odd}_G(p(v))$ holds by \ref{FOC2}, so $w \in \mathrm{Odd}_{G \star u}(q(v)) \Leftrightarrow w \in p(v)$ and hence $w \in q(v) \Leftrightarrow w \in \mathrm{Odd}_{G \star u}(q(v))$. \cmark
\item $\lambda(w) = Y$, $\lambda'(w) = X$: By \ref{FOC3}, we have $w \notin p(v) \Delta \mathrm{Odd}_G(p(v))$, and therefore $w \notin \mathrm{Odd}_{G \star u}(q(v))$. \cmark
\end{itemize}

For the final case of $w \notin N_G(u) \cup \{u\}$, we have $\lambda'(w) = \lambda(w)$, $w \in q(v) \Leftrightarrow w \in p(v)$, and $w \in \mathrm{Odd}_{G \star u}(q(v)) \Leftrightarrow w \in \mathrm{Odd}_G(p(v))$, so all of the focussed conditions are preserved.

In summary, we find that applying Lemma \ref{lemma:Focussing} to $(q, \prec)$ will add focussed corrections to $q(v)$ for any $w \in p(v) \cup \mathrm{Odd}_G(p(v))$ such that $w = u$ and $\lambda(u) \in \{XY, XZ, YZ\}$ or $w \in N_G(u)$ and $\lambda(w) = XY$. This exactly results in $(p', \prec)$, meaning this is a focussed Pauli flow for $\Gamma'$.
\end{proof}


\begin{lemma}\label{lemma:LocalCompStrings}
Let $(\Gamma, \alpha)$ and $(\Gamma', \alpha')$ be the measurement patterns and corresponding focussed Pauli flows $(p, \prec)$ and $(p', \prec)$ from Lemma \ref{lemma:LocalCompFocussed} formed by local complementation around a vertex $u \in \overline{I}$. Let $\mathbf{P}^{\bot v}$ be the primary extraction string for $v$ in $(\Gamma, \alpha)$ from $(p, \prec)$. Then $\mathbf{P'}^{\bot v} := P'^{\bot v}_v \mathcal{R}^\dagger P^{\bot v}_v \mathbf{P}^{\bot v} \mathcal{R}$ is the corresponding primary extraction string in $(\Gamma', \alpha')$ from $(p', \prec)$, where

\begin{equation}
\begin{split}
\mathcal{R} :=& \left( \prod_{\substack{w \in N_G(u) \\ w \in O \cup \{v\}}} e^{i \tfrac{\pi}{4} Z_w} \right) \left( \prod_{\substack{w = u \\ w \in O \cup \{v\}}} e^{-i \tfrac{\pi}{4} X_w} \right) \\ & \left( \prod_{\substack{w \in \overline{O} \cap (N_G(u) \cup \{u\}) \setminus \{v\} \\ w \succ v \\ w \in N_G(u) \Rightarrow \lambda(u) = XY \\ w = u \Rightarrow \lambda(w) \notin \{X, Y, Z\}}}^\succ e^{(-1)^{D_w + A_{w \to v}^{p'}} i \tfrac{\pi}{4} P^{w \to v}_v \mathbf{P'}^{\bot w}} \right)
\end{split}
\end{equation}

Furthermore, if $\mathbf{P}$ is the stabilizer from a focussed set $\hat{p}$ in $(\Gamma, \alpha)$, then $\mathbf{P'} := \mathcal{R}^\dagger \mathbf{P} \mathcal{R}$ is the corresponding stabilizer for $(\Gamma', \alpha')$.
\end{lemma}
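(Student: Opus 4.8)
The plan is to follow the same three‑move template used for Lemma~\ref{lemma:FixPauliStrings} and for $Z$‑measurement elimination: starting from the linear map $\mathcal{C'}$ that defines the primary extraction string for $v$ in $(\Gamma', \alpha')$ with respect to $p'$, I would rewrite $\mathcal{C'}$ as $\mathcal{R}^\dagger \mathcal{C}$, where $\mathcal{C}$ is the corresponding map for $(\Gamma, \alpha)$ and $p$ (extended with an extra projection on $u$ when $u \preceq v$); then transport the known stabiliser $P^{\bot v}_v \mathbf{P}^{\bot v}$ of $\mathcal{C}$ through $\mathcal{R}^\dagger$; and finally identify the resulting string via Lemma~\ref{lemma:EqualStrings}. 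One should first observe that $\mathbf{P'}^{\bot v}$ and $\mathcal{R}$ are defined by mutual recursion decreasing in $\prec$ (the third product in $\mathcal{R}$ only mentions $\mathbf{P'}^{\bot w}$ for $w \succ v$), so the construction is well‑founded and we may assume the statement already holds for all $w \succ v$.

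For the rewrite, I would substitute the second form of Van den Nest's Theorem (Equation~\ref{eq:VandenNest}) for $E_{G \star u} N_{\overline{I}}$ inside $\mathcal{C'}$, introducing the rotations $e^{i \tfrac{\pi}{4} X_u}$ and $e^{-i \tfrac{\pi}{4} Z_w}$ for $w \in N_G(u)$, and commute them leftwards past the measurement projections of $(\Gamma', \alpha')$ using the \ref{lemma:CommutationRules}. Three kinds of vertex occur: (i) on an output vertex (or on $v$ itself) the rotation passes through unchanged and lands on the output side, producing the first two products of $\mathcal{R}^\dagger$; (ii) on a vertex measured in a Pauli basis the rotation is absorbed into the projection, and by the embeddings of Equation~\ref{eq:PauliPlanarMap} this is exactly the conversion of $\alpha'(w)$ back into $\alpha(w)$ recorded in Lemma~\ref{lemma:LocalCompSemantics}; (iii) on a vertex $w \succ v$ with $\lambda'(w)$ planar and $w \in N_G(u) \cup \{u\}$ the rotation cannot commute past the projection, but — processing such $w$ in $\succ$‑order and invoking the \ref{lemma:GadgetStabCorrespondence} with the primary extraction string $\mathbf{P'}^{\bot w}$ (available by Lemma~\ref{lemma:ExtractionFromFocussed}) — the rotation on $w$ is replaced by $e^{-(-1)^{D_w + A_{w \to v}^{p'}} i \tfrac{\pi}{4} P^{w \to v}_v \mathbf{P'}^{\bot w}}$, which is supported only on $v$ and the outputs and so also lands on the output side. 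The $\succ$‑order is needed both so that $\mathbf{P'}^{\bot w}$ is already known when $w$ is processed and so that the teleported rotations compose in the $\prec$‑order appearing in $\mathcal{R}^\dagger$; $u$'s own rotation is governed by the relabelling of $u$ in Lemma~\ref{lemma:LocalCompSemantics}. Collecting the three contributions yields $\mathcal{C'} \approx \mathcal{R}^\dagger \mathcal{C}$.

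Since $P^{\bot v}_v \mathbf{P}^{\bot v}$ stabilises $\mathcal{C}$, the operator $\mathcal{R}^\dagger P^{\bot v}_v \mathbf{P}^{\bot v} \mathcal{R}$ stabilises $\mathcal{C'}$, and — all factors of $\mathcal{R}$ being $\tfrac{\pi}{4}$‑angle exponentials — it is a Pauli string by the \ref{lemma:CommutationRules}. Checking its component on $v$ against the relabelling $\lambda(v) \mapsto \lambda'(v)$ from Lemma~\ref{lemma:LocalCompSemantics} shows that component equals $P'^{\bot v}$, so multiplying by $P'^{\bot v}_v$ leaves a Pauli over only the outputs; this is a $P'^{\bot v}$‑extraction string for $v$ in $(\Gamma', \alpha')$, hence by Lemma~\ref{lemma:EqualStrings} it equals the string determined by $p'(v)$, namely $\mathbf{P'}^{\bot v}$. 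For the focussed‑set claim the argument is identical but shorter: $\mathbf{P}$ is already supported purely on the outputs, so there is no $v$‑component to peel off, and $\mathcal{R}^\dagger \mathbf{P} \mathcal{R}$ is directly the stabiliser of $(\Gamma', \alpha')$ coming from the focussed set $\hat{p'}$ of Lemma~\ref{lemma:LocalCompFocussed} (again via Lemma~\ref{lemma:EqualStrings}).

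The main obstacle is the sign bookkeeping in move (iii): tracking the factors $(-1)^{D_w}$ (from the $YZ$ rotation direction), $(-1)^{A_{w \to v}^{p'}}$ (from a Pauli vertex $v$ measured at angle $\pi$), and the extra signs acquired when each teleported rotation is commuted past the remaining projections and past the rotations already moved to the output side — this is exactly the computation underlying Lemma~\ref{lemma:AntiCommutingStringsProof} applied to the new flow — together with the related check in the final step that the $v$‑component of the conjugated stabiliser is $P'^{\bot v}$ and not some other Pauli. Everything else is a routine, if lengthy, unwinding entirely parallel to the proofs of Lemma~\ref{lemma:FixPauliStrings} and the $Z$‑elimination string lemma.
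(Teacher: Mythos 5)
Your overall template is the same as the paper's: use the reverse direction of the extraction algorithm plus Van den Nest's theorem to relate $\mathcal{C'}$ to $\mathcal{C}$, transport the known stabiliser through the accumulated rotations, and identify the result by Lemma~\ref{lemma:EqualStrings}. Your observation that the statement is well‑founded by $\prec$‑induction is correct, as is your identification of the sign bookkeeping as the real work. But the proposal has two gaps, the second of which is substantive.

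First, the intermediate equality ``$\mathcal{C'} \approx \mathcal{R}^\dagger \mathcal{C}$'' is not quite right, because your three cases do not exhaust the vertices carrying a Van den Nest rotation. You handle $O \cup \{v\}$, Pauli vertices, and planar $w \succ v$ in $N_G(u) \cup \{u\}$. You do not mention non‑output \emph{planar} vertices $w \in N_G(u) \cup \{u\}$ with $w \preceq v$ and $w \neq v$: these carry a rotation from Van den Nest, but they have no projection in $\mathcal{C'}$ or $\mathcal{C}$ and they are not collected into $\mathcal{R}$. The paper's derivation explicitly leaves these rotations as a residual factor between $\mathcal{R}^\dagger$ and $\mathcal{C}$. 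This is repairable — the residual rotations act on vertices outside $O \cup \{v\}$, hence commute with $P^{\bot v}_v \mathbf{P}^{\bot v}$, so the stabiliser still transports correctly — but the claim as you state it is false, and you must notice this commutation to finish the move.

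Second, and more seriously, the final appeal to Lemma~\ref{lemma:EqualStrings} does not go through as argued. You establish that $\mathcal{R}^\dagger P^{\bot v}_v \mathbf{P}^{\bot v} \mathcal{R}$ stabilises $\mathcal{C'}$ and that its $v$‑component is $P'^{\bot v}$, i.e.\ that $\mathbf{P'}^{\bot v}$ is \emph{some} $P'^{\bot v}$‑extraction string. But Lemma~\ref{lemma:EqualStrings} only identifies two stabilisers once they are already known to share the \emph{entire} Pauli string, and when $|O| > |I|$ there are many inequivalent $P'^{\bot v}$‑extraction strings differing by free stabilisers. You must first show that the output part of the conjugated operator coincides (as a bare Pauli string) with the one produced by $p'(v)$ — this is precisely the first half of the paper's proof, which argues term by term that conjugation by the $e^{-i\pi/4 X_u}$ and $e^{i\pi/4 Z_w}$ factors implements the intermediate update from Lemma~\ref{lemma:LocalCompPreserveProof}, and that the remaining exponentials in $\mathcal{R}$ implement exactly the focussing corrections of Lemma~\ref{lemma:LocalCompFocussed}. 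Without that string‑matching argument, Lemma~\ref{lemma:EqualStrings} cannot be invoked, and the conclusion that $\mathbf{P'}^{\bot v}$ is \emph{the} extraction string determined by $p'(v)$ does not follow.
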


\begin{proof}
We proceed following a similar strategy to Lemma \ref{lemma:FixPauliStrings} of simultaneously showing that $\mathbf{P'}^{\bot v}$ operates only on the outputs and that it matches the extraction string from $(p', \prec)$ up to phase, then showing that it is a valid extraction string.

The conjugation of $P^{\bot v}_v \mathbf{P}^{\bot v}$ by the $e^{-i \tfrac{\pi}{4} X_u}$ and $e^{i \tfrac{\pi}{4} Z_w}$ terms adjusts the Pauli terms of the operator in the same way as the intermediate Pauli flow from Lemma \ref{lemma:LocalCompPreserveProof}. The $e^{-i \tfrac{\pi}{4} X_u}$ term adds an $X$ when conjugating $Z$ or $Y$, matching the addition of $\{u\}$ to $p(v)$ if $u \in \mathrm{Odd}(p(v))$. Similarly, the $e^{i \tfrac{\pi}{4} Z_w}$ term adds a $Z$ when conjugating $X$ or $Y$, matching the update of the odd neighbourhood according to Equation \ref{eq:OddStarDecomp}.

Each of the remaining terms in $\mathcal{R}$ cover the cases for the focussing in Lemma \ref{lemma:LocalCompFocussed}. The movement of $e^{(-1)^{D_w + A_{w \to v}^{p'}} i \tfrac{\pi}{4} P^{w \to v}_v \mathbf{P'}^{\bot w}}$ adds $P^{w \to v}_v \mathbf{P'}^{\bot w}$ if it anticommutes with the operator. From Lemma \ref{lemma:AntiCommutingStringsProof} and $P^{\bot v}P^{w \to v} = (-1)^{F_{w \to v}^{p_k}} P^{w \to v} P^{\bot v}$, the combined operators anticommute if $w \in p_k(v) \cup \mathrm{Odd}(p_k(v))$ (where $p_k$ is the intermediate Pauli flow at this stage in the focussing). This matches the focussing step, which adds $p'(w)$ when $w \in p_k(v) \cup \mathrm{Odd}(p_k(v))$ doesn't satisfy the focussed conditions - by Lemma \ref{lemma:LocalCompFocussed}, this corresponds to the conditions on $w$ in the definition of $\mathcal{R}$.

Combining these together, we get that $\mathcal{R}$ maps $P^{\bot v}_v \mathbf{P}^{\bot v}$ to the equivalent for $(p', \prec)$ up to phase.

To complete this for the phase, it remains to show that $\mathbf{P'}^{\bot v}$ is a valid primary extraction string. Recall from Van den Nest's Theorem (Equation \ref{eq:VandenNest}) that local complementation of the graph state is balanced by rotations on $u$ and its neighbours. Let's look at how these rotations affect the measurements on those qubits in each case.

\begin{center}
\begin{tabular}{c|r@{}l|r@{}l}
$\lambda(w)$ & \multicolumn{2}{c|}{$w = u$} & \multicolumn{2}{c}{$w \in N_G(u)$} \\
\hline
$XY$ & $\bra{+_{XZ, \tfrac{\pi}{2}}} e^{i \tfrac{\pi}{4}X}$ & $= \bra{+_{XY, 0}}$ & $\bra{+_{XY, \tfrac{\pi}{2}}} e^{-i \tfrac{\pi}{4}Z}$ & $\approx \bra{+_{XY, 0}}$ \\
$XZ$ & $\bra{+_{XY, \tfrac{\pi}{2}}} e^{i \tfrac{\pi}{4}X}$ & $\approx \bra{+_{XZ, 0}}$ & $\bra{+_{YZ, 0}} e^{-i \tfrac{\pi}{4}Z}$ & $\approx \bra{+_{XZ, 0}}$ \\
$YZ$ & $\bra{+_{YZ, \tfrac{\pi}{2}}} e^{i \tfrac{\pi}{4}X}$ & $= \bra{+_{YZ, 0}}$ & $\bra{+_{XZ, 0}} e^{-i \tfrac{\pi}{4}Z}$ & $\approx \bra{+_{YZ, 0}}$ \\
$X$ & $\bra{+_{X, \alpha(w)}} e^{i \tfrac{\pi}{4}X}$ & $\approx \bra{+_{X, \alpha(w)}}$ & $\bra{+_{Y, \alpha(w)}} e^{-i \tfrac{\pi}{4}Z}$ & $\approx \bra{+_{X, \alpha(w)}}$ \\
$Y$ & $\bra{+_{Z, \alpha(w) + \pi}} e^{i \tfrac{\pi}{4}X}$ & $\approx \bra{+_{Y, \alpha(w)}}$ & $\bra{+_{X, \alpha(w) + \pi}} e^{-i \tfrac{\pi}{4}Z}$ & $\approx \bra{+_{Y, \alpha(w)}}$ \\
$Z$ & $\bra{+_{Y, \alpha(w)}} e^{i \tfrac{\pi}{4}X}$ & $\approx \bra{+_{Z, \alpha(w)}}$ & $\bra{+_{Z, \alpha(w)}} e^{-i \tfrac{\pi}{4}Z}$ & $\approx \bra{+_{Z, \alpha(w)}}$ \\
\end{tabular}
\end{center}

We can now consider the primary extraction string of some $v \in \overline{O}$. As usual, we define $\mathcal{C'}$ to be the linear map of interest from Definition \ref{def:ExtractionStringDef} for $(\Gamma', \alpha')$ and let $\mathcal{C}$ refer to the equivalent for $(\Gamma, \alpha)$.

\begin{equation}
\mathcal{C'} := \left( \prod_{\substack{w \succ v \\ \lambda'(w) \notin \{X, Y, Z\}}} \bra{+_{\lambda'(w), 0}}_w \right) \left( \prod_{\substack{w \in \overline{O} \setminus \{v\} \\ \lambda'(w) \in \{X, Y, Z\}}} \bra{+_{\lambda'(w), \alpha'(w)}}_w \right) E_{G \star u} N_{\overline{I}} \\
\end{equation}

We start by applying the reverse of the extraction algorithm to modify the measurement angles in the diagram to match those in the table above. Since this changes the angles (which must be $0$ to extract any rotation before them), we must perform this in $\prec$-order.

\begin{equation}
\begin{split}
\mathcal{C'} \approx& \left( \prod_{\substack{w \succ v \\ w \in N_G(u) \cup \{u\} \\ w \in N_G(u) \Rightarrow \lambda'(w) = XY \\ w = u \Rightarrow \lambda'(w) \notin \{X, Y, Z\}}} \bra{+_{\lambda'(w), \tfrac{\pi}{2}}}_w e^{(-1)^{1+D_w}i\tfrac{\pi}{4} P^{\bot w}_w} \right) \\ & \left( \prod_{\substack{w \succ v \\ w \neq u \\ w \in N_G(u) \Rightarrow \lambda'(w) \in \{XZ, YZ\} \\ w \notin N_G(u) \Rightarrow \lambda'(w) \notin \{X, Y, Z\}}} \bra{+_{\lambda'(w), 0}}_w \right) \left( \prod_{\substack{w \in \overline{O} \setminus \{v\} \\ \lambda'(w) \in \{X, Y, Z\}}} \bra{+_{\lambda'(w), \alpha'(w)}}_w \right) E_{G \star u} N_{\overline{I}} \\
=& \left( \prod_{\substack{w \succ v \\ w \in N_G(u) \cup \{u\} \\ w \in N_G(u) \Rightarrow \lambda'(w) = XY \\ w = u \Rightarrow \lambda'(w) \notin \{X, Y, Z\}}}^\prec e^{(-1)^{1+D_w+A_{w \to v}^{p'}}i\tfrac{\pi}{4} P^{w \to v}_v \mathbf{P'}^{\bot w}} \right) \left( \prod_{\substack{w \succ v \\ w \in N_G(u) \cup \{u\} \\ w \in N_G(u) \Rightarrow \lambda'(w) = XY \\ w = u \Rightarrow \lambda'(w) \notin \{X, Y, Z\}}} \bra{+_{\lambda'(w), \tfrac{\pi}{2}}}_w \right) \\ & \left( \prod_{\substack{w \succ v \\ w \neq u \\ w \in N_G(u) \Rightarrow \lambda'(w) \in \{XZ, YZ\} \\ w \notin N_G(u) \Rightarrow \lambda'(w) \notin \{X, Y, Z\}}} \bra{+_{\lambda'(w), 0}}_w \right) \left( \prod_{\substack{w \in \overline{O} \setminus \{v\} \\ \lambda'(w) \in \{X, Y, Z\}}} \bra{+_{\lambda'(w), \alpha'(w)}}_w \right) E_{G \star u} N_{\overline{I}}
\end{split}
\end{equation}

\pagebreak
Next, we apply Van den Nest's Theorem to yield $\mathcal{C}$.

\begin{equation}
\begin{split}
\mathcal{C'} \approx& \left( \prod_{\substack{w \succ v \\ w \in N_G(u) \cup \{u\} \\ w \in N_G(u) \Rightarrow \lambda'(w) = XY \\ w = u \Rightarrow \lambda'(w) \notin \{X, Y, Z\}}}^\prec e^{(-1)^{1+D_w+A_{w \to v}^{p'}}i\tfrac{\pi}{4} P^{w \to v}_v \mathbf{P'}^{\bot w}} \right) \left( \prod_{\substack{w \succ v \\ w \in N_G(u) \cup \{u\} \\ w \in N_G(u) \Rightarrow \lambda'(w) = XY \\ w = u \Rightarrow \lambda'(w) \notin \{X, Y, Z\}}} \bra{+_{\lambda'(w), \tfrac{\pi}{2}}}_w \right) \\ & \left( \prod_{\substack{w \succ v \\ w \neq u \\ w \in N_G(u) \Rightarrow \lambda'(w) \in \{XZ, YZ\} \\ w \notin N_G(u) \Rightarrow \lambda'(w) \notin \{X, Y, Z\}}} \bra{+_{\lambda'(w), 0}}_w \right) \left( \prod_{\substack{w \in \overline{O} \setminus \{v\} \\ \lambda'(w) \in \{X, Y, Z\}}} \bra{+_{\lambda'(w), \alpha'(w)}}_w \right) \\ & e^{i\tfrac{\pi}{4}X_u} \left( \prod_{w \in N_G(u)} e^{-i\tfrac{\pi}{4}Z_w} \right) E_G N_{\overline{I}} \\
&\approx \left( \prod_{\substack{w \succ v \\ w \in N_G(u) \cup \{u\} \\ w \in N_G(u) \Rightarrow \lambda'(w) = XY \\ w = u \Rightarrow \lambda'(w) \notin \{X, Y, Z\}}}^\prec e^{(-1)^{1+D_w+A_{w \to v}^{p'}}i\tfrac{\pi}{4} P^{w \to v}_v \mathbf{P'}^{\bot w}} \right) \left( \prod_{\substack{w = u \\ w \in O \vee w \preceq v \\ w \in \overline{O} \setminus \{v\} \Rightarrow \lambda(w) \notin \{X, Y, Z\}}} e^{i\tfrac{\pi}{4}X_w} \right) \\ & \left( \prod_{\substack{w \in N_G(u) \\ w \in O \vee w \preceq v \\ w \in \overline{O} \setminus \{v\} \Rightarrow \lambda(w) \notin \{X, Y, Z\}}} e^{-i\tfrac{\pi}{4}Z_w} \right) \left( \prod_{\substack{w \succ v \\ \lambda(w) \notin \{X, Y, Z\}}} \bra{+_{\lambda(w), 0}}_w \right) \\ & \left( \prod_{\substack{w \in \overline{O} \setminus \{v\} \\ \lambda(w) \in \{X, Y, Z\}}} \bra{+_{\lambda(w), \alpha(w)}}_w \right) E_G N_{\overline{I}} \\
&= \mathcal{R}^\dagger \left( \prod_{\substack{w = u \neq v \\ w \preceq v \\ \lambda(w) \notin \{X, Y, Z\}}} e^{i\tfrac{\pi}{4}X_w} \right) \left( \prod_{\substack{w \in \overline{O} \cap N_G(u) \setminus \{v\} \\ w \preceq v \\ \lambda(w) \notin \{X, Y, Z\}}} e^{-i\tfrac{\pi}{4}Z_w} \right) \mathcal{C}
\end{split}
\end{equation}

\pagebreak
The final step is then to introduce $P^{\bot v}_v \mathbf{P}^{\bot v}$, move it through $\mathcal{R}$ and reverse the construction to return to $\mathcal{C'}$.

\begin{equation}
\begin{split}
\mathcal{C'} &\approx \mathcal{R}^\dagger \left( \prod_{\substack{w = u \neq v \\ w \preceq v \\ \lambda(w) \notin \{X, Y, Z\}}} e^{i\tfrac{\pi}{4}X_w} \right) \left( \prod_{\substack{w \in \overline{O} \cap N_G(u) \setminus \{v\} \\ w \preceq v \\ \lambda(w) \notin \{X, Y, Z\}}} e^{-i\tfrac{\pi}{4}Z_w} \right) \mathbf{P}^{\bot v} P^{\bot v}_v \mathcal{C} \\
&= \mathbf{P'}^{\bot v} P'^{\bot v}_v \mathcal{R}^\dagger \left( \prod_{\substack{w = u \neq v \\ w \preceq v \\ \lambda(w) \notin \{X, Y, Z\}}} e^{i\tfrac{\pi}{4}X_w} \right) \left( \prod_{\substack{w \in \overline{O} \cap N_G(u) \setminus \{v\} \\ w \preceq v \\ \lambda(w) \notin \{X, Y, Z\}}} e^{-i\tfrac{\pi}{4}Z_w} \right) \mathcal{C} \\
&\approx \mathbf{P'}^{\bot v} P'^{\bot v}_v \mathcal{C'}
\end{split}
\end{equation}

Hence $\mathbf{P'}^{\bot v}$ is a valid primary extraction string for $v$, and by Lemma \ref{lemma:EqualStrings} it is exactly the one generated by $(p', \prec)$. As usual, the proof is virtually identical for stabilizers from focussed sets.
\end{proof}


\begin{theorem}\label{thrm:LocalCompRelateProof}(Restatement of Theorem \ref{thrm:LocalCompRelate})
Performing local complementation about a vertex $u$ corresponds to the following sequence of actions on the PDDAG:

\begin{enumerate}
\item\label{LocalCompOutputs} For each output $w$ neighbouring $u$, a $(Z_w, \tfrac{\pi}{2})$ rotation is pulled from the initial stabilizer block all the way to the end of the rotation DAG;
\item\label{LocalCompAxis} If $u$ is an output, a $(X_u, -\tfrac{\pi}{2})$ rotation is pulled from the initial stabilizer block all the way to the end of the rotation DAG;
\item\label{LocalCompNeighbours} For each vertex $w$ neighbouring $u$ with $\lambda(w) = XY$, and for $u$ itself if $\lambda(u)$ is planar, we pull a $\left(\mathbf{P'}^{\bot w}, (-1)^{D_w}\tfrac{\pi}{2}\right)$ rotation from the initial stabilizer block to merge it into the existing rotation for $w$ or $u$. We do this in $\succ$-order over such $w$ vertices.
\end{enumerate}
\end{theorem}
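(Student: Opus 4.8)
The plan is to follow the three-step template of Appendix~\ref{sec:RewritesProofs}. The existence of a focussed Pauli flow (and of focussed sets) for $\Gamma \star u$ is already supplied by Lemmas~\ref{lemma:LocalCompPreserveProof}--\ref{lemma:LocalCompFocussed}, and the precise effect of the rewrite on every primary extraction string and every free stabilizer is supplied by Lemma~\ref{lemma:LocalCompStrings}: each string gets conjugated by the operator $\mathcal{R}$ built there, while the measurement angles change according to $\alpha'$ as in Lemma~\ref{lemma:LocalCompSemantics}. So all that remains is to check that the stated three-step PDDAG rewrite transforms the extracted PDDAG of $(\Gamma,\alpha)$ into that of $(\Gamma\star u,\alpha')$ node for node; Lemma~\ref{lemma:EqualStrings} then upgrades this from ``up to phase'' to genuine equality.

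First I would reduce everything to ``moving Pauli operators through Clifford-angled rotations'' via the \ref{lemma:CommutationRules}: dragging a fixed Pauli $Q$, or an $e^{\pm i\tfrac{\pi}{4}Q}$ exponential, past a rotation or tableau row conjugates its Pauli string by $Q$ exactly on the anticommuting ones, and leaves behind a trailing Clifford gate in the $e^{\pm i\tfrac{\pi}{4}Q}$ case; since the PDDAG only carries rotations for planar measurements, only those and the tableau rows (inputs, input extensions, free stabilizers) need be tracked. I would then split $\mathcal{R}$ into its three factor groups exactly as written in Lemma~\ref{lemma:LocalCompStrings} --- the $e^{i\tfrac{\pi}{4}Z_w}$ over output neighbours of $u$, the $e^{-i\tfrac{\pi}{4}X_u}$ present when $u$ is an output, and the $e^{(-1)^{D_w+A_{w\to v}^{p'}}i\tfrac{\pi}{4}P^{w\to v}_v\mathbf{P'}^{\bot w}}$ over measured neighbours $w$ of $u$ with $\lambda(w)=XY$ together with $u$ itself when $\lambda(u)$ is planar --- and match step~\ref{LocalCompOutputs} to the first group and step~\ref{LocalCompAxis} to the second: pulling a $(Z_w,\tfrac{\pi}{2})$ (resp.\ $(X_u,-\tfrac{\pi}{2})$) rotation out of the initial stabilizer block and all the way through the DAG conjugates every string it meets by $Z_w$ (resp.\ $X_u$), and deposits precisely the $RZ(-\tfrac{\pi}{2})$ (resp.\ $RX(\tfrac{\pi}{2})$) Clifford gate on the output that Van den Nest's Theorem (Equation~\ref{eq:VandenNest}) and Lemma~\ref{lemma:LocalCompSemantics} predict.

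The substantive case is step~\ref{LocalCompNeighbours}. For each measured neighbour $w$ of $u$ with $\lambda(w)=XY$, and for $u$ itself when $\lambda(u)$ is planar, the extracted rotation $\big((-1)^{D_w}\mathbf{P}^{\bot w},\alpha(w)\big)$ must become $\big((-1)^{D_w}\mathbf{P'}^{\bot w},\alpha'(w)\big)$; reading $\alpha'$ off Lemma~\ref{lemma:LocalCompSemantics} and converting to the $e^{(-1)^{D_v}i\tfrac{\alpha(v)}{2}P^{\bot v}}$ convention of Equation~\ref{eq:RotateBasis} shows the angle moves by exactly $(-1)^{D_w}\tfrac{\pi}{2}$, while the string change $\mathbf{P}^{\bot w}\mapsto\mathbf{P'}^{\bot w}$ (and the sign $P'^{\bot v}_vP^{\bot v}_v$) is already accounted for by the conjugations of the previous paragraph. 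I would therefore factor the replacement as ``conjugate, then merge in an extra $\big((-1)^{D_w}\mathbf{P'}^{\bot w},\tfrac{\pi}{2}\big)$ rotation'' and identify that extra rotation with the third group of $\mathcal{R}$: pulling it from the initial stabilizer block and dragging it rightward through the DAG, past the rotation of each $v\prec w$ it flips a phase iff $P^{w\to v}_v\mathbf{P'}^{\bot w}$ anticommutes with $v$'s string, which by Lemma~\ref{lemma:AntiCommutingStringsProof} together with $P^{\bot v}P^{w\to v}=(-1)^{F_{w\to v}^{p}}P^{w\to v}P^{\bot v}$ is exactly the condition $w\in p'(v)\cup\mathrm{Odd}(p'(v))$ under which Lemma~\ref{lemma:LocalCompStrings} records the factor as contributing to $\mathbf{P'}^{\bot v}$ (and the same argument, keeping only the component on $v$'s side, handles the free stabilizers). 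Combining the three groups reconstructs the conjugation by $\mathcal{R}$ on every string together with the $\pm\tfrac{\pi}{2}$ angle shifts, i.e.\ precisely the transformation of Lemma~\ref{lemma:LocalCompStrings}.

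I expect the main obstacle to be the ordering. The focussing performed inside Lemma~\ref{lemma:LocalCompFocussed} introduces the corrections $p'(w)$ into the other correction sets in $\succ$-order, so the step-\ref{LocalCompNeighbours} merges must be carried out in that same $\succ$-order over the $w$'s for the nested symmetric differences defining $p'$ to be reproduced; one has to check that dragging the $\mathcal{R}$-factors through the DAG in that order --- rather than all at once --- produces exactly the intermediate Pauli flows $p_k$ that appear in the proof of Lemma~\ref{lemma:LocalCompStrings}, and that interleaving the step-1/2 conjugations with them does not change the outcome (it does not, since the factors for distinct vertices commute up to the signs already being tracked). The remaining work is routine sign bookkeeping --- reconciling $D_w$, $A_{w\to v}^{p'}$, and the $(-1)$'s picked up when reordering the Pauli factors of an extraction string, all of which are already isolated by the supporting lemmas.
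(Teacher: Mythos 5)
Your proposal takes the same route as the paper: invoke Lemma~\ref{lemma:LocalCompSemantics} for the angle updates, Lemma~\ref{lemma:LocalCompStrings} for the string updates, split $\mathcal{R}$ into the three factor groups, observe that $P^{w\to v}_v = I$ (and $A^{p'}_{w\to v} = 0$) for planar $v \prec w$ so that the third group reduces to rotations $(\mathbf{P'}^{\bot w}, (-1)^{D_w}\tfrac{\pi}{2})$ over the outputs, and then match each factor group to a step of the rewrite. That matching is correct and you spell it out in a bit more detail than the paper does.

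Where you are too quick is the claim that ``reading $\alpha'$ off Lemma~\ref{lemma:LocalCompSemantics} \dots shows the angle moves by exactly $(-1)^{D_w}\tfrac{\pi}{2}$.'' Taken at face value this is false: for $u$ with $\lambda(u) = XZ$, Lemma~\ref{lemma:LocalCompSemantics} gives $\alpha'(u) = \tfrac{\pi}{2} - \alpha(u)$, a reflection rather than a shift by $\tfrac{\pi}{2}$; similarly $\alpha'(w) = -\alpha(w)$ for neighbours with $\lambda(w) = YZ$. These negations are not a separate phenomenon from the string update --- they are forced by the sign picked up in $P'^{\bot v}_v \bigl(\mathcal{R}^\dagger P^{\bot v}_v \mathcal{R}\bigr)$, since e.g.\ $e^{i\tfrac{\pi}{4}X_u} Y_u e^{-i\tfrac{\pi}{4}X_u} = -Z_u$, together with the extra $(-1)^{D_v}$ that enters through Equation~\ref{eq:RotateBasis} when the plane flips between $XZ$ and $YZ$. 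You gesture at this with ``the sign $P'^{\bot v}_v P^{\bot v}_v$ is already accounted for by the conjugations'' and ``routine sign bookkeeping,'' but the paper's proof actually tabulates $e^{\pm i\tfrac{\pi}{4}X_v} P^{\bot v} e^{\mp i\tfrac{\pi}{4}X_v}$ and $e^{\mp i\tfrac{\pi}{4}Z_v} P^{\bot v} e^{\pm i\tfrac{\pi}{4}Z_v}$ for each $P^{\bot v} \in \{X,Y,Z\}$ and then checks case by case that the $-1$'s from the conjugation table, the $D_v$ flip in Equation~\ref{eq:RotateBasis}, and the $\alpha \mapsto -\alpha$ in Lemma~\ref{lemma:LocalCompSemantics} cancel to leave a net $+\tfrac{\pi}{2}$ shift on the node. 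This case-by-case cancellation is the one non-trivial step in the theorem and cannot be deferred to a supporting lemma --- you would need to carry it out explicitly, as the paper does, for your argument to close.
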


\begin{proof}
Lemma \ref{lemma:LocalCompSemantics} shows that the additions of $\tfrac{\pi}{2}$ to measurement angles and outputs matches the set of rotations that are moved in the above PDDAG rewrite sequence.

The updates for the Pauli strings for each rotation and the stabilizers/tableau are covered by Lemma \ref{lemma:LocalCompStrings}. In particular, in each case we consider planar vertices and so all of the $P^{w \to v}_v$ terms in $\mathcal{R}$ are $I$, meaning the update to the extraction string corresponds to the effect of the \ref{lemma:CommutationRules}. However, the phase may differ in the update from $P^{\bot v}$ to $P'^{\bot v}$.

\begin{center}
\begin{tabular}{c|c|c}
$P^{\bot v}$ & $e^{i \tfrac{\pi}{4} X_v} P^{\bot v} e^{-i \tfrac{\pi}{4} X_v}$ & $e^{-i \tfrac{\pi}{4} Z_v} P^{\bot v} e^{i \tfrac{\pi}{4} Z_v}$ \\
\hline
$X$ & $X$ & $Y$ \\
$Y$ & $-Z$ & $-X$ \\
$Z$ & $Y$ & $Z$ \\
\end{tabular}
\end{center}

This shows that we experience an extra phase flip in the extraction string for $u$ when $\lambda(v) = XZ$, balanced by the negating of $\alpha(u)$ in the statement of Lemma \ref{lemma:LocalCompSemantics}. For $v \in N_G(u)$, we get an extra phase flip for $\lambda(v) \in \{XZ, YZ\}$ from Equation \ref{eq:RotateBasis}, which cancels with the above for $\lambda(v) = XZ$ and is balanced by the negating of $\alpha(v)$ when $\lambda(v) = YZ$.
\end{proof}

\begin{example}\label{ex:LocalCompExample}
Again, we start from the same measurement pattern from Example \ref{ex:ExtractionExample} and apply local complementation about vertex $d$. This complements the connectivity within $\{a, b, c, o_2\}$.

\begin{center}
\tikzfig{tikz_figs/mbqc_rewrite_lc}
\begin{tabular}{ccc|ccc}
$v$ & $\lambda'(v)$ & $\alpha'(v)$ & $p'(v)$ & $\mathrm{Odd}(p'(v))$ & $\{u | v \prec u\}$ \\
\hline
$i$ & $XY$ & $\alpha(i)$ & $b, c, o_2$ & $i, a, d, o_1$ & $a, b, c, o_1, o_2$ \\
$a$ & $XZ$ & $-\alpha(a)$ & $a, c, o_1, o_2$ & $a, d, o_1$ & $c, o_1, o_2$ \\
$b$ & $XY$ & $\alpha(b)+\tfrac{\pi}{2}$ & $c$ & $b, d, o_1, o_2$ & $c, o_1, o_2$ \\
$c$ & $XY$ & $\alpha(c)+\tfrac{\pi}{2}$ & $o_1$ & $c$ & $o_1$ \\
$d$ & $Z$ & $\alpha(d) + \pi$ & $d, o_2$ & $d, o_2$ & $o_2$ \\
\end{tabular}
\end{center}

We update the focussed set $\hat{p}$ to $\hat{p'} = \{c, o_1, o_2\}$, and to preserve semantics we also have an $RZ(-\tfrac{\pi}{2})$ applied to output $o_2$.

Now we turn to the sequence of rewrites on the PDDAG. For step \ref{LocalCompOutputs}, $o_2$ neighbours $d$ so we pull a $(I_1Z_2, \tfrac{\pi}{2})$ rotation through to the end. The \ref{lemma:CommutationRules} update the rotations with anticommuting Pauli strings (the rotation from $a$, the tableau row and rotation from $i$, and the free stabilizer from $\hat{p}$) by multiplication.

\begin{center}
\addtolength{\tabcolsep}{-3pt}
\begin{tabular}{c|cc|c}
Ins & \multicolumn{2}{c|}{Outs} & Sign \\
\hline
$X$ & $Y$ & $Z$ & $(-)^{a_d+1}$ \\
\hline
$Z$ & & $Y$ & $+$ \\
\hline
& $Z$ & $Y$ & $+$ \\
\end{tabular}
\addtolength{\tabcolsep}{3pt}
\begin{tikzcd}[ampersand replacement=\&,row sep=0.cm,column sep=0.5cm]
(I_1 Y_2, \alpha(i)) \arrow{r} \arrow{rd} \& ((-1)^{a_d+1}Z_1 X_2, \alpha(a)) \arrow{r} \arrow{rd} \& (I_1 Z_2, \tfrac{\pi}{2}) \\
\& ((-1)^{a_d+1}Y_1 Z_2, \alpha(b)) \arrow{r} \& (X_1 I_2, \alpha(c))
\end{tikzcd}
\end{center}

Step \ref{LocalCompAxis} does not apply since $d$ is not an output. Finally, in step \ref{LocalCompNeighbours}, $d$ is not planar but it is adjacent to $b$ and $c$ that are labelled as $XY$. Since $b \prec c$, we first pull out the $(X_1 I_2, \tfrac{\pi}{2})$ rotation.

\begin{center}
\addtolength{\tabcolsep}{-3pt}
\begin{tabular}{c|cc|c}
Ins & \multicolumn{2}{c|}{Outs} & Sign \\
\hline
$X$ & $Z$ & $Z$ & $(-)^{a_d+1}$ \\
\hline
$Z$ & & $Y$ & $+$ \\
\hline
& $Y$ & $Y$ & $-$ \\
\end{tabular}
\addtolength{\tabcolsep}{3pt}
\begin{tikzcd}[ampersand replacement=\&,row sep=0.cm,column sep=0.5cm]
(I_1 Y_2, \alpha(i)) \arrow{r} \arrow{rd} \& ((-1)^{a_d}Y_1 X_2, \alpha(a)) \arrow{r} \arrow{rd} \& (I_1 Z_2, \tfrac{\pi}{2}) \\
\& ((-1)^{a_d+1}Z_1 Z_2, \alpha(b)) \arrow{r} \& (X_1 I_2, \alpha(c)+\tfrac{\pi}{2})
\end{tikzcd}
\end{center}

Then we pull out the $((-1)^{a_d+1}Z_1Z_2, \tfrac{\pi}{2})$ rotation for $b$.

\begin{center}
\addtolength{\tabcolsep}{-3pt}
\begin{tabular}{c|cc|c}
Ins & \multicolumn{2}{c|}{Outs} & Sign \\
\hline
$X$ & $Z$ & $Z$ & $(-)^{a_d+1}$ \\
\hline
$Z$ & $Z$ & $X$ & $(-)^{a_d}$ \\
\hline
& $Y$ & $Y$ & $-$ \\
\end{tabular}
\addtolength{\tabcolsep}{3pt}
\begin{tikzcd}[ampersand replacement=\&,row sep=0.cm,column sep=0.5cm]
((-1)^{a_d}Z_1 X_2, \alpha(i)) \arrow{r} \arrow{rd} \& ((-1)^{a_d}Y_1 X_2, \alpha(a)) \arrow{r} \arrow{rd} \& (I_1 Z_2, \tfrac{\pi}{2}) \\
\& ((-1)^{a_d+1}Z_1 Z_2, \alpha(b)+\tfrac{\pi}{2}) \arrow{r} \& (X_1 I_2, \alpha(c)+\tfrac{\pi}{2})
\end{tikzcd}
\end{center}

We now find that this is the same PDDAG that would be extracted after applying the local complementation on the measurement pattern (take care when determining the phases during extraction as the phase from including $d$ in $p'(v) \cup \mathrm{Odd}(p'(v))$ is now $(-1)^{a_d+1}$ because $\alpha'(d) = \alpha(d) + \pi$). In the final step, it is important to have maintained the convention of extracting as $((-1)^{D_v} \mathbf{P}^{\bot v}, \alpha(v))$ and keeping the $-1$ terms with the string rather than the angle. If we had instead moved $((-1)^{a_d} Z_1Z_2, -\tfrac{\pi}{2})$, we would end up with a rotation of $\alpha(b) - \tfrac{\pi}{2}$ and different phases on the rotation and tableau row for $i$.
\end{example}

\begin{remark}\label{rem:LCDirection}
One should note that the action of local complementation is not quite self-inverse due to some measurement angles being updated by $\tfrac{\pi}{2}$. However, performing the inverse update is also valid and just relies on using the other case of Equation \ref{eq:VandenNest} to flip the angles that are added. This can similarly be simulated by just flipping the angles of the rotations being moved through the PDDAG.
\end{remark}

Now that we have the ability to simulate the effects of local complementation, extending this to pivoting is trivial since it simply decomposes into a sequence of local complementations.


\begin{definition}
Given a graph $G = (V, E)$ with some designated edge $u \sim v \in E$, the pivot of $G$ about the edge $u \sim v$ is the operation resulting in the graph:

\begin{equation}
G \wedge uv := G \star u \star v \star u (= G \star v \star u \star v)
\end{equation}
\end{definition}

\begin{lemma}
Let $(\Gamma, \alpha)$ describe a measurement pattern with $\Gamma = (G, I, O, \lambda)$ and some chosen vertices $u, v \in \overline{I}$ such that $u \sim v$. Then performing a pivot around the edge $u \sim v$ gives an equivalent measurement pattern $(\Gamma', \alpha')$ where $\Gamma' = (G \wedge uv, I, O, \hat{\lambda})$, for all $a \in \{u, v\}$ and $w \in \overline{O} \setminus \{u, v\}$

\begin{align}
(\lambda'(a), \alpha'(a)) &:= \begin{cases}
(YZ, -\alpha(a)) & \text{if } \lambda(a) = XY \\
(XZ, \tfrac{\pi}{2} - \alpha(a)) & \text{if } \lambda(a) = XZ \\
(XY, -\alpha(a)) & \text{if } \lambda(a) = YZ \\
(Z, \alpha(a)) & \text{if } \lambda(a) = X \\
(Y, \alpha(a) + \pi) & \text{if } \lambda(a) = Y \\
(X, \alpha(a)) & \text{if } \lambda(a) = Z \\
\end{cases} \\
(\lambda'(w), \alpha'(w)) &:= \begin{cases}
(\lambda(w), \alpha(w) + \pi) & \text{if } w \in N_G(u) \cap N_G(v) \wedge \lambda(w) \in \{XY, X, Y\} \\
(\lambda(w), -\alpha(w)) & \text{if } w \in N_G(u) \cap N_G(v) \wedge \lambda(w) \in \{XZ, YZ\} \\
(\lambda(w), \alpha(w)) & \text{otherwise}
\end{cases}
\end{align}
plus each output neighbouring both $u$ and $v$ is followed by a $Z$~gate and if each of $u$ or $v$ is an output then it is followed by a Hadamard gate.
\end{lemma}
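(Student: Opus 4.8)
The plan is to use the definition $G \wedge uv = G \star u \star v \star u$ and chain three applications of Lemma~\ref{lemma:LocalCompSemantics}. Set $(\Gamma_0,\alpha_0):=(\Gamma,\alpha)$; let $(\Gamma_1,\alpha_1)$ be obtained by local complementation about $u$, let $(\Gamma_2,\alpha_2)$ be obtained from $(\Gamma_1,\alpha_1)$ by local complementation about $v$, and let $(\Gamma_3,\alpha_3)$ be obtained from $(\Gamma_2,\alpha_2)$ by local complementation about $u$ once more. Since $u,v\in\overline{I}$ and local complementation leaves the input set untouched, each step is a legal instance of Lemma~\ref{lemma:LocalCompSemantics}, so $(\Gamma_3,\alpha_3)$ is equivalent to $(\Gamma,\alpha)$ up to the single-qubit Cliffords emitted on the relevant outputs at each step, and the graph underlying $\Gamma_3$ is $G\star u\star v\star u = G\wedge uv$. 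It then remains to compute the composed label/angle updates and the net correction gates, and to check they agree with the statement.

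For the labels and angles I would do a case split on $\lambda(\cdot)$, dividing the vertices of $V\setminus\{u,v\}$ according to which of $N_G(u)\setminus N_G(v)$, $N_G(v)\setminus N_G(u)$, $N_G(u)\cap N_G(v)$ they lie in, and handling Pauli labels via the embeddings of Equation~\ref{eq:PauliPlanarMap} exactly as the local complementation case does. Using that $\star u$ toggles exactly the edges inside $N_G(u)$, the bookkeeping collapses to: a common neighbour of $u,v$ is touched at steps~1 and~3 only; a vertex adjacent to exactly one of $u,v$ is touched at exactly two consecutive steps; $u$ is the complementation centre at steps~1 and~3 and a neighbour of $v$ at step~2, symmetrically for $v$; and every other vertex is untouched. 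Composing the two relevant table rows of Lemma~\ref{lemma:LocalCompSemantics} in each case then yields the stated $\lambda'$; the stated $\alpha'$ comes out provided one applies Lemma~\ref{lemma:LocalCompSemantics} in its stated direction for the two complementations about $u$ but in the reversed direction of Remark~\ref{rem:LCDirection} (the other branch of Equation~\ref{eq:VandenNest}) for the one about $v$, so that the $\pm\tfrac{\pi}{2}$ increments telescope to $+\pi$ on common neighbours and identity on vertices adjacent to only one of $u,v$, and the plane reflections compose to $-\alpha(a)$, $\tfrac{\pi}{2}-\alpha(a)$, etc.\ on $u,v$, with the extra $\pi$'s in the Pauli-labelled rows vanishing modulo $2\pi$.

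With that same choice of directions, each complementation about $u$ emits an $RZ(-\tfrac{\pi}{2})$ on every output currently adjacent to $u$ (and an $RX(\tfrac{\pi}{2})$ on $u$ if $u$ is an output), while the reversed complementation about $v$ emits an $RZ(+\tfrac{\pi}{2})$ on every output in $N_{G_1}(v)$ (and an $RX(-\tfrac{\pi}{2})$ on $v$ if $v$ is an output); all of these commute to the output side past the projections and the later complementations up to global phase. An output adjacent to exactly one of $u,v$ is a neighbour at two consecutive steps whose contributions have opposite sign, so they cancel; an output adjacent to both is a neighbour only at steps~1 and~3, whose $RZ(-\tfrac{\pi}{2})$ factors compose to $RZ(-\pi)\approx Z$. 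If $u$ is an output it additionally lies in $N_{G_1}(v)$ and so also picks up the step-2 $RZ(+\tfrac{\pi}{2})$, and $RX(\tfrac{\pi}{2})\,RZ(\tfrac{\pi}{2})\,RX(\tfrac{\pi}{2})\approx H$; symmetrically, if $v$ is an output it picks up $RZ(-\tfrac{\pi}{2})$ at steps~1 and~3 around its $RX(-\tfrac{\pi}{2})$, and $RZ(-\tfrac{\pi}{2})\,RX(-\tfrac{\pi}{2})\,RZ(-\tfrac{\pi}{2})\approx H$. This matches precisely the claim that each output neighbouring both $u$ and $v$ is followed by a $Z$ gate and that $u$ or $v$, when an output, is followed by a Hadamard.

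I expect the main obstacle to be purely combinatorial: keeping the intermediate neighbourhoods $N_{G_1}(v)$ and $N_{G_2}(u)$ straight across the three complementations, and checking that one fixed choice of directions reproduces simultaneously the correct composed labels, the correct composed angles (including the delicate $\pm\pi$ ambiguities in the Pauli-labelled rows, which are invisible up to the measurement outcome), and the correct residual corrections. Once the directions are pinned down as above this is a finite if tedious verification, and the semantic equivalence itself is immediate from chaining the three instances of Lemma~\ref{lemma:LocalCompSemantics}.
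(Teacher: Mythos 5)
Your proposal is correct and follows essentially the same approach as the paper's (terse) proof: decompose the pivot as $G \star u \star v \star u$, apply Van den Nest's theorem (or equivalently Lemma~\ref{lemma:LocalCompSemantics}) three times with alternating directions per Remark~\ref{rem:LCDirection}, and compose the resulting label/angle updates and output corrections, handling Pauli labels via the embeddings of Equation~\ref{eq:PauliPlanarMap}. The paper's proof merely gestures at this by citing Backens et al.\ for the planar case and saying ``using Van den Nest's Theorem multiple times, alternating the phases of the angles each time''; your write-up fills in the combinatorial bookkeeping (which vertices are touched at which steps, that common neighbours are touched at steps 1 and 3 while one-sided neighbours are touched at two consecutive steps with opposite-sign contributions) and verifies the composed corrections $RZ(-\tfrac{\pi}{2})^2 \approx Z$ and $RX(\pm\tfrac{\pi}{2})RZ(\pm\tfrac{\pi}{2})RX(\pm\tfrac{\pi}{2}) \approx H$ explicitly, all of which is correct.
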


\begin{proof}
We can similarly extend the case for planar measurements given in the proof of \cite[Lemma 4.5]{Backens2020} to Pauli measurements by considering how they embed into the planes as in Equation \ref{eq:PauliPlanarMap}. Similarly, this follows by using Van den Nest's Theorem \cite{VandenNest2004} (Equation \ref{eq:VandenNest}) multiple times, alternating the phases of the angles each time, and rotating the measurement projections accordingly.
\end{proof}

Since this proof requires alternating between the different phase variants of Equation \ref{eq:VandenNest}, simulating the effect in the PDDAG requires alternating between the different variants of local complementation as mentioned in Remark \ref{rem:LCDirection} but is otherwise just a trivial application of Theorem \ref{thrm:LocalCompRelate}.

\subsection{Alternative Focussed Pauli Flows}


\begin{lemma}\label{lemma:AddFlowsAndFocussedStrings}
Let $(\Gamma, \alpha)$ describe a measurement pattern with a focussed set $\hat{p}$, vertex $v \in \overline{O}$, and two focussed Pauli flows $(p, \prec)$ and $(p', \prec)$ related by the addition of $\hat{p}$ to $p(v)$ as in Lemma \ref{lemma:AddFlowsAndFocussed}. Let $\mathbf{\hat{P}}$ be the stabilizer corresponding to $\hat{p}$, and $\mathbf{P}^{\bot v}$ and $\mathbf{P'}^{\bot v}$ be the primary extraction strings for $v$ according to $(p, \prec)$ and $(p', \prec')$ respectively. Then $\mathbf{P'}^{\bot v} = \mathbf{P}^{\bot v} \mathbf{\hat{P}}$.
\end{lemma}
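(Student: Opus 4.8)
The plan is to realise $\mathbf{P}^{\bot v}\mathbf{\hat{P}}$ as the output part of a stabilizer of the linear map $\mathcal{C}$ from Definition~\ref{def:ExtractionStringDef} for $v$, and then to match it against the string that the construction of Lemma~\ref{lemma:ExtractionFromFocussed} assigns to $p'(v)$. Since $(p,\prec)$ and $(p',\prec)$ carry the same partial order $\prec$, the map $\mathcal{C}$ is literally the same object in both cases, so ``$P$-extraction string for $v$'' has the same meaning for both flows.

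First I would verify that $\mathbf{\hat{P}}$ is a stabilizer not just of the fully-projected map of Lemma~\ref{lemma:StabsFromFocussedSets} but of the partially-projected $\mathcal{C}$. Combining the graph-state stabilizers indexed by $\hat{p}$ gives, up to sign and with $XZ$-pairs fused to $Y$, the operator $\bigl(\prod_{u\in\hat{p}}X_u\bigr)\bigl(\prod_{u\in\mathrm{Odd}(\hat{p})}Z_u\bigr)$ applied to $E_G N_{\overline{I}}$; conditions~\ref{FOC1}--\ref{FOC3} mean that every term it carries on a Pauli-measured vertex is absorbed into its projection via Equation~\ref{eq:PostEigen}, and the hypothesis imported from Lemma~\ref{lemma:AddFlowsAndFocussed} (every planar-labelled $w\in\hat{p}\cup\mathrm{Odd}(\hat{p})$ has $v\prec w$) means that every such $w$ is, in $\mathcal{C}$, projected onto the Pauli eigenvector $\bra{+_{\lambda(w),0}}_w$, which again absorbs the term. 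When $\lambda(v)$ is planar the same hypothesis forces $v\notin\hat{p}\cup\mathrm{Odd}(\hat{p})$, so nothing is left on $v$; hence $\mathbf{\hat{P}}$ is a stabilizer of $\mathcal{C}$ supported on the outputs. (When $\lambda(v)$ is a Pauli instead, one carries the $v$-term through, as in Lemma~\ref{lemma:AntiCommutingStringsProof}, and the rest of the argument is unchanged.)

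Now $P^{\bot v}_v\mathbf{P}^{\bot v}$ is a stabilizer of $\mathcal{C}$ by Lemma~\ref{lemma:ExtractionFromFocussed}, and the stabilizers of the non-zero map $\mathcal{C}$ form a group, so $\bigl(P^{\bot v}_v\mathbf{P}^{\bot v}\bigr)\mathbf{\hat{P}}=P^{\bot v}_v\bigl(\mathbf{P}^{\bot v}\mathbf{\hat{P}}\bigr)$ is again a stabilizer of $\mathcal{C}$; its restriction to $v$ is $P^{\bot v}=P'^{\bot v}$, since $v\notin\hat{p}\cup\mathrm{Odd}(\hat{p})$ leaves the membership of $v$ in $p(v)$ and $\mathrm{Odd}(p(v))$ untouched. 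So $\mathbf{P}^{\bot v}\mathbf{\hat{P}}$ is the output part of a $P'^{\bot v}$-extraction string for $v$. To see it is the one built from $p'(v)$, use $p'(v)=p(v)\,\Delta\,\hat{p}$ and $\mathrm{Odd}(p'(v))=\mathrm{Odd}(p(v))\,\Delta\,\mathrm{Odd}(\hat{p})$: the $X$-, $Z$- and $Y$-positions of the string produced from $p'(v)$ are exactly those obtained by overlaying the strings from $p(v)$ and from $\hat{p}$, so as unsigned Pauli strings $\mathbf{P'}^{\bot v}$ and $\mathbf{P}^{\bot v}\mathbf{\hat{P}}$ coincide (this is the restriction to the outputs of the identity $\mathbf{P}^{S}\mathbf{P}^{T}\approx\mathbf{P}^{S\,\Delta\,T}$ of Lemma~\ref{lemma:MultiplyStrings}). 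Then $P'^{\bot v}_v\mathbf{P'}^{\bot v}$ and $P'^{\bot v}_v\mathbf{P}^{\bot v}\mathbf{\hat{P}}$ are two stabilizers of $\mathcal{C}$ with the same Pauli string, so Lemma~\ref{lemma:EqualStrings} forces them equal, giving $\mathbf{P'}^{\bot v}=\mathbf{P}^{\bot v}\mathbf{\hat{P}}$ including sign.

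The main obstacle is the second step: arranging that $\mathbf{\hat{P}}$ belongs to the stabilizer group of the partially-projected map $\mathcal{C}$, not merely of the fully-projected map of Lemma~\ref{lemma:StabsFromFocussedSets} -- this is precisely where the ordering condition ``$\lambda(w)\in\{XY,XZ,YZ\}\Rightarrow v\prec w$'' on $\hat{p}\cup\mathrm{Odd}(\hat{p})$ is used. Everything afterwards is bookkeeping with symmetric differences; in particular all of the awkward sign contributions (from reordering $X$ past $Z$, from $Y$-fusions, and from absorbing angle-$\pi$ Pauli projections) can be ignored, since matching the underlying unsigned Pauli strings and then invoking Lemma~\ref{lemma:EqualStrings} pins the phase down for free.
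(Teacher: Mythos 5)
Your proposal follows essentially the same strategy as the paper's proof: show that $\mathbf{\hat{P}}$ lies in the stabilizer group of the partially-projected map $\mathcal{C}$ from Definition~\ref{def:ExtractionStringDef}, combine it with $P^{\bot v}_v\mathbf{P}^{\bot v}$ using the group structure, match the unsigned strings via Lemma~\ref{lemma:MultiplyStrings}, and invoke Lemma~\ref{lemma:EqualStrings} to pin down the phase. One small point worth tightening: you open by asserting that both flows carry the same partial order $\prec$, so that $\mathcal{C}$ is literally the same object, and then later use that every planar-labelled $w\in\hat{p}\cup\mathrm{Odd}(\hat{p})$ satisfies $v\prec w$. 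But Lemma~\ref{lemma:AddFlowsAndFocussed} only guarantees $v\preceq w$ under the original $\prec$; the strict relation $v\prec' w$ comes from the extended order $\prec'$ that Lemma~\ref{lemma:AddFlowsAndFocussed} constructs (and which the lemma statement refers to when writing $(p',\prec')$ in its second clause). The paper's proof therefore explicitly works with $\prec'$, and in general $\mathcal{C}$ must be taken with respect to $\prec'$ so that the planar vertices of $\hat{p}\cup\mathrm{Odd}(\hat{p})$ that are merely incomparable to $v$ under $\prec$ do get projected and absorb the $\hat{P}$ terms. In the downstream use (Theorem~\ref{thrm:AddFlowsAndFocussedRelate}) the hypothesis is the strict $u\prec w$, in which case $\prec'=\prec$ and your reading is exactly right; otherwise replace $\prec$ by $\prec'$ in the definition of $\mathcal{C}$ and your argument goes through verbatim. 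This is a notational adjustment, not a change of method.
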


\begin{proof}
Lemma \ref{lemma:MultiplyStrings} immediately gives this result up to phase. To invoke Lemma \ref{lemma:EqualStrings}, we consider the linear map $\mathcal{C}$ for extraction strings of $v$. From the conditions of Lemma \ref{lemma:AddFlowsAndFocussed} and the definition of $\prec'$, any $w \in \hat{p} \cup \mathrm{Odd}(\hat{p})$ must satisfy $w \neq v$ and $\lambda(w) \notin \{X, Y, Z\} \Rightarrow v \prec w$. So both $P^{\bot v}_v \mathbf{P}^{\bot v}$ and $\mathbf{\hat{P}}$ are stabilizers of $\mathcal{C}$, meaning they combine to give the stabilizer $P^{\bot v}_v \mathbf{P}^{\bot v} \mathbf{\hat{P}}$ which defines this as a valid primary extraction string.
\end{proof}


\begin{theorem}\label{thrm:AddFlowsAndFocussedRelateProof}(Restatement of Theorem \ref{thrm:AddFlowsAndFocussedRelate})
Let $(\Gamma, \alpha)$ describe a measurement pattern with some focussed Pauli flow $(p, \prec)$, a focussed set $\hat{p}$, and some vertex $u \in \overline{O}$ such that $\forall w \in \hat{p} \cup \mathrm{Odd}(\hat{p}) . \lambda(w) \in \{XY, XZ, YZ\} \Rightarrow w \neq u \wedge u \prec w$. Updating $p(u)$ to $p(u) \Delta \hat{p}$ corresponds to a free action on the isometry tableau if $u$ is an input, and applying the \ref{lemma:GadgetStabCorrespondence} to the rotation from $u$ with the stabilizer of $\hat{p}$ if $u$ has a planar label.

Therefore, any two focussed Pauli flows for the same labelled open graph yield PDDAGs that are related by a sequence of applications of the \ref{lemma:GadgetStabCorrespondence} and free actions on the isometry tableau.
\end{theorem}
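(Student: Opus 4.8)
The first claim is, in effect, a repackaging of the two preceding lemmas. Lemma~\ref{lemma:AddFlowsAndFocussed} already shows that replacing $p(u)$ by $p(u)\Delta\hat p$ (with the prescribed extension of the order) yields another focussed Pauli flow $(p',\prec')$, and Lemma~\ref{lemma:AddFlowsAndFocussedStrings} records that this changes exactly one primary extraction string, $\mathbf P^{\bot u}\mapsto\mathbf P^{\bot u}\mathbf{\hat P}$, leaving every other extraction string and free stabilizer fixed. The plan is to read off what this does to the extracted PDDAG. First I would observe that the hypothesis forbids any planar element of $\hat p\cup\mathrm{Odd}(\hat p)$ from being $u$, and the focussed conditions \ref{FOC1}--\ref{FOC3} then force $u\notin\hat p\cup\mathrm{Odd}(\hat p)$ whenever $u$ is an input or carries a planar label, so by Lemma~\ref{lemma:AntiCommutingStringsProof} the strings $\mathbf P^{\bot u}$ and $\mathbf{\hat P}$ commute. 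If $u$ is an input, $\mathbf P^{\bot u}$ is a row of the isometry tableau and $\mathbf{\hat P}$ is, by Lemma~\ref{lemma:StabsFromFocussedSets} and Theorem~\ref{thrm:StabBijection}, a generator of the free output stabilizers, i.e.\ another tableau row, so multiplying one into the other is a free tableau action. If $u$ has a planar label, $\mathbf P^{\bot u}$ is the Pauli string of $u$'s rotation node; since $\mathbf{\hat P}$ stabilises $\mathcal C$ and commutes with $\mathbf P^{\bot u}$, the \ref{lemma:GadgetStabCorrespondence} gives $e^{i\theta\mathbf P^{\bot u}}\mathcal C=e^{i\theta\mathbf P^{\bot u}\mathbf{\hat P}}\mathcal C$, which is exactly the PDDAG rewrite that updates $u$'s string. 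The remaining point is that propagating $\mathbf{\hat P}$ out of $\mathcal C$ to $u$'s node through the \ref{lemma:CommutationRules} introduces no unwanted sign flips on intermediate rotations: their vertices are $\prec$-below $u$, hence (by the hypothesis) not in $\hat p\cup\mathrm{Odd}(\hat p)$, hence commute with $\mathbf{\hat P}$; the change to the dependency DAG is then precisely the one Lemma~\ref{lemma:AntiCommutingStringsProof} predicts for $p'$.

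For the global claim I would induct over the vertices of $\overline O$, transforming the correction sets of $(p,\prec_1)$ into those of $(q,\prec_2)$ one at a time. At each stage the current flow agrees with $q$ on the processed vertices and with $p$ elsewhere, so for the next vertex $u$ the difference $\hat p_u:=p(u)\Delta q(u)$ is a focussed set by Lemma~\ref{lemma:AddFlowsGivesFocussed}; provided the ordering hypothesis of the first claim holds we apply it, so the PDDAG changes by a single \ref{lemma:GadgetStabCorrespondence} application or a single free tableau action. After all of $\overline O$ has been processed the map equals $q$. Since the primary extraction strings depend only on the flow \emph{map} (Lemma~\ref{lemma:ExtractionFromFocussed}), as do $\mathcal C$ and the free stabilizers (Theorem~\ref{thrm:StabBijection}), and since the dependency DAG of a PDDAG is the canonical anticommutation Hasse diagram of those strings over $\mathcal C$, the endpoint is exactly the PDDAG extracted from $(q,\prec_2)$ regardless of $\prec_1,\prec_2$. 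Each step is individually reversible, so this exhibits the asserted chain of rewrites; the parenthetical remark about unitary patterns then follows since there $|O|=|I|$, so by Lemma~\ref{lemma:NumberOfFocussedSets} the only focussed set is $\emptyset$ and no nontrivial rewrite is available.

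The main obstacle is the order-theoretic bookkeeping inside the induction: I must pick a processing order so that, at every stage, every planar $w\in\hat p_u\cup\mathrm{Odd}(\hat p_u)$ lies $\prec$-above $u$ in the current order. Each such $w$ is forced above $u$ by conditions \ref{PF1}--\ref{PF2} in at least one of $p,q$, so the plan is to first shrink $\prec_1$ and $\prec_2$ to the minimal sub-orders generated by the relations these conditions actually demand (this is legitimate because focussedness makes \ref{PF3} hold unconditionally, so only \ref{PF1}--\ref{PF2} constrain the order), to strip out the Pauli vertices up front via Lemma~\ref{lemma:InitialPaulis}, and then to process the planar vertices along a common linear extension of the two minimal orders. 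Establishing that such a common extension exists --- equivalently, that two focussed Pauli flows of the same labelled open graph cannot order a pair of planar vertices oppositely --- is the delicate part: the naive counterexamples turn out to violate one of \ref{PF4}--\ref{PF9} for the supposed second flow, and I would prove the required compatibility lemma before running the induction.
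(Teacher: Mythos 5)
Your treatment of the local claim follows the paper's route closely and is essentially correct: you invoke Lemma~\ref{lemma:AddFlowsAndFocussed} for the flow update and Lemma~\ref{lemma:AddFlowsAndFocussedStrings} for the effect on $\mathbf P^{\bot u}$, and you correctly observe that the rotations of vertices $\preceq u$ commute with $\mathbf{\hat P}$ so that the \ref{lemma:GadgetStabCorrespondence} applies at $u$'s node. Two small inaccuracies: the paper's proof explicitly remarks that ``for both uses of $u$ in extraction, we assume $u$ has a planar label'' (the $X$-row comes from an input extension, which carries an $XY$ label), and your claim that $u\notin\hat p\cup\mathrm{Odd}(\hat p)$ ``whenever $u$ is an input'' fails for an input with a Pauli label — but this does not harm the conclusion, since multiplying tableau rows is a free action regardless.

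For the global claim you have correctly identified the delicate point but reach for a heavier hammer than the paper needs, and then leave the hammer unbuilt. You propose to process vertices along a \emph{common linear extension} of the two minimal orders and flag that proving such an extension exists (``two focussed Pauli flows cannot order a pair of planar vertices oppositely'') is a lemma you ``would prove'' — but you don't, and it is not an available fact in the paper. The paper avoids needing it: instead of fixing a shared processing order up front, it picks, at each stage, the \emph{first} vertex $v$ (in an indexing respecting the current $\prec$) where the two maps differ, and forms $\hat{p^v}:=p(v)\Delta p'(v)$. Because every $w\prec v$ then satisfies $p(w)=p'(w)$, any $\prec$-chain witnessing $w\prec v$ is also a $\prec'$-chain, giving $w\prec v\Rightarrow w\prec' v$. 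This one implication, combined with strictness of $\prec'$, already yields the $v\preceq w$ hypothesis of Lemma~\ref{lemma:AddFlowsAndFocussed} for all planar $w\in\hat{p^v}\cup\mathrm{Odd}(\hat{p^v})$: either $v\prec w$ directly (from \ref{PF1}/\ref{PF2} for $p$), or $v\prec' w$ (from \ref{PF1}/\ref{PF2} for $p'$), and in the latter case $w\prec v$ would force $w\prec' v$, contradicting antisymmetry. No compatibility of the full orders is required. You should either adopt this lighter argument or actually supply the compatibility lemma; as written, your second paragraph contains a genuine, acknowledged gap.
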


\begin{proof}
This update on the Pauli flow is the same as from Lemma \ref{lemma:AddFlowsAndFocussed}, meaning Lemma \ref{lemma:AddFlowsAndFocussedStrings} gives the update to the primary extraction strings. For both uses of $u$ in extraction, we assume $u$ has a planar label. If $u$ is an input considered for the tableau of the PDDAG, multiplication of a row of the tableau by a free stabilizer is a free action as it just gives another equivalent set of generators for the stabilizer group. On the other hand, if $u$ is some planar vertex giving a rotation in the PDDAG, we note that $\mathbf{\hat{P}}$ remains a stabilizer after applying each of the rotations for vertices $w \preceq u$ (they commute according to Lemma \ref{lemma:AntiCommutingStringsProof}), allowing us to apply the \ref{lemma:GadgetStabCorrespondence} to get the same effect.

To show that any two focussed Pauli flows $(p, \prec)$ and $(p', \prec')$ can be related using this, we start with $(p, \prec)$ and derive the sequence of transformations needed to reach $(p', \prec')$.

We consider weakening of the partial order in a Pauli flow to be a free action, as well as the reverse so long as the Pauli flow conditions are preserved, since neither affect the semantics or extraction. So suppose wlog that $\prec$ and $\prec'$ are minimal, i.e. $\prec$ is the transitive closure of $\{(u, v) | v \in p(u) \cup \mathrm{Odd}(p(u)) \wedge u \neq v \wedge \lambda(v) \in \{XY, XZ, YZ\}\}$ (we ignore Pauli measurements by Lemma \ref{lemma:InitialPaulis}).

We proceed inductively over $\prec$. Under some indexing of the vertices respecting $\prec$, let $v$ be the first for which $p(v) \neq p'(v)$, i.e. $\forall w \prec v . p(v) = p'(v)$. Our assumptions on minimality then imply that $\forall w \in \overline{O} . w \prec v \Leftrightarrow w \prec' v$.

By Lemma \ref{lemma:AddFlowsGivesFocussed}, $\hat{p^v} := p(v) \Delta p'(v)$ is a focussed set for any $v \in \overline{O}$. Conditions \ref{PF4}-\ref{PF6} imply that $\lambda(v) \in \{XY, XZ, YZ\} \Rightarrow v \notin \hat{p^v} \cup \mathrm{Odd}(\hat{p^v})$. Using conditions \ref{PF1} and \ref{PF2}, we also have $\forall w \in \hat{p^v} \cup \mathrm{Odd}(\hat{p^v}) . w \neq v \wedge \lambda(w) \in \{XY, XZ, YZ\} \Rightarrow v \prec w \vee v \prec' w$. Since $v \prec' w \Rightarrow v \preceq' w$ and $v \preceq' w \Leftrightarrow v \preceq w$ by the minimality assumption, we now have $\forall w \in \hat{p^v} \cup \mathrm{Odd}(\hat{p^v}) . \lambda(w) \in \{XY, XZ, YZ\} \Rightarrow w \neq v \wedge v \preceq w$. This means that combining $(p, \prec)$ with $\hat{v}$ according to Lemma \ref{lemma:AddFlowsAndFocussed} (and reducing to the minimal order) gives a focussed Pauli flow equal to $(p', \prec')$ up to $v$ in $\prec$.

Repeating this inductively over the remaining vertices will eventually yield $(p', \prec')$.
\end{proof}

\begin{example}\label{ex:AddFlowsAndFocussedExample}
We revisit the measurement pattern from Example \ref{ex:ExtractionExample}. Recall that we have a focussed set $\hat{p} = \{c, o_2\}$ with $\mathrm{Odd}(\hat{p}) = \{a, o_1\}$, so we can freely add this to the correction sets for any of $i$, $b$, or $d$. Suppose we do so for $b$ and $d$ to give the following focussed Pauli flow:

\begin{center}
\begin{tabular}{cc|ccc}
$v$ & $\lambda(v)$ & $p'(v)$ & $\mathrm{Odd}(p'(v))$ & $\{u | v \prec u\}$ \\
\hline
$i$ & $XY$ & $b, o_2$ & $i, a$ & $a, b, c, o_1, o_2$ \\
$a$ & $YZ$ & $a, c, d, o_2$ & $d, o_1, o_2$ & $c, o_1, o_2$ \\
$b$ & $XY$ & $d, o_1, o_2$ & $a, b, d, o_2$ & $a, c, o_1, o_2$ \\
$c$ & $XY$ & $o_1$ & $c$ & $o_1$ \\
$d$ & $Y$ & $c$ & $a, d, o_1$ & $a, c, o_1, o_2$ \\
\end{tabular}
\end{center}

The difference between $p(d)$ and $p'(d)$ has no effect on the PDDAG since $d$ is not involved in the extraction procedure. So the only rewrites needed are the multiplication of the top row of the isometry tableau by the bottom, and applying the \ref{lemma:GadgetStabCorrespondence} to the $b$ rotation mapping $Y_1Z_2 \mapsto (Y_1Z_2)(Z_1X_2) = -X_1Y_2$.

\begin{center}
\addtolength{\tabcolsep}{-3pt}
\begin{tabular}{c|cc|c}
Ins & \multicolumn{2}{c|}{Outs} & Sign \\
\hline
$X$ & $X$ & $Y$ & $(-)^{a_d}$ \\
\hline
$Z$ & & $X$ & $+$ \\
\hline
& $Z$ & $X$ & $+$ \\
\end{tabular}
\addtolength{\tabcolsep}{3pt}
\begin{tikzcd}[ampersand replacement=\&,row sep=0.cm,column sep=0.5cm]
(I_1 X_2, \alpha(i)) \arrow{r} \arrow{rd} \& ((-1)^{a_d} Z_1 Y_2, \alpha(a)) \arrow{r} \& (X_1 I_2, \tfrac{\pi}{2}) \\
\& ((-1)^{a_d}X_1 Y_2, \alpha(b)) \arrow{ru} \&
\end{tikzcd}
\end{center}
\end{example}

\end{document}